\documentclass[format=acmsmall, screen=true, nonacm=true]{acmart}
\usepackage{geometry}
\usepackage{subfig}
\usepackage{graphicx}

\renewcommand{\arraystretch}{1.5}
\usepackage{graphicx} %
\usepackage{amsmath}
\usepackage{amsthm}
\usepackage{url}
\usepackage{stmaryrd,hyperref}
\usepackage{color}
\usepackage{xspace}
\usepackage{thmtools}
\usepackage{tikz-cd}
\usepackage{mathtools}
\usepackage{booktabs}
\usepackage{wasysym}
\usepackage{bm}

\DeclareFontFamily{U}{mathb}{\hyphenchar\font45}
\DeclareFontShape{U}{mathb}{m}{n}{
      <5> <6> <7> <8> <9> <10> gen * mathb
      <10.95> mathb10 <12> <14.4> <17.28> <20.74> <24.88> mathb12
      }{}
\DeclareSymbolFont{mathb}{U}{mathb}{m}{n}
\DeclareMathSymbol{\precneq}{3}{mathb}{"AC}

\newtheorem{theorem}{Theorem}[section]
\newtheorem{fact}[theorem]{Fact}
\newtheorem{definition}[theorem]{Definition}
\newtheorem{example}[theorem]{Example}
\newtheorem{question}[theorem]{Question}
\newtheorem{remark}[theorem]{Remark}
\newtheorem{lemma}[theorem]{Lemma}
\newtheorem{proposition}[theorem]{Proposition}
\newtheorem{corollary}[theorem]{Corollary}

\newcommand{\sem}[1]{[\![#1]\!]}
\newcommand{\maj}{\mathsf{maj}}
\newcommand{\PROP}{\textsc{prop}}
\newcommand{\finops}[1]{\ensuremath{\textsc{FinOps}_{#1}}}
\newcommand{\class}[1]{\ensuremath{\mathsf{#1}}}
\newcommand{\logic}[1]{\ensuremath{\mathbf{#1}}}
\newcommand{\cloneof}[1]{[#1]}
\newcommand{\PL}{\ensuremath{\textup{\rm PL}}}
\newcommand{\ML}{\ensuremath{\textup{\rm ML}}}
\newcommand{\subf}{\ensuremath{\textsc{subf}}}
\newcommand{\tree}{\ensuremath{\textup{\rm tree}}}
\newcommand{\dagsize}{\ensuremath{\textup{\rm dag}}}
\newcommand{\MOD}{\ensuremath{\textup{\rm MOD}}}
\newcommand{\lab}{\ensuremath{\textup{\rm lab}}}
\newcommand{\oper}[1]{\circ_{#1}}
\newcommand{\expand}[1]{\mathsf{expand}(#1)}
\newcommand{\sharpp}{\#\mathrm{P}\xspace}
\newcommand{\equivclass}[2]{\textsf{equiv}_{#2}(#1)}

\usepackage[table]{xcolor}
\definecolor{hellgrau}{gray}{0.8}
\usepackage{xspace}
\newcommand{\eqdef}{\coloneqq}
\newcommand{\N}{\protect\ensuremath{\mathbb{N}}\xspace}

\newcommand{\true}{\protect\ensuremath{\top}\xspace}
\newcommand{\false}{\protect\ensuremath{\bot}\xspace}
\DeclareMathOperator{\nimp}{\nrightarrow}

\DeclareMathOperator{\xor}{\oplus}
\DeclareMathOperator{\eq}{\leftrightarrow}

\DeclareMathOperator{\limp}{\rightarrow}

\newcommand{\set}[1]{\ensuremath{\left\{#1\right\}}}
\newcommand{\cloneFont}[1]{\mathsf{#1}}
\newcommand{\CloneBF}{\protect\ensuremath{\cloneFont{BF}}}
\newcommand{\CloneM}{\protect\ensuremath{\cloneFont{M}}}
\newcommand{\CloneL}{\protect\ensuremath{\cloneFont{L}}}
\newcommand{\CloneR}{\protect\ensuremath{\cloneFont{R}}}
\newcommand{\CloneD}{\protect\ensuremath{\cloneFont{D}}}
\newcommand{\CloneN}{\protect\ensuremath{\cloneFont{N}}}
\newcommand{\CloneS}{\protect\ensuremath{\cloneFont{S}}}
\newcommand{\CloneV}{\protect\ensuremath{\cloneFont{V}}}
\newcommand{\CloneE}{\protect\ensuremath{\cloneFont{E}}}
\newcommand{\CloneI}{\protect\ensuremath{\cloneFont{I}}}

\newcommand{\aimp}{\mathsf{aimp}}
\newcommand{\oxor}{\mathsf{oxor}}
\DeclareMathOperator{\threeXor}{\xor^3}

\title{Modal Fragments}

\author{Nick Bezhanishvili}
\orcid{0009-0005-6692-5051}
\affiliation{
\institution{ILLC, University of Amsterdam}
\country{The Netherlands}}

\author{Balder ten Cate}
\orcid{0000-0002-2538-5846}
\affiliation{
\institution{ILLC, University of Amsterdam}
\country{The Netherlands}}

\author{Arunavo Ganguly}
\orcid{0009-0004-9098-9904}
\affiliation{
\institution{Department of Computing Science, Umeå Universitet}
\country{Sweden}}

\author{Arne Meier}
\orcid{0000-0002-8061-5376}
\affiliation{\institution{Institute of Theoretical Computer Science, Leibniz Universität Hannover}\country{Germany}}

\begin{document}

\begin{abstract}
We survey systematic approaches to basis-restricted fragments of propositional logic and modal logics, with an emphasis on how expressive power and computational complexity depend on the allowed operators. The propositional case is well-established and serves as a conceptual template: Post's lattice organizes fragments via Boolean clones and supports complexity classifications for standard reasoning tasks. For modal fragments, we then bring together two historically independent lines of investigation: a general framework where modal fragments are parameterized by a basis of  ``connectives'' defined by arbitrary modal formulas (initially proposed and studied by logicians such as Kuznetsov and Ra\c{t}\v{a} in the 1970s), and the more tractable class of what we call \emph{simple} modal fragments parameterized by Boolean functions plus selected modal operators, where Post-lattice methods enable systematic decidability and dichotomy results. Along the way, we collect and extend results on teachability and exact learnability from examples for both propositional fragments and simple modal fragments, and we conclude by identifying several open problems.
\end{abstract}

\maketitle

\section{Introduction}

There is a long tradition, going back to early work by Emil Post~\cite{Post}, Jablonski~et~al.~\cite{Jablonski1970}, and Harry Lewis~\cite{Lewis_Sat_Problems_for_propositional_calculi}, of studying fragments of propositional logic parameterized by a basis of Boolean functions. In this line of research, one investigates how expressive power and the computational complexity of fundamental algorithmic tasks depend on the chosen basis. Post’s lattice provided a systematic classification of Boolean function bases and became a foundational tool for understanding such dependencies.

This perspective was subsequently extended to richer logical systems, including modal logics. In fact, two largely independent lines of work emerged investigating fragments of modal logic in a systematic way.
The first line was initiated by Alexander Kuznetsov~\cite{kuznetsov1971functional} and Metodie Ra{\c{t}}\v{a}~\cite{Ratsa71} in the 1970s. In this line, a fragment of modal logic is given by a basis of ``connectives'' specified by arbitrary modal formulas.
An example of such a fragment is the ``contingency fragment'' 
\[\phi \Coloneqq x \mid \neg \phi\mid \phi\land\psi\mid \CIRCLE\phi,\] where $\CIRCLE\phi$ is defined by $\Diamond \phi\land\Diamond\neg\phi$.
One can equivalently think of such a fragment as obtained by taking a finite set of formulas (in this case: $x, \neg x$, $x\land y$ and $\Diamond x\land \Diamond\neg x$) and closing off under uniform substitution.
Kuznetsov and Ra\c{t}\v{a} (and other subsequent researchers) 
were interested in the structure of the lattice of all such fragments, as well as in the decidability of
 meta-problems such as \emph{expressive completeness} (given a fragment, is it expressively complete?) and 
\emph{expressive containment}
(given two fragments, is one  contained in the other, modulo equivalence?).
Owing to the generality of this framework, however, many of the results obtained tend to be either negative (for instance showing undecidability for certain meta-problems), or restricted in scope (for instance, applying only to locally tabular logics such as \logic{S5}).

In the 1990s, Jeavons and collaborators~\cite{DBLP:conf/cocoon/JeavonsC95,DBLP:journals/amai/JeavonsCP98}, along with Nadia Creignou and colleagues~\cite{DBLP:journals/jcss/Creignou95,DBLP:journals/iandc/CreignouH96,DBLP:journals/ita/CreignouH97,DBLP:books/daglib/0004131}, revitalized the line of research initiated by Lewis~\cite{Lewis_Sat_Problems_for_propositional_calculi} and Schaefer~\cite{DBLP:conf/stoc/Schaefer78}, ultimately leading to a refined formulation of Schaefer’s theorem~\cite{DBLP:journals/jcss/AllenderBISV09}.
This inspired work on fragments of temporal~\cite{DBLP:conf/stacs/BaulandHSS06} and modal logics~\cite{DBLP:journals/eccc/BaulandSSSV06}, starting the aforementioned second, independent line of research on modal fragments. 
This line of research involves  a more restrictive yet more tractable framework. Here, modal fragments are parameterized by a set of \emph{Boolean functions} plus a (separately specified)  subset of the modal operators. 
We will call such modal fragments \emph{simple} in this paper.
The aforementioned contingency fragment is not of this form, but, for example, 
the ``monotone fragment''
\[\phi \Coloneqq x \mid \top\mid\bot\mid \phi\land\psi \mid \phi\lor\psi\mid \Diamond\phi\mid\Box\phi\] is.
This second line is technically much closer to Post’s original setting and has proved significantly more amenable to algorithmic analysis, aided by Post's lattice. Consequently, within this framework, numerous dichotomy theorems have been obtained, yielding complete complexity classifications for  decision problems associated with modal fragments.
More recently, this second line of research has been extended beyond basic modal logic to other logical formalisms, including temporal logics, hybrid logics, and team logics, further demonstrating the robustness of the Boolean-basis perspective.

The purpose of this article is threefold: (i) to bring these two lines of work together into a unified conceptual narrative, (ii) to survey key results from both traditions, and (iii) to outline directions for future research. We also present some new results pertaining to the learnability and teachability of formulas by examples for different fragments of propositional and modal logic.

\paragraph{Outline}
In Section~\ref{sec:prel}, we review relevant background on complexity classes and on Boolean clones. In Section~\ref{section:Prop}, we review results on fragments of propositional logic, which form a backdrop for the next part of the paper.
In Section~\ref{section:Modal}, we consider fragments of modal logic and we review results from both traditions mentioned above.
In Section~\ref{sec:other_logics}, we briefly discuss a number of other logical systems for which fragments have been similarly studied. Finally, we conclude in Section~\ref{sec:conclusion} by listing a number of open problems.

\section{Preliminaries}
\label{sec:prel}

\subsection{Relevant complexity classes}
In this paper, we will be surveying dichotomy results regarding the complexity of various algorithmic problems for fragments of propositional logic and of modal logic. These will 
make use of a number of standard complexity classes, which we briefly review here.

The complexity class $\class{P}$ is the class consisting of problems solvable in \emph{deterministic polynomial time}; $\class{NP}$ and $\class{coNP}$ are the class of problems solvable in nondeterministic polynomial time, and its complement; $\class{PSPACE}$ is the class solvable in polynomial space (deterministically, or, by Savitch's theorem, equivalently, nondeterministically polynomial space); $\class{EXPTIME}$ is the class solvable  deterministically in $2^{n^{O(1)}}$ time; the class $\Theta_2^{\class{P}}=\class{P}^{\class{NP}[\log]}$ is defined via logarithmic many oracle calls to an $\class{NP}$-language, and it has been shown that this class coincides with $\class{P}^{||\class{NP}}$, which is unrestricted but non-adaptive (i.e., parallel) queries~\cite{DBLP:journals/iandc/BussH91,DBLP:journals/jcss/Hemachandra89}.
The counting class $\sharpp$ is the class of problems associated with decision problems in $\class{NP}$. 
More precisely, it is the class that asks to compute the number of accepting paths of a nondeterministic Turing machine on a given input.

Below $\class{P}$, we have the classes  $\class{NL}$  for algorithms solvable in nondeterministic logarithmic space; and
$\class{L}$  consisting of problems solvable in  deterministic logarithmic space; 
In-between $\class{L}$ and $\class{P}$, we also find
$\class{\oplus L}$ (Parity-L) consisting of the problems for which there is a nondeterministic log-space Turing machine such that  the machine has an odd number of accepting runs on a given input if and only if the input is a yes-instance.

Below $\class{L}$ are various \emph{circuit classes}, 
of which the following will be relevant for us:
$\class{AC}^0$ is the class of languages decidable by families of constant-depth, polynomial-size Boolean circuits with $\wedge$-gates, $\vee$-gates and $\neg$-gates, where the fan-in of the gates of the first two types is unbounded; $\class{AC}^0[2]$ extends $\class{AC}^0$ with unbounded fan-in \emph{parity} (mod~2) gates; $\class{NC}^1$ consists of  polynomial-size, depth $O(\log n)$ circuits with bounded fan-in. For these circuit classes, we will assume logspace-uniformity unless stated otherwise. The above complexity classes then form the following hierarchy: 
\[
\class{AC}^0 \subsetneq \class{AC}^0[2] \subsetneq \class{NC}^1 
\subseteq \class{L} 
\begin{array}{c}
  \rotatebox[origin=B]{15}{$\subseteq$} \\[-6pt]
  \rotatebox[origin=B]{-15}{$\subseteq$}
\end{array}\hspace*{-1.5ex}
\begin{array}{c}
\class{NL}\\[-2pt]
\class{\oplus L}
\end{array}
\hspace*{-1.5ex}
\begin{array}{c}
  \rotatebox[origin=B]{-15}{$\subseteq$} \\[-6pt]
  \rotatebox[origin=B]{15}{$\subseteq$}
\end{array}
\class{P}
\begin{array}{c}
  \rotatebox[origin=B]{15}{$\subseteq$} \\[-6pt]
  \rotatebox[origin=B]{-15}{$\subseteq$}
\end{array}\hspace*{-1.5ex}
\begin{array}{c}
\class{NP}\\[-2pt]
\class{coNP}
\end{array}
\hspace*{-1.5ex}
\begin{array}{c}
  \rotatebox[origin=B]{-15}{$\subseteq$} \\[-6pt]
  \rotatebox[origin=B]{15}{$\subseteq$}
\end{array}
\Theta_2^{\class{P}}\subseteq\class{PSPACE}\subseteq \class{EXPTIME}.
\]

For each of these complexity classes, suitable notions of ``hardness'' and ``completeness'' can be defined, based on various types of
reductions. For the complexity classes
below $\class{L}$, there is not one canonical type of reduction, and different types of reductions are used in different papers. 
In this context, sometimes more fine-grained notions of reducibility have to be used, such as the following. A language $L$ is \emph{$\class{AC^0}$ many-one
reducible} to a language $B$ if there exists a function $f$ computable by a log time-uniform $\class{AC^0}$-circuit family such that  $x\in L$ if and only if $f(x)\in B$.  
When reviewing such hardness results from the literature, we will not always be precise about the types of reduction used. Instead we will refer the reader to the respective paper for the full details.

For full definitions and background, see \cite{AroraBarak2009} or \cite{Vollmer1999-VOLITC-2}.
\subsection{Clones}

Informally, a \emph{clone} on a set 
$A$ is a collection of finitary operations on 
$A$ that contains all projection functions and is closed under arbitrary composition. Formally,
a \textit{finitary operation} on a set $A$ any function $f\colon A^n\to A$ with $n>0$. A \emph{projection} 
is a finitary operation  $f\colon A^n\to A$ defined by
$f(x_1, \ldots, x_n)=x_k$ for some fixed $k\leq n$.
Such a function is 
usually denoted by 
$\pi^n_k$. We denote the set of all such finitary operations 
on $A$ by $\finops{A}$.

\begin{definition}[Clones]
 A \emph{clone} over a set $A$ is a subset $C\subseteq \finops{A}$ that satisfies:
\begin{enumerate}
    \item $C$ contains all projections $\pi^n_k$ (for all $k$ and $n$ with $k\leq n$), and
    \item $C$ is closed under composition: for all $g_1,\dots g_m, f\in C$, where $f$ is an $m$-ary operation and each $g_i$ is an $n$-ary operation, 
    there is an $n$-ary operation $h\in C$ such that 
     $h(x_1, \ldots, x_n) = f(g_1(x_1,\ldots x_n), \ldots ,g_m(x_1,\ldots, x_n))$.
\end{enumerate}
\end{definition}

The above properties are preserved by  set-theoretic intersection.
\begin{fact}\label{fact:Clones}
    Let $A$ be any set.
    \begin{enumerate}
        \item The entire set $\finops{A}$ is a clone over $A$;
        \item For every collection of clones $\{\,C_i\mid i\in I\,\}$, the intersection
    $C=\bigcap_{i\in I}C_i$  is again a clone.
    \end{enumerate}
\end{fact}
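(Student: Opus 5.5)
Both parts follow by checking the two defining conditions of a clone directly; no earlier results beyond the definition are needed.

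For part (1), $\finops{A}$ contains every finitary operation on $A$. In particular it contains every projection $\pi^n_k$. For closure under composition, take $f\in\finops{A}$ of arity $m$ and $n$-ary $g_1,\dots,g_m\in\finops{A}$. Define $h\colon A^n\to A$ by $h(\bar x)=f(g_1(\bar x),\dots,g_m(\bar x))$. This $h$ is a well-defined $n$-ary operation with $n>0$, so $h\in\finops{A}$. Both conditions hold, so $\finops{A}$ is a clone.

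For part (2), let $C=\bigcap_{i\in I}C_i$. Each $C_i$ contains all projections, so $C$ does too.

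For composition, take $f,g_1,\dots,g_m\in C$ with the arities required by the definition. These functions lie in every $C_i$. Since each $C_i$ is a clone, the composed $h(\bar x)=f(g_1(\bar x),\dots,g_m(\bar x))$ lies in $C_i$. The key point is that "there is an $h\in C_i$" refers to one and the same function for every $i$, because $h$ is determined pointwise by $f$ and the $g_j$. So $h$ belongs to every $C_i$, hence $h\in C$.

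Two edge cases need a convention.

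\textbf{Empty index set.} If $I=\emptyset$, the intersection is read relative to the ambient set $\finops{A}$. Then $C=\finops{A}$, which is a clone by part (1).

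\textbf{Uniqueness of composition.} The one subtle step is noting that the composition witnessing closure is unique as a function. This is what allows the existential in the definition to transfer across the whole family $\{C_i\}$.
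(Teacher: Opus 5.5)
Your proof is correct and is essentially the paper's argument. The paper only remarks that both defining conditions of a clone are preserved under set-theoretic intersection, with part (1) taken as immediate. You have spelled this out, including the two points the paper leaves implicit: the composite $h$ is uniquely determined as a function, and the empty intersection is read as $\finops{A}$.
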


Fact \ref{fact:Clones} provides the notion of a \textit{clone generated by a set of finitary operations}.  Formally, for any $O\subseteq \finops{A}$, we define
$\cloneof{O}\coloneqq\bigcap\{\,C \mid \text{$O\subseteq C$ and $C$ is a clone}\,\}$. 

Although the set-theoretic union of clones is not necessarily a clone, 
set-theoretic union can be coupled with the closure operation $\cloneof{\cdot}$ to define a join operation on clones: 
let $\bigsqcup_{i\in I} C_i = \cloneof{\bigcup_{i\in I} C_i}$. 
Since the $\cloneof{\cdot}$ operation distributes over set-theoretic intersection $\cap$, the meet of clones is defined as: $\bigsqcap_{i\in I}C_i = \cloneof{\bigcap_{i\in I}C_i}= \bigcap_{i\in I}C_i.$ 
The operations $\sqcap$ and $\sqcup$ provide a lattice structure to the set of all clones.
\begin{fact}\label{fact:clones-lattice}
    Let $\mathcal{C}_A$ denote the set of all clones over the set $A$. Then $(\mathcal{C}_A, \sqcup, \sqcap)$ forms a complete lattice. The partial order induced by  this lattice is the set-theoretic inclusion ($\subseteq$).   
\end{fact}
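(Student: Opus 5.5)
The plan is to reduce the claim to the standard fact that a family of subsets closed under arbitrary intersections, and containing the top element, forms a complete lattice whose order is inclusion. By Fact~\ref{fact:Clones}, $\mathcal{C}_A$ contains $\finops{A}$ and is closed under arbitrary intersections, including the empty intersection, which we read as $\finops{A}$. So for any family $\{C_i\mid i\in I\}\subseteq\mathcal{C}_A$, the set $\bigcap_{i\in I}C_i$ is a clone. It is contained in every $C_i$. Any clone $D$ with $D\subseteq C_i$ for all $i$ satisfies $D\subseteq\bigcap_i C_i$. Hence $\bigcap_i C_i$ is the greatest lower bound of the family with respect to $\subseteq$, and it coincides with $\bigsqcap_i C_i$ as defined in the paper.

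Next I show that $\bigsqcup_i C_i=\cloneof{\bigcup_i C_i}$ is the least upper bound. It is a clone, since it is an intersection of clones (Fact~\ref{fact:Clones}). It contains each $C_i$, because every clone in the defining intersection contains $\bigcup_i C_i$. If $D$ is any clone containing every $C_i$, then $D\supseteq\bigcup_i C_i$, so $D$ is one of the clones being intersected in the definition of $\cloneof{\bigcup_i C_i}$. Therefore $\bigsqcup_i C_i\subseteq D$.

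Thus every subfamily of $\mathcal{C}_A$ has both an infimum and a supremum in the poset $(\mathcal{C}_A,\subseteq)$, and these are given by $\sqcap$ and $\sqcup$. A poset in which all joins and meets exist is a complete lattice. The order recovered from it, $C\le D$ iff $C\sqcap D=C$ iff $C\cap D=C$, is exactly $C\subseteq D$.

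The only point needing care is the degenerate case $I=\emptyset$. There the join is $\cloneof{\emptyset}$, the clone of all projections, which is the bottom element. The meet is $\finops{A}$, the top element. Both are covered by the argument above once the empty intersection is read relative to $\finops{A}$. Nothing else is a real obstacle; in particular, the paper's remark that $\cloneof{\cdot}$ distributes over intersection is not needed, only that an intersection of clones is already a clone.
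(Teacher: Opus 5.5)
Your proof is correct and takes the same route as the paper. The paper simply defines the join as $\cloneof{\bigcup_i C_i}$ and the meet as $\bigcap_i C_i$ (a clone by Fact~\ref{fact:Clones}) and asserts that these give the lattice structure; you supply the missing verification that they are the least upper and greatest lower bounds under $\subseteq$, including the empty-family case, and you are right that the meet needs only $\cloneof{\bigcap_i C_i}=\bigcap_i C_i$, not any general distributivity of $\cloneof{\cdot}$ over intersection.
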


Recall that, in universal algebra~\cite{Universal_Algebra}, an \emph{algebra} is  
a tuple
$\mathcal{A}=(A,f_1, \ldots, f_n)$ where $A$ is a set and 
$f_i\colon A^{k_i}\to A$, where $k_i\geq 0$ is the arity of $f_i$. By 
\emph{the clone generated by $\mathcal{A}$}, 
also denoted $\cloneof{\mathcal{A}}$, we will mean the clone generated by the functions of $\mathcal{A}$, i.e., $\cloneof{\mathcal{A}}=\cloneof{\{f_1, \ldots, f_n\}}$. This definition assumes that all functions $f_i$ have positive arity, so that they qualify as finitary operations. In case the algebra includes zero-ary functions (i.e., constants), these are treated as constant unary functions (i.e., unary functions that return the same value on all inputs). 

\begin{example}
\label{example:algebras-clones}
Consider the algebra $\mathcal{A}=(\mathbb{N},+,0,1)$. The clone generated
by $\mathcal{A}$ is, by definition,
the clone over $\mathbb{N}$ generated by the addition function and the constant functions $c_0,c_1\colon \mathbb{N}\to\mathbb{N}$. It is, in fact,
simply the set of all linear functions with integer coefficients.
Similarly, consider the two-element Boolean algebra $\mathcal{B}=(\{0,1\},\land,\lor,\neg,0,1)$. The
clone generated by $\mathcal{B}$ is simply the entire set $\finops{\{0,1\}}$.
\end{example}

\begin{remark}\label{rem:zeroary}
The above definition of clones follows the standard treatment in the literature by disregarding 
zero-ary operations (or, rather, treating them as constant unary operations). 
With some adjustments, the framework of clone theory can be extended such that  zero-ary operations are included as first-class citizens. This perspective is more consistent with the way algebras are defined in universal algebra, where zero-ary terms (also known as constants) are  permitted.
In this extended view, every clone that excludes zero-ary operations can be regarded as a subclone of one that does allow them. 
Thus, for instance, Post's lattice (as defined in the next subsection, cf.~Figure~\ref{fig:Post_Lat}) would then become slightly larger.
Most results that we will discuss in this paper can be lifted in a natural way to this extended setting.
\end{remark}

The following notation is adopted to improve readability: For $O,O'\subseteq \finops{A}$, define 
$$O\preceq O' \iff \cloneof{O}\subseteq \cloneof{O'}.$$ We also write $O\equiv O'$ when $O\preceq O'$ and $O'\preceq O$.
We employ this notation throughout Section~\ref{section:Prop} and Section~\ref{section:Modal} to condense theorem statements. 
The relation $\preceq$ is a pre-order. The operation $\cloneof{\cdot}$ is monotone over $\finops{A}$, though not-necessarily strict; that is, if $O,O'\subseteq \finops{A}$ and $O\subseteq O'$ then $\cloneof{O}\subseteq \cloneof{O'}$, however $O\subsetneq O'$ need not imply $\cloneof{O}\subsetneq \cloneof{O'}$.

For a general introduction to clone theory, we refer to the textbook of Pippenger~\cite[Sect.~1.4]{DBLP:books/daglib/0092426}.
\subsection{Boolean Clones}
\label{sec:post}

 A \emph{Boolean clone} is a clone over the set $\{0,1\}$. In light of Example~\ref{example:algebras-clones}, we can equivalently define a Boolean clone to be any subclone of the clone generated by the two-element Boolean algebra.
 The lattice of all Boolean clones, i.e., $\mathcal{C}_{\{0,1\}}$ 
 is known as \emph{Post's lattice}, named after Emil Leon Post, whose seminal work \cite{Post} completely described its structure. 
  Post's lattice is depicted in Figure~\ref{fig:Post_Lat}. 
We list a few key facts about it, which were already established by Post in his early work, that will be important later on.

 \begin{fact}[Boolean clones are finitely generated] 
 \label{fact:finitely-generated}
 For every $C\in \mathcal{C}_{\{0,1\}}$ there is finite set $O\subseteq\finops{\{0,1\}}$ such that  $C=\cloneof{O}$.  
  In particular,  there are only countably many Boolean clones.
\end{fact}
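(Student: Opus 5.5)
The plan is to derive the statement from Post's classification rather than from a structural argument that does not use it, since there exist clones on three-element sets that are not finitely generated. First, we describe every Boolean clone explicitly as an intersection of a few ``defining'' clones. These are: $\CloneR_0,\CloneR_1$ (preserving $0$, resp.\ $1$); $\CloneM$ (monotone); $\CloneD$ (self-dual); $\CloneL$ (affine); the families $\CloneS_0^k,\CloneS_1^k$ (preserving $1$-separating, resp.\ $0$-separating, $k$-sets, i.e.\ every $k$ tuples in $f^{-1}(1)$ share a common $1$-coordinate); their limits $\CloneS_0,\CloneS_1$; and the clones $\CloneE,\CloneV,\CloneN,\CloneI$ generated by conjunction, disjunction, negation, and projections together with constants. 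Second, for each clone in this list (Figure~\ref{fig:Post_Lat}) we exhibit a finite generating set. Examples: $\CloneBF=\cloneof{\{\land,\neg\}}$, $\CloneM=\cloneof{\{\land,\lor,0,1\}}$, $\CloneL=\cloneof{\{\xor,1\}}$, $\CloneD=\cloneof{\{\maj(x,\neg y,\neg z)\}}$, $\CloneS_0^k=\cloneof{\{\limp,\, T^{k+1}_k\}}$ where $T^{k+1}_k$ is the threshold function, and $\CloneS_0=\cloneof{\{\limp\}}$. For each such claimed basis, the inclusion $\cloneof{O}\subseteq C$ is routine: check that the generators satisfy the defining invariant, which is preserved by composition.

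The reverse inclusion $C\subseteq\cloneof{O}$ is the substantive part, and I would do it by normal-form arguments. For example, every monotone function is a DNF without negations, so it lies in $\cloneof{\{\land,\lor,0,1\}}$. Every affine function is a sum $\bigoplus_{i\in S} x_i\xor c$, so it lies in $\cloneof{\{\xor,1\}}$. Every $0$-preserving function is expressible from $x\land\neg y$-style terms together with $\lor$. Self-dual functions are obtained via the representation $f(\bar x)=\maj(\ldots)$ after dualising. The $\CloneS^k$-type clones are handled by expressing a function with the $k$-separation property through threshold gates and implication.

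Third, and this is where I expect the main obstacle, we must prove completeness of the list: every clone $C$ equals one of the listed clones. For this I would argue by a case split on which of the five maximal clones $\CloneR_0,\CloneR_1,\CloneM,\CloneD,\CloneL$ contain $C$. If $C$ lies in none of them, then Post's functional completeness criterion gives $C=\CloneBF$: one picks witnesses outside each maximal clone and composes them to obtain $\neg$ and $\land$. Otherwise, $C$ lies inside some maximal clone, and we recurse into that sub-lattice, using analogous ``criterion'' lemmas at each level: a clone not contained in any maximal subclone of $C'$ equals $C'$.

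The delicate case is the infinite descending chains $\CloneS_0^k\supsetneq\CloneS_0^{k+1}\supsetneq\cdots$. Here one shows that if $C\subseteq\CloneS_0^2$ is not contained in $\CloneS_0$, then there is a largest $k$ with $C\subseteq\CloneS_0^k$. For that $k$, a witness function violating $(k+1)$-separation, composed with $\limp$ where available, yields $T^{k+1}_k$, so $C$ equals one of the finitely generated listed clones.

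Since each listed clone has a finite basis, and there are countably many finite sets of Boolean functions (each is a finite table), countability follows immediately.
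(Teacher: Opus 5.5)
Your overall route is the paper's. The paper gives no independent proof; it reads finite generation off Post's classification (Figure~\ref{fig:Post_Lat} together with the bases in Table~\ref{tbl:posts_bases}). Your Step~3, with its unproved ``criterion lemmas at each level'', is exactly the content of Post's theorem, so in the end you are citing it too, and your countability step is fine.

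The one place where you go beyond citation is the infinite chains, which is exactly where finite generation needs care, and there the details are wrong. Your definition of $\CloneS_0^k$ (``every $k$ tuples in $f^{-1}(1)$ share a common $1$-coordinate'') is $1$-separation of degree $k$, i.e.\ the paper's $\CloneS_1^k$. Under that definition $x\limp y\notin\CloneS_0^k$, since $(0,0)$ lies in its $1$-set. This contradicts $\CloneS_0=\cloneof{\{\limp\}}$ being the limit of your chain. Worse, the basis $\cloneof{\{\limp,T^{k+1}_k\}}$ you assign to $\CloneS_0^k$ is wrong under either convention once $k\geq 3$:
\begin{itemize}
\item $T^{k+1}_k(x,x,y,\dots,y)=x\land y$;
\item $(x\limp y)\limp y=x\lor y$;
\item $(x\limp y)\land(y\limp x)=x\leftrightarrow y$.
\end{itemize}
So this set generates all of $\CloneR_1=\cloneof{\{\lor,\leftrightarrow\}}$. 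At $k=2$ the mistake is invisible, because $T^{3}_{2}=\maj$ is both thresholds. Consequently your ``routine'' check that the generators satisfy the defining invariant actually fails here.

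The same mix-up breaks your delicate case. Suppose $C$ lies in the $0$-separating chain that contains $\limp$. Then $C$ can never contain $T^{k+1}_k$ for $k\geq 3$: two tuples of weight $k-1$ with disjoint zero sets lie in its $0$-set, so $T^{k+1}_k$ is not even in $\CloneS_0^2$. The correct pairing, as in Table~\ref{tbl:posts_bases}, is $\CloneS_0^k=\cloneof{\{\limp,T^{k+1}_2\}}$ and $\CloneS_1^k=\cloneof{\{\nimp,T^{k+1}_k\}}$. Accordingly, the witness violating degree-$(k+1)$ $0$-separation must be turned into $T^{k+1}_2$. Its $0$-set $\{\bar 0,e_1,\dots,e_{k+1}\}$ is $0$-separating of degree $k$ but not of degree $k+1$.

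There is also a subtlety at $k=2$: this step yields only $\maj$, and $\cloneof{\{\maj\}}=\CloneD_2\subsetneq\CloneS_{00}^2$. So the bottom of that interval must include $\CloneD_2$, and reaching $\CloneS_{00}^2$ needs the extra generator $x\lor(y\land z)$.

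With these corrections, or simply by citing Post's table of bases as the paper does, the argument is sound.
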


\begin{fact}[All downward closed sets of Boolean clones are finitely generated] 
\label{fact:finite-downward-closure}
    Every downward-closed set of Boolean clones is the downward closure of a finite set of Boolean clones.
\end{fact}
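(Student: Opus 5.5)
We need to show that every downward-closed set $\mathcal{D}$ of Boolean clones is the downward closure of finitely many clones. Equivalently, Post's lattice, viewed as a poset under $\subseteq$, is a \emph{well-quasi-order}: it has no infinite strictly descending chains and no infinite antichains. Given this, the set of maximal elements of $\mathcal{D}$ is an antichain, hence finite. Since there are no infinite ascending chains either (see below), every element of $\mathcal{D}$ lies below some maximal element. So $\mathcal{D}$ is the downward closure of its finitely many maximal elements.

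The plan is therefore to establish three properties of Post's lattice:
\begin{enumerate}
\item every ascending chain stabilises;
\item every descending chain stabilises;
\item every antichain is finite.
\end{enumerate}
Strictly, the argument above only needs (1) and (3). Property (1) follows directly from Fact~\ref{fact:finitely-generated}. If $C_0\subsetneq C_1\subsetneq\cdots$ were an infinite chain, its union is again a clone, because a composition involves only finitely many functions and these all lie in a single $C_i$. By Fact~\ref{fact:finitely-generated} the union is generated by a finite set $O$. Each element of $O$ lies in some $C_i$, so $O$ is contained in a single $C_n$, and then the union equals $C_n$, a contradiction. Hence every nonempty set of clones has a maximal element above any of its members, so $\mathcal{D}$ is the downward closure of its maximal elements.

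The main obstacle is (3), finiteness of antichains. This is where the explicit structure of Post's lattice (Figure~\ref{fig:Post_Lat}) is needed. Post's classification shows the following.
\begin{itemize}
\item All but finitely many Boolean clones belong to the eight infinite families $\mathsf{T}_0^k,\mathsf{PT}_0^k,\mathsf{MT}_0^k,\mathsf{MPT}_0^k$ and their duals $\mathsf{T}_1^k,\ldots$, for $k\ge 2$.
\item Each family forms a descending chain in $k$.
\end{itemize}
An antichain can contain at most one member from each of these eight chains, and at most finitely many of the remaining clones. So every antichain has at most $8+N$ elements, where $N$ is the number of clones outside the infinite families.

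I would carry out this step by citing Post's description of the lattice rather than rederiving it; that citation is the substantive ingredient. Combining (1) and (3) as above then yields the statement.
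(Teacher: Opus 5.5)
Your load-bearing argument is correct and follows the paper's route. Ascending chains stabilise by finite generation, antichains are finite by Post's explicit description, so $\mathcal{D}$ is the downward closure of its finitely many maximal elements. The paper derives this Fact from the Boolean clones being well-quasi-ordered by $\supseteq$, which means exactly no infinite strictly ascending $\subseteq$-chains and no infinite antichains.

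However, your opening reformulation and item~(2) are false and should be deleted. The statement is equivalent to a well-quasi-order under $\supseteq$, not under $\subseteq$. Moreover, Post's lattice does contain infinite strictly descending chains, e.g.\ $\cdots\subsetneq\CloneS_0^3\subsetneq\CloneS_0^2$. These are precisely the families you invoke in step~(3), so as written the proposal contradicts itself. Since (2) is never used, the proof stands once it is removed.
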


Here, by a \emph{downward closed} set of Boolean clones we mean a set $\mathcal{F}$ such that  for all clones $C\subseteq C'$, $C'\in \mathcal{F}$ implies $C\in \mathcal{F}$; and the downward closure of a set $\mathcal{F}$ is the
smallest downward closed set $\mathcal{F}'$ such that  $\mathcal{F}\subseteq \mathcal{F}'$. For example, the set of all  proper subclones of a given Boolean clone $\mathcal{F}$ is a downward closed set, and hence can only have finitely many maximal elements. Fact~\ref{fact:finite-downward-closure} is a consequence of the fact that
the Boolean clones are well-quasi-ordered by $\supseteq$.

\begin{fact}[The Uniform Clone Membership Problem is Decidable]
\label{fact:clone-uniform-membership}
  The following two problems are decidable:
  \begin{itemize}
\item   Given a finite set $O\subseteq \finops{\{0,1\}}$ and $f\in\finops{\{0,1\}}$, is it the case that $f\in \cloneof{O}$?
\item   Given finite sets $O,O'\subseteq \finops{\{0,1\}}$, is it the case that $O\preceq O'$?
  \end{itemize}
\end{fact}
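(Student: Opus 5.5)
The plan is to reduce both problems to checking finitely many preservation properties, using Post's classification of Boolean clones. By Post's work, every Boolean clone except $\CloneBF$ lies inside one of finitely many maximal clones. These are the clones of functions that preserve $0$, that preserve $1$, that are monotone, that are self-dual, and that are affine. More generally, each of the countably many Boolean clones in Figure~\ref{fig:Post_Lat} is the set of \emph{polymorphisms} of a finite set of Boolean relations. We can write these relations down explicitly, and Fact~\ref{fact:finitely-generated} gives a finite generating set for each clone.

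We first check that a function $f$ preserves a relation $R$ exactly when $R$ is invariant under applying $f$ coordinatewise. This is a finite check on the truth tables of $f$ and $R$. The same check shows that preservation of $R$ is closed under composition and holds for all projections.

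For the first problem, the naive approach is to search over compositions. Since $f$ has some fixed arity $n$, the set of $n$-ary functions in $\cloneof{O}$ is the smallest set of $n$-ary Boolean functions that contains the projections $\pi^n_k$ and is closed under substituting into members of $O$. This set is a subset of the finite set of all $2^{2^n}$ functions of arity $n$. We therefore compute it by saturation: start with the $n$ projections and repeatedly apply each $g\in O$ of arity $m$ to every $m$-tuple of functions already obtained, until no new function appears. The process terminates because the ambient set is finite. We then accept iff $f$ is in the final set.

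To justify this, we check that the $n$-ary part of $\cloneof{O}$ equals this saturated set. Every term over $O$ in the variables $x_1,\dots,x_n$ denotes an $n$-ary function. Conversely, the union over all arities of these saturated sets is closed under composition and contains all projections, so it is a clone containing $O$. The step that needs care is the closure argument, where arities are mixed. An inner function of arity $n'\neq n$ can be handled by identifying or adding dummy variables, and the term-based description makes this routine.

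The second problem reduces to the first. We have $O\preceq O'$ iff $\cloneof{O}\subseteq\cloneof{O'}$, and since $\cloneof{O'}$ is a clone, this holds iff $O\subseteq\cloneof{O'}$. This condition is finitely many membership queries, one for each $g\in O$, and each query is decidable by the procedure above.

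The main obstacle is not decidability but efficiency, since the saturation is doubly exponential in $n$. If a better bound were wanted, we would instead use the finite list of clones of Post's lattice together with their defining relations (Facts~\ref{fact:finitely-generated} and \ref{fact:finite-downward-closure}). The clone $\cloneof{O}$ can then be identified as the least clone in the lattice all of whose defining invariants are satisfied by every member of $O$. Membership of $f$ then becomes a check of finitely many invariants, each computable from truth tables.
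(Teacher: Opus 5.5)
Your saturation procedure for the $n$-ary part of $\cloneof{O}$, together with reducing $O\preceq O'$ to the finitely many membership tests $g\in\cloneof{O'}$ for $g\in O$, is correct and is essentially the paper's argument: it remarks that the two problems are computationally equivalent, and in Section~\ref{sec:many-valued} it describes this same saturation over functions of bounded arity. The only slip is in your optional efficiency remark, which speaks of a ``finite list'' of clones. Post's lattice is countably infinite (for example the chains $\CloneS_0^n$ and $\CloneS_1^n$), so identifying $\cloneof{O}$ there needs a bound on the relevant separation degree in terms of the arities in $O$; this does not affect the decidability proof.
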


Indeed, the above two problems are easily seen to be computationally equivalent. Their exact complexity depends on the representation used for the finitary operations: they are solvable in polynomial time (and, in fact, belongs to the complexity class $\class{NC^1}$)
if they are specified by truth tables, while they are $\class{coNP}$-complete (for fixed sets $O$) or $\Theta^P_2$-complete (for $O$ as part of the input) if they are specified by Boolean circuits or by Boolean formulas~\cite{Vollmer_Complexity,DBLP:journals/mst/Jerabek21}.

Facts~\ref{fact:finite-downward-closure},~\ref{fact:finitely-generated}, and~\ref{fact:clone-uniform-membership} together imply:

\begin{fact}[All downward closed sets of Boolean clones are decidable] 
\label{fact:downward-closed-decidable}
    Let $\mathcal{F}$ be any downward-closed set of Boolean clones. Then it is decidable, given a finite set of Boolean functions $O$, whether $\cloneof{O}\in\mathcal{F}$. The same holds for upward closed sets of Boolean clones.
\end{fact}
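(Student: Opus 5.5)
Fix a downward-closed set $\mathcal{F}$ of Boolean clones. If $\mathcal{F}$ contains every clone, the decision procedure trivially answers ``yes''. Otherwise I will describe $\mathcal{F}$ by finitely many excluded clones, and reduce membership to finitely many instances of the uniform containment problem from Fact~\ref{fact:clone-uniform-membership}.

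The complement $\mathcal{U}=\mathcal{C}_{\{0,1\}}\setminus\mathcal{F}$ is upward closed. I would use the well-quasi-ordering behind Fact~\ref{fact:finite-downward-closure}: every set of Boolean clones has only finitely many $\subseteq$-minimal elements, and every element lies above one of them. Hence there are clones $D_1,\dots,D_k\in\mathcal{U}$ with
\[
C\in\mathcal{U}\iff D_j\subseteq C \text{ for some } j\le k .
\]
If one only has Fact~\ref{fact:finite-downward-closure} as stated, an alternative route is available. $\mathcal{F}$ is the downward closure of finitely many clones $C_1,\dots,C_m$, so $C\in\mathcal{F}$ iff $C\subseteq C_i$ for some $i$. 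Either description works.

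By Fact~\ref{fact:finitely-generated}, each relevant clone $C_i$ (or $D_j$) equals $\cloneof{O_i}$ for some finite set $O_i$. Given a finite set $O$, the algorithm checks whether $O\preceq O_i$ for some $i\le m$; each check is decidable by Fact~\ref{fact:clone-uniform-membership}. With the $D_j$ description it instead checks whether $O_j'\preceq O$ for some $j$, and answers ``no'' if so. The algorithm is non-uniform: the finite list of generating sets is hard-wired for the fixed $\mathcal{F}$. That is enough, because the claim is only decidability for each fixed $\mathcal{F}$.

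For upward-closed sets, apply the result to the complement, which is downward closed, and negate the answer. Equivalently, write the upward-closed set as the upward closure of its finitely many minimal elements (by the well-quasi-order) and test $O_j\preceq O$ for each of them. The main subtlety is the finiteness step: each clone in the finite description must be finitely generated, so that the containment test is well defined. That is exactly what Fact~\ref{fact:finitely-generated} guarantees.
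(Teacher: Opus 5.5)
The route you present first, and the ``equivalently'' sentence for upward-closed sets, both rest on a false claim. The claim is that every set of Boolean clones has finitely many $\subseteq$-minimal elements with every element lying above one of them, i.e.\ that Boolean clones are well-quasi-ordered by $\subseteq$.

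The well-quasi-ordering behind Fact~\ref{fact:finite-downward-closure} is by $\supseteq$. It rules out infinite strictly ascending chains and infinite antichains, and so yields finitely many \emph{maximal} elements for downward-closed sets. Post's lattice does, however, contain infinite strictly descending chains, e.g.\ $\CloneS_0^2\supsetneq\CloneS_0^3\supsetneq\cdots$ (indeed $T^{n+1}_2\in\CloneS_0^n\setminus\CloneS_0^{n+1}$).

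Let $\mathcal{U}$ be the upward closure of $\{\CloneS_0^n\mid n\geq 2\}$. It has no minimal elements at all. It is also not of the form $\{C\mid D_j\subseteq C \text{ for some } j\leq k\}$:
\begin{itemize}
\item each $D_j$ lies in $\mathcal{U}$, so it contains some $\CloneS_0^{n_j}$;
\item take $N>\max_j n_j$; then $\CloneS_0^N\in\mathcal{U}$ but contains no $D_j$;
\item indeed $D_j\subseteq\CloneS_0^N$ would give $\CloneS_0^{n_j}\subseteq\CloneS_0^N\subsetneq\CloneS_0^{n_j}$.
\end{itemize}
So for the downward-closed set $\mathcal{F}=\mathcal{C}_{\{0,1\}}\setminus\mathcal{U}$ your clones $D_1,\dots,D_k$ do not exist, and ``Either description works'' is wrong. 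Moreover, $\mathcal{U}$ itself is an upward-closed set that is not the upward closure of finitely many clones.

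Your alternative route, by contrast, is correct and is exactly the paper's argument. You write $\mathcal{F}$ as the downward closure of finitely many clones (Fact~\ref{fact:finite-downward-closure}), fix finite generating sets $O_i$ for them (Fact~\ref{fact:finitely-generated}), and test $O\preceq O_i$ for each $i$ (Fact~\ref{fact:clone-uniform-membership}). Your remark that the procedure is non-uniform in $\mathcal{F}$ is also apt.

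For upward-closed sets, only the complementation argument is valid. So the proof goes through once you delete the $D_j$ route and the ``equivalently'' sentence. Upward-closed sets are decidable because their complements are finitely generated from above, not because they are finitely generated from below.
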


The above facts show that the Post lattice is well-behaved.
In contrast, as we will
discuss in Section~\ref{sec:many-valued}, 
for sets $A$ with $|A|>2$, the lattice $\mathcal{C}_A$
no longer has these nice properties.

\begin{figure}
    \centering  \includegraphics[width=0.7\linewidth]{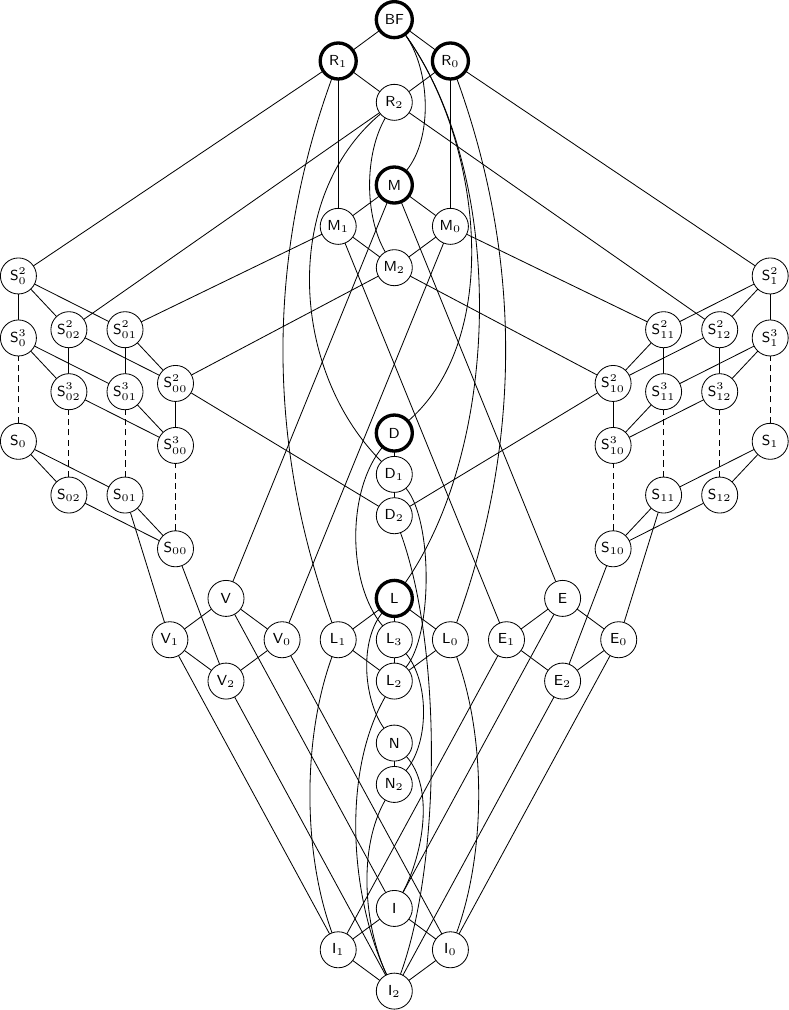}
    \caption{Post's Lattice}
    \label{fig:Post_Lat}
\end{figure}

\begin{table}
\rowcolors[]{2}{}{hellgrau}
\begin{center}  
      \resizebox{.7\linewidth}{!}{\renewcommand{\arraystretch}{1}$\begin{array}{clll}\toprule        \text{Class} & \text{Definition} & \text{Base(s)} & \\
        \midrule
        \CloneBF & \text{All Boolean functions} & \set{x\land y, \neg x} &\\
        \CloneR_0 & \set{f \mid f \text{ is \false-reproducing}} & \set{x\land y, x\xor y} &\\
        \CloneR_1 & \set{f \mid f \text{ is \true-reproducing}} & \set{x\lor y, x\eq y} &\\
        \CloneR_2 & \CloneR_0\cap\CloneR_1 & \set{x\lor y, x\land(y\eq z)} &\\
        \CloneM & \set{f \mid f \text{ is monotone}} & \set{x\lor y, x\land y, \false, \true} &\\
        \CloneM_0 & \CloneM\cap\CloneR_0 & \set{x\lor y, x\land y, \false} &\\
        \CloneM_1 & \CloneM\cap\CloneR_1 & \set{x\lor y, x\land y, \true} &\\
        \CloneM_2 & \CloneM\cap\CloneR_2 & \set{x\lor y, x\land y} &\\
        \CloneS_0 & \set{f \mid f \text{ is \false-separating}} & \set{x\limp y} &\\
        \CloneS_1 & \set{f \mid f \text{ is \true-separating}} & \set{x\nimp y} &\\
        \CloneS_0^n & \set{f \mid f \text{ is \false-separating of degree } n} & \set{x\limp y,T^{n+1}_2} &\\
        \CloneS_1^n & \set{f \mid f \text{ is \true-separating of degree } n} & \set{x\nimp y,T^{n+1}_n } &\\
        \CloneS_{00} & \CloneS_0\cap\CloneR_2\cap\CloneM & \set{x\lor(y\land z)}&\\\rowcolors[]{2}{}{hellgrau}
        \CloneS_{00}^n & \CloneS_0^n\cap\CloneR_2\cap\CloneM 
        &\set{x \lor (y \land z),{\mathrm{T}^{3}_2}}& \text{if } n=2,\\
        \cellcolor{white}&\cellcolor{white} &\cellcolor{white}\set{\mathrm{T}^{n+1}_2}&\cellcolor{white} \text{if }n\geq3\\
        \CloneS_{01} & \CloneS_0\cap\CloneM & \set{x\lor(y\land z),\true}&\\
        \CloneS_{01}^n & \CloneS_0^n\cap\CloneM & \set{T^{n+1}_2,\true}&\\
        \CloneS_{02} & \CloneS_0\cap\CloneR_2 & \set{x\lor(y\nimp z)}&\\
        \CloneS_{02}^n & \CloneS_0^n\cap\CloneR_2 & \set{x\lor(y\nimp z),T^{n+1}_2}&\\
        \CloneS_{10} & \CloneS_1\cap\CloneR_2\cap\CloneM & \set{x\land(y\lor z)}&\\\rowcolors[]{2}{}{hellgrau}
        \CloneS_{10}^n & \CloneS_1^n\cap\CloneR_2\cap\CloneM 
        &\set{x \land (y \lor z),\mathrm{T}^{3}_2}& \text{if }n=2,\\
        \cellcolor{white}&\cellcolor{white}&\cellcolor{white}\set{\mathrm{T}^{n+1}_n}&\cellcolor{white}\text{if }n\geq3\\
        \CloneS_{11} & \CloneS_1 \cap \CloneM  & \set{x \land (y \lor z), \false } &\\
        \CloneS_{11}^n & \CloneS_1^n \cap \CloneM  & \set{T^{n+1}_n, \false } &\\
        \CloneS_{12} & \CloneS_1 \cap \CloneR_2  & \set{\aimp(x,y,z)} &\\
        \CloneS_{12}^n & \CloneS_1^n \cap \CloneR_2  & \set{\aimp(x,y,z), T^{n+1}_n } &\\
        \CloneD & \set{f \mid f \text{ is self-dual}} & \set{ \maj(x, \overline y, \overline z) },\set{\maj,\lnot} &\\
        \CloneD_1 & \CloneD \cap\CloneR_2 & \set{ \maj(x, y, \overline z) } &\\
        \CloneD_2 & \CloneD \cap\CloneM & \set{ \maj(x, y, z)  } &\\
        \CloneL & \set{f \mid f \text{ is linear}} & \set{x \xor y,\true} &\\
        \CloneL_0 & \CloneL \cap\CloneR_0 & \set{x\xor y} &\\
        \CloneL_1 & \CloneL \cap\CloneR_1 & \set{x\eq y} &\\
        \CloneL_2 & \CloneL \cap\CloneR_2 & \set{\threeXor(x,y,z)} &\\ 
        \CloneL_3 & \CloneL \cap\CloneD & \set{x\xor y\xor z\xor \true} &\\
        \CloneV & \set{f \mid f \text{ is a disjunction or constant}} & \set{x\lor y, \false,\true } &\\
        \CloneV_0 & \CloneM_0\cap\CloneV & \set{ x\lor y, \false } &\\
        \CloneV_1 & \CloneM_1\cap\CloneV & \set{ x\lor y, \true } &\\
        \CloneV_2 & \CloneM_2\cap\CloneV & \set{ x\lor y} &\\
        \CloneE & \set{f \mid f \text{ is a conjunction or constant}} & \set{x\land y, \false, \true } &\\
        \CloneE_0 & \CloneM_0\cap\CloneE & \set{ x\land y, \false } &\\
        \CloneE_1 & \CloneM_1\cap\CloneE & \set{ x\land y, \true } &\\
        \CloneE_2 & \CloneM_2\cap\CloneE & \set{ x\land y} &\\
        \CloneN & \set{f \mid f \text{ depends on at most one variable}} & \set{\neg x,\false,\true} &\\
        \CloneN_2 & \CloneL_3\cap\CloneN & \set{ \neg x} &\\
        \CloneI & \set{f \mid f \text{ is a projection or a constant}} & \set{\false,\true}&\\
        \CloneI_0 & \CloneR_0\cap\CloneI & \set{\false}&\\
        \CloneI_1 & \CloneR_1\cap\CloneI & \set{\true}&\\
        \CloneI_2 & \CloneR_2\cap\CloneI & \set{}&\\\bottomrule
      \end{array}$}
\end{center}
\caption{List of all Boolean clones with their bases. For $n\ge m\in\N$, the threshold function $T^{n}_m\eqdef \bigvee_{\substack{S\subseteq\{1,\dots,n\},\\|S|=m}}\bigwedge_{i\in S}x_i$ requires $m$ bits out of $n$ set to $\true$.}\label{tbl:posts_bases}
\end{table}

As we will see in Section~\ref{section:Prop}, there is a close correspondence between Boolean clones and syntactic fragments of propositional logic. We will make use of this correspondence extensively.

\section{Fragments of Classical Propositional Logic}\label{section:Prop}

In this section, we consider syntactic
fragments of classical propositional logic (henceforth simply: propositional logic) obtained by fixing a set of allowed  connectives. For the syntax of such fragments, we require a bit of notation. That is, with every Boolean function $f$, we associate a Boolean connective $\oper{f}$ in the obvious way (also see a bit below for the semantics). 
Now, for a set $O$ of Boolean functions, 
we define $\PL_O$ to be the propositional language where we
allow (only) the functions in $O$ as connectives. 
In other words, $\PL_O$ is propositional language whose 
formulas are generated by the grammar
\[ \phi \Coloneqq x \mid \oper{f}(\phi_1, \ldots, \phi_n)\]
where $x$ is a propositional variable (from some fixed countably infinite set), and $f\in O$
an $n$-ary Boolean function.
For a given a finite set  $\PROP$ of propositional variables, we will 
also denote by $\PL_O[\PROP]$ the set of all
$\PL_O$-formulas using only propositional variables from the set $\PROP$. 
Thus, each choice of the set $O$ of Boolean function gives rise to a different language $\PL_O$. 
The semantics for these languages is as expected: given a truth assignment $V$,
$\sem{x}_V=V(x)$ for all propositional variables $x$, and $\sem{\oper{f}(\phi_1, \ldots,\phi_n)}_V=f(\sem{\phi_1}_V, \ldots, \sem{\phi_n}_V)$. 
With a slight abuse of notation,
we will write $\sem{\phi}$ also 
for the Boolean function that maps every truth assignment $V$ to the truth value $\sem{\phi}_V$.
As usual, two formulas,
$\phi, \phi'$
are said to be \emph{equivalent} if $\sem{\phi}=\sem{\phi'}$.

We use certain abbreviations for Boolean functions to enhance readability, readers are referred to Table~\ref{table:List_of_Abbreviations} for the exhaustive list.   
Note that $\top$ and $\bot$ are treated here as (constant) unary operations (cf.~Remark~\ref{rem:zeroary}).

\begin{figure}
   \subfloat{\begin{tabular}{cc}
    \toprule
        \text{Abbreviation} & \text{Corresponding formula}\\
        \midrule
        $\maj$ & $(x\land y)\lor (y\land z)\lor(x\land z) $\\
        $\aimp$ & $x\land (y\rightarrow z)$\\
        $\oxor$ & $x\lor (y \oplus z)$\\
        $\oplus$ &  $(\neg x\land y)\lor (x\land \neg y )$\\
        $\threeXor$ & $x\oplus y \oplus z$\\
        $\neg$ & $\neg x$
        \\
    \bottomrule\\
    \end{tabular}
    }
    ~~~~~~~
    \centering
    \subfloat{\begin{tabular}{cc}
    \toprule
        \text{Abbreviation} & \text{Corresponding formula}\\
        \midrule
        $\land$ & $(x\land y)$\\
        $\lor$ & $(x\lor y)$\\
        $\leftrightarrow$ & $(x\leftrightarrow y)$\\
        $\rightarrow$ & $(x\rightarrow y)$\\
        $\nrightarrow$ & $(x\land\neg y)$\\
        $\perp$ & $(x\land \neg x)$\\
        $\top$ & $(x\lor \neg x)$\\
    \bottomrule
    \end{tabular}}
    \caption{List of Abbreviations}
    \label{table:List_of_Abbreviations}
\end{figure}
\begin{example}
    The formula $p\land q$ belongs to the propositional fragment $\PL_{\{\land \}}$. 
    It does not belong to the propositional
    fragment $\PL_{\{\lor\}}$, and (as can be easily shown) it is not equivalent to a formula in the latter
    fragment either.
\end{example}

The \emph{size} of a $\PL_O$-formula can be measured in two ways: using a DAG-style representation or using a tree-style representation. Formally:

\[\begin{array}{lll}
    |x|_{\tree} &\eqdef& 1 \\
    |f(\phi_1, \ldots, \phi_n)|_{\tree} &\eqdef& 1+\Sigma_i|\phi_i|_{\tree} 
\end{array}\]
whereas $|\phi|_{\dagsize}$ is the number of distinct subformulas of $\phi$, i.e., $|\phi|_{\dagsize}=|\subf(\phi)|$ where
\[
    \subf(\phi) = 
\begin{cases}
    \{x\} & \text{for $\phi$ of the form $x$} \\
    \{\phi\}\cup\bigcup_i \subf(\phi_i) & \text{for $\phi$ of the form $f(\phi_1, \ldots, \phi_n)$}
\end{cases}\]

Equivalently, $|\phi|_{\tree}$ is the number of occurrences of propositional variables and connectives in $\phi$.
In terms of Boolean circuits, 
$|\phi|_{\dagsize}$ corresponds to the
size of a Boolean circuit 
computing $\phi$, where the circuit 
contains one node per subformula of $\phi$, while $|\phi|_{\tree}$ corresponds to the size of a Boolean circuit where each node has fan-out 
at most 1.

In the following subsections, we will be surveying different properties of the languages $\PL_O$ (based on the set $O$): 
expressive power and succinctness, complexity of reasoning tasks, and teachability and learnability.

\subsection{Expressive power and succinctness}\label{section: Expressive}

Let us say that $\PL_O$ is \emph{expressively contained} in $\PL_{O'}$ if every formula of 
$\PL_O$ is equivalent to some formula of $\PL_{O'}$. We say that $\PL_O$ and $\PL_{O'}$ are \emph{expressively equivalent} if they
are expressively contained in each other.
Then, it is easy to see that:

\begin{fact}\label{fact:containment}
    For all sets $O, O'$ of Boolean functions,
    $\PL_O$ is expressively contained in $\PL_{O'}$ if and only if $O\preceq O'$. Moreover, for any given finite sets $O, O'$ it is decidable whether this holds.
\end{fact}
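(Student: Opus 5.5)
The plan is to first prove a characterization lemma: for every set $O$ of Boolean functions, the set $\{\sem{\phi} \mid \phi\in\PL_O\}$ coincides with $\cloneof{O}$. Here a formula with variables among $x_1,\ldots,x_n$ is read as an $n$-ary function. The fact then follows from this lemma together with Fact~\ref{fact:clone-uniform-membership}.

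For the inclusion $\{\sem{\phi}\}\subseteq\cloneof{O}$ I will use induction on $\phi$. A variable $x_k$ denotes the projection $\pi^n_k$, which lies in every clone. For $\phi=\oper{f}(\phi_1,\ldots,\phi_m)$, the induction hypothesis gives that each $\sem{\phi_i}$, viewed as an $n$-ary function over the common variables, lies in $\cloneof{O}$. Then $\sem{\phi}$ is the composition of $f\in O$ with these functions, so closure under composition puts it in $\cloneof{O}$.

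For the converse inclusion, I will show that the set $C$ of functions expressible by $\PL_O$-formulas is a clone containing $O$; minimality of $\cloneof{O}$ then gives $\cloneof{O}\subseteq C$. Projections are expressed by variables. Each $f\in O$ is expressed by $\oper{f}(x_1,\ldots,x_n)$. For composition, if $f$ is expressed by $\psi(y_1,\ldots,y_m)$ and each $g_i$ by $\chi_i(x_1,\ldots,x_n)$, I substitute $\chi_i$ for $y_i$ in $\psi$; the semantics of substitution gives the required function. One small point needs care: arities must be treated uniformly. A formula using fewer variables can be regarded as an $n$-ary function with dummy arguments, and I will fix the convention that $\sem{\phi}$ is taken over any finite superset of its variables.

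With the lemma in hand, the equivalence follows. If $O\preceq O'$, then for any $\phi\in\PL_O$ we have $\sem{\phi}\in\cloneof{O}\subseteq\cloneof{O'}$, so some $\PL_{O'}$-formula expresses it, using the same variable names after renaming. Conversely, every $f\in O$ is expressed by $\oper{f}(x_1,\ldots,x_n)\in\PL_O$, hence by some $\PL_{O'}$-formula, so $O\subseteq\cloneof{O'}$ and therefore $\cloneof{O}\subseteq\cloneof{O'}$. Decidability for finite $O,O'$ is then immediate from the second item of Fact~\ref{fact:clone-uniform-membership}. The only step needing real attention is the bookkeeping of variables and arities in the substitution argument; everything else is routine.
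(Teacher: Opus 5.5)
Your proposal is correct and takes the same approach as the paper: identify $\cloneof{O}$ with the set of Boolean functions definable by $\PL_O$-formulas, read the equivalence off from that, and get decidability from Fact~\ref{fact:clone-uniform-membership}. You are just spelling out the routine induction and substitution arguments that the paper leaves implicit; the one detail left to your ``bookkeeping'' remark is that a $\PL_{O'}$-formula equivalent to $\oper{f}(x_1,\ldots,x_n)$ may mention extra inessential variables, and these can harmlessly be identified with $x_1$.
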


In other words, the expressive power of $\PL_O$ is determined by the clone generated by $O$. Indeed, this fact follows from the very definition of clones: $\cloneof{O}$ is precisely the set of Boolean functions definable by
$\PL_O$-formulas.
An immediate consequence of Fact~\ref{fact:containment} is that
the set of all propositional languages $\PL_O$ (for $O$ any set of Boolean functions),
ordered by expressive containment,
forms a pre-order, whose induced partial order is isomorphic to the Post lattice. 
Another consequence is:

\begin{fact}
    For every set $O$ of Boolean functions, there is a finite subset $O'\subseteq O$ such that  $\PL_O$ and $\PL_{O'}$ are expressively equivalent.
\end{fact}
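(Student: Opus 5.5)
The plan is to derive the statement directly from Fact~\ref{fact:finitely-generated} together with Fact~\ref{fact:containment}. By Fact~\ref{fact:containment}, it suffices to find a finite $O'\subseteq O$ with $O\equiv O'$, i.e.\ $\cloneof{O'}=\cloneof{O}$. Indeed, $O'\subseteq O$ gives $\cloneof{O'}\subseteq\cloneof{O}$ by monotonicity of $\cloneof{\cdot}$, so only the reverse inclusion requires work.

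First, I apply Fact~\ref{fact:finitely-generated} to the Boolean clone $C=\cloneof{O}$. This yields a finite set $G=\{g_1,\dots,g_k\}$ of Boolean functions with $\cloneof{G}=C$. The elements of $G$ need not lie in $O$, however, so the next step is to pull them back into $O$.

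For this I use the standard compactness property of the clone closure operator. The clone $\cloneof{O}$ equals the union, over all finite $F\subseteq O$, of $\cloneof{F}$. To see this, observe that the union is directed: any two members $\cloneof{F_1}$ and $\cloneof{F_2}$ are contained in $\cloneof{F_1\cup F_2}$. A directed union of clones contains all projections and is closed under composition, since a single composition uses only finitely many functions, and these all lie in a common member of the union. Hence the union is a clone containing $O$, so it contains $\cloneof{O}$; the converse inclusion is trivial. Equivalently, in terms of the language, each $g_i$ is defined by some $\PL_O$-formula, and that formula mentions only finitely many connectives from $O$.

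Consequently, each $g_i$ lies in $\cloneof{F_i}$ for some finite $F_i\subseteq O$. I then set $O'=F_1\cup\dots\cup F_k$, which is a finite subset of $O$. This gives $G\subseteq\cloneof{O'}$, hence $\cloneof{O}=\cloneof{G}\subseteq\cloneof{O'}$. Together with the easy inclusion, $\cloneof{O'}=\cloneof{O}$, and Fact~\ref{fact:containment} yields the expressive equivalence of $\PL_O$ and $\PL_{O'}$.

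The only step that needs any care is the compactness step, specifically checking that the directed union is closed under composition. This is routine, since each composition involves finitely many functions. The real content is Post's finite generation result, which is taken as given.
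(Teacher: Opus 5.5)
Your proof is correct and matches the paper's implicit argument, which presents this fact as a direct consequence of Fact~\ref{fact:containment} together with the finite generation of Boolean clones (Fact~\ref{fact:finitely-generated}). The compactness step, which pulls the finite generating set back inside $O$, is the one detail the paper leaves unstated, and you handle it correctly.
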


It is also well known that $\land$ and $\neg$  form a truth-functionally complete set of operators
for propositional logic. Formally, we
say that $\PL_O$ is \emph{expressively complete}
if every Boolean function $f(x_1, \ldots, x_n)$ (with $n\geq 1$) is 
defined by a formula $\phi(x_1, \ldots, x_n)$ of $\PL_O$ containing no Boolean variables other than $x_1, \ldots, x_n$.

\begin{fact}
    For every set $O$ of Boolean functions,
    $\PL_O$ is expressively complete if and only if $\{\land,\neg\}\preceq O$. Moreover, it is decidable (given a finite set $O$) whether this holds.
\end{fact}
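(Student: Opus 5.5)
The plan is to reduce the statement to Fact~\ref{fact:containment} and Fact~\ref{fact:clone-uniform-membership}. The first observation is that $\cloneof{\{\land,\neg\}}=\CloneBF=\finops{\{0,1\}}$. To see this, every $n$-ary Boolean function with $n\geq 1$ can be written in disjunctive normal form over $x_1,\ldots,x_n$, and $\lor$ is definable from $\land$ and $\neg$ via De Morgan. The constant functions are covered too: they are expressed as $x_1\land\neg x_1$ and $\neg(x_1\land\neg x_1)$, so no variable beyond $x_1$ is needed. This also matches the first row of Table~\ref{tbl:posts_bases}. Consequently $\{\land,\neg\}\preceq O$ holds if and only if $\cloneof{O}=\finops{\{0,1\}}$.

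For the direction ($\Leftarrow$), I will argue as follows. Suppose $\{\land,\neg\}\preceq O$, so $\cloneof{O}$ contains every Boolean function. The remaining point is that a function in $\cloneof{O}$ of arity $n$ is defined by a $\PL_O$-formula using only the variables $x_1,\ldots,x_n$. I will prove by induction on the generation of $\cloneof{O}$ that every $n$-ary member is $\sem{\phi}$ for some $\phi\in\PL_O[\{x_1,\ldots,x_n\}]$. The base cases are the projections, given by the variables $x_k$, and the functions of $O$, given by $\oper{f}(x_1,\ldots,x_m)$. For the composition step, we substitute the formulas for $g_1,\ldots,g_m$, all over $x_1,\ldots,x_n$, into the formula for $f$ in place of its variables; this keeps the variables within $x_1,\ldots,x_n$. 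Conversely, every formula over $x_1,\ldots,x_n$ defines a member of $\cloneof{O}$, by a similar induction. So the $n$-ary definable functions are exactly the $n$-ary members of the clone.

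For the direction ($\Rightarrow$), suppose $\PL_O$ is expressively complete. Then $x\land y$ and $\neg x$ are each defined by $\PL_O$-formulas, and by the correspondence just established they lie in $\cloneof{O}$. Hence $\{\land,\neg\}\subseteq\cloneof{O}$, and so $\cloneof{\{\land,\neg\}}\subseteq\cloneof{O}$, which is exactly $\{\land,\neg\}\preceq O$.

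Decidability follows directly from the second item of Fact~\ref{fact:clone-uniform-membership}, taking $O'=O$ and the left-hand set to be $\{\land,\neg\}$. Alternatively, it follows from Fact~\ref{fact:downward-closed-decidable}, since $\{\CloneBF\}$ is upward closed. The only subtle step is the variable-restriction clause in the definition of expressive completeness, and the induction over clone generation above handles it. I expect no real obstacle.
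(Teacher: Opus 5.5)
Your proof is correct and follows essentially the route the paper intends. The paper treats this fact as an immediate consequence of three things: the clone--fragment correspondence behind Fact~\ref{fact:containment} (namely, $\cloneof{O}$ is exactly the set of $\PL_O$-definable functions), the fact that $\{\land,\neg\}$ is a basis of $\CloneBF$ (Table~\ref{tbl:posts_bases}), and Fact~\ref{fact:clone-uniform-membership} or Fact~\ref{fact:downward-closed-decidable} for decidability. Your explicit induction, showing that $n$-ary members of $\cloneof{O}$ are definable using only $x_1,\ldots,x_n$, simply makes rigorous the variable-restriction detail the paper leaves implicit.
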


We note that, if expressive completeness were required to hold with respect to Boolean functions $f(x_1, \ldots, x_n)$ with $n\geq 0$ (rather than only for $n>0$),  $\land$ and $\neg$ would not form an expressively complete set of operators, and one would need to further add $\top$ (or $\bot$). 

More generally, it follows from Fact~\ref{fact:downward-closed-decidable} that \emph{every} property of propositional fragments $PL_O$ is decidable, provided that the property is monotone and depends only on $\cloneof{O}$. 

Next, let us consider the question of \emph{succinctness}, which pertains 
to the size of formulas needed to define a given Boolean function. To motivate the question of succinctness, suppose that $O$ and $O'$ are two sets of Boolean functions, such that  $\PL_O$ and $\PL_{O'}$ are expressively equivalent. It does not follow that the two languages are equally succinct. Indeed, if $O'$ contains more operators, potentially it may allow for expressing certain formulas more efficiently. 
If we assume dag-representations,
it is not difficult to see that
there are linear-time translations
between $\PL_O$ and $\PL_{O'}$. 
Indeed, we can simply translate 
a $\PL_O$-formula into a $\PL_{O'}$
formula by replacing each occurrence
of a function in $O$ by its defining 
$\PL_{O'}$-formula. And likewise in the other direction. This yields a
linear blowup in dag-size. The analogous question for tree-size, however is less straightforward. 
Surprisingly, however, Pratt~\cite{Pratt1975} showed that all expressively complete languages have the same succinctness, up to a polynomial:

\begin{theorem}[\cite{Pratt1975}]
\label{thm:pratt}
    Let $O$ and $O'$ be finite sets of Boolean functions such that  $\PL_O$ and $\PL_{O'}$ are expressively complete. Then there are polynomial-time truth-preserving translations between $\PL_O$ and $\PL_{O'}$, even in terms of tree-size.
\end{theorem}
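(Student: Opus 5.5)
The plan is to prove the theorem by \emph{formula balancing}, in the style of Spira's lemma: first rebalance a $\PL_O$-formula to logarithmic depth, and only then translate into $\PL_{O'}$. By symmetry it suffices to translate from $\PL_O$ to $\PL_{O'}$. Let $k$ and $k'$ be the maximal arities of functions in $O$ and $O'$.

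Since $\PL_{O'}$ is expressively complete, I fix once and for all the following $\PL_{O'}$-formulas:
\begin{itemize}
\item a formula $\mu(z,u,v)$ defining the multiplexer $(z\land u)\lor(\neg z\land v)$;
\item formulas $\tau(x)$ and $\beta(x)$ defining the constant functions $\top$ and $\bot$ of one argument;
\item for each $f\in O$, a formula $\delta_f(x_1,\ldots,x_m)$ defining $f$.
\end{itemize}
Let $c$ bound the number of occurrences of any variable in any of these finitely many fixed formulas. Every formula contains at least one variable, so I can fix a variable $x_\phi$ occurring in the input $\phi$. The recursion then works with formulas over $O\cup\{\top,\bot\}$ in which the constants are leaves. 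A leaf $\top$ is translated as $\tau(x_\phi)$ and a leaf $\bot$ as $\beta(x_\phi)$. This is sound because constants never need to be defined without a variable present.

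\textbf{Key step (splitting lemma).} Let $\phi$ have tree-size $n$ larger than some constant $n_0$, and suppose every node has at most $k$ children. Start at the root and repeatedly move to the child with the largest subformula, as long as the current subformula has size $>n/2$. The node reached, call its subformula $\psi$, satisfies $n/(2k)-1\le|\psi|_{\tree}\le n/2$. Replacing $\psi$ by a constant leaf gives $\phi[\psi:=\top]$ and $\phi[\psi:=\bot]$, each of size at most $n-|\psi|_{\tree}+1\le(1-\tfrac{1}{2k})n+2$. All three pieces therefore have size at most $\alpha n$ for a constant $\alpha<1$ once $n\ge n_0$. Moreover $\phi\equiv\mu(\psi,\phi[\psi:=\top],\phi[\psi:=\bot])$, which is just case distinction on the truth value of $\psi$.

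\textbf{Translation and size accounting.} I define $T(\phi)$ recursively. If $|\phi|_{\tree}\le n_0$, obtain $T(\phi)$ by replacing each connective $f$ by $\delta_f$ bottom-up and each constant leaf by $\tau(x_\phi)$ or $\beta(x_\phi)$; this has size bounded by a constant $K$ depending only on $n_0$, $O$ and $O'$. Otherwise, set
\[T(\phi)=\mu\bigl(T(\psi),T(\phi[\psi:=\top]),T(\phi[\psi:=\bot])\bigr).\]
Correctness follows by induction from the equivalence above. For size, each of the three recursive results is copied at most $c$ times, so
\[S(n)\le 3c\,S(\alpha n)+O(1),\]
which solves to $S(n)=O\bigl(n^{\log_{1/\alpha}(3c)}\bigr)$, a polynomial. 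The running time obeys the same kind of recurrence, because locating $\psi$ and performing the substitutions take linear time. Hence the translation is computable in polynomial time.

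The main obstacle is the splitting lemma and its constants. One must ensure that, with arbitrary bounded arity $k$ and the extra constant leaves, all three pieces shrink by a \emph{fixed} factor $\alpha<1$; otherwise the exponent $\log_{1/\alpha}(3c)$ is not a constant. The naive alternative of directly substituting $\delta_f$ for each connective fails, since it multiplies size by $c$ at every level and blows up exponentially on deep formulas. Balancing first is exactly what avoids that blow-up.
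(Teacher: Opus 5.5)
Your proposal is correct and is essentially the Spira-style balancing argument behind Pratt's result; the paper itself gives no proof and only cites Pratt. The bounds in your splitting lemma, the decomposition $\phi\equiv\mu(\psi,\phi[\psi:=\top],\phi[\psi:=\bot])$, and the recurrence $S(n)\le 3c\,S(\alpha n)+O(1)$ all check out. The one detail to state explicitly is that $x_\phi$ is fixed once for the top-level input and reused in every recursive call, because pieces such as $\phi[\psi:=\top]$ can be variable-free.
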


Here, by a \emph{truth-preserving translation} we mean a translation
from $\PL_O$-formulas $\phi$ to
$\PL_{O'}$-formulas $\phi'$, in the same set of variables, such that 
$\sem{\phi}=\sem{\psi}$.
In particular, for instance, Theorem~\ref{thm:pratt} implies that 
$\PL_{\{\land,\neg,\leftrightarrow\}}$-formulas
can be rewritten, in polynomial time (in terms of tree-size), to 
equivalent $\PL_{\{\land,\neg\}}$-formulas. 

Another way to view Theorem~\ref{thm:pratt} is as follows: it tells us that,
whenever $\PL_O$ is expressively complete, then it is maximally succinct, in the following sense: there is no finite set $O'$,
such that  $\PL_{O\cup O'}$ allows for more succinct definitions (up to a polynomial).
 Theorem~\ref{thm:pratt} applies only to expressively complete languages, but some other, similar results have been proved that apply more broadly.
Let us say that a finite set $O$ of Boolean functions is
\emph{robust} if the following holds:
for all finite sets of Boolean functions  $O'$ with $\cloneof{O'}= \cloneof{O}$,  
there is a polynomial-time (in terms of tree-size) truth-preserving translation from $\PL_{O'}$ to $\PL_{O}$. Intuitively,
when $\PL_O$ is robust, it means that
within the class of all languages $\PL_{O'}$
that are expressively equivalent to $\PL_O$, the language $\PL_O$ is maximally succinct (again, up to a polynomial).

\begin{theorem}[\cite{Thomas2012:applicability}]
\label{thm:robust}
Let $O$ be a finite set of Boolean functions. If  $O\preceq \{\land,\top,\bot\}$,
$O\preceq \{\lor,\top,\bot\}$,
$O\preceq \{\oplus,\top,\bot\}$, or
 $\{\land,\lor\}\preceq O$,
then $O$ is robust. Otherwise, $O$ is 
robust as well provided that either
$\land\in O$ or $\lor\in O$.
\end{theorem}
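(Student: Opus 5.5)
Fix a finite $O'$ with $\cloneof{O'}=\cloneof{O}$. The goal is a translation $\phi\mapsto\phi^*$ from $\PL_{O'}$ to $\PL_O$ that is polynomial in tree-size. The naive translation replaces each connective $\oper{f}$, $f\in O'$, by a fixed defining $\PL_O$-formula $\psi_f(x_1,\dots,x_n)$. This blows up tree-size by a factor of $c^{d}$, where $c$ is the maximal number of occurrences of a variable in some $\psi_f$ and $d$ is the nesting depth. The plan is therefore: (a) bring the input to logarithmic depth first, and (b) where that is impossible, exploit the specific shape of the clone so that each variable needs to occur only once.

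\textbf{Cases $O\preceq\{\land,\top,\bot\}$, $O\preceq\{\lor,\top,\bot\}$, $O\preceq\{\oplus,\top,\bot\}$.} By Post's lattice (Table~\ref{tbl:posts_bases}) the clone is a subclone of $\CloneE$, $\CloneV$ or $\CloneL$. Every function in it is therefore a conjunction, a disjunction or a parity of some variables, possibly with a constant. I will show by induction that every $\PL_{O'}$-formula $\phi$ is equivalent to such a normal form whose variables are among those occurring in $\phi$, so its size is at most linear in $|\phi|_{\tree}$. I will then show that any such normal form lying in $\cloneof{O}$ can be written in $\PL_O$ with linear tree-size. For this I chain copies of a single generator of $O$ (such as $\land$, $\lor$, $\oplus$ or $\threeXor$, or whatever $O$ provides), each time plugging one fresh variable into an argument position in which the generator is, say, the identity or a projection. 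Constants and parity corrections are handled by a fixed gadget of constant size. A case check over the finitely many clones below $\CloneE,\CloneV,\CloneL$ makes this uniform.

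\textbf{Case $\{\land,\lor\}\preceq O$.} This covers clones containing $\CloneM_2$. The standard route is Spira-style balancing. First I show that every $\PL_{O'}$-formula of tree-size $s$ can be rebalanced to an equivalent $\PL_{O'\cup O''}$-formula of depth $O(\log s)$, where $O''$ is a fixed finite set in the same clone. To do this, find a subformula $\chi$ of size between $s/3$ and $2s/3$ and write
\[
\phi \equiv (\phi[\top/\chi]\land \chi)\lor(\phi[\bot/\chi]\land \overline{\chi}).
\]
The complement $\overline{\chi}$ and the constants $\top,\bot$ are not available in a general clone (e.g.\ $\CloneM_2$). In the monotone case I will instead use $\phi\equiv \phi[\bot/\chi]\lor(\phi[\top/\chi]\land\chi)$, which is valid because $\phi$ is monotone in $\chi$. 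In general I replace the placeholders $\top,\bot$ by selector variables that are eliminated afterwards; for non-monotone clones one uses functions of the clone, such as $\aimp$, as muxes. Once depth is $O(\log s)$, the naive substitution blows up size only by $c^{O(\log s)}=s^{O(1)}$.

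\textbf{Remaining case: $\land\in O$ or $\lor\in O$, with $\{\land,\lor\}\not\preceq O$.} The clone is then one of the $\CloneS$-type or $\CloneR$-type clones containing only one of $\land,\lor$, the infinite $\CloneS_{0}^n/\CloneS_1^n$ chains being the delicate part. Here $\land$ (by duality, $\lor$) is directly available, so the same balancing works with a restricted selector. For example, in $\CloneS_1$-type clones write $\phi\equiv \phi[\chi\mapsto y]$ and use $x\land\neg y$-style gadgets available there. The point is that having $\land$ literally in $O$ lets the recombination step cost $O(1)$ extra depth, rather than the depth of a defining formula.

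\textbf{Main obstacle.} I expect the balancing step in the non-complete clones to be the hard part, specifically expressing the case-split recombination without constants or negation inside each clone. My first attempt would be an explicit case analysis over the maximal clones of each branch, building the recombination gadget from the listed bases.
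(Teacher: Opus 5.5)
The paper gives no proof of this theorem; it is quoted from \cite{Thomas2012:applicability}, so your sketch has to stand on its own. Your first case is fine. Your second case can be completed roughly as you indicate:
\begin{itemize}
\item In the monotone clones, constants can be propagated away, since a monotone function is either constant or in $\CloneM_2$.
\item In $\CloneR_2$, let $\bigvee_i x_i$ and $\bigwedge_i x_i$ stand in for $\top$ and $\bot$. Recombine with the multiplexer, which lies in $\CloneR_2$. Then repair the inputs $\bar 0,\bar 1$ at the end using 0- and 1-preservation.
\end{itemize}

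The genuine gap is the third case, which is the only nontrivial part of the statement. This case consists exactly of the $\CloneS_0$- and $\CloneS_1$-type clones: every $\CloneR$-type clone contains both $\land$ and $\lor$, and the $\CloneD$-type clones contain neither, so your mention of $\CloneR$-type clones here is misplaced.

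For the limit clones $\CloneS_1,\CloneS_{12},\CloneS_{11},\CloneS_{10}$ and their duals, balancing does work. It works only because every $f$ there satisfies $f\le x_i$ for a single variable $x_i$, so a guard $x_i\land(\cdot)$ can be pushed through a Spira-balanced formula. Your $x\land\neg y$ suggestion implicitly relies on this.

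For the chains $\CloneS_1^n,\CloneS_{10}^n,\CloneS_{11}^n,\CloneS_{12}^n$ with $n\ge 2$ (and duals), this fails for three reasons:
\begin{itemize}
\item No such guard exists; for instance, $\mathrm{T}^{n+1}_n$ lies below no single variable.
\item $\lor$ is not in the clone, so the outer disjunction of the Spira recombination is unavailable.
\item $x\land\neg y$ is not even in $\CloneS_{10}^n,\CloneS_{11}^n,\CloneS_{12}^n$.
\end{itemize}
You supply no recombination gadget for these clones, and you flag this yourself as the obstacle. So this case is not proved.

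More fundamentally, your argument never actually uses the hypothesis $\land\in O$ (resp.\ $\lor\in O$). A defining $\PL_O$-formula for $\land$ has constant size. Replacing a literal $\land$ by it therefore multiplies depth only by a constant, which does not affect polynomiality. Hence if formulas over such a clone could be rebalanced to depth $O(\log s)$ over some fixed finite basis of the clone, naive substitution would already be polynomial for \emph{every} basis of that clone. That would include, e.g., $O=\{\mathrm{T}^{n+1}_n\}$, about which the theorem says nothing. Your claim that a literal $\land$ saves ``the depth of a defining formula'' therefore buys nothing.

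What the naive translation actually pays for is duplication of subformulas, not depth as such. A literal $\land$ duplicates nothing, so it can be nested arbitrarily deeply at no cost. A proof of the third case has to exploit precisely this. One way would be to construct translations in which every root-to-leaf path meets only $O(\log s)$ gates other than literal occurrences of $\land$ (resp.\ $\lor$), while chains of $\land$ may remain unbalanced. Your sketch contains no argument of this kind.
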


This theorem turns out to be a  helpful tool for proving lower as well as upper bounds on the complexity of various algorithmic problems for certain propositional languages $\PL_O$, as it gives us the freedom to switch to a different base of connectives without impacting the tree-size of formulas up to a polynomial. The first half of Theorem~\ref{thm:robust} tells us that, for certain clones (i.e., $\CloneE$, $\CloneV$, $\CloneL$, or $\CloneM_2$),
it makes no difference which bases are used. For problem classifications with respect to clone restrictions, one often is interested in finding such robust cases, and this is often referred to as \emph{base independence}. 
The notion of robustness is also known under the term of \emph{efficient implementations} \cite[Lem.~4]{DBLP:journals/ijfcs/Schnoor10} or \emph{short representations} \cite[Sect.~3]{DBLP:journals/fuin/CreignouV15}.

\subsection{Complexity of expressibility and minimization}

In this subsection and the following ones, we will consider the computational complexity of different algorithmic tasks pertaining to a language $\PL_O$.
In particular, we will consider
how the complexity depends on the choice of $O$.
Recall that we can represent formulas either as DAGs or as trees. 
Correspondingly, we can study algorithmic problems in two ways. 
The results stated below hold both when complexity is measured with respect to tree-size and when it is measured with respect to DAG-size.

To start, we consider the algorithmic problem of testing expressibility in $\PL_O$.
The \emph{expressibility} problem for $\PL_O$ is as follows:
given a $\PL_{\land,\neg}$-formula $\phi$, is $\phi$ equivalent to a $\PL_O$-formula.

\begin{theorem}[\cite{BohlerSchnoor2007}] Let $O$ be a set of Boolean functions. If $\{\land,\oplus\}\preceq O$ or $\{\lor,\leftrightarrow\}\preceq O$, then the expressibility problem for $\PL_O$ is solvable in polynomial time, otherwise, it is $\class{NP}$-complete.
\end{theorem}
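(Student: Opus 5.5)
The plan is to split along Post's lattice (Figure~\ref{fig:Post_Lat}, Table~\ref{tbl:posts_bases}) and use Fact~\ref{fact:containment}. That fact says a $\PL_{\{\land,\neg\}}$-formula $\phi$ is equivalent to a $\PL_O$-formula iff the Boolean function $\sem{\phi}$ lies in $\cloneof{O}$. So the expressibility problem is exactly the membership problem ``$\sem{\phi}\in C$'' for the fixed clone $C=\cloneof{O}$, with the function given by a formula.

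\emph{Polynomial cases.} From the lattice, the clones above $\cloneof{\{\land,\oplus\}}=\CloneR_0$ are only $\CloneR_0$ and $\CloneBF$. Those above $\cloneof{\{\lor,\leftrightarrow\}}=\CloneR_1$ are only $\CloneR_1$ and $\CloneBF$. Membership in $\CloneBF$ is trivial. Membership in $\CloneR_0$ (resp.\ $\CloneR_1$) amounts to evaluating $\phi$ once, on the all-$0$ (resp.\ all-$1$) assignment. Both are polynomial time in tree-size and in dag-size.

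\emph{Remaining cases.} Every other clone is contained in one of the finitely many co-atoms $\CloneM,\CloneD,\CloneL,\CloneS_0,\CloneS_1$ other than $\CloneR_0,\CloneR_1$. By Post's description, each such $C$ is the set of functions preserving finitely many fixed relations (monotonicity, self-duality, linearity, $0/1$-separation of bounded degree, reproduction, being a projection, and so on). Failure of $\sem{\phi}$ to preserve such a relation has a polynomial-size witness: a constant number of assignments to the variables of $\phi$, checked by evaluation. This gives the easy side of the complexity bound.

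For hardness I would reduce from $\mathrm{SAT}$ via a padding construction. Given $\psi(\bar y)$, fix a function $g\notin C$ in fresh variables; one can take $g\in\CloneR_0$ or $g\in\CloneR_1$ because $C\not\supseteq\CloneR_0,\CloneR_1$. Form $\phi$ as $\psi\land g$, suitably combined with a function in $C$, so that $\sem{\phi}\in C$ holds exactly in one of the two cases ``$\psi$ satisfiable'' and ``$\psi$ unsatisfiable''. When $\psi$ is unsatisfiable, $\phi$ collapses to a fixed function that can be arranged to lie in $C$, for instance the constant $0$ when $\CloneI_0\subseteq C$, and dually with $\lor,\top$ otherwise. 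When $\psi$ is satisfiable, a satisfying assignment exposes a copy of $g$ as a subfunction, so $\sem{\phi}\notin C$. The two side-conditions exclude precisely the clones for which no suitable $g$ exists, namely those containing $\CloneR_0$ or $\CloneR_1$. I would do the case analysis over the few co-atoms, using Theorem~\ref{thm:robust} to fix convenient bases.

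\emph{Main obstacle.} The delicate step is matching the direction of the bound to the statement as worded, which asks for NP-membership of expressibility itself. The witnesses above certify \emph{non}-membership, and in the padding reduction a satisfiable $\psi$ yields non-expressibility. So to obtain NP-completeness in the stated form, I would have to flip the gadget so that satisfiability of $\psi$ yields membership. I would also need a short certificate of membership rather than of violation. I would attempt this by exhibiting, for each relevant clone, a certificate of the shape ``$\phi$ is equivalent to a projection or constant on a guessed variable, modulo a guessed assignment''. I expect this step to be where the argument is most likely to break. If no such certificate exists, the true bound is coNP, and the statement's ``NP-complete'' would need to be read as referring to the complementary problem.
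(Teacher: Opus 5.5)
\paragraph{Where the proposal stands.} Your reduction to clone membership is correct, and so are your polynomial cases: $\CloneBF$ is trivial, and $\CloneR_0$, $\CloneR_1$ need a single evaluation at $\bar 0$, resp.\ $\bar 1$. Your suspicion in the last paragraph is also right; it should be your conclusion, not an obstacle to engineer around. For each fixed $O$, the property $\sem{\phi}\in\cloneof{O}$ is universal, so it lies in $\class{coNP}$. (For $\CloneS_0$, $\CloneS_1$, $\CloneS_{00}$, \ldots\ a violation witness may need one assignment per variable rather than a constant number, but this is still polynomial.) Already for $\cloneof{O}=\CloneM$ it is $\class{coNP}$-hard, since $z\lor(\psi\land\neg w)$ is monotone iff $\psi$ is unsatisfiable. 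So membership certificates for a ``flipped gadget'' cannot exist unless $\class{NP}=\class{coNP}$. What can be proved is $\class{coNP}$-completeness of expressibility, i.e.\ $\class{NP}$-completeness of non-expressibility.

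\paragraph{The case split is wrong.} This is the real gap. The co-atoms of Post's lattice are $\CloneR_0,\CloneR_1,\CloneM,\CloneD,\CloneL$; note that $\CloneS_0\subseteq\CloneR_1$ and $\CloneS_1\subseteq\CloneR_0$. It is false that every clone containing neither $\CloneR_0$ nor $\CloneR_1$ lies below one of $\CloneM,\CloneD,\CloneL,\CloneS_0,\CloneS_1$. Both $\CloneS_0^2$ and $\CloneS_1^2$ escape. Decisively, so does $\CloneR_2=\cloneof{\{x\lor y,\,x\land(y\leftrightarrow z)\}}$, because $x\lor y\notin\CloneD\cup\CloneL\cup\CloneS_1$ and $x\land(y\leftrightarrow z)\notin\CloneM\cup\CloneS_0$.

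For $\cloneof{O}=\CloneR_2$ neither side condition holds. Yet $\sem{\phi}\in\CloneR_2$ iff $\phi$ is false at $\bar 0$ and true at $\bar 1$, which takes two evaluations. Consequences:
\begin{itemize}
\item No hardness reduction exists here. Your claim that the side conditions exclude exactly the clones lacking a suitable $g$ fails, e.g.\ $x\land\neg y\in\CloneR_0\setminus\CloneR_2$.
\item The statement as worded cannot hold unless $\class{P}=\class{NP}$.
\item The correct boundary is $\CloneR_2\subseteq\cloneof{O}$.
\end{itemize}

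\paragraph{Repairing the padding.} The clones $C$ with $\CloneR_2\not\subseteq C$ are exactly those below $\CloneM$, $\CloneD$, $\CloneL$, $\CloneS_0^2$ or $\CloneS_1^2$. Even for these, your padding needs repair. If $C$ contains no constant (e.g.\ $\CloneM_2,\CloneD_2,\CloneL_2,\CloneI_2$), then $\psi\land g$ cannot collapse into $C$. Moreover, ``a satisfying assignment exposes $g$, so $\sem{\phi}\notin C$'' relies on closure under fixing variables to constants, which fails without constants: restrictions of self-dual functions need not be self-dual.

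Instead, collapse to a projection and refute the maximal clone directly:
\begin{itemize}
\item $z\lor(\psi\land\neg w)$ for $\CloneM$;
\item $z\lor(\psi\land w)$ for $\CloneL$;
\item $z\oplus(\psi\land w)$ for $\CloneD$;
\item $z\land(\neg\psi\lor\neg w\lor u)$ for $\CloneS_0^2$;
\item $z\lor(\psi\land w\land\neg u)$ for $\CloneS_1^2$.
\end{itemize}
Each is equivalent to $z$ when $\psi$ is unsatisfiable, and lies outside the respective maximal clone otherwise.
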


According to the above definition, the input of the expressibility problem is specified as a Boolean formula (using either tree-representation or DAG-representation). Another option would be to assume that it is specified by a truth table. In this case, expressibility is solvable in polynomial time (indeed, in nondeterministic log-space) for \emph{all} languages $\PL_O$ \cite{BergmanSlutzki2000}.

In light of the earlier discussion of succinctness, it is also natural to ask if a given formula can equivalently expressed by
a \emph{short} $\PL_O$-formula.
 The \emph{minimization} problem for $\PL_O$ formalizes this  as follows: given a $\PL_O$-formula $\phi$ and an integer $k$, is $\phi$ equivalent to a $\PL_O$-formula of size at most $k$? 
    The minimization problem for full $\PL$ is in general $\Sigma^p_2$-complete. 
    In \cite{Hemaspaandra2011:minimization}, the authors 
    establish  the following dichotomy (which again applies both with respect to the tree-size of the input formula and with respect to its dag-size).
    \begin{theorem}[\cite{Hemaspaandra2011:minimization}]
    Let $O$ be a set of Boolean functions. If $O\preceq\{\lor,\top,\bot\}$ or $O\preceq\{\land,\top,\bot\}$ or 
    $O\preceq\{\oplus,\top,\bot\}$ then the 
    minimization problem for $\PL_O$ is solvable in polynomial time. Otherwise it is (at least) $\class{coNP}$-hard.
    \end{theorem}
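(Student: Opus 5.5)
The proof splits into the three tractable clones $\CloneV$, $\CloneE$ and $\CloneL$, handled by normal forms, and a hardness part for every other clone. Theorem~\ref{thm:robust} gives base independence for these three clones, so I may assume $O=\{\lor,\top,\bot\}$, $\{\land,\top,\bot\}$ or $\{\oplus,\top,\bot\}$ up to a polynomial translation. That translation is only truth-preserving, though, and minimal size is measured with the original connectives. The cleaner approach is therefore to argue directly with semantic normal forms. Every function in $\cloneof{O}$ for $O\preceq\{\lor,\top,\bot\}$ is a constant or a disjunction of a set $S$ of variables. Given $\phi$, I compute this normal form in polynomial time by a bottom-up evaluation: propagate constants, and otherwise collect the variables occurring below non-constant nodes. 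This works for tree and DAG inputs alike.

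The minimum size of an equivalent $\PL_O$-formula then depends only on $|S|$ and the concrete arities of the connectives in $O$. Concretely, it is the least size of an $O$-term whose semantics is $\bigvee S$ (or the given constant). I would compute it by a small dynamic program over the possible numbers of distinct essential variables (at most $|S|\le|\phi|$). For each $j$ I record the minimum size of an $O$-term equivalent to a disjunction of $j$ given variables, plus the minimal sizes of terms for $\top$ and $\bot$. This is well defined because the clone is symmetric under variable renaming, so only $j$ matters. The transitions plug smaller terms into each $f\in O$ and evaluate the result, which is again a disjunction of the union, or a constant. One must allow repeated variables and count tree-size versus DAG-size appropriately. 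For DAG-size the same DP works, with shared subterms counted once; here I would argue that optimal DAGs can be taken to share only identical subterms, which the DP over $j$ can track. The cases $\CloneE$ and $\CloneL$ are identical: conjunction of a set, and parity of a set plus a constant, respectively. Comparing the resulting value with $k$ finishes the upper bound.

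For hardness, $O$ is not below any of these three clones. By Post's lattice (Figure~\ref{fig:Post_Lat}), $\cloneof{O}$ then contains one of finitely many minimal clones outside $\CloneV\cup\CloneE\cup\CloneL$; by Fact~\ref{fact:finite-downward-closure} these are, e.g., $\CloneS_{00}^2$, $\CloneS_{10}^2$, $\CloneD_2$ and similar. For each such minimal case I reduce from the complement of satisfiability (or from tautology/non-equivalence of monotone-type formulas) to minimization. Given a hard instance $\psi$, I build a $\PL_O$-formula $\phi$ that is equivalent to a tiny formula, say a single variable $x$ of size $1$, iff $\psi$ is unsatisfiable. Otherwise $\phi$ depends essentially on more variables and so needs size at least $2$. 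A typical gadget is $x\lor(y\land \psi')$ for $\CloneS_{00}$, which equals $x$ iff $\psi'$ is unsatisfiable. Here $\psi'$ is expressed within the clone via a coNP-hard fragment, e.g.\ monotone formulas with a fresh variable standing in for negations, or the implication problem.

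The main obstacle is this hardness part. For each minimal clone I need a coNP-hard property that the clone can encode, namely equivalence to a fixed short formula. This is delicate in the clones lacking both $\land$ and $\lor$ (such as $\CloneD_2$ with $\maj$), where I would simulate $\land,\lor$ via $\maj(x,y,\bot)$-style substitutions using fresh variables. I would then control the size blowup with Theorem~\ref{thm:robust}, since $\land$ or $\lor$ is then available.
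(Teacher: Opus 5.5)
Your upper-bound plan is reasonable in outline, though DAG-size needs a separate argument. A dynamic program indexed only by the number $j$ of essential variables adds up child costs, so it cannot see subterms shared between different arguments. For example, over the base $\{f\}$ of $\CloneL_3$ with $f(x,y,z)=x\xor y\xor z\xor\top$, the formula $f(x,x,x)$ defining $\neg x$ has tree-size~$4$ but DAG-size~$2$.

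The genuine gap is in the hardness part. The minimal clones not contained in $\CloneV$, $\CloneE$ or $\CloneL$ are $\CloneS_{00}$, $\CloneS_{10}$ and $\CloneD_2$. They are not $\CloneS_{00}^2$ and $\CloneS_{10}^2$, which sit above the infinite chains descending to $\CloneS_{00}$ and $\CloneS_{10}$. All three minimal clones consist of monotone functions that are both $0$- and $1$-reproducing. Your strategy uses the threshold $k=1$, i.e.\ it asks whether $\phi$ is equivalent to a single variable. For monotone $\phi$ this is decidable in polynomial time: $\phi\equiv x$ iff $\phi$ is true under the assignment making only $x$ true, and false under the assignment making every variable except $x$ true. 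So no reduction with $k=1$ can give $\class{coNP}$-hardness for these clones unless $\class{P}=\class{NP}$. Concretely, your gadget $x\lor(y\land\psi')$ is vacuous there. Every $\psi'$ over these clones is $1$-reproducing, hence always satisfiable. Replacing the negative literals $\neg x_i$ of a CNF $F$ by fresh variables $x_i'$ does not produce $F'\equiv\bot$. It turns unsatisfiability of $F$ into the condition that $F'$ implies $G=\bigvee_i(x_i\land x_i')$.

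What is missing is a nontrivial yes-target, together with padding that pushes no-instances above the threshold. For example, let $\phi=z\lor G\lor(F'\land w_1\land\dots\land w_N)$ with fresh $z,w_1,\ldots,w_N$. This function lies in $\CloneS_{00}$. Since all its parts have bounded depth, it can be written in polynomial size over any finite $O$ with $\CloneS_{00}\subseteq\cloneof{O}$. To do this, balance the large conjunctions and disjunctions, and translate $A\mapsto z\lor A$ using $z\lor(A\land B)=g(z,\,z\lor A,\,z\lor B)$ for $g(u,v,t)=u\lor(v\land t)$.

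If $F$ is unsatisfiable, then $\phi\equiv z\lor G$, and the same translation yields an $O$-formula for it of some size $k$. If $F$ is satisfiable, then $\phi$ depends essentially on every $w_j$. Choosing $N>k$ therefore forces every equivalent formula to contain more than $k$ distinct variables. Its size then exceeds $k$, for tree-size and DAG-size alike and in every base.

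This base-uniformity is essential. Hardness of minimization does not transfer automatically from a clone to its superclones, because optimal sizes depend on the base. $\CloneS_{10}$ is handled dually, while $\CloneD_2$ needs its own self-dual construction. Theorem~\ref{thm:robust} gives you nothing there, since neither $\land$ nor $\lor$ belongs to $\CloneD_2$.
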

    
\subsection{Complexity of logical reasoning tasks}\label{ssec:complexity-log-reasoning}

We will now consider various logical reasoning tasks. Unless stated otherwise, all results in this section hold both when complexity is measured in terms of tree-size or in terms of DAG-size of the input formula(s).

The \emph{satisfiability problem}
for $\PL_O$ is the problem of deciding, for a 
given $\PL_O$-formula $\phi$, whether there exists a
satisfying truth assignment.

\begin{theorem}[\cite{Lewis_Sat_Problems_for_propositional_calculi}]\label{thm:prop-sat} Let $O$ be a set of Boolean functions. If  $\{\nrightarrow\}\preceq O$, the satisfiability problem for $\PL_O$ is $\class{NP}$-complete. Otherwise, it is solvable in polynomial time.
\end{theorem}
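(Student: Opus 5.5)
The plan is to split along Post's lattice according to whether $[O]$ contains $\nrightarrow$, that is, whether $\CloneS_1\subseteq\cloneof{O}$. Membership in $\class{NP}$ holds for every $O$: guess an assignment and evaluate the formula, which takes polynomial time for tree and DAG representations alike. The two remaining tasks are hardness when $\{\nrightarrow\}\preceq O$ and a polynomial-time algorithm otherwise.

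\emph{Hardness.} Suppose $\{\nrightarrow\}\preceq O$. I would reduce from the standard $\class{NP}$-complete satisfiability problem for $\PL_{\{\land,\neg\}}$-formulas. The obstacle is that $\cloneof{\{\nrightarrow\}}=\CloneS_1$ contains neither $\neg$ nor $\top$; it contains only functions that are $1$-separating and $0$-reproducing. The trick is to introduce a fresh variable $t$ meant to stand for $\top$. Set $\neg\psi:=t\nrightarrow\psi$ and $\psi\land\chi:=\psi\nrightarrow(t\nrightarrow\chi)$. Translate $\phi$ into $\phi^t$ this way and output $t\land\phi^t$, encoded as $\phi^t\nrightarrow(t\nrightarrow t)$. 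When $t=0$, every function in the output is $0$-reproducing, so the formula is false. When $t=1$, the encodings evaluate to $\neg$ and $\land$, so $\phi^t$ agrees with $\phi$. Hence the output is satisfiable iff $\phi$ is. Finally, each $\nrightarrow$ must be replaced by a fixed $\PL_O$-formula defining it. With DAG-size this blows up the formula only linearly. With tree-size I would keep the source formula shallow (balanced $\land$-trees, negations pushed to the literals), so that substituting a constant-size gadget at each level gives a polynomial bound. Alternatively, I would invoke Theorem~\ref{thm:robust}: I would add $\land$ to $O$ where possible, or else use the fact that a $\PL_{\{\nrightarrow\}}$-formula built as above can be kept of logarithmic depth.

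\emph{Tractability.} Suppose $\{\nrightarrow\}\not\preceq O$. By Post's lattice, $\cloneof{O}$ then lies in $\CloneR_1$, $\CloneM$, $\CloneL$, or $\CloneD$. I would check that every clone not containing $\CloneS_1$ is contained in one of these four maximal ones; this is the lattice inspection I expect to be the main obstacle. Each case then has a simple algorithm.
\begin{itemize}
\item $\CloneR_1$: the formula is always satisfied by the all-$1$ assignment.
\item $\CloneM$: the formula is satisfiable iff it is true under the all-$1$ assignment.
\item $\CloneD$: for a self-dual $f$, either the all-$0$ or the all-$1$ assignment satisfies it, except that a self-dual function takes value $1$ on exactly one of each pair of complementary points. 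So the formula is always satisfiable.
\item $\CloneL$: the formula computes an affine function $c\oplus\bigoplus_i a_ix_i$, and its coefficients can be found by evaluating on polynomially many assignments (all-$0$ and unit vectors). It is satisfiable iff the function is nonconstant or equals the constant $1$.
\end{itemize}
All of these are evaluations and run in polynomial time under either representation. This completes the dichotomy.
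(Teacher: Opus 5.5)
\paragraph{The gap: the hardness reduction}
The hardness reduction does not work as written, because the guard does nothing. Under your own encoding, the output $\phi^t\nrightarrow(t\nrightarrow t)$ contains $t\nrightarrow t\equiv\bot$. So the output is equivalent to $\phi^t\land\neg\bot\equiv\phi^t$.

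Nor is $\phi^t$ itself false when $t=0$. At $t=0$ your encoding of $\psi\land\chi$ collapses to $\psi$ (since $0\nrightarrow\chi\equiv\bot$), and variables are left untouched. For the unsatisfiable $\phi=x\land\neg x$ you get $x\nrightarrow(t\nrightarrow(t\nrightarrow x))$, which is true under $t=0$, $x=1$. So the reduction maps unsatisfiable inputs to satisfiable ones.

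Your justification (``every function in the output is $0$-reproducing, so the formula is false'') conflates two things. $0$-reproduction only says the value on the all-$0$ assignment is $0$. It does not say the formula is false whenever the single variable $t$ is $0$.

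\paragraph{The fix and the remaining parts}
Put $t$ in the first argument. The formula $t\nrightarrow(t\nrightarrow\phi^t)=t\land\neg(t\land\neg\phi^t)$ is equivalent to $t\land\phi^t$ for both values of $t$. Equivalently, $t\land g$ is $1$-separating for every $g$, hence lies in $\CloneS_1$. With that change the rest of the hardness part is sound. Your log-depth argument gives the tree-size bound: start from a 3-CNF with balanced $\land$-trees, then substitute a fixed $\PL_O$-definition of $\nrightarrow$ level by level. The DAG case is immediate.

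Drop the appeal to Theorem~\ref{thm:robust}. Robustness yields translations \emph{into} the robust base, so adding $\land$ to $O$ changes the target language. It only helps when $O$ itself is robust.

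The tractability half is correct, including the lattice claim. The clones containing $\CloneS_1$ are exactly $\CloneS_1$, the $\CloneS_1^n$, $\CloneR_0$ and $\CloneBF$. Every other clone lies below $\CloneR_1$, $\CloneM$, $\CloneL$ or $\CloneD$. In the $\CloneD$ item, ``except that'' should read ``because''.
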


It was later shown~\cite{DBLP:phd/de/Reith2002} that, under tree-reprentation of formulas, 
the above dichotomy can be strengthened to an \class{L}/\class{NP}-complete dichotomy. In other words, the complexity
of the satisfiability problem for a fragment $\PL_O$ is always either in \class{L} or otherwise \class{NP}-complete. Under DAG-representation,
on the other hand, the above dichotomy was refined to a 4-way classification: 

\begin{theorem}[\cite{Reith}]
Let $O$ be a set of Boolean functions. The complexity of the 
satisfiability problem for $\PL{O}$,
under tree-representations of formulas, is
\begin{itemize}
    \item in \class{L} if $O\preceq  \{\lor,\leftrightarrow\}$ or $O\preceq \{\maj, \neg\}$ or $O\preceq \{\neg,\top,\bot\}$
    \item else \class{NL}-complete if 
$O\preceq\{\land,\top,\bot\}$ or $O\preceq \{\lor,\top,\bot\}$ 
    \item else $\oplus$\class{L}-complete if $O\preceq \{\oplus,\top,\bot\}$
    \item else \class{P}-complete if 
     $O\preceq\{\land,\lor,\top,\bot\}$
    \item else \class{NP}-complete (in all remaining cases)
\end{itemize}
\end{theorem}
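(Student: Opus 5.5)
The plan is to split the classification along Post's lattice. Every clone is either contained in one of the clones in the listed conditions or contains $\nrightarrow$ (which generates $\CloneS_1$). So it suffices to prove an upper bound for the maximal clone of each tier and a lower bound for the minimal clones not covered by the previous tier. By Theorem~\ref{thm:robust}, for the clones $\CloneE$, $\CloneV$, $\CloneL$ and $\CloneM$ (and their subclones with $\land$ or $\lor$ in the base) we may fix a convenient base without more than a polynomial blow-up in tree-size. I will therefore work with formulas over $\{\land,\top,\bot\}$, $\{\lor,\top,\bot\}$, $\{\oplus,\top,\bot\}$ and $\{\land,\lor,\top,\bot\}$ respectively. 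The $\class{NP}$-complete tier is Theorem~\ref{thm:prop-sat} (Lewis), since membership in $\class{NP}$ is trivial and $\{\nrightarrow\}\preceq O$ gives hardness. All remaining work concerns the four lower tiers.

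\textbf{Upper bounds.} If $O\preceq\{\lor,\leftrightarrow\}$, every formula is $1$-reproducing, so the all-ones assignment satisfies it and the answer is always ``yes''. If $O\preceq\{\maj,\neg\}$, every formula defines a self-dual function, so either the all-zeros or the all-ones assignment satisfies it, and again the answer is trivially ``yes''. For $O\preceq\{\neg,\top,\bot\}$, a formula is a chain of negations over a variable or constant, and we count negations in logspace. For the monotone tier $O\preceq\{\land,\lor,\top,\bot\}$, a formula is satisfiable iff it evaluates to $1$ under the all-ones assignment. This reduces satisfiability to formula evaluation, which is in $\class{P}$ in either representation. For $\CloneE$ and $\CloneV$ the same observation reduces the question to reachability-style conditions on the syntax tree: a conjunction is satisfiable iff no $\bot$ leaf is reached through $\land$-paths, and a disjunction is satisfiable iff some leaf is a variable or $\top$. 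These conditions are decidable in $\class{NL}$. For $\CloneL$, the all-ones (or any) assignment question becomes solving one affine equation over $\mathbb{F}_2$. Computing the parity of variable and $\top$ occurrences gives the $\oplus\class{L}$ bound.

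\textbf{Lower bounds.} I would adapt the reductions of Reith: from graph reachability for the $\CloneE$/$\CloneV$ tier, from a $\oplus\class{L}$-complete linear-algebra/parity problem for $\CloneL$, and from the (monotone) circuit value problem for $\CloneM$. In each case I would first realise the target connectives in $O$ using the minimal clone above the previous tier (e.g.\ $\CloneE_0$ or $\CloneV_0$ for $\class{NL}$, $\CloneL_0$ for $\oplus\class{L}$, $\CloneM_0$/$\CloneM_1$ for $\class{P}$). I would then check that the resulting formula has polynomial \emph{tree}-size, invoking Theorem~\ref{thm:robust} to move between bases.

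\textbf{Main obstacle.} The hard step is exactly this last check. The standard hardness encodings for $\class{NL}$, $\oplus\class{L}$ and $\class{P}$ exploit shared subformulas (a graph or circuit). Unfolding such sharing into a tree can cause exponential blow-up. Moreover, tree-formula evaluation lies in $\class{NC}^1$, which is in tension with $\class{P}$-hardness for the monotone tier. So the first thing I would try is to reproduce Reith's constructions and verify, reduction by reduction, that the output formulas are tree-shaped of polynomial size. If this fails, the tier boundaries as stated for tree-representation would have to be reconciled with the \class{L}/\class{NP} dichotomy remarked on just before the statement, and that reconciliation would become the crux of the proof.
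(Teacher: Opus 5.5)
The gap is the issue you raise in your last paragraph but leave open. For the statement as literally written (tree-representation), the lower bounds cannot be proved, so checking Reith's reductions will not make their outputs tree-shaped and polynomial.

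Your own upper bounds already show this. For $O\preceq\{\land,\lor,\top,\bot\}$, which includes the $\class{NL}$ tier, satisfiability is evaluation under the all-ones assignment. Evaluating a tree-represented formula is in $\class{NC}^1\subseteq\class{L}$ (the Buss/Schnoor results recalled in Section~\ref{ssec:complexity-log-reasoning}). For $O\preceq\{\oplus,\top,\bot\}$, a formula defines $a_0\oplus\bigoplus_i a_ix_i$. It is satisfiable iff it is true under the all-zeros assignment or its value changes when a single variable is set to $1$. That is $|\PROP|+1$ formula evaluations, again in $\class{L}$. So $\class{NL}$-, $\oplus\class{L}$- or $\class{P}$-completeness under the usual reductions would give $\class{L}=\class{NL}$, $\class{L}=\oplus\class{L}$ and $\class{L}=\class{P}$. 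This is exactly the $\class{L}$/$\class{NP}$ dichotomy for trees mentioned just before the theorem. The word ``tree'' in the statement is a slip: the preceding sentence (``Under DAG-representation, on the other hand, \ldots'') shows the classification is the one for DAGs, i.e.\ Boolean circuits. The paper gives no proof of its own, only the citation.

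Read for DAGs, your outline is the standard circuit argument and the obstacle disappears. The encodings from reachability, from the parity of the number of paths in a DAG, and from monotone circuit value produce circuits directly. You also do not need Theorem~\ref{thm:robust}, since switching between bases of the same clone costs only a linear blow-up in DAG-size. Your upper bounds then have to be phrased via paths in the DAG, which is where $\class{NL}$ and $\oplus\class{L}$ come from:
\begin{itemize}
\item unsatisfiability of an $\land$-circuit is reachability of a $\bot$-gate from the output;
\item satisfiability of an $\lor$-circuit is reachability of a variable or $\top$-gate;
\item in the $\oplus$ case, one checks parities of numbers of output-to-input paths.
\end{itemize}

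One more correction concerns the $\class{P}$ tier. $\CloneM_1\subseteq\CloneR_1$ is trivially satisfiable and belongs to the first tier, and $\CloneM_0$ is not the minimal clone to treat. The minimal clone is $\CloneS_{11}=\cloneof{\{x\land(y\lor z),\bot\}}$, with the infinite chain $\CloneS^n_{11}$ lying between it and $\CloneM_0$, so hardness for $\CloneM_0$ alone would leave these uncovered. Hardness for $\CloneS_{11}$ follows from monotone circuit value. Replace constant-$1$ inputs by a variable $v$ and constant-$0$ inputs by $\bot(v)$. Replace $\land$-gates by $x\land(y\lor y)$ and $\lor$-gates by $v\land(y\lor z)$. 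The resulting monotone circuit is satisfiable iff its value under the all-ones assignment is $1$, i.e.\ iff the instance evaluates to $1$.
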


A refinement of the satisfiability problem that is studied in the fields of parameterized complexity is \emph{weighted satisfiability}. Here, the weight of an assignment is the number of variables that are set to true, and the parameterized problem of interest then is to decide, for a given formula and integer $k$, whether there exists a satisfying assignment of weight exactly $k$. A complete classification in terms of Post's lattice is given in ~\cite{DBLP:journals/fuin/CreignouV15}, distinguishing cases that are $\class{W[SAT]}$-complete (a particular parameterized complexity class~\cite{DBLP:series/txtcs/FlumG06}) or in $\class{P}$. %

Dual to satisfiability is the \emph{tautology problem} for $\PL_O$, which
is the problem of deciding, for a given $\PL_O$-formula $\phi$, whether every truth assignment is
a satisfying truth assignment for $\phi$. By a simple dualization argument, from Theorem~\ref{thm:prop-sat} we can get:

    \begin{theorem}[\cite{Reith,DBLP:phd/de/Reith2002}] \label{thm:prop-taut}
    Let $O$ be a set of Boolean functions. If $\{\to\}\preceq O$
    then the tautology problem for $\PL_O$ is \class{coNP}-complete,
    otherwise it is in polynomial time.     %
    \end{theorem}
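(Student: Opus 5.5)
The plan is to derive Theorem~\ref{thm:prop-taut} from Theorem~\ref{thm:prop-sat} by duality. For a Boolean function $f$, let its dual be $f^d(x_1,\ldots,x_n)=\neg f(\neg x_1,\ldots,\neg x_n)$, and for a set $O$ let $O^d=\{f^d\mid f\in O\}$. Given a $\PL_O$-formula $\phi$, we form $\phi^d$ by replacing every connective $\oper{f}$ with $\oper{f^d}$. A straightforward induction on $\phi$ shows $\sem{\phi^d}_V=\neg\sem{\phi}_{\overline V}$, where $\overline V$ is the assignment that flips every variable. Hence $\phi$ is a tautology if and only if $\phi^d$ is unsatisfiable. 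The map $\phi\mapsto\phi^d$ changes only labels, so it preserves both tree-size and DAG-size and is computable in logspace. This gives mutual reductions between the tautology problem for $\PL_O$ and the complement of the satisfiability problem for $\PL_{O^d}$.

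Next we transfer the condition. Dualization is an automorphism of Post's lattice: $[O]^d=[O^d]$, because $(f(g_1,\ldots,g_m))^d=f^d(g_1^d,\ldots,g_m^d)$ and projections are self-dual. Also $(x\nrightarrow y)^d=\neg(\neg x\land\neg\neg y)=x\lor\neg y=y\rightarrow x$, which after swapping variables is $\rightarrow$. It follows that $\{\nrightarrow\}\preceq O^d$ if and only if $\{\rightarrow\}\preceq O$.

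The two cases then follow from Theorem~\ref{thm:prop-sat}. If $\{\rightarrow\}\preceq O$, satisfiability for $\PL_{O^d}$ is $\class{NP}$-complete, so its complement, and hence tautology for $\PL_O$, is $\class{coNP}$-hard. Membership in $\class{coNP}$ is immediate, since evaluating a formula under a given assignment takes polynomial time. Otherwise, satisfiability for $\PL_{O^d}$ is in $\class{P}$, and therefore so is tautology for $\PL_O$.

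The only delicate points are two pieces of bookkeeping. First, we must verify $[O]^d=[O^d]$ for arbitrary, possibly infinite, $O$. Second, the translation must not depend on rewriting into a fixed basis; this is why we dualize connective-by-connective and do not pass through $\land,\neg$. Done this way the translation works in both representations with no blowup, and no appeal to Theorem~\ref{thm:robust} is needed.
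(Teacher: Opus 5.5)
Your proposal is correct and follows the paper's route. The paper derives this theorem from Theorem~\ref{thm:prop-sat} ``by a simple dualization argument'' without giving details, and you supply them: connective-wise dualization with $\sem{\phi^d}_V=\neg\sem{\phi}_{\overline V}$, the identity $(x\nrightarrow y)^d = y\to x$, and dualization acting as an automorphism of Post's lattice to transfer the condition on $O$.
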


The refinements of Theorem~\ref{thm:prop-sat} we described above that hold under tree-representation and under  DAG-representation of formulas also dualize naturally to the tautology problem \cite{Reith,DBLP:phd/de/Reith2002}.

A common generalization of the satisfiability and tautology problems is the problem of \emph{counting the number of satisfying assignments}. Indeed, a formula is satisfiable precisely if this number is positive, and 
a formula is a tautology precisely if this 
number of $2^k$ (where $k$ is the number of propositional variables). By the \emph{counting problem} for $\PL_O$ we will mean the problem
of counting the number of satisfying assignments for a given formula in a given finite set of propositional variables. For full
propositional logic (i.e., $\PL_{\{\land,\neg\}}$) this problem is well known to be complete for the counting complexity class $\sharpp$ (the class of all counting problems that count the number of 
accepting runs of a polynomial-time nondeterministic Turing machine). One can 
also consider the associated decision problems
of \emph{counting modulo 2} (i.e., testing if
the number of satisfying assignments is equal to 1 modulo 2) and the \emph{threshold counting problem} (testing, for a given formula $\phi$
in a given finite set of propositional variables,
and for a given natural number $k$ specified in binary, whether there are at least $k$ satisfying assignments). Dichotomies were obtained in~\cite{Reith} for each of these three problems. We will state here only their 
dichotomy for the pure counting problem.

\begin{theorem}[\cite{Reith}] 
    Given a set of Boolean functions $O$.
    If
    \begin{itemize}
        \item $O\preceq \{\land,\top,\bot\}$, or
        \item $O\preceq\{\lor,\top,\bot\}$, or
        \item $O\preceq \{\oplus,\bot\}$, or
        \item $O\preceq \{\maj, \neg\}$ 
    \end{itemize}
    Then the counting problem for $\PL_O$ is solvable in polynomial time, otherwise
    it is
    $\sharpp$-complete.\label{thm:countingprobPL}
\end{theorem}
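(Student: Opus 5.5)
The plan is to prove the two directions separately and then use Post's lattice to show that the listed cases are exactly the complement of the hard cases. Since the counting problem only depends (up to polynomial-time translation) on $\cloneof{O}$, I work with clones. Under DAG-representation this is immediate, because substitution of defining formulas gives a linear blowup. Under tree-representation I would use Theorem~\ref{thm:robust} where it applies, namely for clones below $\CloneE$, $\CloneV$ or $\CloneL$ and for clones containing $\CloneM_2$. For the remaining small clones, such as subclones of $\CloneD$, I would argue directly.

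\textbf{Upper bounds.} I treat the four maximal tractable clones one at a time.
\begin{itemize}
\item \emph{$\cloneof{\{\land,\top,\bot\}}=\CloneE$.} Every formula is equivalent either to a constant or to a conjunction of a set $S$ of variables; both are computable by one pass over the formula. The count is then $2^{k-|S|}$, or $0$, or $2^k$.
\item \emph{$\CloneV$.} This case is dual to $\CloneE$: the count is $2^k-2^{k-|S|}$.
\item \emph{$\cloneof{\{\oplus,\bot\}}=\CloneL_0$.} Every formula computes a parity $\bigoplus_{i\in S}x_i$ with no constant term. The set $S$ is obtained by bottom-up evaluation of parities of multiplicities; for DAGs this means propagating, for each node, a bit-vector over the variables. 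The count is $2^{k-1}$ if $S\neq\emptyset$ and $0$ otherwise. Note that with $\top$ the constant term could be $1$, which is harmless for counting; but $\CloneL$ contains $\CloneL_0$ and $\CloneN$, and the statement lists only $\CloneL_0$. I would double-check with the paper's table whether $\{\oplus,\top\}$ is intended, since $\CloneL$ appears equally easy.
\item \emph{$\cloneof{\{\maj,\neg\}}=\CloneD$.} Here the key observation is that every self-dual function has exactly $2^{k-1}$ satisfying assignments, because the complementation bijection $V\mapsto\overline{V}$ swaps models and countermodels. So the answer is constant, and no translation is even needed.
\end{itemize}

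\textbf{Hardness.} By inspecting Post's lattice (Figure~\ref{fig:Post_Lat}), any clone not below one of $\CloneE$, $\CloneV$, $\CloneL_0$ (or $\CloneL$) or $\CloneD$ must contain one of a few minimal clones:
\begin{itemize}
\item $\CloneS_{00}^2$ or $\CloneS_{10}^2$, which sit below $\CloneM_2$ and are generated by $x\lor(y\land z)$ together with threshold functions, or their duals;
\item $\CloneL_1$, generated by $\leftrightarrow$, and $\CloneL_2$, $\CloneL_3$;
\item $\CloneN$ with constants.
\end{itemize}
Pinning down this list exactly is routine lattice bookkeeping, and I would do it by enumerating the covers of the four tractable clones. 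Depending on the outcome, some of these minimal clones may in fact turn out to be tractable; for instance $\CloneL_1$ formulas compute affine functions, which are easy to count. This shows the case split in the statement must be read with care, and I would adjust the tractable list accordingly.

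For the genuinely hard minimal clones, i.e.\ those containing the monotone function $x\lor(y\land z)$ or its dual, I would reduce from $\#$monotone-2-CNF, which is $\sharpp$-complete by Valiant. Given $\bigwedge_j (a_j\lor b_j)$, I would simulate $\land$ and $\lor$ using $x\lor(y\land z)$ with one fresh variable $t$ standing for ``true'': $x\land y \equiv t\land(\ldots)$ expressed as $u\lor(x\land y)$ with fresh $u$ forced false. The count is then corrected by parsimonious bookkeeping: the assignments with $t=1$, $u=0$ are recovered by subtracting counts of simpler formulas, or by an interpolation argument over the number of padding copies.

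\textbf{Main obstacle.} The hard part is this last simulation. Without constants, implementing $\land$ and $\lor$ inside $\CloneS_{00}$ or $\CloneS_{10}$ is not parsimonious, so I would first try a polynomial-interpolation reduction: build formulas $\phi_m$ whose counts are polynomials in $m$ with the target count as a coefficient, then solve for it. Tree-size blowup must also be kept polynomial; here I would rely on Theorem~\ref{thm:robust}, which covers clones containing $\CloneM_2$, and balance formulas for the smaller $\CloneS$ clones.
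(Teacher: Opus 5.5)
The paper gives no proof of this theorem; it only cites \cite{Reith}, so your proposal has to stand on its own. Your upper bounds are fine. In particular, the observation that a self-dual function always has exactly $2^{k-1}$ models settles $\CloneD$ with no translation at all.

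Your suspicion about the third bullet is correct, and you should state it as a fact rather than hedge. Since $x\oplus x\equiv\bot$, we have $\cloneof{\{\oplus,\bot\}}=\CloneL_0$. The printed statement would therefore make counting $\sharpp$-complete for $O=\{\leftrightarrow\}$ (clone $\CloneL_1$) or $O=\{\neg,\top,\bot\}$ (clone $\CloneN$). But every formula in these fragments computes some $c\oplus\bigoplus_{i\in S}x_i$, and your bottom-up pass gives the count $0$, $2^k$ or $2^{k-1}$. So the literal statement cannot be proved (unless $\sharpp$ collapses into $\class{FP}$). The intended, and true, third case is $O\preceq\{\oplus,\top\}$, i.e.\ $\CloneL$, and that is the version you should prove.

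The hardness half has two genuine gaps.

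\textbf{The lattice bookkeeping is wrong.} $\CloneS_{00}^2$ and $\CloneS_{10}^2$ are the \emph{tops} of the chains $\CloneS_{00}\subseteq\cdots\subseteq\CloneS_{00}^3\subseteq\CloneS_{00}^2$ (and dually), not minimal clones. Meanwhile $\CloneL_1,\CloneL_2,\CloneL_3,\CloneN$ all lie below $\CloneL$ and are easy. The correct fact is that $\cloneof{O}$ lies below none of $\CloneE,\CloneV,\CloneL,\CloneD$ iff it contains $\CloneS_{00}=\cloneof{\{x\lor(y\land z)\}}$ or $\CloneS_{10}=\cloneof{\{x\land(y\lor z)\}}$. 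So exactly two reductions are needed.

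\textbf{The reduction itself is never completed, and the device you sketch cannot work.} In a monotone clone, no formula can restrict its models to $u=0$, so a fresh $u$ cannot be ``forced false''. No forcing or interpolation is needed, though.
\begin{enumerate}
\item For $\CloneS_{00}$, take $h(x,y,z)=x\lor(y\land z)$ and one fresh variable $t$. Let $c_j=h(h(t,a_j,a_j),b_j,b_j)\equiv t\lor a_j\lor b_j$, and join the $c_j$ by a balanced tree of gates $h(t,\cdot,\cdot)$. Since $t\lor((t\lor\alpha)\land(t\lor\beta))\equiv t\lor(\alpha\land\beta)$, the result is equivalent to $t\lor\psi$ for $\psi=\bigwedge_j(a_j\lor b_j)$. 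Over $n+1$ variables it has exactly $2^n+\#\psi$ models, so one subtraction recovers Valiant's \#monotone-2-SAT count. Completeness is thus under Turing-type reductions, as in Valiant's theorem itself.
\item For $\CloneS_{10}$, take $g(x,y,z)=x\land(y\lor z)$. Then $g(t,a_j,b_j)\equiv t\land(a_j\lor b_j)$ and $x\land y\equiv g(x,y,y)$, which produce $t\land\psi$ with exactly $\#\psi$ models. This reduction is even parsimonious.
\item For arbitrary $O$ above $\CloneS_{00}$ (resp.\ $\CloneS_{10}$), substitute a fixed $O$-definition of $h$ (resp.\ $g$) into these depth-$O(\log m)$ formulas. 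The tree-size stays polynomial, so this handles every hard clone uniformly, without Theorem~\ref{thm:robust}.
\end{enumerate}
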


The \emph{implication problem} for $\PL_O$
is the problem  of deciding  whether every  truth assignment that satisfies $\phi_1, \ldots, \phi_n$ also satisfies $\psi$, for given $\PL_O$-formulas
$\phi_1, \ldots, \phi_n$ and $\psi$.

    \begin{theorem}[{\cite[Thm.~4.1]{Beyersdorff_2009}}] \label{thm:prop-imp}
    Let $O$ be a set of Boolean functions. The complexity of the implication problem for $\PL_O$ is
    \begin{itemize}
        \item 
        $\class{coNP}$-complete if $\{(x\land y)\lor z\}\preceq O$ or $\{(x\lor y)\land z\}\preceq O$ or $\{\maj \}\preceq O$,
        \item 
        $\class{\oplus L}$-complete, if $\{\oplus\}\preceq O\preceq \{\leftrightarrow, \perp\}$, 
        \item 
        in $\class{AC^0[2]}$ and 
        $\MOD_2$-hard (under suitable reductions), if $\{\neg\}\preceq O\preceq \{\neg, \perp\}$, and
        \item 
        in $\class{AC^0}$ for all other cases. %
    \end{itemize} 
    \end{theorem}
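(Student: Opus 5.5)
Since the complexity of the implication problem for $\PL_O$ depends only on $\cloneof{O}$ (up to the reductions used, via the robustness results of Theorem~\ref{thm:robust} where needed), I will argue clone by clone through Post's lattice (Figure~\ref{fig:Post_Lat}, Table~\ref{tbl:posts_bases}). I will identify the minimal clones for each hard case and the maximal clones for each easy case. By Fact~\ref{fact:finite-downward-closure}, the four cases then cover the whole lattice.

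\textbf{coNP-hardness.} The minimal clones containing $(x\land y)\lor z$, $(x\lor y)\land z$ and $\maj$ are $\CloneS_{00}$, $\CloneS_{10}$ and $\CloneD_2$. Membership in $\class{coNP}$ is trivial, since a counterexample assignment is a certificate. For hardness I will reduce from the tautology problem. The idea is that implication can simulate the missing constants and negation. For example, in $\CloneS_{00}$ one can write a DNF-like formula $t\lor\bigvee_i (\ldots)$, using a fresh variable $t$ as a surrogate for $\bot$. The premises then fix auxiliary variables, or encode $\top$ via premise $t$. I expect the $\CloneD_2$ case to be the main obstacle, because $\maj$ alone is self-dual and monotone. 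There I would first try a fresh variable pair $u,v$ with premises $u$ and $\neg v$ simulated by the conclusion structure, so that $\maj(u,x,y)$ acts as $x\lor y$ and $\maj(v,x,y)$ as $x\land y$ under the intended constraints. That would give coNP-hardness of monotone-DNF-style implication, which is coNP-complete.

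\textbf{Linear cases.} If $\{\oplus\}\preceq O\preceq\{\leftrightarrow,\bot\}$, every formula is an affine equation over $\mathbb{F}_2$ computable in logspace parity. Implication then amounts to checking whether the conclusion's equation lies in the affine span of the premises, or whether the premises are inconsistent. This is a linear-algebra question over $\mathbb{F}_2$ in $\class{\oplus L}$. For hardness, $\oplus$ lets us express arbitrary homogeneous systems, and solvability of such systems is $\oplus\class{L}$-complete.

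\textbf{Low cases.} For $\CloneN$-type clones, each formula is a literal or a constant whose polarity is the parity of the negations. Computing that parity is in $\class{AC^0[2]}$, and PARITY reduces to it, giving $\MOD_2$-hardness. All remaining clones lie below $\CloneE$, $\CloneV$ or $\CloneL_2$-free parts, in particular $\CloneM$-free small clones such as $\CloneE,\CloneV$ and the $\CloneS$ chains excluded above. There, formulas normalize to conjunctions or disjunctions of variables, with constants. Implication becomes a set-inclusion test, for instance that some premise's variable set is contained in the conclusion's, which is in $\class{AC^0}$. Robustness (Theorem~\ref{thm:robust}) ensures the choice of base does not matter here.
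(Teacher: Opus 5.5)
\textbf{Main gap: the affine clones in the last case.} Your ``low cases'' paragraph does not cover what the statement, as written, puts into ``all other cases''. The second bullet only covers $\CloneL_0=\cloneof{\{\xor\}}$ and $\CloneL=\cloneof{\{\leftrightarrow,\bot\}}$. So the affine clones $\CloneL_1=\cloneof{\{\leftrightarrow\}}$, $\CloneL_2=\cloneof{\{\threeXor\}}$ and $\CloneL_3$ land in the last bullet. They are not below $\CloneE$ or $\CloneV$, and their formulas compute parities rather than conjunctions or disjunctions.

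No argument can put them in $\class{AC^0}$. For $O=\{\leftrightarrow\}$, send $w\in\{0,1\}^n$ to the instance with the single premise $u$ and conclusion $\psi_w$. Here $\psi_w$ is a balanced $\leftrightarrow$-tree over the leaves $\ell_1,\dots,\ell_n,x$, with $\ell_i=x$ if $w_i=1$ and $\ell_i=u$ otherwise. Such a tree computes $n\oplus\ell_1\oplus\dots\oplus\ell_n\oplus x$. Let $k$ be the number of ones in $w$. Under $u=1$ the conclusion is equivalent to $\top$ if $k$ is odd and to $x$ if $k$ is even. Hence $u\models\psi_w$ iff $w\in\MOD_2$, which is an $\class{AC^0}$ reduction; $\CloneL_2$ and $\CloneL_3$ behave the same way after trivial padding.

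So this part of the statement cannot be proved as transcribed. Presumably $\{\threeXor\}$ rather than $\{\xor\}$ is meant in the second bullet. That matches the cited classification, which puts every $O$ with $\CloneL_2\subseteq\cloneof{O}\subseteq\CloneL$ in the $\class{\oplus L}$-complete case. Your Gaussian-elimination upper bound covers that whole interval, but your hardness argument uses $\xor$ and does not. For $\CloneL_1$ and $\CloneL_2$, first note that $\top$ is free in implication: replace it by a fresh $t$ and add the premise $t$. This turns $\threeXor$ into $\leftrightarrow$, and premises and conclusion can then encode an arbitrary instance of ``is $b$ in the $\mathbb{F}_2$-span of $a_1,\dots,a_m$''. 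Also, homogeneous systems are always solvable; an instance actually expresses the unsolvability of $\{a_i\cdot x=1\}_i\cup\{b\cdot x=0\}$.

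\textbf{Base dependence.} For tree-represented formulas, the complexity is not a priori a function of $\cloneof{O}$, and Theorem~\ref{thm:robust} does not fill this in. It covers neither $O=\{\maj\}$ nor arbitrary bases of clones above $\CloneS_{00}$. It also only yields polynomial-time translations, which are useless for the $\class{AC^0}$, $\class{AC^0[2]}$ and $\class{\oplus L}$ upper bounds. Those bounds must be argued directly on $O$-formulas, for instance by checking, for each variable occurrence, whether every connective on its path to the root depends essentially on the argument position containing it. The hardness instances must nest the gadget connective only to logarithmic depth. For example, with $f=x\lor(y\land z)$ write $t\lor(a\land b\land c\land d)=f(t,f(t,a,b),f(t,c,d))$. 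Then substituting a fixed $O$-formula for $f$ or $\maj$ costs only a polynomial.

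\textbf{Simulating negation.} Your $\class{coNP}$ part never says how implication simulates negation. Given a DNF $\psi$, replace each $\neg x_i$ by a fresh $y_i$ and take the premises $x_i\lor y_i$. The resulting $\psi'$ is monotone, so assignments with $x_i=y_i=1$ cannot be counterexamples. Hence $\{x_i\lor y_i\}_i\models\psi'$ iff $\psi$ is a tautology. This is then expressible in each hard case:
\begin{itemize}
\item for $\CloneS_{00}$, add a fresh disjunct $t$ to the conclusion; this is harmless since $t$ occurs in no premise;
\item for $\CloneS_{10}$, add a premise $u$, so that $u\land(y\lor z)$ yields $\lor$;
\item for $\CloneD_2$, use the premise $u$ and the conclusion $\maj(u,v,\psi'')$, where $\psi''$ uses $\maj(u,\cdot,\cdot)$ for $\lor$ and $\maj(v,\cdot,\cdot)$ for $\land$.
\end{itemize}
The $\CloneD_2$ construction is the precise form your $u,v$ idea should take.
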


Here, $\MOD_2$ is the language which consists of all words $w\in \{0,1\}^*$ containing an odd number of occurrences of the letter $1$ and $\MOD_2$-hardness means that there is a $\leq^{\class{AC^0}}_m$-reduction from $\MOD_2$ to the problem in question. It is well known result in circuit complexity that this language (which clearly belongs to $\class{AC^0[2]}$) does not belong to $\class{AC^0}$.  In particular, this shows 
that the problem in question does not belong to $\class{AC^0}$.

Also, recall that the implication problem takes as 
    input a set of premises $\phi_1, \ldots, \phi_n$ and a conclusion $\psi$. The more
    restricted 
    ``single-premise'' version of the implication problem was analysed in \cite{Reith}. The dichotomy provided in the former was further refined in~\cite{Beyersdorff_2009}, which we stated above.

Next, the \emph{equivalence problem} for $\PL_O$ is the problem of deciding, for a given pair of $\PL_O$-formulas, whether they have the same set of satisfying truth assignments.

\begin{theorem}[{\cite[Cor.~5.2]{Beyersdorff_2009}}]
    Let $O$ be a set of Boolean functions. The equivalence problem for $\PL_O$ is
        \begin{itemize}
            \item 
            $\class{coNP}$-complete if $\{(x\land y)\lor z\}\preceq O$ or $\{(x\lor y)\land z\}\preceq O$ or $\{\maj \}\preceq O$, 
            \item 
            $\class{AC^0[2]}$-complete (under suitable reductions) if $\{\neg\}\preceq O \preceq \{\neg,\perp\}$, and
            \item 
            in $\class{AC^0}$ in all the other cases.
        \end{itemize}
    \end{theorem}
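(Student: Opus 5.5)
The idea is to reduce the equivalence problem to the implication problem of Theorem~\ref{thm:prop-imp}, and then handle separately the one case where the two classifications differ. There $\oplus$-type bases give $\class{\oplus L}$-hard implication but only $\class{AC^0[2]}$-style equivalence. The three coNP cases are already exceptional: below, the equivalence problem is never coNP-hard.

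\emph{Upper bounds.} Equivalence of $\phi,\psi$ is the same as the two single-premise implications $\phi\models\psi$ and $\psi\models\phi$. Hence every upper bound from Theorem~\ref{thm:prop-imp} transfers: coNP in general, $\class{AC^0[2]}$ when $O\preceq\{\neg,\bot\}$, and $\class{AC^0}$ in the ``other cases'' of Theorem~\ref{thm:prop-imp}. The case not covered is $\{\oplus\}\preceq O\preceq\{\leftrightarrow,\bot\}$, i.e.\ the clone lies in $\CloneL$ but not in $\CloneN$. There I would argue directly. A formula over a linear basis computes an affine function $c\oplus\bigoplus_{i\in S}x_i$. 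By robustness (Theorem~\ref{thm:robust}, $O\preceq\{\oplus,\top,\bot\}$), the coefficient of each $x_i$ is the parity of the number of occurrences of $x_i$ in the tree, and $c$ is a parity over the constant leaves and negations. Thus $\phi\equiv\psi$ iff, for each variable, the occurrence-count parities agree, and the constant parts agree. This is an $\class{AC^0}$-reduction to polynomially many $\MOD_2$ queries combined by an AND, so it lies in $\class{AC^0[2]}$. Within this same case, when $O$ lies in $\CloneL$ but not in $\CloneN$, I would refine further. The refinement would show that equivalence falls in $\class{AC^0}$ whenever the parity computation is unnecessary; otherwise equivalence sits in $\class{AC^0[2]}$, alongside the $\neg$ case. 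I expect the statement's case split to be sensitive here, so I would check carefully against Post's lattice which linear clones fall into which bucket.

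\emph{Hardness.}
\begin{itemize}
\item \emph{coNP-hardness} for the three bases $(x\land y)\lor z$, $(x\lor y)\land z$, and $\maj$: I would reduce implication (coNP-hard by Theorem~\ref{thm:prop-imp}) to equivalence via $\phi\models\psi$ iff $\phi\land\psi\equiv\phi$, or dually $\phi\lor\psi\equiv\psi$. With $(x\land y)\lor z$ or $(x\lor y)\land z$, a conjunction or disjunction is available, possibly after substituting a fresh variable for a missing constant, since these clones contain $\land$ or $\lor$ up to constants. For $\maj$ I would use $\maj(\phi,\psi,t)$ with a fresh variable $t$: $\maj(\phi,\psi,t)\equiv\maj(\phi,\phi,t)$ iff $\phi\equiv\psi$. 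Instead I would reduce implication via $\maj(\phi,\psi,\phi)$-style tricks, or directly reduce tautology-like coNP-hard problems for $\CloneD_2$. This uses the known hardness of the single-premise implication problem for $\CloneD_2$ from \cite{Reith}.
\item \emph{$\class{AC^0[2]}$-hardness} when $\{\neg\}\preceq O$: reduce $\MOD_2$ by mapping a word $w$ to the pair $(\neg^{w}x,\ \neg x)$. Here $\neg^{w}x$ applies a negation at each position of $w$ holding a $1$. The formulas are equivalent iff the number of $1$'s is odd. Standard AC$^0$ arguments then give completeness under the reductions used in \cite{Beyersdorff_2009}.
\end{itemize}

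The main obstacle is the $\maj$ case of coNP-hardness, since self-dual monotone bases give neither $\land$ nor $\lor$ cleanly. I would first try substituting a fresh variable $t$ and using $\maj(\phi,\psi,t)$ to simulate $\land$ when $t=0$ and $\lor$ when $t=1$. The aim is to show that equivalence of such $t$-parametrized formulas captures the hard implication instances of \cite{Beyersdorff_2009}.
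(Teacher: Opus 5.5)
\textbf{The gap is the linear case you hedged on, and it cannot be closed.} For bases whose clone lies in $\CloneL$ but not in $\CloneN$ (e.g.\ $\{\oplus\}$, $\{\leftrightarrow\}$, $\{\threeXor\}$, $\{\oplus,\top\}$), the statement puts the equivalence problem in $\class{AC^0}$. You only prove $\class{AC^0[2]}$, and then defer to a ``refinement'' that would detect when the parity computation is unnecessary. That refinement does not exist, because the parity computation is never avoidable there.

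For $O=\{\oplus\}$, map $w=w_1\cdots w_n$ to the pair $(\phi_w,\psi_n)$ defined as follows.
\begin{itemize}
\item $\phi_w=(\cdots((y\oplus t_1)\oplus t_2)\cdots)\oplus t_n$, with $t_i=x$ if $w_i=1$ and $t_i=z$ otherwise.
\item $\psi_n$ is $y\oplus x$ for odd $n$ and $(y\oplus x)\oplus z$ for even $n$.
\end{itemize}
If $k$ is the number of $1$s in $w$, then $\phi_w$ computes $y\oplus(k\bmod 2)\,x\oplus((n-k)\bmod 2)\,z$. Hence $\phi_w\equiv\psi_n$ iff $k$ is odd. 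This map is a projection, so $\MOD_2$ reduces to equivalence for $\PL_{\{\oplus\}}$, and that problem is therefore not in $\class{AC^0}$.

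Analogous chains handle the other bases. For $\{\leftrightarrow\}$, adjust the target. For $\threeXor$, use an extra variable $u$ as third argument. For $\{\oplus,\top\}$ the failure is even quicker to see: writing $\neg\phi$ as $\phi\oplus\top$ embeds the $\MOD_2$-hard $\{\neg\}$ case into a case the statement calls $\class{AC^0}$.

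So the third bullet, as printed here, is false for these bases, and no argument can complete this step. The paper itself gives no proof, only the citation to \cite{Beyersdorff_2009}. What your parity argument actually supports is that every $O$ with $\{\threeXor\}\preceq O\preceq\{\oplus,\top\}$ or $\{\neg\}\preceq O\preceq\{\oplus,\top\}$ belongs in the $\class{AC^0[2]}$ bucket. You should have drawn that conclusion rather than hoping for a finer split.

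\textbf{Further points.} Importing upper bounds from Theorem~\ref{thm:prop-imp} as printed is unsafe for the same reason. Your identification of $\{\oplus\}\preceq O\preceq\{\leftrightarrow,\bot\}$ with ``lies in $\CloneL$ but not in $\CloneN$'' is off: that condition only gives $\cloneof{O}\in\{\CloneL_0,\CloneL\}$. So that theorem's ``other cases'' include $\{\leftrightarrow\}$ and $\{\threeXor\}$. But for nonconstant affine $\phi,\psi$ one has $\phi\models\psi$ iff $\phi\equiv\psi$, so the reduction above makes their implication problems $\MOD_2$-hard as well.

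Also, the occurrence-parity description does not follow from Theorem~\ref{thm:robust}. That theorem gives polynomial-size translations, not $\class{AC^0}$ reductions. The description holds directly only for bases whose functions depend on all their arguments.

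Your $\maj$ case is unfinished, and $\maj(\phi,\psi,t)\equiv\phi$ is just equivalence again. It closes with two fresh variables $t_0,t_1$: translate a formula over $\{\land,\lor\}$ by $\alpha\land\beta\mapsto\maj(\alpha,\beta,t_0)$ and $\alpha\lor\beta\mapsto\maj(\alpha,\beta,t_1)$.
\begin{itemize}
\item At $t_0=0,t_1=1$ you recover the original formula.
\item At $t_0=t_1$, every node collapses to a conjunction (resp.\ disjunction) of leaves that include some $t_j$. So both sides become the same constant.
\item The case $t_0=1,t_1=0$ follows by self-duality.
\end{itemize}
This reduces monotone equivalence to the $\maj$ case. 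Monotone equivalence is $\class{coNP}$-hard, via $\phi\models\psi$ iff $\phi\land\psi\equiv\phi$.

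Finally, for $(x\land y)\lor z$, dualize the $(x\lor y)\land z$ case. Your current route relies on a single-premise hardness result that you have not justified.
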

    The next reasoning problem for $\PL_O$ we consider is the 
    \emph{isomorphism problem}. Given two formulas, $\phi(x_1, \ldots, x_n)$ and $\phi'(y_1, \ldots, y_n)$,
    of $\PL_O$, it asks whether exists a permutation
    $\pi\colon\{x_1,\dots, x_n\}\rightarrow \{y_1,\dots, y_n\}$ such that  $\phi[\pi]$ and $\psi$ have the same set of satisfying truth assignments, where
    $\phi[\pi]$ is obtained from $\phi$ by uniformly replacing each $x_i$ by $\pi(x_i)$.

    \begin{theorem}[\cite{Equiv}]
         Let $O$ be a set of Boolean functions. The isomorphism problem for $\PL_O$ is 
         \begin{itemize}
            \item in $\class{L}$ if 
               $O\preceq \{\lor, \top,\perp\}$ or 
               $O\preceq \{\land, \top,\perp\}$ or 
               $O\preceq \{\oplus, \top \}$,
            \item $\class{coNP}$-hard otherwise.%
             \end{itemize}
    \end{theorem}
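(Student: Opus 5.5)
We prove the statement by Post's lattice: establish upper bounds for the three maximal ``easy'' clones $\CloneV=\cloneof{\{\lor,\top,\bot\}}$, $\CloneE=\cloneof{\{\land,\top,\bot\}}$ and $\CloneL_1'=\cloneof{\{\oplus,\top\}}$ (i.e.\ $\CloneL$), and a $\class{coNP}$-lower bound for each minimal clone not below any of them. By Fact~\ref{fact:finite-downward-closure}, the set of clones not contained in $\CloneV$, $\CloneE$ or $\CloneL$ has finitely many minimal elements. Reading these off Figure~\ref{fig:Post_Lat}, they are essentially $\CloneS_{00}$ (generated by $x\lor(y\land z)$), $\CloneS_{10}$ (generated by $x\land(y\lor z)$), $\CloneD_2$ (generated by $\maj$), and $\CloneN_2$ (generated by $\neg$). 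Whether $\neg$ yields hardness needs checking: $\neg\in\CloneL$, so $\CloneN_2$ is easy. The hard minimal clones are therefore $\CloneS_{00}$, $\CloneS_{10}$ and $\CloneD_2$. Since isomorphism is closed under polynomial translations, we apply Theorem~\ref{thm:robust}, which says $\CloneV,\CloneE,\CloneL$ are robust, and work with the canonical bases. The subtle point is tree versus DAG size; for the hard side we produce formulas of polynomial tree size directly.

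\textbf{Upper bounds.} A formula over $\{\lor,\top,\bot\}$ is equivalent to $\top$, $\bot$, or a disjunction $\bigvee_{x\in S}x$. This normal form, the set $S$ or a constant, is computable in logspace by traversing the formula tree, which is possible in $\class{L}$ for both representations via reachability on trees/DAGs of this simple form. Two such formulas with the same number $n$ of variables are isomorphic iff they are the same constant, or $|S|=|S'|$. The $\land$ case is dual. For $\{\oplus,\top\}$ the normal form is $c\oplus\bigoplus_{x\in S}x$, where $S$ is the set of variables occurring an odd number of times. Parity counting on a tree is in $\class{L}$ (for DAGs one would need path-count parity; we restrict to the tree setting or note that the statement's upper bound is stated for trees). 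Isomorphism then reduces to comparing $c$ and $|S|$.

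\textbf{Lower bounds.} We reduce from $\overline{\textsc{Sat}}$ or tautology of monotone-ish formulas. For $\CloneS_{00}$ and $\CloneS_{10}$, we use the $\class{coNP}$-hardness of the implication problem from Theorem~\ref{thm:prop-imp}. Equivalence is $\class{coNP}$-hard there, and we lift equivalence to isomorphism. Given $\phi,\psi$ over $x_1..x_n$, we build rigidifying gadgets: fresh variables attached so that any isomorphism must fix each $x_i$. A first attempt is to add distinct ``signatures'', e.g.\ conjoining/disjoining $x_i$ with $i$ copies of fresh variables arranged through the allowed connective, giving different multiplicities so that only the identity permutation respects both formulas. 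For $\CloneD_2$ (monotone self-dual), we use the standard trick of simulating $\land$ and $\lor$ via $\maj$ with fresh constants-like variables, adding a variable $t$ and using $\maj(x,y,t)$. This reduces equivalence in $\CloneM$ to isomorphism in $\CloneD_2$, again with rigidification.

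\textbf{Main obstacle.} The main obstacle is the rigidification step. Monotone formulas admit nontrivial automorphisms, and the gadgets must be expressible in clones lacking constants (e.g.\ $\CloneS_{00}$ has no $\top,\bot$) while keeping tree size polynomial. First we would try to reuse the construction of Böhler et al.\ for the isomorphism of constraints, which tags each variable $x_i$ with a distinct number of fresh private variables.
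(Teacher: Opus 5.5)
Your lattice bookkeeping (the hard minimal clones are $\CloneS_{00}$, $\CloneS_{10}$, $\CloneD_2$) is right. So are the normal-form upper bounds for subclones of $\CloneV$, $\CloneE$, $\CloneL$, for formulas given as trees. But the $\class{coNP}$-hardness half, which is the substance of the theorem, is not proved. It rests entirely on a rigidification step that you name but never construct or verify.

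The tagging you suggest ($x_i$ combined with $i$ fresh variables) yields tag terms with at most $n+1$ variables. That is the same size range as the prime implicants of $\phi$ itself. A disjunctive tag is absorbed whenever $x_i\models\phi$, and tags can a priori be matched with implicants of $\phi$ of equal size. So nothing shows that every isomorphism between the padded formulas fixes the $x_i$. At minimum you would need tags larger than any implicant of $\phi$, plus an argument on prime-implicant hypergraphs.

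For $\CloneD_2$ the plan fails earlier, for several reasons:
\begin{itemize}
\item With one fresh $t$, $\maj(\cdot,\cdot,t)$ is $\wedge$ only at $t=0$ and $\vee$ only at $t=1$, so a formula using both connectives cannot be simulated.
\item The values on the rest of the cube are uncontrolled, so even equivalence is not preserved.
\item An isomorphism may move $t$ itself.
\item Adding tags by $\vee$ or $\wedge$ destroys self-duality, so they would have to be attached in some self-dual way that you do not describe.
\end{itemize}

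The missing idea, which makes gadgets unnecessary, is that a bijection of variables preserves the number of models, and even the number of models of each Hamming weight. Hence if $f\le g$ pointwise on the same variables, then $f\cong g$ iff $f\equiv g$.
\begin{itemize}
\item $\phi\models\psi$ iff $\phi\vee\psi\cong\psi$, which settles $\CloneS_{00}$ (it contains $\vee$).
\item $\phi\models\psi$ iff $\phi\wedge\psi\cong\phi$, which settles $\CloneS_{10}$ (it contains $\wedge$).
\end{itemize}
Single-premise implication is already $\class{coNP}$-hard for these clones, so both reductions go through.

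For $\CloneD_2$ plain counting is useless: a self-dual function on $N$ variables has exactly $2^{N-1}$ models, and comparable self-dual functions coincide. Use weight profiles instead:
\begin{itemize}
\item Start from monotone $F\le G$ over $n$ variables, e.g.\ $F=B$, $G=A\vee B$ for an implication instance $A\models B$.
\item Pad to $F_1=F\vee z_1\vee\dots\vee z_m$ with $m>n$, and likewise $G_1$; set $N=n+m$.
\item Let $F_1^*$ replace each $\alpha\wedge\beta$ by $\maj(\alpha,\beta,u)$ and each $\alpha\vee\beta$ by $\maj(\alpha,\beta,v)$, and compare $\maj(F_1^*,u,v)$ with $\maj(G_1^*,u,v)$.
\end{itemize}
These two formulas agree where $u=v$. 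At $(u,v)=(0,1)$ they equal $F_1$ and $G_1$, and at $(1,0)$ they equal the duals. Let $d_k\ge 0$ be the number of weight-$k$ models of $G_1$ minus that of $F_1$. Equal weight profiles then give $d_k=d_{N-k}$. The padding forces $d_k=0$ for $k>n$, hence $d\equiv 0$. So $F_1=G_1$, and setting the $z_j$ to $0$ gives $F=G$.

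Three smaller points:
\begin{itemize}
\item Theorem~\ref{thm:robust} gives only polynomial-time translations, so it cannot transfer an $\class{L}$ bound between bases. Compute the normal form directly for an arbitrary base instead.
\item For DAG inputs, computing $S$ is DAG reachability, which is $\class{NL}$-hard. Your remark that it works in $\class{L}$ for both representations is therefore wrong.
\item Fact~\ref{fact:finite-downward-closure} does not give finitely many minimal elements of an upward-closed set; witness the chain $\cdots\subsetneq\CloneS_{00}^3\subsetneq\CloneS_{00}^2$. It is your inspection of the lattice that does the work there.
\end{itemize}
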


To conclude, let us consider what is arguably
the simplest reasoning problem for $\PL_O$, namely the 
 \emph{evaluation problem}. It is the problem of deciding, for a given variable-free $\PL_O$-formula $\phi$, whether $\phi$ evaluates to true.
 When formulas are represented as DAGs, $\PL_{\{\land,\neg\}}$
 is known as the \emph{Boolean circuit value problem (CVP)} and it is well known to be P-complete.
 In the case were formulas are represented as trees, it follows
 from results in \cite{Buss87,BeaudryMcKenzie1995} that the problem is in
$\class{NC_1}$ if $O$ contains only unary and binary Boolean connectives and the formulas are given in infix notation. 
These results were further generalized by Schnoor~\cite{Schnoor05} to the case with arbitrary sets of Boolean functions, of arbitrary arity (using prefix notation).

\begin{theorem}[\cite{DBLP:journals/ijfcs/Schnoor10}]
Let $O$ be a set of Boolean functions containing $\top$ and $\bot$. The evaluation problem for $\PL_O$, when formulas are represented as trees, is 
\begin{itemize}
\item solvable in constant time if $O\equiv \{\top,\bot\}$,
\item $\class{NLOGTIME}$-complete (under suitable reductions) if $O\equiv\{\lor,\top,\bot\}$,  %
\item $\class{coNLOGTIME}$-complete (under suitable reductions) if $O\equiv\{\land,\top,\bot\}$,  %
\item equivalent to $\MOD_2$ (under suitable reductions) if $\{\neg,\top,\bot\}\preceq O\preceq\{\oplus,\top,\bot\}$,  %
\item $\class{NC^1}$-complete (under suitable reductions) otherwise  %
\end{itemize}
\end{theorem}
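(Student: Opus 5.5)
The plan is to prove the classification case by case, following the structure of the statement, and to exploit the fact that $\top,\bot\in O$, so that $\cloneof{O}$ is one of $\CloneI,\CloneE,\CloneV,\CloneL,\CloneN$ or a clone containing $\CloneM$ or $\CloneBF$ (the clones above $\CloneI$ in Post's lattice). By Theorem~\ref{thm:robust}, the clones $\CloneE,\CloneV,\CloneL$ and $\CloneM_2$ are base independent. Hence there is a polynomial (tree-size) translation from any base of the clone to a canonical base. We will need to refine this to a translation computable within the respective low complexity class. For this we rely on the fact that the standard replacement of each connective by a fixed defining formula is a local rewriting, and so is computable in $\class{DLOGTIME}$-uniform $\class{AC^0}$ on prefix-notation inputs. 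With this in hand, it suffices to treat the canonical bases $\{\top,\bot\}$, $\{\lor,\top,\bot\}$, $\{\land,\top,\bot\}$, $\{\oplus,\top,\bot\}$, and $\{\land,\lor,\top,\bot\}$ or $\{\land,\neg\}$.

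For the upper bounds, the case $O\equiv\{\top,\bot\}$ is immediate: every function in $\CloneI$ is a projection or a constant. The root connective, after unwinding through projection positions, therefore determines a constant or a subformula, and a variable-free formula over a base equivalent to $\{\top,\bot\}$ evaluates to the value read off a fixed position. I would argue that the first constant symbol encountered along a fixed path from the root suffices, which is constant-time under random access. For $\{\lor,\top,\bot\}$, a variable-free formula is true iff some $\top$ leaf occurs, which is an $\class{NLOGTIME}$ guess-and-check. $\{\land,\top,\bot\}$ is dual, giving $\class{coNLOGTIME}$. For $\CloneL$ and $\CloneN$, the value is the parity of the $\top$ leaves plus negations, which reduces to $\MOD_2$. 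In the remaining cases we invoke the Buss/Schnoor $\class{NC^1}$ algorithm for evaluating Boolean formulas given as trees.

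For the lower bounds, the hardness for $\class{NLOGTIME}$ and $\class{coNLOGTIME}$ follows because the OR (respectively AND) of the input bits is encoded directly as a formula; I would verify that the standard complete problems for these classes reduce by padding. $\MOD_2$-hardness for $\{\neg,\top,\bot\}\preceq O$ is obtained by encoding a bit string as a chain of negations applied to $\top$, or more precisely as a formula whose value is the parity of the string, using a fixed-length gadget per bit. $\class{NC^1}$-hardness for the remaining clones, which are those containing $\CloneM_2$ together with the constants (that is, $\CloneM$ or $\CloneBF$), uses Buss's result that monotone formula evaluation with $\land,\lor$ is $\class{NC^1}$-complete. We combine it with the robustness of $\CloneM_2$ from Theorem~\ref{thm:robust} to transfer it to an arbitrary base, or alternatively use Barrington's theorem via $\maj$-like gadgets when only non-monotone bases are present.

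The main obstacle I expect is to make the base changes and encodings work within the very weak reductions, that is, uniform $\class{AC^0}$ or $\class{DLOGTIME}$ reductions, rather than merely polynomial-time ones. Expanding nested gadgets may require computing positions in the output string, which needs counting. I would handle this by using fixed-length padded gadgets for each connective, so that output positions are arithmetic functions of input positions. The constant-time case also needs care about the input encoding.
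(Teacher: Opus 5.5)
The paper gives no proof of this theorem; it is quoted from Schnoor. So your proposal has to stand on its own, and it has a genuine gap in the step everything else rests on: passing from an arbitrary $O$ to a canonical base.

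Theorem~\ref{thm:robust} only yields polynomial-time translations. Those are useless as reductions for $\class{NLOGTIME}$, $\class{coNLOGTIME}$, $\MOD_2$-equivalence or $\class{NC^1}$-completeness. Your substitute, replacing each connective by a fixed defining formula, is not a harmless local rewriting, for two reasons:
\begin{itemize}
\item If the defining formula uses an argument twice, iterated substitution blows up tree-size exponentially. This is precisely why robustness is a nontrivial theorem.
\item If a connective has fictitious arguments, entire subtrees must be deleted. Every clone contains the projections, so for example $O=\{\lor,\top,\bot,\pi^2_1\}$ satisfies $O\equiv\{\lor,\top,\bot\}$. Deleting subtrees means locating subtree boundaries in prefix notation by comparing running symbol counts. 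Nothing in your argument does this in $\class{NLOGTIME}$, let alone in constant time.
\end{itemize}
Fixed-length padding addresses neither issue. As a result, your upper bounds are proved only for the canonical bases:
\begin{itemize}
\item ``True iff some $\top$ leaf occurs'' is false for $\pi^2_1(\bot,\top)$.
\item The parity description fails as soon as some connective ignores an argument.
\item In the $\CloneI$ case, the path to the relevant leaf is not fixed once both $\pi^2_1$ and $\pi^2_2$ are available. Even with $\pi^2_1$ alone, the first constant sits at an input-dependent position, so it cannot be read off in constant time.
\end{itemize}

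The hardness side fails in the same way, and here the failure cannot be repaired by adding detail. Take $O=\{\neg,\top,\bot\}$ with constants as leaves. Every formula is $\neg^k c$, so no fixed-length gadget per bit exists: all non-final blocks would have to be the same string $\neg^L$. In fact this problem is decided by an $\class{AC^0}$ syntax check, the parity of the input length, and the last symbol. So it lies in $\class{AC^0}$, and $\MOD_2$ does not reduce to it under $\class{AC^0}$-type reductions.

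By contrast, let $h(x,y)=\neg y$, so that $\{h,\top,\bot\}\equiv\{\neg,\top,\bot\}$. Map bit $1$ to the block $h\,h\,\bot\,\bot$ (one negation on the right spine) and bit $0$ to $h\,\bot\,h\,\bot$ (two negations), and end with a final $\bot$. This is a $\class{DLOGTIME}$ reduction from $\MOD_2$. So at this granularity the complexity depends on the particular base, and on encoding conventions that the survey's formulation suppresses. No argument that funnels every $O$ through a canonical representative can succeed. You must either fix Schnoor's precise conventions or restrict to bases whose connectives depend on all their arguments, and then argue for the given base directly.

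A transfer does work for hardness when a binary connective is present. Reduce from \emph{balanced} formulas of logarithmic depth:
\begin{itemize}
\item complete trees for $\class{NC^1}$, via Buss and Barrington;
\item balanced $\lor$-, $\land$- or $\oplus$-trees for the other cases.
\end{itemize}
There, even argument-duplicating gadgets cost only a polynomial factor, and output positions are arithmetic functions of input positions. Your Barrington-via-$\maj$ detour is unnecessary, since in the last case $\cloneof{O}\in\{\CloneM,\CloneBF\}$ already contains $\land$ and $\lor$.
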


Here, $\class{NLOGTIME}$ is the class of problems solvable in nondeterministic logarithmic time (cf.~\cite{DBLP:journals/ijfcs/Schnoor10,DBLP:journals/sigact/HemaspaandraV95} for a precise definition). 
A similar analysis was carried out for the DAG case in~\cite{ReithWagner2005}.

\subsection{Teachability and Learnability}\label{section: Teach_and_Learn}

Let $\phi$ be a $\PL_O$-formula. A 
\emph{labeled example} is a
pair $(V,\lab)$, where $V$ is a truth assignment
and $\lab\in\{0,1\}$. We say that $\phi$ 
\emph{fits} this labeled example if $\sem{\phi}_V=\lab$. 
Thus, for instance, let $\phi$ be  the formula $p\land q$, and let
$V_p$ and $V_{pq}$ are the truth assignments that make $p$, respectively $p$ and $q$, true. Then
$(V_p,0)$ and $(V_{pq},1)$ are labeled examples that $\phi$ fits.

Labeled examples can be used in order to illustrate formulas, and, conversely, formulas can be learned from labeled examples. Teachability and learnability refer to these two use cases of labeled examples. We will discuss both.

\subsubsection*{Teachability}

We say that a set of labeled examples
\emph{uniquely characterizes} a $\PL_O$-formula $\phi$ with respect to $\PL_O$ 
if $\phi$ fits the labeled examples and every  $\PL_O$-formula that fits the labeled examples is equivalent to $\phi$. 
As a further refinement, 
we say that a
set of labeled examples
uniquely characterizes  a
$\PL_O[\PROP]$-formula $\phi$ \emph{with respect to $\PL_O[\PROP]$} 
if $\phi$ fits the labeled examples and every $\PL_O[\PROP]$-formula that fits the labeled examples is equivalent to $\phi$. 

Trivially, every formula $\phi\in \PL_O[\PROP]$ is uniquely characterized with respect to 
$\PL_O[\PROP]$ by the set of all
its $2^n$ many labeled examples, where $n=|\PROP|$.
When $O=\{\land,\neg\}$, it is not difficult to see that all $2^n$ labeled examples are needed 
in order to uniquely characterize a formula. 
However, for other sets $O$, less examples may suffice.

\begin{example}
The formula $\phi=p\land q$ is uniquely characterized with respect to $\PL_{\{\land\}}$ by
the set of labeled examples $\{(V_{pq},1)$, $(V_{p},0), (V_{q},0)\}$, where
$V_{pq}$ is the truth assignment that makes
$p$ and $q$ true and all other propositional variables false, and similarly for $V_p$ and $V_q$. The same formula $\phi$ is 
\emph{not} uniquely characterized with respect to 
$\PL_{\{\land,\lor\}}$ by the given set of labeled examples (or \emph{any} set of labeled examples), although it is 
uniquely characterized by these examples
with respect to $\PL_{\{\land,\lor\}}[\{p,q\}]$.
\end{example}

A set of labeled examples that uniquely characterizes a formula $\phi$
is also called a \emph{teaching set}
for $\phi$. Intuitively, if $E$
is a teaching set for $\phi$, then, given $E$ as input, a diligent student (with unlimited computational resources) is able to derive the formula $\phi$ up to  equivalence.

The above example raises the question: how many labeled examples are needed to uniquely characterize a formula with respect to $\PL_O[\PROP]$? 
In particular, when is it the case that polynomially many examples suffice?

\begin{theorem}[from \cite{Dalmau}, cf.~\cite{Arun2024}]
\label{prop:unique1}
Let $O$ be a set of Boolean functions. 
If 
\begin{itemize}
    \item 
    $O\preceq \{\land, \top,\perp\}$ or 
    \item 
    $O\preceq \{\lor, \top,\perp\}$ or 
    \item 
    $O\preceq \{\oplus,\top, \perp \}$
\end{itemize}
then every $\PL_O[\PROP]$-formula (with $\PROP$ a finite set of propositional variables) is uniquely characterized with respect to $\PL_O[\PROP]$ by a 
    set of $|\PROP|+1$ labeled examples.   Otherwise, there
    is no polynomial $p$ such that  every
     $\PL_O[\PROP]$-formula $\phi$ is uniquely characterized with respect to $\PL_O[\PROP]$ by a 
     set of $p(|\PROP|,|\phi|_{\dagsize})$ labeled examples.
     \end{theorem}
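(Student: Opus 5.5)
Both directions of the dichotomy rest on Post's lattice (Figure~\ref{fig:Post_Lat}, Table~\ref{tbl:posts_bases}). By Fact~\ref{fact:containment}, $\PL_O[\PROP]$ defines exactly the $|\PROP|$-ary functions of $\cloneof{O}$. So unique characterization with respect to $\PL_O[\PROP]$ only depends on the clone. Moreover, a teaching set for $\phi$ with respect to a larger clone is also a teaching set with respect to any subclone containing $\sem{\phi}$. It therefore suffices to give upper bounds for the three maximal clones $\CloneE$, $\CloneV$, $\CloneL$, and lower bounds for the minimal clones not below any of them.

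\textbf{Upper bounds.} For $\CloneE$, every function is $\top$, $\bot$, or a conjunction $\bigwedge_{x\in S}x$ with $S\subseteq\PROP$.
\begin{itemize}
\item For $\phi=\bigwedge S$, take the positive example $V_S$ (exactly $S$ true) and, for each $x\in S$, the negative example $V_{\PROP\setminus\{x\}}$. Any conjunction fitting these must contain every $x\in S$, and, being true at $V_S$, must be contained in $S$. The constant $\bot$ fails $V_S$, and $\top$ fails the negatives. The degenerate case $S=\emptyset$ is $\top$, handled below.
\item For $\top$, take $(V_\emptyset,1)$. This rules out $\bot$ and every nonempty conjunction.
\item For $\bot$, take $(V_{\PROP},0)$. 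This rules out $\top$ and every conjunction.
\end{itemize}
In each case at most $|\PROP|+1$ examples are used. $\CloneV$ follows by duality: complement assignments and flip labels.

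For $\CloneL$, the functions are the affine maps $a_0\oplus\bigoplus_i a_ix_i$. Label $V_\emptyset$ to determine $a_0$, and label each singleton assignment $V_{\{x_i\}}$ to determine $a_i$. These $|\PROP|+1$ examples pin down the affine function uniquely.

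\textbf{Lower bounds.} If $O$ lies below none of $\CloneE,\CloneV,\CloneL$, then by inspecting Post's lattice $\cloneof{O}$ contains $\CloneM_2$, $\CloneD_2$, $\CloneS_{00}^n$, or $\CloneS_{10}^n$ for some $n$. I expect the main obstacle to be a uniform argument that works even for the small clones $\CloneS_{00}^n$ and $\CloneS_{10}^n$. The plan is to exhibit, for each such minimal clone, a family of short formulas $\phi_n$ (polynomial dag-size) together with exponentially many formulas $\psi_1,\dots,\psi_N$ in the same clone with the following property: each $\psi_j$ differs from $\phi_n$ on exactly one assignment $W_j$, and the $W_j$ are pairwise distinct. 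Then any teaching set must contain every $W_j$, so it has size at least $N$, which is exponential.

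For $\CloneM_2$, take $\phi=\bigwedge_{i\le n}(x_i\lor y_i)$. Its minimal satisfying assignments are the $2^n$ choices $W_j$ of one variable from each pair. Let $\psi_j$ be $\phi$ with the monotone term $W_j$ removed from the DNF. Then $\psi_j$ is monotone, and it is false exactly at $W_j$ among the models of $\phi$. Since $\psi_j$ is monotone and $\bot,\top$-preserving, it lies in $\CloneM_2$.

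For $\CloneS_{00}^n$ and $\CloneS_{10}^n$, I would adapt the same idea. Use threshold-like functions built from $T^{n+1}_2$, or from $x\lor(y\land z)$, with a fixed fresh variable to ensure $\bot$-separation, and again delete a single minimal term, checking that separation is preserved. For $\CloneD_2$, adapt it via majority-based self-dualization. Alternatively, one can simply cite the analysis in \cite{Dalmau}, and use Fact~\ref{fact:finite-downward-closure} to ensure that only finitely many minimal cases need checking.
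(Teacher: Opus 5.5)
The paper gives no proof of this theorem; it cites \cite{Dalmau}. Your upper-bound half is correct: teaching sets pass down to subclones, and your explicit sets for $\CloneE$, $\CloneV$, $\CloneL$ use at most $|\PROP|+1$ examples and do pin down the target.

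The gap is in the lower-bound case analysis. You claim that every clone lying below none of $\CloneE,\CloneV,\CloneL$ contains $\CloneM_2$, $\CloneD_2$, $\CloneS_{00}^n$ or $\CloneS_{10}^n$. This is false. A counterexample is $\CloneS_{00}=\cloneof{\{x\lor(y\land z)\}}$; dually $\CloneS_{10}$, and also $\CloneS_0,\CloneS_{01},\CloneS_{02}$ and their duals. The reason is that $\CloneS_{00}=\bigcap_n\CloneS_{00}^n\subsetneq\CloneS_{00}^n$ for every $n$. Meanwhile $\CloneM_2$, $\CloneD_2$ and $\CloneS_{10}^n$ contain $\land$, $\maj$ and $x\land(y\lor z)$, respectively, and none of these lies in $\CloneS_0$.

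Lower bounds only propagate upward. So proving them for each $\CloneS_{00}^n$ (e.g.\ via $T^{n+1}_2$) would leave $O=\{x\lor(y\land z)\}$ unresolved. Fact~\ref{fact:finite-downward-closure} cannot rescue this. The set of ``bad'' clones is upward closed, and Post's lattice has infinite descending chains, so that fact says nothing about its minimal elements.

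The genuinely minimal bad clones are $\CloneD_2$, $\CloneS_{00}$ and $\CloneS_{10}$. The only case you actually carry out, $\CloneM_2$, is not among them. Moreover, your one-point template cannot apply to $\CloneD_2$: two self-dual functions that differ at $W$ also differ at $\overline{W}$.

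Both repairs are within reach of your own idea.

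\emph{$\CloneS_{00}$ and $\CloneS_{10}$.} Take $\phi=w\lor\bigwedge_{i\le n}(x_i\lor y_i)$. It has dag-size $O(n)$ over $\{x\lor(y\land z)\}$: use $C_i=x_i\lor(y_i\land y_i)$, $F_n=w\lor(C_n\land C_n)$ and $F_k=w\lor(C_k\land F_{k+1})$. Let $t_j$ range over the $2^n$ minimal terms of $\bigwedge_i(x_i\lor y_i)$, and set $\psi_j=w\lor\bigvee_{t\neq t_j}t$. Each $\psi_j$ is monotone, lies above $w$ and vanishes at the all-false assignment, so $\psi_j\in\CloneS_{00}$. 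It differs from $\phi$ exactly at $t_j$ with $w$ false. $\CloneS_{10}$ is dual.

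\emph{$\CloneD_2$.} Let $\phi$ be self-dual and monotone, and let $W$ be a minimal true point. Then $\overline{W}$ is a maximal false point, and flipping the values at both yields another self-dual monotone function. These complementary pairs are pairwise disjoint, so the number of minimal true points is a lower bound on any teaching set. For a concrete $\phi$, start from $\bigwedge_i(x_i\lor y_i)$ and replace each $\land$ by $\maj(\cdot,\cdot,u)$ and each $\lor$ by $\maj(\cdot,\cdot,v)$, for fresh $u,v$. This gives a linear-size $\PL_{\{\maj\}}$-formula. It computes $\bigwedge_i(x_i\lor y_i)$ when $u=0,v=1$, and the conjunction of all $x_i,y_i$ when $u=v=0$. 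Hence each $t_j$ together with $u=0,v=1$ is a minimal true point, giving $2^n$ of them.
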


The above result has two disadvantages: 
the set of labeled examples it yields
does not necessarily uniquely characterize
$\phi$ with respect to $\PL_O$. It only uniquely
characterizes $\phi$ with respect to $\PL_O[\PROP]$.
In addition, 
the number of examples grows linearly in $|\PROP|$ and hence is not uniformly bounded for a given formula $\phi$.
One may ask when it is possible to uniquely
characterize a formula $\phi$ with respect to $\PL_O$.
The answer to this is given by the following theorems:

\begin{restatable}{theorem}{thmpropunique}
\label{thm:prop-unique}
    Let $O$ be a set of Boolean functions. The following are equivalent:
        \begin{enumerate}
            \item $O\preceq \{\land,\lor, \top,\perp\}$
             or $O\preceq \{\neg, \perp\}$.
             \item Every $\phi\in \PL_O$ is uniquely characterized wrt $\PL_O$ by a finite set of  labeled examples;
             \item There is a function $f$ such that  for all finite sets $\PROP$, every $\phi\in \PL_O[\PROP]$ is uniquely characterized wrt $\PL_O[\PROP]$ by a set of $f(|\phi|_{\dagsize})$ labeled examples.
        \end{enumerate}
\end{restatable}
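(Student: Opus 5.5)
The plan is to prove (1)$\Rightarrow$(2), (1)$\Rightarrow$(3), and then $\neg$(1)$\Rightarrow\neg$(2) together with $\neg$(1)$\Rightarrow\neg$(3). The positive direction uses explicit teaching sets whose size depends only on the variables occurring in $\phi$. The negative direction uses a pigeonhole argument over fresh variables, applied to a few minimal clones of Post's lattice.

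\emph{Positive direction, monotone case.} Assume $O\preceq\{\land,\lor,\top,\perp\}$, so every $\PL_O$-formula denotes a monotone function. Let $\phi\in\PL_O$, let $X$ be the (finite) set of variables occurring in $\phi$, and let $Y$ be all remaining variables.
\begin{itemize}
\item For every minimal $S\subseteq X$ with $\phi$ true under $V_S$ (the assignment making exactly $S$ true), take the positive example $(V_S,1)$.
\item For every maximal $T\subseteq X$ with $\phi$ false under the assignment true on $T$, take the negative example $(V_{T\cup Y},0)$, where $V_{T\cup Y}$ makes exactly $T$ and all of $Y$ true.
\end{itemize}
Let $\psi\in\PL_O$ fit these examples, and let $V$ be any assignment. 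If $\phi(V)=1$, then $V$ lies above some $V_S$, so $\psi(V)=1$ by monotonicity of $\psi$. If $\phi(V)=0$, then $V\cap X\subseteq T$ for some maximal $T$, so $V\le V_{T\cup Y}$ and hence $\psi(V)=0$. Thus $\psi\equiv\phi$. The teaching set has at most $2\cdot 2^{|X|}\le 2^{|\phi|_{\dagsize}+1}$ elements. Constants are handled by the same construction: for $\top$ take the all-false assignment labelled 1, for $\bot$ the all-true assignment labelled 0. The identical construction, restricted to $\PROP$, gives (3).

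\emph{Positive direction, case $O\preceq\{\neg,\perp\}$.} Every formula is equivalent to a literal or a constant.
\begin{itemize}
\item $x$ is characterized by the all-false assignment labelled 0 together with the assignment making only $x$ true labelled 1.
\item $\neg x$ is characterized by the same two assignments with the labels swapped.
\item $\top$ is characterized by the all-true and all-false assignments, both labelled 1; $\bot$ dually.
\end{itemize}
This gives both (2) and (3) with a constant bound.

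\emph{Negative direction.} First, failure is upward closed. If $O'\preceq O$ and some $\phi\in\PL_{O'}$ is not finitely characterized wrt $\PL_{O'}$, then the same witnesses $\psi$ (up to equivalence) live in $\PL_O$, so $\phi$ is not finitely characterized wrt $\PL_O$ either. The same holds for (3).

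Next, I would read off from Post's lattice (Figure~\ref{fig:Post_Lat}, Table~\ref{tbl:posts_bases}) that every clone contained neither in $\CloneM$ nor in $\CloneN=[\{\neg,\perp\}]$ contains one of $\CloneL_2=[x\oplus y\oplus z]$, $\CloneS_{02}=[x\lor(y\nrightarrow z)]$, or $\CloneS_{12}=[\aimp]$. I expect this case inspection of the lattice to be the main obstacle: one must verify that every non-monotone clone outside $\CloneN$ lies above one of these three. The infinite chains $\CloneS^n_{02}$ and $\CloneS^n_{12}$ descend to $\CloneS_{02}$ and $\CloneS_{12}$, and $\CloneD_1$, $\CloneL_0$, $\CloneL_1$, $\CloneL_3$ contain $\CloneL_2$.

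For each of the three clones, take $\phi=x$, which is a projection and so lies in every clone, with $|\phi|_{\dagsize}=1$. Given any finite set $E$ of examples, or a set of $f(1)$ examples over $\PROP$ with $|\PROP|>2^{f(1)}+1$, the vectors $(V(y))_{(V,\ell)\in E}$ take at most $2^{|E|}$ values. By pigeonhole there are therefore distinct variables $y,z\neq x$ with $V(y)=V(z)$ for every example. Then each of
\[
x\oplus y\oplus z,\qquad x\lor(y\land\neg z),\qquad x\land(y\rightarrow z)
\]
agrees with $x$ on all of $E$, but none of them is equivalent to $x$. This refutes (2) and (3) in each case, and hence, by upward closure, for every $O$ not satisfying (1).
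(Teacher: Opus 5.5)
Your proof is correct, and its negative direction takes a more elementary route than the paper's.

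\textbf{How the paper argues.} The paper proves $1\Rightarrow 2\Rightarrow 3\Rightarrow 1$.
\begin{itemize}
\item Positive part: it cites Anthony et al.\ (a monotone $\phi$ is characterized by $m$ positive and $n$ negative examples iff it has a DNF with $m$ terms and a CNF with $n$ clauses), plus a two-example characterization for $\{\neg,\top\}$.
\item Negative part: it shows that every characterization of any $\PL_{\{\threeXor\}}[\PROP]$-formula needs at least $|\PROP|-1$ examples. It adds the all-true positive example to pass to $\PL_{\{\oplus\}}[\PROP]$ and reads the examples as an underdetermined linear system over $\mathbb{F}_2$. It then transfers this bound to $\oxor$ and $\aimp$ via PC-reductions that guard formulas with a fresh variable $w$.
\end{itemize}

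\textbf{How your proof differs.}
\begin{itemize}
\item Your minimal-true/maximal-false construction is in effect a self-contained proof of the Anthony et al.\ direction you need. It is explicit about what characterizing with respect to all of $\PL_O$ requires: negative examples that make every fresh variable true, and positive ones that make every fresh variable false.
\item On the negative side, one pigeonhole argument on the atom $x$ replaces both the linear algebra and the reductions. Its witnesses are $x\oplus y\oplus z$, $x\lor(y\land\neg z)$ and $x\land(y\to z)$, one for each of $\CloneL_2$, $\CloneS_{02}$, $\CloneS_{12}$. Your $x\lor(y\land\neg z)$ generates the same clone as the paper's $\oxor$.
\item Your lattice inspection is right, and it is the same one the paper does. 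The $\CloneR_i$, $\CloneBF$, $\CloneD$, $\CloneD_1$ and the linear clones lie above $\CloneL_2$. The $\CloneS_0$ and $\CloneS_1$ families lie above $\CloneS_{02}$ and $\CloneS_{12}$ respectively. Only $\CloneN$ and $\CloneN_2$ remain among the non-monotone clones.
\item Since $x$ has dag-size $1$ in every fragment and the witnesses use only three variables, upward closure is immediate.
\item By proving $1\Rightarrow 3$ and $\neg 1\Rightarrow\neg 3$ directly, you also avoid the step $2\Rightarrow 3$. The paper calls that step immediate, but it tacitly needs a finiteness-up-to-renaming argument to obtain a uniform $f$.
\end{itemize}

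\textbf{What the paper's route buys.} It is quantitatively stronger: it gives a linear lower bound for every formula, whereas yours is roughly $\log_2|\PROP|$ for $x$ alone. Its PC-reduction machinery is also reused in Appendix~\ref{AppendixII} for the modal dichotomy. There, though, any unbounded lower bound for the atom, including yours, would suffice.
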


The proof of Theorem~\ref{thm:prop-unique} is given in Appendix~\ref{AppendixI}.
 Clearly, the  result remains true if $|\phi|_{\dagsize}$ is replaced by $|\phi|_{\tree}$.

\begin{restatable}{theorem}{thmpropuniquepolynomial}
\label{thm:prop-unique-polynomial}
    Let $O$ be a set of Boolean functions. If
    \begin{itemize}
        \item $O\preceq \{\land, \top,\perp\}$ or 
        \item $O\preceq \{\lor, \top,\perp\}$ or 
        \item $O\preceq \{\neg, \perp\}$.
    \end{itemize}
    then every $\PL_O$-formula $\phi$ is uniquely characterized with respect to $\PL_O$
        by a set of at most $|\phi|_{\dagsize}+1$ labeled examples. Otherwise, there is no polynomial $p$ such that  every $\PL_O$-formula $\phi$ is uniquely characterized with respect to $\PL_O$
        by a set of at most $p(|\phi|_{\dagsize})$ many labeled examples. 
\end{restatable}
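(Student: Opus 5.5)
The proof has two halves: positive results for the three listed cases, and a lower bound for every other $O$. By Theorem~\ref{thm:prop-unique}, the lower bound is already settled unless $O\preceq\{\land,\lor,\top,\perp\}$, since outside that case and outside $\{\neg,\perp\}$ not even finite teaching sets exist. So the negative part only needs to handle clones inside $\CloneM$ that are neither in $\CloneE$ nor in $\CloneV$. Inspecting Post's lattice, every such clone contains $\CloneS_{00}^2$ or $\CloneS_{10}^2$. The reason is that a monotone clone not contained in $\CloneE$ or $\CloneV$ must contain a function built from both $\land$ and $\lor$ essentially, hence $x\lor(y\land z)$, $x\land(y\lor z)$, or $\mathrm{T}^3_2=\maj$.

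For $O\preceq\{\land,\top,\perp\}$, I take $\phi$ equivalent to $\top$, to $\perp$, or to $\bigwedge_{x\in S}x$ with $S$ a set of variables occurring in $\phi$. Each $\PL_O$-formula $\psi$ is likewise a conjunction of some set $T$ of variables, or a constant. The examples are the following:
\begin{itemize}
\item the positive example $V_S$, which makes exactly $S$ true;
\item for each $x\in S$, the negative example $V_{S\setminus\{x\}}$;
\item the negative example $V_\emptyset$ when $S=\emptyset$ is not the intended formula.
\end{itemize}
If $\psi$ fits these examples, then $V_S$ forces $T\subseteq S$, and each negative example $V_{S\setminus\{x\}}$ forces $x\in T$. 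The constants are handled similarly: $\top$ is characterized by $(V_\emptyset,1)$, since no conjunction of variables is true there, and $\perp$ by $(V_{\mathrm{all}},0)$. Because each $x\in S$ occurs in $\phi$, this gives at most $|\phi|_{\dagsize}+1$ examples. The case $O\preceq\{\lor,\top,\perp\}$ is dual. For $O\preceq\{\neg,\perp\}$, every formula is equivalent to $x$, $\neg x$, $\perp$, or $\top$. Two examples suffice in each case. For $x$, use $(V_{\{x\}},1)$ and $(V_\emptyset,0)$; any $y\neq x$ or $\neg y$ fails one of these. For $\neg x$ the examples are dual, and for the constants use $(V_\emptyset,\cdot)$ and $(V_{\{x\}},\cdot)$, where $x$ is an arbitrary variable.

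For the lower bound, I exploit that $\maj$, $x\lor(y\land z)$, and $x\land(y\lor z)$ generate families of monotone functions needing many examples even with respect to all monotone formulas over a given clone. The plan is a function $\phi_n$ of polynomial dag-size for which any teaching set must contain exponentially many examples. The candidate, for $\CloneS_{00}^2$, is
\[
\phi_n=\bigwedge_{i\le n}(x_i\lor y_i)\lor z,
\]
with a guard variable absorbing the constants. Its minimal true points number $2^n$. For each minimal true point $m$, the function $\phi_n\land\neg(\text{exactly }m)$, i.e.\ $\phi_n$ with $m$ removed from its minterms, is still monotone. I need it to be expressible in the given fragment, also using fresh variables if needed. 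It disagrees with $\phi_n$ only at $m$. So every teaching set must contain $(m,1)$ for each of the $2^n$ minimal points, while $|\phi_n|_{\dagsize}=O(n)$.

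The main obstacle is ensuring that this perturbed function is expressible in the weak clones $\CloneS_{00}^2$ and $\CloneS_{10}^2$, which are $0$/$1$-separating and not closed under constants. For $\CloneS_{00}^2$, I use the fact that every function in the clone is of the form $x\lor g$ for a fixed variable; the construction above is built so that $z$ plays this role, and it also appears in the perturbed formula. For $\CloneS_{10}^2$ I argue dually, using maximal false points. The $\maj$ (i.e., $\CloneD_2$) case may need a separate self-dual encoding.
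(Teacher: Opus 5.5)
\textbf{The gap: the case $\CloneD_2$.} Your reduction (via Theorem~\ref{thm:prop-unique}) to monotone clones not contained in $\CloneE$ or $\CloneV$ is fine. Your $\phi_n$ argument is also a valid, more elementary alternative to the paper's appeal to Dalmau. However, it does not cover all such clones, and the lattice claim it rests on is wrong. $\CloneS_{00}^2=\cloneof{\{x\lor(y\land z),\maj\}}$ is the \emph{top} of the chain $\CloneS_{00}\subsetneq\cdots\subsetneq\CloneS_{00}^3\subsetneq\CloneS_{00}^2$, not the bottom. Every monotone clone outside $\CloneE$ and $\CloneV$ contains $\CloneS_{00}$, $\CloneS_{10}$, or $\CloneD_2=\cloneof{\{\maj\}}$.

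Your construction actually lives in $\CloneS_{00}$. The perturbed function is monotone, lies in $\CloneR_2$, and is $\geq z$. Moreover, $\phi_n$ has a linear-size formula over $x\lor(y\land z)$. So the construction handles every clone containing $\CloneS_{00}$, and dually every clone containing $\CloneS_{10}$. But $\CloneD_2$ contains neither $x\lor(y\land z)$ nor $x\land(y\lor z)$. Neither $\phi_n$ nor its perturbations are self-dual, so they are not in $\CloneD_2$. For example, for $O=\{\maj\}$ you have no lower bound. Your closing remark that this case ``may need a separate self-dual encoding'' is exactly the missing step.

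\textbf{How to close it.} The paper avoids constructing lower bounds altogether. A set that uniquely characterizes $\phi$ w.r.t.\ $\PL_O$ also does so w.r.t.\ $\PL_O[\PROP]$. So a polynomial bound would contradict Theorem~\ref{prop:unique1} unless $O\preceq\{\land,\top,\perp\}$, $O\preceq\{\lor,\top,\perp\}$, or $O\preceq\{\oplus,\top,\perp\}$. Intersecting this with Theorem~\ref{thm:prop-unique} leaves only the three listed cases, and $\CloneD_2$ is covered automatically by Dalmau's bound.

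If you want to stay self-contained, self-dualize your example. Take fresh variables $u,v$. In $\bigwedge_i(x_i\lor y_i)$, replace each $x_i\lor y_i$ by $\maj(x_i,y_i,u)$ and each binary $\land$ by $\maj(\cdot,\cdot,v)$.
\begin{itemize}
\item The resulting $\maj$-formula $H_n$ has size $O(n)$ and equals $\bigwedge_i(x_i\lor y_i)$ when $u=1,v=0$.
\item For each transversal $m$, the point $p_m=(m,u{=}1,v{=}0)$ is a minimal true point of $H_n$. Indeed, at $u=v=0$ the formula computes $\bigwedge_i(x_i\land y_i)$.
\item Swap the values of $H_n$ at $p_m$ and at its complement $\overline{p_m}$. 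The result is monotone and self-dual, hence in $\CloneD_2$, and differs from $H_n$ only at those two points.
\end{itemize}
So every teaching set needs an example at $p_m$ or $\overline{p_m}$ for each $m$, which forces $2^n$ examples.

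\textbf{A small slip in the $\{\neg,\perp\}$ case.} The set $\{(V_\emptyset,1),(V_{\{x\}},1)\}$ is also fitted by $\neg y$ for any $y\neq x$. Dually, $y$ fits your examples for $\perp$. Use $V_\emptyset$ together with the all-true assignment instead, as in Theorem~\ref{thm:positive_analysis_neg}.
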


\begin{proof}
The first part of the theorem
follows directly from results in~\cite{Dalmau}.
For the ``otherwise'' part: 
assume that every $\PL_O$-formula is uniquely characterized with respect to $\PL_O$ by a set of at most 
$p(\dagsize{\phi})$, for some polynomial $p$. It follows
by Theorem~\ref{thm:prop-unique}
that $O\preceq \{\neg ,\perp\}$ or $O\preceq \{\land,\lor,\top ,\perp\}$. Theorem  \ref{prop:unique1} furthermore implies that $O\preceq\{\land,\top,\perp\}$ or $O\preceq\{\lor, \top,\perp\}$ or $O\preceq \{\xor, \top, \perp\}$. A visual inspection of the Post's Lattice reveals that the only $O$ that satisfies these constraints are those where $O\preceq \{\land, \top, \perp\}$, $O\preceq \{\lor, \top, \perp\}$ and $O\preceq \{\neg, \perp\}$. 
\end{proof}

\subsubsection*{Learnability}

Dalmau~\cite{Dalmau} systematically mapped out 
the learnability of $\PL_O$-formulas for different
sets $O$ of Boolean functions, according to 
different notions of learnability.
In order to state the results, we first briefly recall some fundamentals of computational learning theory.

A \emph{concept representation class} is a triple $\mathcal{C}=(C,E,\lambda)$ of a set  $C$ whose elements will be called ``\emph{concept representations}'' (or, \emph{concepts}, for short); a set $E$ whose elements will be called ``\emph{examples}'', and a \emph{valuation function} $\lambda\colon C\rightarrow \mathcal{P}(E)$. 
Two concepts $c, c'$ are said to be equivalent if
$\lambda(c)=\lambda(c')$.
A \emph{graded} concept representation class is a
family of concept representation classes $\mathcal{C}_n$ parameterized by a natural number $n$. 
In computational learning theory, one studies the existence, 
for specific concept classes or graded concept classes, of efficient algorithms for learning concepts from examples. 

We can naturally view $\PL_O$ as a graded concept representation
class, where the parameter $n$ corresponds to the number of propositional variables. More specifically, for $n>0$, we take $\PL_O[\PROP]$
with $\PROP=\{p_1, \ldots, p_n\}$ and 
view it as a concept class
where the concept representation are the
$\PL_O$-formulas over $\PROP$ are the 
examples are all propositional valuations 
for $\PROP$. 
Propositional formulas over
restricted bases such as 
$\{\land\}$ form some of
the most well-studied graded concept representation classes in computational learning theory.

In the \emph{exact learning} framework~\cite{DBLP:journals/ml/Angluin87}, the learning algorithm
has access to an oracle that can answer queries about some unknown target concept, and
the task of the learner is to identify the target concept up to equivalence. One common type of
oracle queries are \emph{membership queries}.
A membership query oracle, $\mathsf{MQ}(c)$ say, receives an input $e\in E$  and
returns ``Yes" if $e\in \lambda(c)$, or ``No" otherwise. A concept class $C$ is \emph{efficiently exactly learnable with membership queries} if there is an algorithm $L$ such that  for every $c\in C$, when $L$ has access to $\mathsf{MQ}(c)$, 
it terminates in time polynomial in the size of $c$, and outputs a concept equivalent to $c$. 
The same definition applies to graded concept classes, where the running time is now required to be bounded by a polynomial in the size of the target concept and $n$. The following results assume DAG-size as the concept size metric.    \footnote{It is stated in~\cite{Dalmau} that these results hold also for tree-size. However, it turns out there is a flaw in the statement and the lower bounds only apply with respect to DAG-size. }

\begin{theorem}[\cite{Dalmau}]
\label{thm:Dalmau}
    Let $O$ be a set of Boolean functions. If
    \begin{itemize}
        \item 
        $O\preceq \{\land, \top, \perp\}$ or
        \item 
        $O\preceq \{\lor, \top, \perp\}$ or \item $O\preceq \{\oplus, \top,\perp\}$
        \end{itemize} 
        then $\PL_O[\PROP]$ is efficiently exactly  learnable with $|\PROP|+1$ membership queries. Otherwise, 
        $\PL_O[\PROP]$ is \emph{not} exactly  learnable with 
        polynomially many membership queries.
\end{theorem}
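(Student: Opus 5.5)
The proof splits into an upper bound (a learning algorithm) and a lower bound, and the two cases of the dichotomy are handled separately.

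\emph{Upper bound.} Since $\PL_O$ is expressively contained in $\PL_{O'}$ whenever $O\preceq O'$ (Fact~\ref{fact:containment}), it suffices to learn the three maximal clones $\CloneE$, $\CloneV$ and $\CloneL$. The learner outputs a formula of the larger language that is equivalent to the target; this is allowed, since we only require equivalence.

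For $O\preceq\{\land,\top,\bot\}$, every formula over $\PROP=\{p_1,\dots,p_n\}$ is equivalent to $\bot$, $\top$, or a conjunction $\bigwedge_{i\in S}p_i$.
\begin{enumerate}
\item Query the all-ones assignment. If the answer is ``No'', the target is $\bot$.
\item Otherwise, for each $i$, query the assignment that is $1$ everywhere except at $p_i$. Then $p_i\in S$ iff this answer is ``No''.
\end{enumerate}
This uses $n+1$ queries. The case $\CloneV$ is dual: query the all-zeros assignment, then each unit vector.

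For $O\preceq\{\oplus,\top,\bot\}$, every formula is equivalent to an affine function $c\oplus\bigoplus_{i\in S}p_i$.
\begin{enumerate}
\item Query the all-zeros assignment; the answer gives $c$.
\item For each $i$, query the unit vector $e_i$; the answer gives $c\oplus[i\in S]$, hence whether $i\in S$.
\end{enumerate}
This again uses $n+1$ queries. Each procedure clearly runs in polynomial time.

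\emph{Lower bound.} Suppose $O$ lies below none of $\CloneE$, $\CloneV$, $\CloneL$. I would use the standard adversary argument: if there is a family of target concepts of polynomial DAG-size that agree with a fixed concept on all but one example each, and there are exponentially many of them, then any learner needs exponentially many membership queries. Concretely, the adversary answers every query according to a default concept $c_0$. After $q$ queries, at least one family member $c_a$ is still consistent with all answers, so the learner cannot distinguish $c_a$ from $c_0$. Since $c_a$ and $c_0$ are inequivalent, at least one of them is output incorrectly.

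So the real task is to find such a ``point-function'' family inside every clone outside the downward closures of $\CloneE$, $\CloneV$, $\CloneL$. By inspecting Post's lattice (Figure~\ref{fig:Post_Lat}), every such clone contains one of a finite list of minimal clones, by Fact~\ref{fact:finite-downward-closure}. I expect these to be $\CloneS_{00}^2$, $\CloneS_{10}^2$, $\CloneD_2$, $\CloneL_2$ (or $\CloneL_3$, $\CloneN_2$ combined with $\land$ or $\lor$), and the clones generated by $\{\land,\lor\}$-like pairs. It suffices to handle each minimal clone, since a larger $O$ only enlarges the concept class and keeps the adversary's family available.

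\emph{Main obstacle.} For each minimal clone, I need polynomial-DAG-size formulas $\phi_a$ and $\phi_0$ that differ on exactly one assignment $a$, for exponentially many $a$. For clones containing $\land$ and $\lor$ (or derived threshold and majority functions), I would first try the following construction over variables $x_1,\dots,x_m,y_1,\dots,y_m$:
\begin{itemize}
\item let $\phi_0=\bigwedge_i(x_i\lor y_i)$, which is satisfied by every assignment that sets exactly one of $x_i,y_i$ true for each $i$;
\item for each such choice $a$, let $\phi_a$ be $\phi_0$ with the single assignment $a$ removed.
\end{itemize}
Writing $\phi_a$ monotonically needs care, and for clones like $\CloneS_{00}$ or $\CloneD_2$ one has to simulate $\land$ and $\lor$ with the extra variable or constant tricks available there. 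This case analysis over the minimal clones, together with keeping the DAG-size polynomial, is where I expect the difficulty. That is also consistent with the footnote that the bound holds only for DAG-size.
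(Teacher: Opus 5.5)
Your upper bound is fine: the three $(|\PROP|+1)$-query learners for $\CloneE$, $\CloneV$ and $\CloneL$ are correct. The gap is in the lower bound, which is a plan rather than a proof, and the plan targets the wrong clones.

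\emph{Wrong case list.} $\CloneL_2\subseteq\CloneL$ is learned by your own affine learner, so it cannot be a hard case. Joins of $\CloneL_2,\CloneL_3,\CloneN_2$ with $\land$ or $\lor$ are not minimal (e.g.\ $\CloneN_2\sqcup\CloneE_2=\CloneBF$). More seriously, $\CloneS^2_{00}$ and $\CloneS^2_{10}$ sit at the top of infinite chains $\cdots\subsetneq\CloneS^3_{00}\subsetneq\CloneS^2_{00}$, whose intersections are $\CloneS_{00}$ and $\CloneS_{10}$. For instance, $\CloneS_0=[\to]$ does not contain $\maj=T^3_2$, hence does not contain $\CloneS^2_{00}$. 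Treating only $\CloneS^2_{00},\CloneS^2_{10}$ would therefore leave $\CloneS_0,\CloneS_{01},\CloneS_{02},\CloneS_{00}$, every $\CloneS^n_{00}$ with $n\geq 3$, and their duals uncovered.

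Fact~\ref{fact:finite-downward-closure} cannot supply your ``finite list of minimal clones''. It concerns downward-closed sets, and because of these chains an upward-closed set of clones need not even have minimal elements. By inspecting the lattice, the correct minimal cases are exactly $\CloneS_{00}=[x\lor(y\land z)]$, $\CloneS_{10}=[x\land(y\lor z)]$ and $\CloneD_2=[\maj]$.

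\emph{Impossible requirement in $\CloneD_2$.} Your requirement that each hard concept differ from the default on \emph{exactly one} assignment cannot be met there. All functions in $\CloneD_2$ are self-dual, so two distinct ones differ on a union of pairs $\{a,\bar a\}$. The adversary argument must instead say that every assignment is a disagreement point for at most $k$ family members, with family size divided by $k$ superpolynomial.

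\emph{Missing constructions.} These can be supplied along your own lines. Let $u_i$ be the variable of the $i$-th pair that $a$ sets to $1$, and $\bar u_i$ the other one. Then $\phi_a=\bigwedge_i(x_i\lor y_i)\land\bigvee_i\bar u_i$ is monotone, of linear size, and differs from $\phi_0$ exactly at $a$. Transfer this family with auxiliary variables:
\begin{itemize}
\item For $\CloneS_{00}$, compare $w\lor\phi_a$ with $w\lor\phi_0$; these differ only at $a$ with $w=0$. They are expressible with $t(x,y,z)=x\lor(y\land z)$ via $w\lor(A\land B)=t(w,w\lor A,w\lor B)$ and $A\lor B=t(A,B,B)$.
\item For $\CloneS_{10}$, dually use $w\land\phi$.
\item For $\CloneD_2$, replace $\land$ and $\lor$ by $\maj(\cdot,\cdot,c_0)$ and $\maj(\cdot,\cdot,c_1)$. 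The images then differ only at $(a;c_0{=}0,c_1{=}1)$ and $(\bar a;c_0{=}1,c_1{=}0)$, because for $c_0=c_1$ both collapse to the conjunction, resp.\ disjunction, of all variables.
\end{itemize}
Any larger $O$ inherits these families with linear DAG-size overhead.

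\emph{Comparison with the paper.} The paper constructs nothing here. It gets the lower bound in one line from Theorem~\ref{prop:unique1}, since the membership queries a learner asks on target $\phi$ form a teaching set for $\phi$: an inequivalent formula agreeing on them would receive the same answers and produce the same output. Your needle-in-a-haystack argument is that mechanism made explicit, since the family shows that $\phi_0$ has no small teaching set. Once repaired as above, it gives a self-contained proof of what the paper delegates to the cited teachability result.
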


Note that, while in~\cite{Dalmau}, the lower bound is stated conditionally on  cryptographic assumptions, it holds unconditionally as follows immediately from Theorem~\ref{prop:unique1}, since
efficient exact learnability with membership queries implies the existence of polynomial-sized unique characterizations.
Besides \emph{exact learning with membership queries}, many other notions of learnability have been proposed and studied in the computational learning theory literature. These include
\emph{exact learning with membership and equivalence queries} and the
\emph{Probably-Approximately-Correct (PAC)} learning model. 
We will omit detailed definitions here.
The results in \cite{Dalmau} pertain to learnability in these other learning frameworks as well.
Specifically:

\begin{theorem}[\cite{Dalmau}]
    Let $O$ be a set of Boolean functions. If
    \begin{itemize}
        \item $O\preceq \{\land, \top, \perp\}$ or
        \item $O\preceq \{\lor, \top, \perp\}$ or \item $O\preceq \{\oplus, \top,\perp\}$
    \end{itemize}
    then  $\PL_O[\PROP]$ is polynomial-time exactly learnable with $|\PROP|+1$ equivalence queries, and therefore, also, polynomial-time PAC-learnable. Otherwise, it is not polynomially PAC-predictable with membership queries (and hence also not polynomial-time exactly learnable with membership and equivalence queries) under certain cryptographic assumptions.
\end{theorem}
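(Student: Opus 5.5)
The statement is the last theorem above, attributed to \cite{Dalmau}, so the plan is to split it into an upper-bound part and a lower-bound part and reduce each to earlier results plus standard learning-theoretic facts.

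\textbf{Upper bound via equivalence queries.} I will treat the three tractable clones separately. In each case every $\PL_O[\PROP]$-formula defines a set of satisfying assignments with a simple structure.

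For $O\preceq\{\land,\top,\perp\}$, each formula is equivalent to a conjunction of variables or to a constant. I will run the standard elimination algorithm. Start with the hypothesis $\bigwedge\PROP$, asking first whether $\perp$ is correct. Each positive counterexample $V$ deletes every variable that is false in $V$. At least one variable is deleted per query, so at most $|\PROP|+1$ queries are needed. The case $O\preceq\{\lor,\top,\perp\}$ is dual.

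For $O\preceq\{\oplus,\top,\perp\}$, each formula defines an affine function over $\mathbb{F}_2$. I will maintain the set of positive examples seen so far and hypothesize the smallest affine subspace containing them, presented as a linear equation system. Each counterexample must be positive and outside the current span, so it raises the dimension by one. This again gives at most $|\PROP|+1$ queries and polynomial-time Gaussian elimination.

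PAC learnability then follows by the standard conversion of exact learners with equivalence queries into PAC learners (Angluin).

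\textbf{Lower bound.} Assume $O$ is not below any of the three clones. By inspecting Post's lattice (Figure~\ref{fig:Post_Lat}), $\cloneof{O}$ then contains one of finitely many minimal clones outside the union of the three downsets. By Fact~\ref{fact:finite-downward-closure} there are finitely many such minimal clones, roughly $\CloneS_{00}^n$-type threshold clones, $\CloneD_2$, $\CloneL_2$-related clones, $\CloneS_{10}$, $\CloneS_{00}$, and so on. The plan is to show, for each minimal clone $C$, that polynomial-size DAG formulas over $C$ can express a cryptographically hard class. Natural candidates are monotone polynomial-size circuits, or Boolean formulas in $\class{NC}^1$ with auxiliary padding variables.

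The key technique is the standard prediction-preserving reduction (Pitt--Warmuth). Given a general Boolean circuit $c$, build a formula in $\PL_O$ over doubled variables $x_i,\bar x_i$, plus constant-simulating variables if $\top$ or $\perp$ is missing. On inputs satisfying the intended encoding, this formula computes $c$. For example, $\maj$ simulates $\land$ and $\lor$ once one input is fixed to $0$ or $1$ by a dedicated variable, and $x\lor(y\land z)$ simulates $\land$ and $\lor$ using an auxiliary variable fixed to $0$. Because the encoding is computable on examples, a predictor for $\PL_O$ would yield a predictor with membership queries for circuits. Monotone circuits suffice, since negations are absorbed into the doubled variables. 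This contradicts the assumption that polynomial-size circuits (or $\class{NC}^1$) are not PAC-predictable with membership queries, e.g.\ under the existence of pseudorandom functions. Non-learnability in the exact model with membership and equivalence queries follows because such a learner yields a PAC predictor with membership queries.

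\textbf{Main obstacle.} The hard step is the case analysis over the minimal clones, making sure each one simulates monotone circuits with only polynomial DAG-size blowup. The delicate cases are those lacking constants, such as $\CloneS_{00}^n$, $\CloneD_2$ and $\CloneL_2$. For $\CloneL_2$, which is linear, circuits cannot be simulated, and it is not even clear how to get hardness. Here I would check against the lattice whether clones containing $\CloneL_2$ but not below $\{\oplus,\top,\perp\}$ necessarily contain a nonlinear function, which I expect they do. That would reduce this case to the nonlinear ones.
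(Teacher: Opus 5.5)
The paper gives no proof of this theorem and only cites Dalmau, so the question is whether your sketch would go through.

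\textbf{Upper bound.} This part is fine. The elimination algorithm for conjunctions (dually, disjunctions) and the affine-hull algorithm for parities each use at most $|\PROP|+1$ equivalence queries, and Angluin's conversion then gives PAC learnability. The affine-hull hypotheses are improper, which is harmless here.

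\textbf{Lower bound: the gap.} As written, this part does not establish hardness with membership queries. A Pitt--Warmuth prediction-preserving reduction only transfers unpredictability \emph{without} membership queries. To turn a predictor with membership queries for $\PL_O$ into one for circuits, you must answer every query it makes. It may query \emph{arbitrary} assignments, such as rail pairs with $x_i=\bar x_i$, or the constant-simulating variables set to the wrong values. This is the Angluin--Kharitonov pwm-condition, the analogue of condition~(2) in the paper's definition of PC reductions. Your construction only controls the formula on inputs that respect the encoding. Elsewhere its value in general depends on how the unknown circuit $c$ is built, not just on the function it computes. For instance, if both majority auxiliaries are set to $1$, every gate becomes an $\lor$. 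Such values cannot be obtained from $\mathsf{MQ}(c)$. So the sentence ``a predictor for $\PL_O$ would yield a predictor with membership queries for circuits'' is unjustified. With it go the hardness with membership queries and the consequence for exact learning with membership and equivalence queries.

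\textbf{The missing idea.} Guard the construction so that every off-encoding input has a trivially computable value, as the auxiliary variable $w$ does in Proposition~\ref{prop:PC3}.
\begin{itemize}
\item Replace the dual-rail monotone circuit $c'$ by $\bigl(\bigwedge_i (x_i\lor\bar x_i)\bigr)\land\bigl(c'\lor\bigvee_i (x_i\land\bar x_i)\bigr)$. This is $0$ if some rail pair is $(0,0)$, else $1$ if some pair is $(1,1)$, else it equals $c$.
\item For the basis $x\lor(y\land z)$, let $w$ simulate $0$ and output $w\lor(g\land g)$. This is constantly $1$ when $w=1$. The basis $x\land(y\lor z)$ is dual.
\item For $\maj$, output $\maj(g,w_0,w_1)$, which is constant when $w_0=w_1$. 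When $(w_0,w_1)=(1,0)$, every function in $\cloneof{\maj}$ is self-dual, so $g(x,\bar x,1,0)=\neg g(\neg x,\neg\bar x,0,1)$. This costs one query to $c$, or is constant by the guard.
\end{itemize}

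\textbf{Case analysis.} This also fixes your list of cases. The minimal clones outside the three downsets are exactly $\CloneD_2=\cloneof{\maj}$, $\CloneS_{00}=\cloneof{x\lor(y\land z)}$ and $\CloneS_{10}=\cloneof{x\land(y\lor z)}$. The clones $\CloneS_{00}^n$ are not minimal, since they contain $\CloneS_{00}$. $\CloneL_2\subseteq\CloneL$ lies on the tractable side, so your worry about $\CloneL_2$ is moot. Staying with circuits, i.e.\ DAG-size, is the right choice, given the paper's remark that these lower bounds only hold for DAG-size.
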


\section{Fragments of Modal Logics}\label{section:Modal}

In this section, we apply the same general methodology from the previous section to modal logics.

\subsection{Modal Fragments}
\label{sec:modal-fragments}

Fix a countably infinite set of propositional variables ($x_1, x_2, \ldots$).
A \emph{modal formula} is a formula
generated  by the following grammar:
\[ \phi \Coloneqq x \mid \neg\phi\mid\phi\land\psi\mid\phi\lor\psi\mid \Diamond\phi \mid \Box\phi,\]
where $x$ is a propositional variable (from some fixed countably infinite set). We will refer to the set of all modal formulas also as \emph{the full modal language} and we will denote it also by $\ML$.

In order to be able to present some of the results below, we also need to consider \emph{multi-modal formulas}. These are generated by the following grammar:
\[ \phi \Coloneqq x \mid \neg\phi\mid\phi\land\psi\mid\phi\lor\psi\mid \Diamond_i\phi \mid \Box_i\phi,\]
where $x$ is a propositional variable and where $i\in \mathbb{N}$.
 We will refer to the set of all multi-modal formulas also as \emph{the full multi-modal language} and we will denote it also by $\ML^{\omega}$. A
 $k$-modal formula (for  $k\in\mathbb{N}$), is a multi-modal formula containing only modal operators $\Diamond_i$ and/or $\Box_i$ with $1\leq i\leq k$.
 We will denote by $\ML^k$ the 
 set of all $k$-modal formulas.
Note that 
$\ML^k$-formulas are interpreted over Kripke models $M=(W,R_1, \ldots, R_k,V)$ with $k$ accessibility relations and that $\ML^\omega$-formulas are interpreted over Kripke models $M=(W,(R_i)_{i\in\mathbb{N}},V)$ with infinitely many accessibility relations.

We need one more
notation, namely for syntactic \emph{substitution}. Formally, substitution is defined inductively. Let $\sigma={(x_1/\phi_1,\dots x_n/\phi_n)}$ be a partial mapping from propositional variables to  modal formulas. 
Then
\begin{align*}
    x^{\sigma}&\eqdef x \text{ for $x\not \in \{x_1,\dots, x_n\}$.}\\
    x_i^{\sigma}&\eqdef\phi_i \\
    (\varphi \land \psi)^{\sigma} &\eqdef \varphi^{\sigma} \land \psi^{\sigma}\\
    (\varphi \lor \psi)^{\sigma} &\eqdef \varphi^{\sigma} \lor \psi^{\sigma}\\
    (\neg \varphi)^{\sigma}&\eqdef \neg \varphi^{\sigma}\\
    (\Diamond \varphi)^{\sigma}&\eqdef \Diamond \varphi^{\sigma} \\
    (\Box \varphi)^{\sigma}&\eqdef \Box \varphi^{\sigma}
\end{align*}
For an example, $(\Box x\land \Diamond x)^{(x/\Diamond (y\land z))}= \Box \Diamond (y\land z)\land \Diamond \Diamond (y\land z) $.

Now, fix a set of modal formulas $\Phi$.
We denote by $\ML_{\Phi}$ the
smallest set such that 
\begin{enumerate}
    \item Every propositional variable is a formula of $\ML_{\Phi}$, and
    \item If $\phi(x_1, \ldots, x_n)\in\Phi$ and $\psi_1, \ldots, \psi_n\in \ML_\Phi$, then $\oper{\phi}(\psi_1, \ldots, \psi_n)\in\ML_{\Phi}$.
\end{enumerate}

Every $\ML_{\Phi}$-formula $\phi$ can be expanded into an ordinary modal formula $\expand{\phi}\in\ML$ by recursively replacing subformulas of the form
$\oper{\phi}(\psi_1, \ldots, \psi_n)\in\ML_{\Phi}$ by
$\phi^{(x_1/\psi_1, \ldots, x_n/\psi_n)}$. 
Thus, for instance, $\expand{\oper{x_1\land x_2}(p, \oper{\Diamond x}(q))}=p\land\Diamond q$.
\footnote{We may assume a canonical ordering on the set of all propositional variables, so that it is always clear which variable corresponds to the first argument of the operator, etc.}
The expansion operation provides us with a way to assign 
semantics to $\ML_\Phi$-formulas: we let $M,w\models\phi$ if and only if
$M,w\models\expand{\phi}$. 

\begin{example}
\label{ex:modal-fragments}
Let $\Phi = \{\Diamond x, x_1\land x_2\}$. 
Then the expansions of $\ML_\Phi$-formulas
are precisely all modal formulas written
using only $\land$ and $\Diamond$.
In other words, we can think of
$\ML_\Phi$ as defining the fragment of modal logic consisting of formulas using only $\land$ and $\Diamond$. 
Similarly,
\begin{itemize}
    \item For $\Phi=\{\neg x, x\land y, x\lor y, \Diamond x, \Box x\}$, $\ML_\Phi$ defines the full modal language; 
    \item For     $\Phi = \{\top,\bot, x\land y, x\lor y, \Diamond x, \Box x\}$,  $\ML_\Phi$ defines the monotone fragment;
    \item For $\Phi=\{\neg x, x\land y, x\lor y, \CIRCLE x\}$, where 
    $\CIRCLE x$ is short for $\Diamond x\land \Diamond \neg x$,
    $\ML_\Phi$ defines a modal fragment that only allows for expressing the contingency of statements, not their possibility or necessity;
    \item For $\Phi=\{\neg x, x\land y, x\lor y, x\lor \Diamond x, x\land \Box x\}$, $\ML_\Phi$ defines a modal fragment that corresponds to the image of a folklore translation from \logic{S4} to \logic{K4};
    \item For     $\Phi = \{\bot, x\land y, \Box(x\lor y), \Box( x\to y)\}$,  $\ML_\Phi$ defines a modal fragment that corresponds to the image of the G\"odel-McKinsey-Tarski translation from intuitionistic logic to modal logic.
\end{itemize}
\end{example}

To further streamline the notation, we will 
write $\Diamond$ for $\Diamond x$, 
$\Box$ for $\Box x$, $\land$ for
$x\land y$, etc. This allows us, for instance, to
denote the first two fragments in Example~\ref{ex:modal-fragments} by $\ML_{\{\neg,\land,\lor,\Diamond,\Box\}}$
and $\ML_{\{\top,\bot,\land,\lor,\Diamond,\Box\}}$, respectively. 
Note that the choice of propositional variables here is irrelevant, as it does not affect the resulting fragment (which is closed under substitution).

\emph{Multi-modal} fragments 
$\ML^\omega_\Phi$ and $\ML^k_\Phi$ are
defined analogously,
where $\Phi$ can now  be a set of  multi-modal, respectively, $k$-modal formulas.
To keep things simple, we will mostly focus  on the uni-modal case, but
we will also discuss how the definitions and observations  extend to the multi-modal case.

The \emph{size} of a $\ML_\Phi$-formula (or $\ML_\Phi^\omega$ formula or 
$\ML_\Phi^k$ formula)
can again be measured in two ways: using a DAG-style representation or using a tree-style representation. Formally:

\[\begin{array}{lll}
    |x|_{\tree} &=& 1 \\
    |{\oper{\chi}}(\psi_1, \ldots, \psi_n)|_{\tree} &=& 1+\Sigma_i|\psi_i|_{\tree} 
\end{array}\]
whereas $|\phi|_{\dagsize}=|\subf(\phi)|$ where
\[ \subf(\phi) = 
\begin{cases} 
\{x\} & \text{for $\phi$ of the form $x$} \\
\{\phi\}\cup\bigcup_i \subf(\psi_i) & \text{for $\phi$ of the form ${\oper{\chi}}(\psi_1, \ldots, \psi_n)$}
\end{cases}\]

Note that, in both cases, the operators ${\oper{\chi}}$ here are treated as as atomic: they contribute 1 to the size of the formula, regardless of the formula $\chi$ itself. Just as in the propositional case, 
$|\phi|_{\tree}$ can be exponentially larger than 
$|\phi|_{\dagsize}$. This is illustrated by the following example.

\begin{example}
    First, let $\chi(x,y)=(x\land \neg y)\lor (\neg x \land y)$, which defines the exclusive-or function.
Consider the formulas $\phi_n$ inductively defined by
    \[ \phi_1 = {\oper{\chi}}(p,p)\]
    \[ \phi_{n+1} = {\oper{\chi}}(\phi_{n},\phi_n)\]
    It is easy to see that $|\phi_n|_{\dagsize}= n+2$ whereas $|\phi_n|_{\tree} = 2^n-1$.
    
    It is worth pointing out also that
$|\phi|_{\tree}$ can be
exponentially smaller than $|\expand{\phi}|_{\tree}$. Indeed, let $\gamma(x)=\Diamond x\land \Diamond\neg x$, and  
    consider the formulas $\phi_n$ inductively defined by
    \[ \phi_1 = \oper{\gamma}(p)\]
    \[ \phi_{n+1} = \oper{\gamma}(\phi_{n})\]
    It is easy to see that $|\phi_n|_{\dagsize}= |\phi_n|_{\tree} = n+1$. On the other hand,
    $|\expand{\phi_n}|_{\tree} = 5 \cdot 2^{n}-4$. In fact,  every $\ML$-formula equivalent to $\expand{\phi_n}$ is of tree-size at least $2^n$, as was shown in~\cite{Ditmarsch2014:exponential}.
    
\end{example}

We say that $\ML_{\Phi}$ is \emph{expressively
contained} in $\ML_{\Psi}$ with respect to a normal modal logic $\Lambda$ (notation: $\ML_\Phi\preceq^\Lambda\ML_\Psi$) if every formula
$\phi\in\ML_\Phi$ is $\Lambda$-equivalent to a formula
$\psi\in\ML_\Psi$. i.e., $\Lambda\vdash \phi\leftrightarrow\psi$.
Furthermore, we say that $\ML_\Phi$ and $\ML_\Psi$ are
\emph{expressively equivalent} with respect to $\Lambda$ if $\ML_\Phi\preceq^\Lambda\ML_\Psi$ and $\ML_\Psi\preceq^\Lambda \ML_\Phi$, and 
we say that $\ML_\Psi$ is \emph{expressively complete} with respect to $\Lambda$
if $\ML_\Phi\preceq^\Lambda\ML_\Psi$  holds for \emph{every}
set of modal formulas $\Phi$. Clearly, the family of all modal fragments form a pre-ordered set under $\preceq^\Lambda$, and the 
expressively complete fragments are the maximal elements in the pre-order.

\begin{example}
   If we use $\Diamond\Diamond$ as shorthand for the formula $\Diamond\Diamond x$, then
    $\ML_{\{\Diamond\Diamond,\land,\neg\}}$ and $\ML_{\{\Diamond,\land,\neg\}}$ are 
    expressively equivalent with respect to the modal logic $\logic{S4}$, since $\logic{S4}\vdash \Diamond x\leftrightarrow \Diamond\Diamond x$. On the other hand, it is easy to see that $\ML_{\{\Diamond\Diamond,\land,\neg\}} \precneq^{\logic{K}}\ML_{\{\Diamond,\land,\neg\}}$, and hence, in particular, $\ML_{\{\Diamond\Diamond,\land,\neg\}}$ is not expressively complete with respect to $\logic{K}$.
\end{example}

\subsection{Modal Fragments via Modal Clones}
\label{sec:Modal-Fragments-and-Modal-Lattice}

In the case of \emph{propositional}
fragments, we saw that the 
expressive power of a fragment $\PL_O$
is fully determined by the Boolean clone $\cloneof{O}$, and, indeed, $\cloneof{O}$ is precisely the set of all Boolean functions defined by $\PL_O$-formulas. In the case of \emph{modal} fragments, a similar connection to clones exists, but it is less immediate:
propositional semantics identifies each propositional formula with a Boolean function. For modal formulas, on the other hand, 
as observed in early papers by Kuznetsov~\cite{kuznetsov1971functional,kuznetsov1979tools},
there is not one unique such formula-to-function correspondence. It is still possible to  connect modal fragments to clones in a canonical way using the \emph{algebraic semantics} for modal logics, which
we will new briefly recall.
We will focus on uni-modal logics in this 
section, but everything naturally generalizes to the multi-modal case.

We start by recalling that a \emph{normal modal logic} is a set $\Lambda$ of modal formulas that contains all propositional tautologies, is closed under modus ponens and uniform substitution, includes the axiom schema $\Box(\varphi \rightarrow \psi) \rightarrow (\Box \varphi \rightarrow \Box \psi)$ (the K axiom)
and $\Diamond \varphi \leftrightarrow \neg \Box \neg \varphi$ (the duality axiom schema),
and is closed under the necessitation rule, that is, if $\varphi \in \Lambda$ then $\Box \varphi \in \Lambda$.
Examples of normal modal logics include \logic{K} (the minimal normal modal logic, which is complete for the class of all Kripke frames); \logic{K4} (which extends \logic{K} with the axiom $\Diamond \phi\to\Diamond\Diamond \phi$ and is complete for the class of all transitive Kripke frames); \logic{S4} (which further extends \logic{K4} with the axiom $\Diamond \phi\to \phi$ and is complete for the class of all transitive and reflexive Kripke frames);
and \logic{S5} (which further extends \logic{S4} with the axiom
$\phi\to\Box\Diamond \phi$ and is complete for the class of all frames in which the accessibility relation is an equivalence relation).
Another logic that will be featured prominently below is 
\logic{GL}, which extends \logic{K} with the axiom 
$\Box(\Box \phi\to \phi)\to\Box \phi$ and is complete for the class of 
transitive and conversely well-founded frames.  
For completeness results for normal modal logics we refer to 
\cite{Blackburn, CZ97}. 

We will use the notation
$\Lambda\vdash \phi$ as another way of saying that
$\phi\in\Lambda$.

\newcommand{\lland}{\bm{\land}}
\newcommand{\llor}{\bm{\lor}}
\newcommand{\lneg}{\bm{\neg}}
\newcommand{\ltop}{\bm{\top}}
\newcommand{\lbot}{\bm{\bot}}
\newcommand{\lDiamond}{\bm{\Diamond}}

A \emph{modal algebra} is an
algebraic structure $(A,\lland,\llor,\lneg,\ltop,\lbot,\lDiamond)$
consisting of a Boolean algebra
$(A,\lland,\llor,\lneg,\ltop,\lbot)$ extended with an
operator $\lDiamond:A\to A$, satisfying $\lDiamond\lbot=\lbot$ and $\lDiamond (x\llor y)=\lDiamond x\llor \lDiamond y$.
We can associate to every normal modal logic, 
in a canonical way, a modal algebra $\mathcal{A}_\Lambda$, which is known as the \emph{Lindenbaum-Tarski algebra} of $\Lambda$.
Specifically, for each modal formula $\phi\in\ML$,
let $\equivclass{\phi}{\Lambda}$ denote the equivalence class of modal formulas $\{\psi\in\ML \mid \Lambda\vdash \phi\leftrightarrow \psi\}$.
Then 
\[\mathcal{A}_\Lambda=(A_\Lambda,\lland,\llor,\lneg,\ltop, \lDiamond)\] where
$A_\Lambda=\{\equivclass{\phi}{\Lambda}\mid \phi\in\ML\}$ and where
 the operations are defined so that
$\equivclass{\phi}{\Lambda}\lland \equivclass{\psi}{\Lambda}=\equivclass{\phi\land\psi}{\Lambda}$, etc. The reader may verify that (for normal modal logics $\Lambda$) this is indeed a proper definition \cite{Blackburn, CZ97}. 

Under the algebraic semantics, we can
think of each modal formula as defining an operation on $A_\Lambda$: for every modal formula $\phi(x_1, \ldots, x_n)$, we denote by $\sem{\phi}_\Lambda$
the $n$-ary function $f_\phi\colon A_\Lambda^n\to A_\Lambda$ given by
\[f_\phi(\equivclass{\phi_1}{\Lambda}, \ldots, \equivclass{\phi_n}{\Lambda})=\equivclass{\phi^{(x_1/\phi_1, \ldots, x_n/\phi_n)}}{\Lambda}~.\]

We now define a \emph{modal clone}  (with respect to the chosen modal logic $\Lambda$) to be any subclone of $\cloneof{\mathcal{A}_\Lambda}$ (the clone 
generated by the Lindenbaum-Tarski algebra).
It follows from Fact~\ref{fact:clones-lattice} that the set of all modal clones forms a lattice with respect to containment.
We can therefore speak about the \emph{modal clone lattice}.
We  now obtain again a
correspondence between modal fragments and modal clones. For a set $\Phi$ of modal formulas, let $\sem{\Phi}_\Lambda=\{\sem{\phi}_\Lambda\mid\phi\in\Phi\}$.

\begin{fact} Fix a normal modal logic $\Lambda$.
\begin{enumerate}
    \item For all sets $\Phi$ 
of modal formulas,  $\sem{\ML_\Phi}_\Lambda = \cloneof{\sem{\Phi}_\Lambda}$, which is a modal clone.
    \item Every modal clone is equal to $\sem{\ML_{\Phi}}_\Lambda$ for some set of modal formulas $\Phi$.
    \item There is a unique maximal modal clone, namely
    $\sem{\ML_{\{\Diamond,\land,\neg\}}}_\Lambda$.
\end{enumerate}
\end{fact}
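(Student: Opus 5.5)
The three items can be proved in order, and everything follows from two observations. First, the map $\phi\mapsto\sem{\phi}_\Lambda$ turns syntactic substitution into composition. Second, $\ML_\Phi$ is by definition the closure of the variables under the operators $\oper{\phi}$ with $\phi\in\Phi$.

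For item (1), I would first check that $\sem{\phi}_\Lambda$ is well defined. If $\Lambda\vdash\phi_i\leftrightarrow\phi_i'$ for each $i$, then $\Lambda\vdash\phi^{(\vec x/\vec\phi)}\leftrightarrow\phi^{(\vec x/\vec\phi')}$. This follows from the replacement of equivalents in normal modal logics, which uses necessitation and the K axiom for the modal steps. The core lemma is then
\[
\sem{\expand{\oper{\phi}(\psi_1,\ldots,\psi_n)}}_\Lambda=\sem{\phi}_\Lambda\bigl(\sem{\expand{\psi_1}}_\Lambda,\ldots,\sem{\expand{\psi_n}}_\Lambda\bigr),
\]
where every function is taken to be $m$-ary in a common list of variables $x_1,\ldots,x_m$. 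This holds because substituting into a substitution instance is the same as composing the substitutions, which is a routine induction on $\phi$.

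With the lemma in hand, the inclusion $\subseteq$ is an induction on $\ML_\Phi$-formulas. A variable $x_k$ denotes the projection $\pi^m_k$, and the lemma shows that the operator case stays inside the clone. For $\supseteq$, note that $\{\sem{\psi}_\Lambda\mid\psi\in\ML_\Phi\}$ contains all projections and contains $\sem{\Phi}_\Lambda$, since $\oper{\phi}(x_1,\ldots,x_n)$ expands to $\phi$. By the lemma it is also closed under composition, because $\ML_\Phi$ is closed under applying its operators, and substituting $\ML_\Phi$-formulas into variables yields $\ML_\Phi$-formulas. Hence this set is a clone containing $\sem{\Phi}_\Lambda$, and therefore it contains $\cloneof{\sem{\Phi}_\Lambda}$. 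That $\cloneof{\sem{\Phi}_\Lambda}$ is a modal clone holds because each $\sem{\phi}_\Lambda$ is a term operation of $\mathcal{A}_\Lambda$, so it lies in $\cloneof{\mathcal{A}_\Lambda}$. This last point is itself an instance of the $\subseteq$ direction applied to $\Phi=\{\Diamond,\land,\neg\}$ together with the fact that every modal formula is built from these connectives.

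For item (2), take a modal clone $C$. Every operation in $\cloneof{\mathcal{A}_\Lambda}$ is a term operation of the Lindenbaum--Tarski algebra, so it equals $\sem{\phi}_\Lambda$ for some modal formula $\phi$. I would choose such a $\phi$ for each $f\in C$ and let $\Phi$ be the resulting set. Then (1) gives $\sem{\ML_\Phi}_\Lambda=\cloneof{C}=C$.

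For item (3), every modal clone is contained in $\cloneof{\mathcal{A}_\Lambda}$ by definition. The constants $\ltop,\lbot$ are expressible as $x\lor\neg x$ and $x\land\neg x$, treated as unary functions, and $\llor$ and $\Box$ are definable from $\land,\neg,\Diamond$. So $\cloneof{\mathcal{A}_\Lambda}=\cloneof{\sem{\{\Diamond,\land,\neg\}}_\Lambda}$, and by (1) this equals $\sem{\ML_{\{\Diamond,\land,\neg\}}}_\Lambda$.

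The main obstacle is bookkeeping rather than mathematics. The functions $\sem{\phi}_\Lambda$ must be read with a fixed arity and variable ordering (see the footnote on canonical variable order), so that the composition lemma and the projections line up. The well-definedness step, that is, congruence of $\Lambda$-equivalence under substitution, is the only point where normality of $\Lambda$ is genuinely used.
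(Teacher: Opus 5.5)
Your proposal is correct. It is the routine unfolding of definitions (substitution instances versus term operations of the Lindenbaum--Tarski algebra $\mathcal{A}_\Lambda$) that the paper takes for granted, since it states this Fact without proof. Your closing remark overstates one point, though. Well-definedness is not the only place normality enters: item (3), and the claim that every $\sem{\phi}_\Lambda$ lies in $\cloneof{\mathcal{A}_\Lambda}$, also use that $\Lambda$ contains the propositional tautologies and the duality axiom, so that $\lor$ and $\Box$ reduce to $\land,\neg,\Diamond$. They also use closure under uniform substitution, so that $\Lambda$-equivalent formulas induce the same operation. Relatedly, modal formulas are not literally built from $\Diamond,\land,\neg$; they are only $\Lambda$-equivalent to such formulas.
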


In other words, there is a two-way correspondence between modal fragments and modal clones.
Therefore, questions about modal fragments can, in principle, 
be studied through the lens of modal clones. In particular:

\begin{fact} For all normal modal logics $\Lambda$ and sets
$\Phi, \Psi$ of modal formulas, the following are equivalent:
\begin{enumerate}
    \item $\ML_\Phi\preceq^\Lambda\ML_\Psi$
    \item $\cloneof{\sem{\Phi}_\Lambda}\subseteq\cloneof{\sem{\Psi}_\Lambda}$
\end{enumerate}
\end{fact}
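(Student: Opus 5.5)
The plan is to reduce (1)$\Leftrightarrow$(2) to the preceding Fact, which states that $\sem{\ML_\Phi}_\Lambda = \cloneof{\sem{\Phi}_\Lambda}$ for every set $\Phi$. What remains is to check that $\ML_\Phi\preceq^\Lambda\ML_\Psi$ is the same as the inclusion $\sem{\ML_\Phi}_\Lambda\subseteq\sem{\ML_\Psi}_\Lambda$. Here $\sem{\ML_\Phi}_\Lambda$ means the set of functions $\sem{\expand{\phi}}_\Lambda$ for $\phi\in\ML_\Phi$.

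The key lemma I would establish first concerns two modal formulas $\phi,\psi$. The claim is that $\Lambda\vdash\phi\leftrightarrow\psi$ if and only if $\sem{\phi}_\Lambda=\sem{\psi}_\Lambda$, with both regarded as $n$-ary operations on $A_\Lambda$ for a common $n$ covering the variables of both.

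For the left-to-right direction, suppose $\Lambda\vdash\phi\leftrightarrow\psi$. Since $\Lambda$ is closed under uniform substitution, $\Lambda\vdash\phi^{\sigma}\leftrightarrow\psi^{\sigma}$ for every substitution $\sigma=(x_1/\phi_1,\dots,x_n/\phi_n)$. Hence the two functions agree on every tuple of equivalence classes. Well-definedness of $f_\phi$ on classes is assumed, as the paper asserts it.

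For the right-to-left direction, evaluate both functions at the tuple $(\equivclass{x_1}{\Lambda},\dots,\equivclass{x_n}{\Lambda})$. This gives $\equivclass{\phi}{\Lambda}=\equivclass{\psi}{\Lambda}$, that is, $\Lambda\vdash\phi\leftrightarrow\psi$.

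With the lemma in hand, the two directions of the Fact go as follows.

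For (1)$\Rightarrow$(2), take any $\phi\in\ML_\Phi$. By (1) there is $\psi\in\ML_\Psi$ that is $\Lambda$-equivalent to it, so the lemma gives $\sem{\phi}_\Lambda=\sem{\psi}_\Lambda\in\sem{\ML_\Psi}_\Lambda$. Applying the previous Fact on both sides yields $\cloneof{\sem{\Phi}_\Lambda}\subseteq\cloneof{\sem{\Psi}_\Lambda}$.

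For (2)$\Rightarrow$(1), take $\phi\in\ML_\Phi$. Its function $\sem{\phi}_\Lambda$ lies in $\cloneof{\sem{\Phi}_\Lambda}\subseteq\cloneof{\sem{\Psi}_\Lambda}=\sem{\ML_\Psi}_\Lambda$. So it equals $\sem{\psi}_\Lambda$ for some $\psi\in\ML_\Psi$, and the lemma gives $\Lambda\vdash\phi\leftrightarrow\psi$.

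The main subtlety is arity and variables. A clone function has a fixed arity, while the $\psi$ produced in the last step might use variables other than those of $\phi$. I would handle this in two steps.

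First, I would fix the convention that formulas in $x_1,\dots,x_n$ are read as $n$-ary functions, with dummy variables treated as projections composed in.

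Second, I would note that if $\sem{\psi}_\Lambda$ (as an $m$-ary function, $m\ge n$) agrees with $\sem{\phi}_\Lambda$ padded by dummy arguments, then evaluating at generators $\equivclass{x_i}{\Lambda}$ still gives $\Lambda\vdash\phi\leftrightarrow\psi$. Alternatively, substituting $x_1$ for the extra variables in $\psi$ keeps $\psi$ inside $\ML_\Psi$, since the fragment is closed under substitution of variables for variables, and preserves equivalence by the lemma.

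This bookkeeping is routine but is the only place where care is needed. The rest is a direct unfolding of the definitions together with the previous Fact.
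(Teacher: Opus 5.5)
Your proposal is correct and follows the paper's route: the paper gives no separate proof and presents this Fact as an immediate consequence of the preceding Fact $\sem{\ML_\Phi}_\Lambda=\cloneof{\sem{\Phi}_\Lambda}$, which is exactly what you reduce to. Your lemma that $\Lambda\vdash\phi\leftrightarrow\psi$ iff $\sem{\phi}_\Lambda=\sem{\psi}_\Lambda$ (uniform substitution in one direction, evaluation at the generators $\equivclass{x_i}{\Lambda}$ in the other) supplies the step the paper leaves unstated, and your treatment of arities and extra variables by identifying variables is sound.
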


Unfortunately, for most normal modal logics $\Lambda$, the modal clone lattice is much less well-structured and much less well-behaved than Post's lattice. Recall from Section~\ref{sec:post} that (i) every Boolean clone is  generated by a finite set of Boolean functions, and hence there are only countably many Boolean clones; (ii) every downward closed set of Boolean clones is the downward closure of a finite set of Boolean clones; and (iii) containment is  decidable for Boolean clones. Each of these properties fails for modal clones:

\begin{fact}
    There are continuum many pairwise-incomparable modal clones with respect to $\logic{K}$. Consequently, not every modal clone with respect to $\logic{K}$ is finitely generated. 
\end{fact}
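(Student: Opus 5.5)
We need to show there are continuum many pairwise-incomparable modal clones over $\logic{K}$. The plan is to build an infinite family of single-operator fragments that are mutually independent, and then take arbitrary subsets of it. Concretely, we will find modal formulas $\chi_0,\chi_1,\chi_2,\ldots$ such that, for every $i$ and every finite set $S\subseteq\mathbb{N}$ with $i\notin S$, $\sem{\chi_i}_{\logic{K}}\notin\cloneof{\{\sem{\chi_j}_{\logic{K}}\mid j\in S\}}$. Call this property \emph{independence}. Given independence, each $S\subseteq\mathbb{N}$ yields a clone $C_S=\cloneof{\{\sem{\chi_j}_{\logic{K}}\mid j\in S\}}$. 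Every element of a generated clone already lies in the clone generated by finitely many of the generators. Hence $\sem{\chi_i}\in C_S$ iff $i\in S$, so distinct $S$ give distinct clones. Choosing an antichain of continuum size in $\mathcal{P}(\mathbb{N})$ under $\subseteq$ then yields continuum many pairwise-incomparable clones. One such antichain: for each infinite branch of the binary tree, the set of codes of its finite initial segments. Since the clones generated by finite sets are only countably many, most of these clones are not finitely generated, which gives the second sentence of the statement.

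For the family itself, we use variable-free (constant) or unary operators that encode an index in modal depth. The candidate is $\chi_n(x)=\Diamond^n\top\land\neg\Diamond^{n+1}\top\land\Box^{n} x$, or a similar formula whose output is only nonbottom on points of exact height $n$. A cleaner choice is to make every $\chi_n$ monotone-free and ``height-localized'': a formula is $n$-local if, on every Kripke model, its truth at a point $w$ depends only on the submodel generated by $w$, and it is false unless $w$ has height exactly $n$, i.e.\ the longest path from $w$ has length $n$. We will define a semantic invariant $\mathcal{I}_S$ of operations on $A_{\logic{K}}$: e.g.\ ``$f(\vec a)$ evaluated at points of height $i$ depends only on whether $\top$ holds there, i.e.\ is constant on height-$i$ points'' for each $i\notin S$. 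We then check three things. First, this invariant holds for projections. Second, it is preserved under composition. Third, each $\chi_j$ with $j\in S$ satisfies it, while $\chi_i$ does not. Composition-preservation is the crux, and it is where the precise design of $\chi_n$ matters. At heights $\ne n$ the operator $\chi_n$ must act trivially, e.g.\ as a constant. Yet $\chi_n$ still feeds values into later operators through the modal look-ahead inside it. To control this, we will let $\chi_n$ look only at successors of height $n-1$ and its outputs only matter at height $n$. Then a height-by-height induction on formula structure should show that values at height $i$ are never influenced nontrivially by arguments unless some $\chi_i$ is used.

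The main obstacle is this preservation step: making the invariant robust under arbitrary nesting. A naive invariant fails because an operator $\chi_j$ at height $j>i$ may inspect height-$i$ descendants and so transmit information. What I would try first is to make the invariant concern only the \emph{set of points} where the output is non-trivial, tracked via finite tree models of bounded height. The alternative is to switch to a relational (polymorphism-style) invariant: a binary relation $\rho_i$ on $A_{\logic{K}}$, defined by agreement on all height-$i$ points, that is preserved by every $\chi_j$, $j\ne i$, but not by $\chi_i$. Preservation of an invariant relation under composition is automatic, so this reduces the whole argument to checking each single generator. If the height-based encoding proves awkward, a fallback is to cite the classical constructions of Kuznetsov and Ra\c{t}\v{a}, which exhibit such infinite independent families for $\logic{K}$ and in fact for $\logic{GL}$.
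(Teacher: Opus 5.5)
Your overall architecture is sound and parallels the paper's. You use an independent family of operators, the fact that membership in a generated clone only ever uses finitely many generators, a continuum-size $\subseteq$-antichain in $\mathcal{P}(\mathbb{N})$, and countability of the finitely generated clones. For the antichain you use branches of the binary tree; the paper instead uses $\Phi_S=\{\delta_{2n}\mid n\in S\}\cup\{\delta_{2n+1}\mid n\notin S\}$, and either works.

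The gap is that independence of your family is never established, and it is the only real mathematical content. You flag it as the main obstacle, and the tools you propose do not work as stated:
\begin{itemize}
\item Your invariant ``constant on height-$i$ points'' fails already for projections, since $\pi^n_k(\vec a)=a_k$ can be any element (take $a_k=p$).
\item The relational invariant $\rho_i$ (agreement on height-$i$ points) does not separate $\chi_0(x)=\Box\bot\land x$. That operator agrees with $x$ on height-$0$ points, so it preserves $\rho_0$.
\item Falling back on a citation of Kuznetsov and Ra\c{t}\v{a} is not a proof.
\end{itemize}

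The missing observation is that your generating set contains no Boolean connectives, so the worry about look-ahead through nested operators never arises. Let $\delta_j=\Diamond^j\top\land\neg\Diamond^{j+1}\top$.
\begin{itemize}
\item Every term over $\{\chi_j\mid j\in S\}$ is either a variable or of the form $\chi_j(t)$ with $j\in S$.
\item Whatever $t$ is, $\chi_j(t)$ entails $\delta_j$, so it is false at every point whose height is not exactly $j$.
\item Now $\chi_i(\top)$ is $\logic{K}$-equivalent to $\delta_i$. This is not valid, so $\chi_i$ is not a projection. It is also true at the root of an $i$-step chain, so $\chi_i$ is no term $\chi_j(t)$ with $j\neq i$.
\end{itemize}
Hence independence holds outright for your candidate.

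The paper's route is simpler still: it takes the zero-ary operators $\delta_n$ themselves. A clone generated by constants consists only of the projections and those constants. Independence then reduces to $\delta_i$ and $\delta_j$ not being $\logic{K}$-equivalent for $i\neq j$, which holds because the $\delta_n$ are pairwise disjoint and each is satisfiable.
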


\begin{proof}
For each $n\in \mathbb{N}$, let $\delta_n$ be shorthand for the modal formula $\Diamond^n\top\land\neg\Diamond^{n+1}\top$, which holds in a world $w$ in a Kripke model precisely if the longest path from $w$ has length $n$.
For each $S\subseteq\mathbb{N}$, let $\Phi_S=\{\delta_{2n}\mid n\in S\}\cup\{\delta_{2n+1}\mid n\not\in S\}$. It is easy to see that $\delta_n$ is not equivalent to any finite Boolean combination of formulas $\delta_m$ with $m\neq n$. It follows that, for $S\neq S'$, it holds that $\cloneof{\sem{\Phi_S}_{\logic{K}}}\not\subseteq \cloneof{\sem{\Phi_{S'}}_{\logic{K}}}$.  
This shows that there are continuum many modal clones and modal clones are, in general, not finite generated. 
We note that the same argument applies to many other normal modal logics, such as $\logic{GL}$.
\end{proof}

\begin{theorem}[\cite{RATSARUSSU+2000+553+570}]
\label{thm:GL-precomplete}
    The (downward closed) set of all non-maximal modal clones for $\logic{GL}$ is not the downward closure of any finite set of modal clones. Indeed, there are infinitely many pre-maximal modal clones with respect to $\logic{GL}$.
\end{theorem}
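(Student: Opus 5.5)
The plan is to exhibit an explicit infinite family of pre-maximal modal clones for $\logic{GL}$, that is, maximal proper subclones of $\cloneof{\mathcal{A}_{\logic{GL}}}$, any two of which are distinct. The first assertion then follows. A maximal proper subclone $C$ lies in the downward closure of a finite set $\{D_1,\dots,D_m\}$ of non-maximal clones only if $C\subseteq D_i$ for some $i$. Maximality of $C$ then forces $C=D_i$. So a finite generating set of the downward closure could contain only finitely many of the pre-maximal clones, which is a contradiction.

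To produce pre-maximal clones, I would use the standard preservation (polymorphism-type) description. Each clone is given by the set of operations of $\cloneof{\mathcal{A}_{\logic{GL}}}$ that preserve a suitable relation on the Lindenbaum--Tarski algebra. Such a clone is proper as soon as some basic operation among $\land,\neg,\Diamond$ fails to preserve the relation. Concretely, I would work with the finite rooted $\logic{GL}$-frames given by strict chains of depth $n$. The relevant relations come from $\logic{GL}$-formula behaviour on these chains, for example ``the valuation restricted to depth $\le n$ is preserved up to a shift'', or a relation built from the formulas $\delta_n$ used in the previous proof.

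For each $n$ I would define a clone $C_n$ as the set of modal operations $f$ that satisfy the following condition. Whenever the arguments of $f$ are evaluated on the chain of depth $n$ and their truth values at the root are determined by their truth values at the lower points in a fixed pattern, the same holds for $f$. I would then check three things.
\begin{enumerate}
\item $C_n$ is a clone. This is immediate, since a preservation condition is closed under composition and contains the projections.
\item $C_n$ is proper. One of $\Diamond$, $\neg$, $\land$ violates the condition, most likely $\Diamond$ at depth exactly $n$.
\item The $C_n$ are pairwise distinct. I would witness this with operations such as $\Diamond x\land\delta_m$ that belong to $C_m$ but not to $C_n$.
\end{enumerate}

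The main obstacle is maximality. I would need to show that adding any $f\notin C_n$ to $C_n$ generates all of $\cloneof{\mathcal{A}_{\logic{GL}}}$. My plan for this step has two parts. First, show that $C_n$ already contains $\land$, $\neg$ (or enough Boolean structure) and some restricted form of $\Diamond$, such as $\Diamond$ relativised to depth $\neq n$. Second, show that an operation breaking the depth-$n$ pattern can be combined with these to recover unrestricted $\Diamond$ by case analysis on depth, using $\logic{GL}$'s converse well-foundedness so that $\Diamond$ is determined by its behaviour on finite depths. If this direct construction proves too delicate, I would fall back on a Zorn-type argument. Each proper subclone of $\cloneof{\mathcal{A}_{\logic{GL}}}$ containing $C_n$ extends to a maximal one, provided the full clone is finitely generated, which it is, by $\{\land,\neg,\Diamond\}$. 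I would then show that these maximal extensions $M_n$ are pairwise distinct by exhibiting, for $n\neq m$, an operation lying in $M_n\setminus M_m$. That in turn needs a separating invariant, and establishing that invariant is where the real work lies.
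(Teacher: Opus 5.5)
Your reduction of the first sentence to the second is correct. If a finite set $\{D_1,\dots,D_m\}$ had exactly the non-maximal clones as its downward closure, every pre-maximal clone would have to be one of the $D_i$. The paper likewise only records this consequence and cites Ra\c{t}\v{a}--Russu for the existence of infinitely many pre-maximal clones. Your proposal does not supply that existence claim, which is the entire content of the theorem.

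The relation defining $C_n$ is never fixed (``determined \ldots\ in a fixed pattern''), so properness, distinctness and the witness $\Diamond x\land\delta_m$ cannot be checked. Maximality is explicitly left open. The heuristic you offer for it also fails as stated. Boolean operations, plus a $\Diamond$ relativised away from height $n$, plus one operation that breaks the height-$n$ behaviour, need not give back $\Diamond$. For example, take $n\geq 2$ and let $E$ be the clone generated by $\land$, $\neg$, all letterless constants, $\Diamond x\land\neg\delta_n$ and $\delta_n\land\Diamond(x\land\delta_0)$. Induction on terms shows the following: at any world of height $n$ in a $\logic{GL}$-model, each member of $E$ has a truth value determined by the values of its variables at that world and at its dead-end successors. $\Diamond x$ lacks this property: on a chain of length $n$, compare $x$ true only at the height-$1$ point with $x$ true nowhere. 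So $E$ is proper. Yet $E$ strictly extends the clone generated without the last operation, since in that smaller clone every term is local at height-$n$ worlds. Making some $C_n$ genuinely maximal therefore needs a precise invariant and a real completeness argument in $\logic{GL}$, and neither is present.

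The Zorn fallback is sound as far as it goes: finite generation by $\land,\neg,\Diamond$ does yield maximal $M_n\supseteq C_n$. But it moves the whole difficulty into showing the $M_n$ are distinct. Your method for that, exhibiting an operation in $M_n\setminus M_m$, presupposes a description of $M_m$ that you do not have.

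Pairwise distinctness, or even incomparability, of the $C_n$ does not help here. The continuum many clones $\cloneof{\sem{\Phi_S}_{\logic{GL}}}$ from the $\delta_n$ argument consist only of projections and constant operations. They therefore all lie in a single proper clone, and hence below a single maximal clone.

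What you actually need is a condition on the $C_n$ themselves that forces different maximal extensions. An example is $C_n\sqcup C_m=\cloneof{\mathcal{A}_{\logic{GL}}}$ for all $n\neq m$, meaning $\Diamond$ is $\logic{GL}$-definable from $C_n\cup C_m$; then no proper clone contains two of them. Proving such a definability result for a suitable infinite family is the nontrivial work of the cited paper, and nothing in your proposal carries it out.
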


Here, we call a modal clone \emph{pre-maximal} (with respect to a normal modal logic $\Lambda$) if
it is non-maximal and it is not strictly contained in any other non-maximal modal clone.

By the \emph{containment problem for modal clones with respect to $\Lambda$} we will mean the problem of testing, given finite sets of modal formulas $\Phi$ and $\Psi$, whether $\cloneof{\sem{\Phi}_{\Lambda}}\subseteq \cloneof{\sem{\Psi}_{\Lambda}}$).
It was shown in \cite{Ratsa83,RATSARUSSU+2000+553+570}  that 
the containment problem for modal clones is undecidable for many modal logics, such as \logic{S4} and \logic{GL}, although it is decidable, for instance, for \logic{S5}. In fact, Ra\c{t}\v{a}~\cite{Ratsa83} established a sweeping negative result for extensions of \logic{S4}, which can be strengthened to apply more broadly to extensions of \logic{K4}~\cite{Balder}.

\begin{theorem}[\cite{Ratsa83,Balder}]
\label{thm:expressibility}
Let $\Lambda$ be any normal modal logic extending \logic{K4}. Then 
the containment problem for modal clones with respect to $\Lambda$ is decidable if and only if $\Lambda$ is locally tabular.
\end{theorem}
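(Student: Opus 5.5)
The proof splits into the two directions of the biconditional. The positive direction uses the finiteness of free algebras. The negative direction reduces an undecidable machine problem to clone containment, using the infinite structure that failure of local tabularity guarantees.

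\emph{Locally tabular $\Rightarrow$ decidable.} Fix $n$ and let $F_\Lambda(n)$ be the subalgebra of $\mathcal{A}_\Lambda$ consisting of the classes $\equivclass{\phi}{\Lambda}$ with $\phi$ in the variables $x_1,\dots,x_n$. This is the $n$-generated free algebra, and local tabularity says exactly that it is finite. The key observation is that an $n$-ary formula $\phi$ lies (semantically) in $\cloneof{\sem{\Psi}_\Lambda}$ if and only if $\equivclass{\phi}{\Lambda}$ belongs to the subset of $F_\Lambda(n)$ generated from $\equivclass{x_1}{\Lambda},\dots,\equivclass{x_n}{\Lambda}$ by the operations $\sem{\psi}_\Lambda$, $\psi\in\Psi$. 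This holds because the $n$-ary members of the clone are precisely the $\ML_\Psi$-terms in $x_1,\dots,x_n$.

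Since $F_\Lambda(n)$ is finite, this generated subset is computed by a terminating saturation. Containment $\cloneof{\sem{\Phi}_\Lambda}\subseteq\cloneof{\sem{\Psi}_\Lambda}$ then amounts to checking this membership for each of the finitely many $\phi\in\Phi$, each in its own arity. To carry out the saturation one must compute the finite algebra $F_\Lambda(n)$, that is, decide $\Lambda$-equivalence among formulas in $n$ variables. I would argue this is possible for a fixed locally tabular $\Lambda$. I expect this to require a short argument or an appeal to the literature (finite model property together with a suitable effectiveness assumption on $\Lambda$). I flag it as a point to check, since not every locally tabular logic is obviously recursively axiomatised.

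\emph{Not locally tabular $\Rightarrow$ undecidable.} Here I would follow Ra\c{t}\v{a}'s strategy for \logic{S4}. First, extract from non-local-tabularity a concrete infinite family of pairwise $\Lambda$-inequivalent formulas in a fixed finite set of variables, together with operators moving along it. For extensions of \logic{K4} I would use the structure theory of transitive frames. Either $\Lambda$ has infinite depth, giving depth formulas like the $\delta_n$ above (relativised to transitive frames), which form an $\omega$-chain. Or $\Lambda$ has finite depth but is still not locally tabular, which over \logic{K4} must come from unboundedly many inequivalent formulas at a fixed depth, for example via clusters or irreflexive points. In that case one obtains some other infinite sequence of one- or two-variable formulas.

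Second, treat this sequence as encoding natural numbers, or configurations of a two-counter (Minsky) machine. For each machine $\mathcal{M}$, build a finite set $\Psi_{\mathcal{M}}$ of modal formulas whose composition simulates increments, decrements and zero-tests. Also fix a target formula $\phi_{\mathcal{M}}$ such that $\{\phi_{\mathcal{M}}\}\preceq^\Lambda \Psi_{\mathcal{M}}$ holds if and only if $\mathcal{M}$ halts. The reduction then shows that containment is undecidable.

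The main obstacle is this second direction, and specifically making the encoding uniform over every non-locally-tabular extension of \logic{K4}. One needs a case analysis (infinite depth versus finite depth with unbounded width or cluster phenomena) that produces in each case an infinite sequence rigid enough that no unintended $\Psi_{\mathcal{M}}$-term collapses onto the target. I would first carry out the infinite-depth case, which closely mirrors Ra\c{t}\v{a}'s \logic{S4} argument. I would then try to reduce the remaining finite-depth cases to it, by relativising to a suitable subframe where a generic non-locally-tabular pattern must appear.
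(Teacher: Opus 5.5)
\textbf{Overall.} The paper does not prove this theorem. It cites Ra\c{t}\v{a} for extensions of \logic{S4} and a strengthening to \logic{K4}, and elsewhere it only hints at the saturation argument you use for the positive direction. Judged on its own, your proposal has a genuine gap in each direction.

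\textbf{Positive direction.} The step you flag is not a technicality. It fails, and with it the ``if'' direction as literally stated.

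First, decidable containment forces $\Lambda$ itself to be decidable. For fresh $z,z'$ let $\theta=\phi\lor(z\land\neg z')$. For consistent $\Lambda$, substituting constants for $z,z'$ shows that $\theta$ can be $\Lambda$-equivalent to $\top$ or to one of its own variables only if $\Lambda\vdash\phi$. Hence $\{\theta\}\preceq^\Lambda\{\top\}$ iff $\Lambda\vdash\phi$.

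Second, undecidable locally tabular extensions of \logic{K4} exist. Let $H_n$ ($n\ge 3$) consist of a root, maximal points $a_1,\dots,a_n$, and for each $i$ a middle point seeing exactly the $a_j$ with $j\neq i$. These frames form an antichain under p-morphic images of generated subframes. So the logics $\logic{S4}\oplus\mathit{bd}_3\oplus\{\chi(H_i)\mid i\in I\}$, with $\chi$ the Jankov--Fine formula, are pairwise distinct and locally tabular, and there are continuum many of them. Taking $I$ r.e.\ but not recursive even gives a recursively axiomatized undecidable one.

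So you must assume that $\Lambda$ is decidable. This is automatic, e.g., if $\Lambda$ is finitely axiomatizable, since local tabularity gives the finite model property. Under that assumption your argument is complete, and you do not need $F_\Lambda(n)$ in advance. Saturate $\{\equivclass{x_1}{\Lambda},\dots,\equivclass{x_n}{\Lambda}\}$ under the $\Psi$-operations, using the decision procedure to recognize new classes. Finiteness of $F_\Lambda(n)$ alone guarantees termination.

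\textbf{Negative direction.} Your case split is partly vacuous. By the Segerberg--Maksimova criterion, an extension of \logic{K4} is locally tabular iff it has finite depth, so only the infinite-depth case exists.

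Even there, the proposal is only a plan. Your suggested family, the letterless $\delta_n$, collapses in every extension of \logic{KD4}. There $\Diamond\top$ is a theorem, so every letterless formula is equivalent to $\top$ or $\bot$. This includes \logic{S4}, where Ra\c{t}\v{a}'s result originates, so the encoding must use propositional variables.

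More importantly, you give neither $\Psi_{\mathcal M}$ nor $\phi_{\mathcal M}$. Nor do you give the crucial lemma: in the non-halting case, no $\ML_{\Psi_{\mathcal M}}$-term is $\Lambda$-equivalent to $\phi_{\mathcal M}$. That lemma must hold for every infinite-depth $\Lambda\supseteq\logic{K4}$, including logics without the finite model property, and it is where the whole difficulty lies. A natural tool is an invariant: a relation on some $\Lambda$-algebra that is preserved by all operations of $\Psi_{\mathcal M}$ but not by $\phi_{\mathcal M}$. In the halting case, by contrast, the simulation identities should be provable already in \logic{K4}, so that they transfer to every $\Lambda$.
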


We recall that a normal modal logic $\Lambda$ is \emph{locally tabular} if, for every finite set of propositional variables, there are only finitely many equivalence classes (w.r.t.~$\Lambda$) of formulas built from these variables. \logic{S5} is an example of a locally tabular logic, whereas most
other common modal logics (such as \logic{K}, \logic{K4},
\logic{S4} and \logic{GL}) are not locally tabular \cite[Section 12.4]{CZ97}.
Theorem~\ref{thm:expressibility} implies that the expressive containment problem for modal fragments is undecidable for all logics $\Lambda$ that are not locally tabular.
Since locally tabular logics are very rare, this can be viewed as a strong negative result. 
For logics that do not extend $\logic{K4}$, little is known about the decidability of the clone containment problem. In particular:
\begin{question}
    Is the containment problem for modal clones with respect to
    $\logic{K}$ decidable?
\end{question}

\subsection{Simple Modal Fragments via Boolean Clones}
\label{sec:simple-modal-fragments}

We have seen that the pre-order of all modal fragments is rather unwieldy for non-locally tabular logics. This motivates further restricting the kinds of modal fragments we consider. We explore this next.
We say that  a set of modal formulas $\Phi$ is \emph{simple} if 
$\Phi$ consists of formulas that may be of the form (i) $\Diamond x$, or (ii) $\Box x$, or (iii) purely propositional. We call $\ML_\Phi$ a \emph{simple modal fragment}
if $\Phi$ is simple.
\begin{example}
   The first two modal fragments in Example~\ref{ex:modal-fragments} are simple, while the other three are not.
\end{example}

We can further distinguish four types of simple sets $\Phi$, according to whether
$\Phi$ contains $\Diamond$ and/or
$\Box$. More precisely, 
for $M\subseteq \{\Diamond,\Box\}$,
we say that a $\Phi$ is 
\emph{$M$-simple} if 
it consists of the modal operators in $M$ plus any number of purely propositional formulas. A modal fragment $\ML_\Phi$ is $M$-simple if
$\Phi$ is $M$-simple. 
In what follows, we will sometimes be
sloppy and consider simple modal fragments of the form $\ML_{M\cup O}$
where $M\subseteq\{\Diamond,\Box\}$ and
$O$ is a set of Boolean functions. This should be understood more precisely as 
$\ML_{M\cup\Phi\}}$ where $\Phi$ contains,
for each Boolean function $f\in O$, an arbitrary-chosen corresponding propositional formula $\phi_f$.

As far as we know, \cite{Gen_Sat_Mod} were the first to systematically study what we call simple modal fragments here.
The results in~\cite{Gen_Sat_Mod}, which pertain to the complexity of the satisfiability problem for such fragments, will be discussed in Section~\ref{sec:simple-complexity}.

For a set $\Phi$ of modal formulas, let $\beta(\Phi)$ denote
the set of
Boolean functions defined by propositional formulas in $\Phi$. The expressive power of a simple modal fragment $\ML_{\Phi}$ is naturally determined by the Boolean clone $\cloneof{\beta(\Phi)}$. However, the relationship is not entirely straightforward: there may be
cases where $\ML_{\Phi}\preceq^\Lambda \ML_{\Psi}$ even though $\cloneof{\beta(\Phi)}$ is not contained in $\cloneof{\beta(\Psi)}$.

\begin{example}
\label{ex:GL}
   Consider the modal logic $\logic{GL}$. In this logic, the following equivalence is valid: 
   \[\logic{GL}\vdash (\Box \phi\lor \Diamond\Box \phi) \leftrightarrow \top\]
   It follows that, in \logic{GL}, $\top$ is definable from 
   $\Diamond,\Box,\lor$. However, 
   note that $\top$ is not definable from $\lor$ alone. Hence, $\ML_{\{\Box,\Diamond,\top\}}\preceq^{\logic{GL}} \ML_{\{\Box,\Diamond,\lor\}}$ even though 
   $\top\not\in\cloneof{\lor}$.
\end{example}

Consider the operation 
$\mathsf{Clos}^\Lambda_M$ on sets of Boolean functions where, for $M\subseteq\{\Diamond,\Box\}$, $\mathsf{Clos}^\Lambda_M(O)=\{f\in \finops{\{0,1\}}\mid \text{$\Lambda\vdash\phi_f\leftrightarrow \phi$ for some $\phi\in\ML_{M\cup O}$}\}$, i.e.,  the set of all Boolean functions 
definable by $\ML_{M\cup O}$-formulas.
Example~\ref{ex:GL} in effect shows that
$\top\in \mathsf{Clos}^{\logic{GL}}_{\{\Diamond,\Box\}}(\{\lor\})$. 

\begin{fact} Fix a consistent normal modal logic $\Lambda$ and a set $M\subseteq\{\Diamond,\Box\}$.~
\begin{enumerate}
    \item For all sets $F$  of Boolean functions,
$\mathsf{Clos}^\Lambda_M(F)$ is a Boolean clone,
\item $\mathsf{Clos}^\Lambda_M$ is 
a closure operation, i.e., 
$F\subseteq \mathsf{Clos}^\Lambda_M(F) = \mathsf{Clos}^\Lambda_M(\mathsf{Clos}^\Lambda_M(F))$.
\item For $M$-simple sets of formulas $\Phi,\Psi$,
we have that
    $\ML_{\Phi}\preceq^\Lambda \ML_{\Psi}$ iff 
    $\mathsf{Clos}^{\Lambda}_{M}(\beta(\Phi))\subseteq \mathsf{Clos}^{\Lambda}_{M}(\beta(\Psi))$.
\end{enumerate}
\end{fact}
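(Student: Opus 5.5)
The plan is to prove the three items in order. Items (1) and (2) follow from elementary properties of substitution in normal modal logics. Item (3) then follows from a back-and-forth between formulas and Boolean functions, using that $\Lambda$ is consistent and that $\Phi,\Psi$ are $M$-simple.

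For item (1), I will check the two clone axioms for $\mathsf{Clos}^\Lambda_M(F)$. First, projections: the Boolean function $\pi^n_k$ is defined by the propositional formula $x_k$, which is itself an $\ML_{M\cup F}$-formula, so it belongs to $\mathsf{Clos}^\Lambda_M(F)$.

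Second, composition. Suppose $f$ ($m$-ary) and $g_1,\dots,g_m$ ($n$-ary) lie in $\mathsf{Clos}^\Lambda_M(F)$, witnessed by $\Lambda\vdash \phi_f\leftrightarrow\phi$ and $\Lambda\vdash\phi_{g_i}\leftrightarrow\psi_i$ with $\phi,\psi_i\in\ML_{M\cup F}$. Let $h=f(g_1,\dots,g_m)$.
\begin{itemize}
\item The formula $\phi^{(x_1/\psi_1,\dots,x_m/\psi_m)}$ is again in $\ML_{M\cup F}$, since the fragment is closed under substitution of its formulas for variables; at the level of expansions, this is substitution into $\expand{\phi}$.
\item Closure under uniform substitution gives $\Lambda\vdash \phi_f^{(\bar x/\bar\psi)}\leftrightarrow\phi^{(\bar x/\bar\psi)}$.
\item The replacement-of-equivalents property of normal modal logics gives $\Lambda\vdash\phi_f^{(\bar x/\bar\psi)}\leftrightarrow\phi_f^{(\bar x/\bar\phi_g)}$.
\item Finally, $\phi_f^{(\bar x/\bar\phi_g)}$ is a propositional formula defining $h$, so it is a propositional tautology that it is equivalent to $\phi_h$, and $\Lambda$ contains all tautologies.
\end{itemize}
Chaining these equivalences shows $h\in\mathsf{Clos}^\Lambda_M(F)$. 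This step also shows that the choice of the representative $\phi_f$ is irrelevant.

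For item (2), the inclusion $F\subseteq\mathsf{Clos}^\Lambda_M(F)$ is immediate, since $\oper{\phi_f}(x_1,\dots,x_n)$ is a formula of the fragment. For idempotence, let $G=\mathsf{Clos}^\Lambda_M(F)$. Every basic operator $\phi_g$ with $g\in G$ is $\Lambda$-equivalent to some $\ML_{M\cup F}$-formula. Replacing each occurrence of $\oper{\phi_g}$ in an $\ML_{M\cup G}$-formula by that formula, working inductively on the formula and again using replacement of equivalents, yields an $\Lambda$-equivalent $\ML_{M\cup F}$-formula. Hence $\mathsf{Clos}^\Lambda_M(G)\subseteq G$.

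For item (3), I take $\Phi$ and $\Psi$ both $M$-simple for the same $M$.

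($\Rightarrow$) Assume $\ML_\Phi\preceq^\Lambda\ML_\Psi$. Then every $\ML_{M\cup\beta(\Phi)}$-formula, which is just an $\ML_\Phi$-formula up to the choice of representatives, is $\Lambda$-equivalent to an $\ML_\Psi$-formula. Therefore $\mathsf{Clos}^\Lambda_M(\beta(\Phi))\subseteq\mathsf{Clos}^\Lambda_M(\beta(\Psi))$ by transitivity of $\Lambda$-equivalence. This direction does not yet use consistency.

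($\Leftarrow$) Assume the inclusion of closures. It suffices to show that each basic operator of $\Phi$ is expressible in $\ML_\Psi$; the induction of item (2) then extends this to all formulas. The modal operators in $M$ belong to both $\Phi$ and $\Psi$, so they are trivially expressible. A propositional $\chi\in\Phi$ defines some $f\in\beta(\Phi)\subseteq\mathsf{Clos}^\Lambda_M(\beta(\Phi))\subseteq \mathsf{Clos}^\Lambda_M(\beta(\Psi))$. Hence $\chi$ is $\Lambda$-equivalent to an $\ML_\Psi$-formula, as required.

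I expect the main obstacle to be the role of consistency, together with a subtlety in the definition of $\mathsf{Clos}$: a modal formula $\phi$ might be $\Lambda$-equivalent to a propositional formula only in degenerate logics. In an inconsistent $\Lambda$ every formula is equivalent to $\bot$ and to $\top$, and the clone property in item (1) could collapse in a way that the statement does not intend.

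So I will verify item (1) carefully under consistency. The key use is that, for consistent normal $\Lambda$, two propositional formulas are $\Lambda$-equivalent if and only if they define the same Boolean function. This holds because a consistent normal modal logic has a model with a world, or equivalently because substituting $\top/\bot$ shows that no non-tautology lies in $\Lambda$. This guarantees that $\mathsf{Clos}^\Lambda_M(F)$ is well-defined as a set of Boolean functions, independently of how each $\phi_f$ is chosen.
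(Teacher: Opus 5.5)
Your proof is correct. The paper states this Fact without proof, and your argument (closure of $\ML_{M\cup F}$ under substitution, closure of $\Lambda$ under uniform substitution, and replacement of equivalents) is exactly the routine verification it leaves implicit.

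The one thing to fix is your closing paragraph, because consistency is never used. The only fact about propositional formulas your argument needs is that two formulas defining the same Boolean function are $\Lambda$-equivalent. That holds in every normal logic, since $\Lambda$ contains all tautologies, and it alone gives the independence from the choice of each $\phi_f$. For the inconsistent logic, all three items hold trivially with $\mathsf{Clos}^\Lambda_M(F)=\finops{\{0,1\}}$. Likewise, modal formulas $\Lambda$-equivalent to propositional ones are not confined to degenerate logics: $\Box\top\leftrightarrow\top$ holds in every normal logic, and Example~\ref{ex:GL} gives another case.
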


This shows that, in order to understand the structure of lattices of simple modal fragments, we can rely on Post's lattice but we need to study the associated closure operations. We will do this next.
To state the next results, we need the following theorem. Because of its connection with Makinson's theorem \cite[Theorem 8.67]{CZ97} stating that every modal logic has as its frame either a frame consisting of a single reflexive point $F_\circ$  or a single irreflexive point $F_\bullet$, we call the next theorem the \emph{Generalized Makinson's theorem}. It gives a more fine-grained classification of logics which have $F_\bullet$ as their frame.  In what follows, $F_\circ$ will denote the Kripke frame that consists of a single reflexive world, and $F_\bullet$ will denote the Kripke frame that consists of a single irreflexive world.
Recall also that for each $n\geq 0$
$$\Diamond^{\leq n}\phi = \bigvee_{k\leq n} \Diamond^k \phi,$$
where $\Diamond^0 \phi = \phi$ and $\Diamond^{k+1}\phi = \Diamond (\Diamond^k \phi)$. 
Also $\Box^0 \phi = \phi$ and $\Box^{k+1}\phi = \Box (\Box^k \phi)$.

\begin{theorem}[Generalized Makinson's theorem]\label{GMT}
    Every normal modal logic $\Lambda$, 
    is of exactly one of the following types:
    \begin{description}

\item[Type A:] $F_\circ\models\Lambda$,  %

\item[Type B:] $F_\bullet \models \Lambda$ and $\Lambda\vdash \Box^n \bot$ for some $n\geq 1$, 

\item[Type C:]  $F_\bullet \models \Lambda$, $\Lambda\not\vdash \Box^n \bot$ for all $n\geq 1$, and
$\Lambda\vdash\Diamond^{\leq n}\Box\bot$ for some $n\geq 1$.
    \end{description}
\end{theorem}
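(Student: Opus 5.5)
Mutual exclusivity is immediate from the definitions, so the work lies in exhaustiveness. Type A requires $F_\circ\models\Lambda$. Types B and C both require $F_\bullet\models\Lambda$; they differ only in whether some $\Box^n\bot$ is derivable, with Type C additionally requiring a derivable $\Diamond^{\leq n}\Box\bot$. A logic could fall under both A and B (or A and C) only if $F_\circ\models\Lambda$ and $F_\bullet\models\Lambda$ held together with the extra condition. If $F_\circ\models\Lambda$, then neither $\Box^n\bot$ nor $\Diamond^{\leq n}\Box\bot$ belongs to $\Lambda$, since both fail at the reflexive point: there $\Box\bot$ is false, hence so is every $\Box^n\bot$ and every $\Diamond^k\Box\bot$. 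So a Type A logic satisfies neither extra condition. I will therefore read ``exactly one'' as: under the case distinction below the types are disjoint, and every logic falls under one of them.

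For exhaustiveness I invoke Makinson's theorem \cite[Theorem 8.67]{CZ97}. Every consistent normal modal logic is validated by $F_\circ$ or by $F_\bullet$. If $F_\circ\models\Lambda$ we are in Type A. Otherwise $F_\bullet\models\Lambda$ and $F_\circ\not\models\Lambda$. If $\Lambda\vdash\Box^n\bot$ for some $n\geq1$ we are in Type B. What remains is the case $F_\bullet\models\Lambda$, $F_\circ\not\models\Lambda$, and no $\Box^n\bot$ derivable; there I must show $\Lambda\vdash\Diamond^{\leq n}\Box\bot$ for some $n$. (The inconsistent logic satisfies everything and is absorbed into Type B via $\Box\bot$; I will state the theorem for consistent $\Lambda$ or note this degenerate case.)

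The key step, and the main obstacle, is to extract $\Diamond^{\leq n}\Box\bot$ from $F_\circ\not\models\Lambda$. Pick $\phi\in\Lambda$ that fails on $F_\circ$. On $F_\circ$ every formula collapses to a Boolean function, since $\Box\psi$ and $\Diamond\psi$ both evaluate to $\psi$. Substituting $\top$ or $\bot$ for the variables according to a refuting valuation, uniform substitution gives a variable-free $\phi'\in\Lambda$ that is false at the reflexive point. Let $d$ be the modal depth of $\phi'$. I claim $\phi'\to\Diamond^{\leq d}\Box\bot$ is valid in $\logic{K}$, i.e.\ $\neg\Diamond^{\leq d}\Box\bot\to\neg\phi'$. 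The reason is that at any world $w$ where every world reachable in at most $d$ steps has a successor, the truth value of a variable-free formula of depth at most $d$ equals its value at the reflexive point. I prove this by induction on the formula, with the depth bound tracking the remaining steps. The underlying observation is that with constants and no dead ends within $d$ steps, the model is $d$-bisimilar to $F_\circ$: any $d$-bounded unravelling with no dead ends matches the reflexive loop. Hence $\Lambda\vdash\Diamond^{\leq d}\Box\bot$, using $d\geq1$; if $d=0$ then $\phi'$ is a false Boolean constant and $\Lambda$ is inconsistent. With this, the conditions of Type C, namely $F_\bullet\models\Lambda$, no $\Box^n\bot$ derivable, and $\Lambda\vdash\Diamond^{\leq n}\Box\bot$, are satisfied.

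The careful part is the bisimulation/induction claim. I would first prove it semantically via Kripke completeness of $\logic{K}$, then transfer it to $\Lambda$ since $\Lambda\supseteq\logic{K}$.
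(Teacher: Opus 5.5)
Your proposal is correct, and its key step takes a different route from the paper. Both proofs rest on the same fact: if $F_\circ\not\models\Lambda$ (with $\Lambda$ consistent), then $\Lambda\vdash\Diamond^{\leq n}\Box\bot$ for some $n$.

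\emph{The paper's route.} It proves the contrapositive form by duality. It takes a maximal consistent set $\Gamma$ containing every $\neg\Diamond^{\leq n}\Box\bot$ and notes that the part of the canonical model generated by $\Gamma$ has no dead ends. It then passes to the topological closure of that part, which is a closed generated subframe and hence a descriptive $\Lambda$-frame. Since the set defined by $\Box\bot$ is clopen, the closure still has no dead ends. Finally it maps the closure onto $F_\circ$ by a p-morphism.

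\emph{Your route.} You argue directly and elementarily. Uniform substitution of constants turns a refutation on $F_\circ$ into a variable-free theorem $\phi'$ that is false at the reflexive point. An induction on formulas, which amounts to a depth-bounded bisimulation with $F_\circ$, shows that $\phi'$ takes its $F_\circ$-value at any world with no dead end within $d$ steps. By completeness of $\logic{K}$ this gives $\logic{K}\vdash\phi'\to\Diamond^{\leq d}\Box\bot$, hence $\Lambda\vdash\Diamond^{\leq d}\Box\bot$.

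\emph{What each buys.} Your route avoids canonical models and topology entirely. It also yields an explicit witness $n$, bounded by the modal depth of a formula of $\Lambda$ refuted on $F_\circ$. Your case split is also cleaner. The paper's opening inference, from ``not of Type B or C'' to ``$\Lambda\not\vdash\Box^n\bot$ and $\Lambda\not\vdash\Diamond^{\leq n}\Box\bot$ for all $n$'', tacitly assumes $F_\bullet\models\Lambda$. You handle the case $F_\bullet\not\models\Lambda$ explicitly by citing Makinson's theorem. Your substitution trick would even reprove that case: applied at dead ends, a variable-free theorem false on $F_\bullet$ gives $\Lambda\vdash\Diamond\top$, and necessitation then gives $\Lambda\vdash\neg\Diamond^{\leq n}\Box\bot$ for all $n$.

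\emph{Two small corrections.} First, the inconsistent logic is not ``absorbed into Type B''. Type B, like Types A and C, requires $\Lambda$ to be validated by a one-point frame, and no frame validates a logic containing $\bot$. So the inconsistent logic is of none of the three types. The statement has to be restricted to consistent $\Lambda$, which is your other option and is also what the paper tacitly assumes, as it does explicitly in Theorem~\ref{thm:simple-to-propositional}. Second, the paper's syntax has no primitive constants, so read your substituted $\top$ and $\bot$ as $p\lor\neg p$ and $p\land\neg p$. Your induction goes through unchanged.
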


\begin{proof}
It is easy to see that the three types are mutually exclusive. Therefore, it suffices to show every normal logic logics is of one of the three types. The proof will make use of some basic notions and constructions from the topo-algebraic semantics of modal logics. We refer to \cite{Blackburn, CZ97, Esakia19, Venema04} for details on these notions.
Suppose $\Lambda$ is not of Type B and not of Type C. We show that it is of Type A. Since $\Lambda$ is not of Type A or B, we have that $\Lambda\not\vdash \Box^n \bot$ for all $n\geq 1$ and $\Lambda\not\vdash \Diamond^{\leq n}\Box \bot$ for all $n\geq 1$. We want to show that $F_\circ$ is a $\Lambda$-frame. Then the set of formulas $\{\neg \Diamond^{\leq n}\Box \bot: n\geq 1\}$ is $\Lambda$-consistent. This implies (by Lindenbaum's Lemma) that there is a $\Lambda$-maximal consistent set $\Gamma$ such that  $\Diamond^{\leq n}\Box \bot\notin \Gamma$ for each $n\geq 1$ (see e.g., \cite[Chapter 4]{Blackburn}). By the Truth Lemma of the canonical models, this implies that $\mathcal{M}_\Lambda, \Gamma\not\models \Diamond^{\leq n}\Box \bot$ for each $n\geq 1$. But this means that for each $n\geq 1$ and each sequence $\Gamma_0, \dots, \Gamma_n$ such that  $\Gamma_0 = \Gamma$ and $\Gamma_i R \Gamma_{i+1}$, there is a point $\Delta$ such that  
$\Gamma_n R \Delta$. So each point in $R^\omega [\Gamma] = \{\Delta: \exists \Gamma_0, \dots \Gamma_n, \Gamma_0 = \Gamma, \Gamma_n = \Delta\  \& \ \Gamma_i R \Gamma_{i+1}$ for each $i\leq n-1\}$ has a successor. So it is easy to see that $F_\circ$ is a p-morphic image of 
$R^\omega [\Gamma]$.
    
It is well known that the canonical model can be equipped by a topology generated by the sets of the form $\llbracket \phi \rrbracket = \{ \Gamma : \Gamma \models \varphi\}$ \cite{Blackburn, Esakia19, Venema07}. Note that as the complement of the set $\llbracket \phi \rrbracket$ equals $\llbracket \neg \phi \rrbracket$, each such set is clopen, i.e., both closed and open. Moreover, it is also well known that generated subframes (that is, subsets $U$ such that  $x\in U$ and $xR y$ imply $y\in U$), which are closed in this topology, preserve the validity of formulas. That is, every closed generated subframe is a (descriptive general) frame of $\Lambda$ \cite{Esakia19, Venema04}. Now we consider the topological closure $\overline{R^\omega [\Gamma]}$ of the set $R^\omega [\Gamma]$. Clearly, by definition, it is topologically closed.  By \cite{Venema04} this set is an $R$-generated subframe.  Hence $\overline{R^\omega [\Gamma]}$ is a $\Lambda$-frame. 

Now suppose $x\in \overline{R^\omega [\Gamma]}$. If $x\in R^\omega [\Gamma]$, then as we observed, $x$ has a successor. Now suppose $x\in \overline{R^\omega [\Gamma]} - R^\omega [\Gamma]$. Suppose $x$ has no successors. Then $x\models \Box\bot$. But then $\llbracket \Box\bot \rrbracket $, the clopen set defined by the formula $\Box \bot$, is a clopen set containing $x$ having empty intersection with $R^\omega [\Gamma]$. This is a contradiction since 
$x$ belongs to the topological closure of $R^\omega [\Gamma]$. Thus, $x$ has a successor and every point in $\overline{R^\omega [\Gamma]}$ has a successor. Now it is easy to see that by mapping every point of $\overline{R^\omega [\Gamma]}$ onto $F_\circ$ we obtain that $F_\circ$ is a p-morphic image of $ \overline{R^\omega [\Gamma]}$, which implies that $F_\circ$ is a p-morphic image of a $\Lambda$-descriptive frame 
$\overline{R^\omega [\Gamma]}$. Therefore, $F_\circ$ is a $\Lambda$-frame. So $\Lambda$ is of Type A.
\end{proof}

Almost all common normal modal logics, such as \logic{K}, are of Type A. A notable exception is
\logic{GL}, which is of Type C. Indeed,
$\logic{GL}\vdash \Box\bot\lor\Diamond\Box\bot$. Logics of Type B tend to not be very interesting. In particular, every logic of Type B is locally tabular.

\begin{remark}
{\em We observe that above ${\bf K4}$ there exists the least logic, which is not of Type A. This will be the logic $L_\circ$ axiomatized above ${\bf K4}$ by the so-called Jankov-Fine formula of $F_\circ$ \cite[Section 3.4]{Blackburn} and \cite[Chapter 9]{CZ97} (note that such formulas can be defined above ${\bf K4}$, but not above ${\bf K}$). The Jankov-Fine formula of a rooted frame $F$ exactly axiomatizes the least logic which does not have $F$ as its frame. 

We also note that this logic $L_\circ$ is not ${\bf GL}$. To see this, consider the frame $G$ shown in Figure~\ref{fig:G}, consisting of a descending transitive chain of irreflexive points, e.g., the natural numbers with $>$, together with an extra reflexive point $\omega$, which ``sees'' all the other points. Then clearly $G\not\models {\bf GL}$ as it contains a reflexive point. However, it is easy to see that finite rooted bounded morphic images of generated subframes of  $G$ are finite chains of irreflexive points. This, means that the Jankov-Fine formula of $F_\circ$ is valid on $G$ as $F_\circ$ is not a bounded morphic image of a generated subframe of $G$; we refer the reader to  \cite[Chapter 9]{CZ97} for the theory of frame-based formulas in modal logic. So $G$ is a frame of this logic and not a ${\bf GL}$-frame, which means that $L_\circ$ and ${\bf GL}$ are different logics. 
It is also easy to see that the frame $G$ refutes $\Box^n \bot$ for any $n\in \omega$. Therefore,  $L_\circ$ is  not of Type B and hence, by Theorem~\ref{GMT}, it is of Type C. }
\end{remark}

\begin{figure}[t]
\[
\begin{tikzcd}[column sep=1.8em]
\omega \arrow[loop above] \arrow[r, dashed] & \cdots \arrow[r] &
3 \arrow[r] \arrow[rr, bend left=25] \arrow[rrr, bend left=35] &
2 \arrow[r] \arrow[rr, bend left=25] &
1 \arrow[r] &
0
\end{tikzcd}
\]

\caption{The frame $G=(\mathbb{N}\cup\{\omega\},{>}\cup\{(\omega,\omega)\})$}
\label{fig:G}
\end{figure}
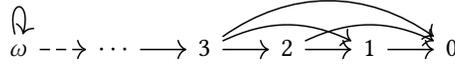

\begin{theorem}\label{thm:simple-to-propositional}
    Let $\Lambda$ be any consistent normal modal logic, let 
    $M\subseteq\{\Diamond,\Box\}$, and let $C$ be any Boolean clone.
    \begin{enumerate}
        \item If $\Lambda$ is of Type A, then $\mathsf{Clos}^{\Lambda}_{M}(C)=C$, i.e.,
         $\mathsf{Clos}^{\Lambda}_{M}$ is the identity function on Boolean clones.

        \item If $\Lambda$ is of Type B, then 
         $\mathsf{Clos}^{\Lambda}_{M}(C)=
           \begin{cases}
           C & \text{if $M=\emptyset$} \\
           C\sqcup\{\bot\} & \text{if $M=\{\Diamond\}$} \\
           C\sqcup\{\top\} & \text{if $M=\{\Box\}$} \\
           C\sqcup\{\top,\bot\} & \text{if $M=\{\Diamond,\Box\}$} \\
           \end{cases}$
           \item If $\Lambda$ is of Type C, then
           $C\subseteq \mathsf{Clos}^{\Lambda}_{M}(C) \subseteq \begin{cases}
           C & \text{if $M=\emptyset$} \\
           C\sqcup\{\bot\} & \text{if $M=\{\Diamond\}$} \\
           C\sqcup\{\top\} & \text{if $M=\{\Box\}$} \\
           C\sqcup\{\top,\bot\} & \text{if $M=\{\Diamond,\Box\}$} \end{cases}$
        \end{enumerate}
    \end{theorem}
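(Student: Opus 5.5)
The strategy is to trap every Boolean function definable in $\ML_{M\cup C}$ between the clone $C$ itself and the relevant upper clone. This is done by evaluating modal formulas on a one-point frame, where they collapse to Boolean functions. The lower bound $C\subseteq\mathsf{Clos}^\Lambda_M(C)$ is immediate in all cases, since every $\PL_C$-formula is already an $\ML_{M\cup C}$-formula; for $M=\emptyset$ no modal operator occurs and the closure is trivially $C$. So the work lies in the upper bounds, plus the lower bound in Type B.

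\textbf{Type A.} Suppose $\phi\in\ML_{M\cup C}$ and $\Lambda\vdash\phi_f\leftrightarrow\phi$. Since $F_\circ\models\Lambda$, the equivalence holds on $F_\circ$. On the single reflexive point we have $\Diamond\psi\equiv\psi\equiv\Box\psi$ under every valuation. Hence, by induction on $\phi$, the truth value of $\phi$ at that point under valuation $V$ equals $\sem{\phi^\circ}_V$. Here $\phi^\circ$ is obtained from $\phi$ by deleting all modal operators, and it is a $\PL_C$-formula. Valuations on $F_\circ$ range over all Boolean assignments, so $f=\sem{\phi^\circ}\in C$.

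\textbf{Types B and C, upper bound.} Here $F_\bullet\models\Lambda$. On the single irreflexive point, $\Diamond\psi$ is false and $\Box\psi$ is true. Inductively, $\phi$ evaluates at that point like the formula $\phi^\bullet$, obtained by replacing each $\Diamond$-subformula by $\bot$ and each $\Box$-subformula by $\top$. This $\phi^\bullet$ is a formula over $C$ plus those constants that actually occur, namely $\bot$ only if $\Diamond\in M$ and $\top$ only if $\Box\in M$. Hence $f$ lies in the displayed clone $C\sqcup\{\ldots\}$. This gives the upper bound for Type C and for Type B at once.

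\textbf{Type B, lower bound.} It remains to show that the constants are actually definable when $\Lambda\vdash\Box^n\bot$. We may assume $n\ge1$; note that $\Box^n\bot\to\Box^{m}\bot$ for $m\ge n$ is provable in $\mathbf{K}$ by necessitation, so $\Lambda\vdash\Box^m\bot$ for all $m\geq n$. If $\Box\in M$, then $\Box^n x$ (with $x$ any variable) is $\Lambda$-equivalent to $\top$, since $\Box^n x\leftarrow\Box^n\bot$ holds by monotonicity of $\Box$. If $\Diamond\in M$, then $\Diamond^n x\to\Diamond^n\top=\neg\Box^n\bot$, so $\Diamond^n x$ is $\Lambda$-equivalent to $\bot$. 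Both are unary functions in the sense of Remark~\ref{rem:zeroary}, so the constants belong to the closure. Since $\mathsf{Clos}^\Lambda_M(C)$ is a clone containing $C$, it contains the join. Combined with the upper bound, this yields equality.

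The only delicate point is the induction in Type A and in the upper bound for Types B and C. There one must check that the substitution semantics of $\ML_\Phi$ (via $\expand{\cdot}$) commutes with the one-point collapse. This is routine, because each connective is either propositional or exactly $\Diamond x$ or $\Box x$.
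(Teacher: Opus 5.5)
Your proof is correct and follows the paper's argument essentially step for step. You use soundness on $F_\circ$ with the modal operators erased for Type A, and soundness on $F_\bullet$ with $\Diamond$-subformulas replaced by $\bot$ and $\Box$-subformulas by $\top$ for the upper bounds in Types B and C. For the Type B lower bound you use $\Lambda\vdash\Box^n x\leftrightarrow\top$ and $\Lambda\vdash\Diamond^n x\leftrightarrow\bot$, and you justify these more carefully than the paper does, via monotonicity and duality.

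One small remark: the case $M=\emptyset$ is not quite ``trivial'', since it relies on the consistency of $\Lambda$. Your one-point-frame arguments, which you state for every $M$, already cover it.
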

\begin{proof}
Type A: 
Let $C$ be any Boolean clone and let $f\in \mathsf{Clos}^{\Lambda}_{M}(C)$, where $M\subseteq\{\Diamond,\Box\}$.
By definition, 
there is a $\chi\in\ML_{M\cup \Phi}$ such that 
$\Lambda\vdash\phi_f\leftrightarrow \chi$, where $\Phi=\{\phi_f\mid f\in C\}$. It follows that
$F_\circ\models\phi_f\leftrightarrow\chi$. 
Let $\chi'$ be obtained from $\chi$
by removing all modal operators (i.e.,
recursively replacing subformulas of
the form $\Diamond\theta$ or $\Box\theta$
by $\theta$). Then we have that
$F_\circ\models \chi\leftrightarrow \chi'$.
Therefore, we have that 
$F_\circ \models\phi_f\leftrightarrow \chi'$. Observe that both $\phi_f$ and $\chi'$ are propositional formulas.
It follows that $\chi'$ and $\phi$ are
equivalent as propositional formulas.
Since $\chi'$ is composed of operations corresponding to formulas in $\Phi$, 
it follows that $f$ belongs to the Boolean clone $C$.

Type B:  The right-to-left inclusions follow directly from the fact that
$\Lambda\vdash \top\leftrightarrow\Box^n x$ and $\Lambda\vdash\bot\leftrightarrow \Diamond^n x$. For the left-to-right inclusions, we reason as follows. Since $\Lambda$ is of Type B, we have that $F_\bullet\models\Lambda$.
Now, let $C$ be any set of Boolean functions and let $f\in \mathsf{Clos}^{\Lambda}_{M}(C)$, where $M\subseteq\{\Diamond,\Box\}$.
By definition, 
there is a $\chi\in\ML_{M\cup \Phi}$ such that 
$\Lambda\vdash\phi_f\leftrightarrow \chi$, where $\Phi=\{\phi_f\mid f\in C\}$. It follows that
$F_\bullet\models\phi_f\leftrightarrow\chi$. 
Let $\chi'$ be obtained from $\chi$ by replacing all subformulas of the form $\Diamond\theta$ by $\bot$ and all subformulas of the form $\Box\theta$ by $\top$. Then we have that
$F_\bullet\models\chi\leftrightarrow\chi'$.
Therefore, we have that $F_\bullet\models\chi'\leftrightarrow \psi$.
Observe that both $\phi_f$ and $\chi'$ are propositional formulas.
It follows that $\chi'$ and $\phi$ are
equivalent as propositional formulas.
Since $\chi'$ is composed of operations corresponding to formulas in $\Phi$, 
as well as $\bot$ if $\Diamond\in M$ 
and $\top$ if $\Box\in M$,
$f$ belongs to the respective Boolean clone as stated in item 2 of the theorem.

 Type C: The same argument applies as for the left-to-right inclusions in the case of logics of Type B. (Note, however, that the proof of the right-to-left inclusions does not applies for Type C logics.)
\end{proof}

The above results imply that, for all normal modal logics $\Lambda$, the simple modal fragments form a pre-ordered set with a clear structure, which (just as in the case of propositional fragments) is derivable from the structure of Post's lattice. In particular:

\begin{corollary}
    Fix a normal modal logic $\Lambda$. 
\begin{enumerate}
    \item 
    There are only countably many simple modal fragments $\ML_\Phi$, up to expressive equivalence. 
    \item 
    There are no infinite antichains of simple modal fragments, 
    i.e, families of simple modal fragments that are pairwise incomparably with respect to $\preceq^\Lambda$.
    \item Every simple modal fragment $\ML_\Phi$  is expressively equivalent with respect to $\Lambda$ to $\ML_\Psi$ for some finite
    subset $\Psi\subseteq\Phi$. 
    \item 
    It is decidable, given a simple set $\Phi$, whether
    $\ML_\Phi$ is expressively complete with respect to $\Lambda$. 
\end{enumerate}
\end{corollary}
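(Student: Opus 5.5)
The plan is to reduce every item to the corresponding property of Post's lattice through the last Fact before Theorem~\ref{GMT}: for $M$-simple $\Phi,\Psi$ we have $\ML_\Phi\preceq^\Lambda\ML_\Psi$ iff $\mathsf{Clos}^\Lambda_M(\beta(\Phi))\subseteq\mathsf{Clos}^\Lambda_M(\beta(\Psi))$. A simple fragment is thus determined up to equivalence by the pair $(M,\mathsf{Clos}^\Lambda_M(\beta(\Phi)))$, with $M\subseteq\{\Diamond,\Box\}$ and the second component a Boolean clone. We can assume $\Lambda$ is consistent, since for the inconsistent logic everything is trivial. Comparing fragments with different $M$ needs a small extension: each $\Diamond$ or $\Box$ in $M$ contributes a fixed modal operation. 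Most cleanly, $\ML_\Phi$ corresponds to the modal clone $\cloneof{\sem{M}_\Lambda\cup\sem{\beta(\Phi)}_\Lambda}$, which depends only on $M$ and the Boolean clone $\cloneof{\beta(\Phi)}$. Monotonicity in both arguments then gives a monotone map from $\mathcal{P}(\{\Diamond,\Box\})\times\mathcal{C}_{\{0,1\}}$ onto the pre-order of simple fragments.

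\textbf{Items (1) and (3).} Item (1) follows immediately: by Fact~\ref{fact:finitely-generated} there are countably many Boolean clones and only four choices of $M$. For item (3), take a finite $O\subseteq\beta(\Phi)$ with $\cloneof{O}=\cloneof{\beta(\Phi)}$ (Fact~\ref{fact:finitely-generated}). Let $\Psi$ consist of the corresponding formulas in $\Phi$ together with the modal operators of $\Phi$. Then $\Psi\subseteq\Phi$ is finite and generates the same modal clone.

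\textbf{Item (2).} The domain $\mathcal{P}(\{\Diamond,\Box\})\times\mathcal{C}_{\{0,1\}}$ is well-quasi-ordered, being a product of a finite poset with Boolean clones ordered by $\subseteq$. Post's lattice has no infinite antichains; this is implicit in Fact~\ref{fact:finite-downward-closure}, since an infinite antichain would yield a downward-closed set, namely the complement of its upward closure, without finitely many maximal elements. An infinite antichain of fragments would pull back to an infinite sequence in the domain. By the wqo there are indices $i<j$ with $(M_i,C_i)\le(M_j,C_j)$, and monotonicity gives $\ML_{\Phi_i}\preceq^\Lambda\ML_{\Phi_j}$, a contradiction.

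\textbf{Item (4) and the main obstacle.} Here $\ML_\Phi$ must be compared with the full language $\ML_{\{\Diamond,\land,\neg\}}$. Expressive completeness requires $\Diamond$ or $\Box$ to be available modulo $\Lambda$. If $M=\emptyset$, the fragment defines only Boolean operations, which is never complete for consistent $\Lambda$. (Its definable operations are propositional, and $\Diamond x$ is not $\Lambda$-equivalent to a propositional formula in $x$: by Makinson, $F_\circ$ or $F_\bullet$ validates $\Lambda$, and in both cases a one-point model contradicts this.) If $M\neq\emptyset$, I claim completeness holds iff $\mathsf{Clos}^\Lambda_M(\beta(\Phi))=\CloneBF$. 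For the forward direction, completeness lets us define $\land$ and $\neg$, so they lie in the closure. Conversely, $\land,\neg$ together with $\Diamond$ or $\Box$ define both modalities by duality. So completeness is an upward-closed property of the clone $\mathsf{Clos}^\Lambda_M(\cloneof{\beta(\Phi)})$.

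The main obstacle is computing $\mathsf{Clos}$ effectively in the Type C case, where Theorem~\ref{thm:simple-to-propositional} only gives bounds. My approach is to use the upper bound directly: for $M\neq\emptyset$, $\mathsf{Clos}^\Lambda_M(C)=\CloneBF$ forces $C\sqcup\{\top,\bot\}=\CloneBF$. The only clones whose join with the constants is $\CloneBF$ are $\CloneBF$, $\CloneR_0,\CloneR_1,\CloneR_2$, $\CloneS_0,\CloneS_1,\CloneS_{02},\CloneS_{12}$ and $\CloneD$, a finite set we can read off Post's lattice. For each of these finitely many $C$ and each $M$, whether the closure equals $\CloneBF$ is a single fixed fact about $\Lambda$, and we hard-code it. Given $\Phi$, we decide which clone $\cloneof{\beta(\Phi)}$ is by Fact~\ref{fact:clone-uniform-membership} and look up the answer. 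Decidability here is non-uniform in $\Lambda$, which is all the statement claims since $\Lambda$ is fixed.
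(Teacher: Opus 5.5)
\textbf{There is a genuine gap in item (4).} Items (1)--(3) essentially follow the paper's appeal to Post's lattice.

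Your list of clones $C$ with $C\sqcup\{\top,\bot\}=\CloneBF$ is wrong, and the correct set is infinite. $\CloneM$ and $\CloneL$ are the only maximal clones containing both constants. So $C\sqcup\{\top,\bot\}=\CloneBF$ holds iff $C\not\subseteq\CloneM$ and $C\not\subseteq\CloneL$. This set includes $\CloneD_1$, since $\maj(1,0,\neg z)=\neg z$ and $\maj(x,y,\neg\top)=x\land y$. The condition is upward closed and your own list contains $\CloneS_0,\CloneS_1,\CloneS_{02},\CloneS_{12}$. Hence it also includes every $\CloneS_0^n,\CloneS_1^n,\CloneS_{02}^n,\CloneS_{12}^n$ for $n\geq 2$ (e.g.\ $\CloneS_0\subseteq\CloneS_0^n$). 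So ``hard-code one fact per candidate'' is an infinite table, not an algorithm. Hard-coding minimal elements does not obviously rescue it either: Post's lattice has infinite descending chains, so an upward-closed set of clones need not be the upward closure of finitely many clones.

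The missing step is to pass to the complement, as the paper does. For fixed $M$, the set of clones $C$ for which $\ML_{M\cup C}$ is \emph{not} complete is downward closed. By Fact~\ref{fact:finite-downward-closure} it is the downward closure of finitely many clones $C_1,\dots,C_k$. Then $\ML_\Phi$ is complete iff $\cloneof{\beta(\Phi)}\not\subseteq C_i$ for all $i$, which is decidable by Fact~\ref{fact:clone-uniform-membership}; this is Fact~\ref{fact:downward-closed-decidable}. You had already observed upward-closedness, so this one step finishes the proof. The Type~C bound from Theorem~\ref{thm:simple-to-propositional} is then not needed.

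This uniform argument also covers $M=\emptyset$, which matters because your claim that purely propositional fragments are never complete for consistent $\Lambda$ is false. In $\logic{Triv}$ (the logic of $F_\circ$), $\Diamond x\leftrightarrow x$ is a theorem. In $\logic{Ver}$ (the logic of $F_\bullet$), $\Diamond x\leftrightarrow\bot$ is a theorem. So $\ML_{\{\land,\neg\}}$ is complete for both, and the one-point models give no contradiction there.

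There is also a smaller slip in item (2): Boolean clones are well-quasi-ordered by $\supseteq$, not by $\subseteq$, since $\CloneS_0^2\supsetneq\CloneS_0^3\supsetneq\cdots$ is an infinite descending chain. Your argument survives, because it only needs the absence of infinite antichains plus pigeonhole over the four choices of $M$. However, that absence should be derived from Fact~\ref{fact:finite-downward-closure} via the downward closure of the antichain, or the complement of its \emph{strict} upward closure. The complement of the ordinary upward closure excludes the antichain itself and yields nothing.
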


\begin{proof}
    The first three items follow directly from the above results together with known properties of Post's lattice discussed in Section~\ref{sec:post}. For the fourth item, we can argue as follows:
    fix $M\subseteq\{\Diamond,\Box\}$, and let $\mathcal{F}$ be the set of all Boolean clones $C$ for which it
    holds that $\ML_{M\cup C}$ is \emph{not} expressively complete with respect to $\Lambda$. Then $\mathcal{F}$ is a downward closed set and hence, by Fact~\ref{fact:downward-closed-decidable} is decidable.
\end{proof}

For normal modal logics of Type A or B, we furthermore get:

\begin{corollary}\label{cor:containment-simple-decidable}
    Fix a normal modal logic $\Lambda$ of Type A or B, and $M\subseteq\{\Diamond,\Box\}$. Given two finite $M$-simple sets $\Phi, \Psi$, it is decidable whether $\ML_{\Phi}\preceq^\Lambda \ML_{\Psi}$.
\end{corollary}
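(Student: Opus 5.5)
The plan is to reduce the question to clone containment in Post's lattice, using the Fact that for $M$-simple $\Phi,\Psi$ we have $\ML_\Phi\preceq^\Lambda\ML_\Psi$ iff $\mathsf{Clos}^\Lambda_M(\beta(\Phi))\subseteq\mathsf{Clos}^\Lambda_M(\beta(\Psi))$. Since $\mathsf{Clos}^\Lambda_M$ is a closure operation whose values are Boolean clones, we have $\mathsf{Clos}^\Lambda_M(\beta(\Phi))=\mathsf{Clos}^\Lambda_M(\cloneof{\beta(\Phi)})$. This is the step where we pass from the finite generating set to the clone it generates, so that Theorem~\ref{thm:simple-to-propositional} (stated for clones $C$) applies. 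We may also need a small extra point: the inclusion is monotone. That is, for $F\subseteq C=\cloneof{F}$, every function definable from $M\cup C$ is definable from $M\cup F$, by substituting definitions.

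Next, we split into cases on the type of $\Lambda$, which is fixed (so its type is a fixed parameter, not something that must be computed). If $\Lambda$ is of Type A, Theorem~\ref{thm:simple-to-propositional}(1) gives $\mathsf{Clos}^\Lambda_M(\cloneof{\beta(\Phi)})=\cloneof{\beta(\Phi)}$. The problem therefore reduces to deciding $\beta(\Phi)\preceq\beta(\Psi)$. The sets $\beta(\Phi),\beta(\Psi)$ are finite, and their truth tables are computable from the propositional formulas in $\Phi,\Psi$, so this is decidable by Fact~\ref{fact:clone-uniform-membership}. If $\Lambda$ is of Type B, Theorem~\ref{thm:simple-to-propositional}(2) gives $\mathsf{Clos}^\Lambda_M(C)=C\sqcup K_M$, where $K_M$ is the fixed set $\emptyset$, $\{\bot\}$, $\{\top\}$ or $\{\top,\bot\}$ depending on $M$. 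Containment then amounts to $\beta(\Phi)\cup K_M\preceq\beta(\Psi)\cup K_M$, again a finite instance of Fact~\ref{fact:clone-uniform-membership}, since $\cloneof{C\cup K}=\cloneof{\cloneof{C}\cup K}$.

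The only delicate point is the bookkeeping in the first paragraph: making sure the closure of the generating set equals the closure of the generated clone. Everything else is a direct invocation of earlier facts. I would also remark why Type C is excluded: Theorem~\ref{thm:simple-to-propositional}(3) only bounds $\mathsf{Clos}^\Lambda_M(C)$ between $C$ and $C\sqcup K_M$, without identifying it, so the same reduction does not go through.
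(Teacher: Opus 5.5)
Your proof is correct and is the argument the paper intends; the paper states the corollary without a written proof, as an immediate consequence of the preceding results. The Fact characterizing $\preceq^\Lambda$ via $\mathsf{Clos}^\Lambda_M$, together with Theorem~\ref{thm:simple-to-propositional}(1)--(2), reduces the question to the Boolean clone containment $\beta(\Phi)\cup K_M\preceq\beta(\Psi)\cup K_M$ (with $K_M=\emptyset$ in Type A), which is decidable by Fact~\ref{fact:clone-uniform-membership}; your step passing from $\beta(\Phi)$ to $\cloneof{\beta(\Phi)}$ via monotonicity and substitution correctly fills in a detail the paper leaves implicit.
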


We leave it as an open problem whether a similar result holds for Type C logics.

\paragraph{Simple multi-modal fragments}
The above definitions and observations all 
generalize in a natural way to multi-modal fragments, as long as care is taken to define \emph{simple sets} in the appropriate way. We consider a set of $\ML^k$-modal formulas to be \emph{simple} if the following three conditions hold:
\begin{itemize}
    \item $\Phi$ consists of purely propositional formulas and/or formulas of the form $\Diamond_i x$ or $\Box_i x$,
    \item For all $i,j\leq k$, $\Diamond_i x\in \Phi$ iff $\Diamond_j x \in\Phi$, and
    \item For all $i,j\leq k$, $\Box_i x\in \Phi$ iff $\Box_j x \in\Phi$.
\end{itemize}
and similarly for $\ML^\omega$.
With a slight abuse of notation, we 
will say that a set of $\ML^k$-formulas, respectively, $\ML^\omega$-formulas $\Phi$ is
$M$-simple, for $M\subseteq\{\Diamond,\Box\}$, if $\Phi$ is simple and
$\Diamond_i x\in\Phi$ precisely if $\Diamond\in M$, and $\Box_i x\in\Phi$ precisely if $\Box\in M$. With this definition of simplicity, all the above facts remain true for multi-modal fragments of the form $\ML^\omega_\Phi$ or $\ML^k_\Phi$ (provided that $k\geq 1$).

\subsection{Expressive power and succinctness}
\label{sec:modal-expressive-power}

We will now briefly recap the main facts regarding expressive power
of modal fragments that follow from the connections to modal clones discussed in Section~\ref{sec:Modal-Fragments-and-Modal-Lattice}. Before we so so, we introduce some relevant 
notions and algorithmic problems.
We call a modal fragment $\ML_\Phi$ \textit{pre-complete} in $\Lambda$ if 
$\ML_\Phi$ is not expressive complete, and
for every modal fragment $\ML_\Psi$ such that  $\ML_\Phi$ is strictly expressively contained in $\ML_\Psi$, it holds that $\ML_\Psi$
is expressively complete.

\begin{theorem}
\label{fact:modal-expressive-completeness}
    Let $\Lambda$ be any normal modal logic. The following hold with respect to $\Lambda$:
    \begin{enumerate}
        \item For all sets of modal formulas $\Phi$,  $\ML_\Phi$ is expressively complete iff $\{\Diamond,\land,\neg\}\preceq^\Lambda\Phi$.
        \item  There are continuum many pairwise expressively incomparable modal fragments $\ML_\Phi$. 
        \item There are infinitely many expressively pre-complete modal fragments with respect to $\logic{GL}$ \cite{RATSARUSSU+2000+553+570}.
    \end{enumerate}
\end{theorem}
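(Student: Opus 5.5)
The three items are of different natures, and I treat them separately. Item~(3) is the already-cited result of Ra\c{t}\v{a}, so I would simply invoke Theorem~\ref{thm:GL-precomplete}. By the correspondence between modal fragments and modal clones, a pre-complete fragment with respect to $\logic{GL}$ corresponds exactly to a pre-maximal modal clone for $\logic{GL}$. Item~(2) is a restatement of the earlier Fact on continuum many pairwise-incomparable modal clones. I would take the fragments $\ML_{\Phi_S}$ with $\Phi_S=\{\delta_{2n}\mid n\in S\}\cup\{\delta_{2n+1}\mid n\notin S\}$ and translate non-containment of clones into expressive incomparability via the Fact equating $\ML_\Phi\preceq^\Lambda\ML_\Psi$ with $\cloneof{\sem{\Phi}_\Lambda}\subseteq\cloneof{\sem{\Psi}_\Lambda}$.

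One caveat concerns item~(2). It is stated for an arbitrary $\Lambda$, but the $\delta_n$ argument only works when the $\delta_n$ are pairwise non-equivalent and independent, as for $\logic{K}$ or $\logic{GL}$. For a locally tabular logic such as $\logic{S5}$ there are only countably many fragments up to equivalence. I would therefore read item~(2) as holding for $\logic{K}$ (and logics like $\logic{GL}$), as in the preceding Fact.

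The independence check I would give concretely as follows. The only formulas of $\ML_{\Phi_S}$ are variables and Boolean-free nestings of the variable-free constants $\delta_m$. So each formula is either a variable or a single constant $\delta_m$ with $m$ in the index set of $\Phi_S$. Hence $\delta_n\notin\ML_{\Phi_{S'}}$ up to equivalence whenever $n$ is excluded, because the $\delta_m$ are pairwise non-equivalent in $\logic{K}$: finite irreflexive chains of different lengths separate them. Choosing an uncountable antichain-generating family, for instance all $S\subseteq\mathbb{N}$, gives incomparability in both directions, since $S\neq S'$ yields some $n$ on which they differ and then each side contains a $\delta$ missing from the other.

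Item~(1) is the main content. For the right-to-left direction, suppose $\{\Diamond,\land,\neg\}\preceq^\Lambda\Phi$, meaning each of $\Diamond x,x\land y,\neg x$ is $\Lambda$-equivalent to an $\ML_\Phi$-formula. Then by induction on formulas, every $\ML$-formula is $\Lambda$-equivalent to an $\ML_\Phi$-formula, replacing each connective by its defining term. This uses that $\Lambda$-equivalence is a congruence, which holds by the replacement of equivalents valid in normal modal logics. Since every set $\Psi$ satisfies $\ML_\Psi\preceq^\Lambda\ML_{\{\Diamond,\land,\neg\}}$ (every modal formula expands into $\ML$, and $\lor,\Box$ are definable), transitivity gives completeness. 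The left-to-right direction is immediate from the definition, taking $\Psi=\{\Diamond,\land,\neg\}$. The main obstacle is not deep: it is the congruence/substitution step, which I would justify via the Lindenbaum–Tarski clone correspondence, where $\sem{\ML_\Phi}_\Lambda=\cloneof{\sem{\Phi}_\Lambda}$ and $\cloneof{\sem{\{\Diamond,\land,\neg\}}_\Lambda}$ is the unique maximal modal clone.
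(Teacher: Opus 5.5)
Your proof follows the paper's route. Item (1) is the fact that $\sem{\ML_{\{\Diamond,\land,\neg\}}}_\Lambda$ is the unique maximal modal clone, and your replacement-of-equivalents induction is exactly what underlies that fact. Item (2) is the $\delta_n$ construction from the earlier Fact. Item (3) is Ra\c{t}\v{a}'s theorem carried over through the fragment/clone correspondence. Your observation that $\ML_{\Phi_S}$ has no Boolean connectives, so its formulas are only variables and single $\delta_m$'s, is a cleaner verification than the paper's.

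You are right that (2) cannot hold for literally every normal modal logic; the paper's own argument only covers $\logic{K}$ and logics like $\logic{GL}$. However, your stated reason and your example are wrong. Local tabularity only makes the set of $n$-ary term functions finite for each $n$, and that does not bound the number of clones: on a three-element set there are continuum many clones by Janov--Muchnik.

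In fact $\logic{S5}$ does have continuum many pairwise incomparable fragments. The four-element $\logic{S5}$-algebra $\mathcal{P}(\{a,b\})$ has a discriminator term, so it is quasi-primal. Collapse $\{a\},\{b\}$ to a third value and compose the Janov--Muchnik operations with this map. The results preserve the subalgebra $\{\emptyset,\{a,b\}\}$ and commute with the swap automorphism, so they are term operations of $\mathcal{P}(\{a,b\})$. Non-membership in generated clones then transfers back to $\logic{S5}$-formulas by evaluating in this algebra.

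The genuine exceptions are logics such as the logic of $F_\circ$ (axiom $\Box x\leftrightarrow x$) or of $F_\bullet$ (axiom $\Box\bot$). There every formula is equivalent to a propositional one, so modal clones are just Boolean clones, and Post's lattice is countable.
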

Further, consider the following two algorithmic problems: 
 \begin{align*}
    &\underline{\textbf{Expressibility Problem (for $\Lambda$)}}:\\
    &\textit{Input:} \text{ A modal formula $\varphi$ and finite set of modal formulas $\Phi$}.\\
    &\textit{Output:} \text{ Yes if $\varphi$ is $\Lambda$-equivalent to a $\ML_\Phi$-formula; no otherwise}.\\
    & \text{  } \\
    &\underline{\textbf{Expressive Completeness Problem (for $\Lambda$)}}:\\
    &\textit{Input:} \text{ A finite set of modal formulas $\Phi$}.  \\
    &\textit{Output:} \text{ Yes if $\ML_{\Phi}$ is expressively complete; no otherwise}.
\end{align*}

Note that the expressibility problem is equivalent to the problem of
testing, for two given sets of modal formulas $\Phi, \Psi$, whether
$\ML_\Phi\preceq^\Lambda \ML_\Psi$.
Furthermore, note that the expressive completeness 
reduces to the expressibility problem, due to Fact~\ref{fact:modal-expressive-completeness}.
The results in Section~\ref{sec:Modal-Fragments-and-Modal-Lattice} imply:

\begin{theorem}
\label{thm:expressibility-again}
Let $\Lambda$ be any modal logic 
such that  $\logic{K4}\subseteq \Lambda$. 
If  
$\Lambda$ is locally tabular,
the expressibility problem and the expressive completeness problem are decidable for $\Lambda$. 
Otherwise, the expressibility problem is undecidable for $\Lambda$.
\end{theorem}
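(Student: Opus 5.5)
The plan is to split Theorem~\ref{thm:expressibility-again} into its three claims and obtain each from results already recalled in Section~\ref{sec:Modal-Fragments-and-Modal-Lattice}. The common link is that the expressibility problem is equivalent to the containment problem for modal clones. Given $\varphi$ and $\Phi$, the formula $\varphi$ is $\Lambda$-equivalent to an $\ML_\Phi$-formula iff $\sem{\varphi}_\Lambda\in\cloneof{\sem{\Phi}_\Lambda}$. This holds iff $\cloneof{\sem{\{\varphi\}}_\Lambda}\subseteq\cloneof{\sem{\Phi}_\Lambda}$, i.e., iff $\ML_{\{\varphi\}}\preceq^\Lambda\ML_\Phi$. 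Conversely, $\ML_\Psi\preceq^\Lambda\ML_\Phi$ for finite $\Psi$ iff every $\psi\in\Psi$ is expressible in $\ML_\Phi$, which amounts to finitely many expressibility queries. Both directions are computable many-one or Turing reductions, so the two problems are decidable or undecidable together.

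For the positive case, assume $\Lambda\supseteq\logic{K4}$ is locally tabular. Theorem~\ref{thm:expressibility} then gives that the containment problem for modal clones is decidable, so by the equivalence above the expressibility problem is decidable. For expressive completeness, Theorem~\ref{fact:modal-expressive-completeness}(1) says $\ML_\Phi$ is expressively complete iff $\{\Diamond,\land,\neg\}\preceq^\Lambda\Phi$. This reduces to three expressibility queries, namely whether each of $\Diamond x$, $x\land y$ and $\neg x$ is expressible in $\ML_\Phi$, so it is decidable as well.

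For the negative case, assume $\Lambda\supseteq\logic{K4}$ is not locally tabular. The ``only if'' direction of Theorem~\ref{thm:expressibility} says the containment problem for modal clones is undecidable. The reduction from containment to expressibility in the first paragraph is a Turing reduction making finitely many queries, so undecidability transfers to the expressibility problem. The statement deliberately says nothing about expressive completeness here, so nothing more is needed.

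The main point to check is the identification $\sem{\ML_\Phi}_\Lambda=\cloneof{\sem{\Phi}_\Lambda}$ (Fact in Section~\ref{sec:Modal-Fragments-and-Modal-Lattice}), together with the fact that a formula $\varphi(x_1,\dots,x_n)$ lies in this clone exactly when it is $\Lambda$-equivalent to some $\ML_\Phi$-formula in the same variables. The one subtlety is that an $\ML_\Phi$-formula witnessing expressibility may use extra variables. In that case we substitute $x_1$ for the extra variables; this preserves $\Lambda$-equivalence with $\varphi$ because $\Lambda$ is closed under uniform substitution. I expect this bookkeeping step to be the only genuine obstacle; everything else is a direct invocation of Theorems~\ref{thm:expressibility} and~\ref{fact:modal-expressive-completeness}.
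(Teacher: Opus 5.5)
Your proposal is correct and takes essentially the same route as the paper: the paper also treats expressibility as interreducible with containment of modal clones, reduces expressive completeness to expressibility via Theorem~\ref{fact:modal-expressive-completeness}(1), and reads off both directions from Theorem~\ref{thm:expressibility}. Your extra bookkeeping, renaming surplus variables by uniform substitution and counting the finitely many queries needed for a finite $\Psi$, simply makes explicit steps that the paper leaves implicit.
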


For non-transitive modal logics (i.e., logics that are not above \logic{K4}), little is known with regards to the above algorithmic problems.

For \emph{simple} modal fragment the situation is quite different. Indeed, it follows from the results in Section~\ref{sec:simple-modal-fragments} that the expressive completeness problems is decidable for \emph{every} normal modal logic, when restricted to simple modal fragments; and the expressibility problem
is decidable for every normal modal logic of Type A or B when restricted to $M$-simple modal fragments, for $M\subseteq\{\Diamond,\Box\}$.

Next, let us consider succinctness.
Recall that all expressively complete propositional fragments have the same
succinctness up to a polynomial 
(i.e., there are polynomial 
truth-preserving translations between
each pair of such fragments, with respect to tree-size).
A remarkable recent result in \cite{Berkholz2024s} states that, 
in the case of the modal logic $\logic{K}$, there are precisely
two classes of expressively complete simple modal fragments, when it 
comes to succinctness.

\begin{theorem}[\cite{Berkholz2024s}]
\label{thm:Berkholz}
Let $\Phi$ be any simple set of modal formulas.
If $\ML_{\Phi}$ is expressively complete with respect to \logic{K}, then
exactly one of the following holds:
\begin{itemize}
    \item There are polynomial-time truth-preserving translations between $\ML_{\Phi}$
    and $\ML_{\{\Diamond,\land,\neg,\top\}}$ with respect to \logic{K}.
    \item There are polynomial-time truth-preserving translations between $\ML_{\Phi}$
    and $\ML_{\{\Diamond,\land,\neg,\top,\leftrightarrow\}}$ with respect to \logic{K}, in terms of tree-size.
\end{itemize}
Moreover, $\ML_{\{\Diamond,\land,\neg,\top,\leftrightarrow\}}$ is exponentially more succinct than $\ML_{\{\Diamond,\land,\neg,\top\}}$ with respect to \logic{K}, in terms of tree-size.
\end{theorem}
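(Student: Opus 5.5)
Since $\logic{K}$ is of Type A, Theorem~\ref{thm:simple-to-propositional} gives $\mathsf{Clos}^{\logic{K}}_M(C)=C$. So if a simple fragment $\ML_\Phi$ is expressively complete with respect to \logic{K}, then $\cloneof{\beta(\Phi)}$ must already be $\CloneBF$, and $M\neq\emptyset$. Because $\Box$ and $\Diamond$ are interdefinable with linear blow-up once $\neg$ is available, I will assume $M=\{\Diamond\}$ throughout. The problem then becomes purely about succinctness: which finite complete Boolean base $O=\beta(\Phi)$, interleaved with $\Diamond$, is polynomially equivalent to $\{\land,\neg,\top\}$ and which to $\{\land,\neg,\top,\leftrightarrow\}$. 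I will split on a syntactic property of $O$: whether some $f\in O$, after substituting the constants $\top,\bot$ (short-definable from the complete base) for some arguments and identifying others, yields $x\leftrightarrow y$ or $x\oplus y$ with each of $x,y$ used exactly once. Call such an $O$ \emph{parity-capable}.

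If $O$ is parity-capable, I translate $\ML_{\{\Diamond,\land,\neg,\top,\leftrightarrow\}}$ into $\ML_\Phi$ by replacing each connective with a constant-size read-once implementation, giving linear tree-size blow-up. For the converse direction, from $\ML_\Phi$ into the $\leftrightarrow$-language, I fix for every $f\in O$ a formula over $\{\land,\neg,\leftrightarrow,\top\}$, but a single-pass replacement is not enough. Arguments may be duplicated, and a duplication at every modal depth compounds exponentially. I would therefore show that every Boolean function has a representation in which each argument occurs at most once inside a ``Shannon-free'' skeleton, using $\leftrightarrow$ to simulate case distinctions on a subformula without copying it. The identity $g(s,\vec y)\equiv (s\land g(\top,\vec y))\lor(\neg s\land g(\bot,\vec y))$ is valid pointwise at each world. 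Because the outer connective is propositional, it does not need $s$ to be constant across worlds. Iterating it on duplicated arguments, and exploiting the read-once $\leftrightarrow$ to merge the branches, should give a polynomial bound.

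If $O$ is not parity-capable, I would show that every $f\in O$ is \emph{unate up to constants}: each of its read-once minors is monotone or antitone in every variable. I then aim for a polynomial translation into $\{\Diamond,\land,\neg,\top\}$. Here the work is a modal version of Spira/Pratt balancing. I pick a subformula $s$ of tree-size between a third and two thirds, and rewrite at the propositional level directly above the nearest modal boundary. Balancing is only ever applied within maximal modal-free segments. The delicate point is bounding the recursion across $\Diamond$ layers: each segment must be balanced to depth logarithmic in \emph{its own} size, so that the blow-ups multiply only to a polynomial overall. The reverse translation, from $\{\Diamond,\land,\neg,\top\}$ into $\ML_\Phi$, is linear, since $\land$, $\neg$ and $\top$ have read-once implementations in any complete base.

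Finally, the exponential separation needs a lower bound. I would use the sequence $\phi_n$ built by nesting $\Diamond$ around iterated $\leftrightarrow$ with repeated subformulas, and prove that every $\{\Diamond,\land,\neg,\top\}$-equivalent has tree-size $2^{\Omega(n)}$. The tool is a formula-size game (Adler--Immerman style) for modal logic, played on pairs of sets of pointed models that differ in parity along branches; this is the approach of \cite{Ditmarsch2014:exponential}. I expect the main obstacle to be the upper-bound translation in the parity-capable case. Ensuring that duplicated arguments of base functions do not compound multiplicatively through nested modalities is exactly where naive Pratt-style arguments break, so the read-once/$\leftrightarrow$ case-splitting lemma is the crux. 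The dichotomy's exhaustiveness then follows from the parity-capable / unate split.
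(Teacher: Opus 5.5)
The paper does not prove this statement; it is quoted from \cite{Berkholz2024s}, so your proposal has to stand on its own. Your opening reduction is fine: Type~A gives $\cloneof{\beta(\Phi)}=\CloneBF$ and $M\neq\emptyset$, and one may take $M=\{\Diamond\}$. The case split that follows, however, draws the line in the wrong place. In particular, the claim that a non-parity-capable base consists of functions that are monotone or antitone in every argument is false.

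Take $\mathrm{ite}(x,y,z)=(x\land y)\lor(\neg x\land z)$. Fixing one argument to a constant gives only $x\land y$, $\neg x\lor y$, $\neg x\land z$, $x\lor z$ or a projection. Identifying arguments gives only $x\land y$, $x\lor y$ or a projection. So $\{\mathrm{ite},\top,\bot\}$ is a complete base (it defines $\neg x=\mathrm{ite}(x,\bot,\top)$ and $x\land y=\mathrm{ite}(x,y,\bot)$) that is not parity-capable. Yet $\mathrm{ite}$ is neither monotone nor antitone in $x$, and $\mathrm{ite}(x,y,\neg y)\equiv x\leftrightarrow y$ with $x$ occurring once.

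Translate $A\leftrightarrow B$ as $\mathrm{ite}(A',B',\mathrm{ite}(B',\bot,\top))$, where $A$ is the side of larger tree-size and primes denote translations. This gives $T(s)\le T(a)+2T(b)+O(1)$ with $a\ge b$, hence $T(s)=O(s^2)$. Modal operators and the read-once definitions of $\land,\neg,\top$ cost only additive constants. So $\ML_{\{\Diamond,\mathrm{ite},\top,\bot\}}$ lies in the $\leftrightarrow$-class. By the ``moreover'' clause, the polynomial translation into $\ML_{\{\Diamond,\land,\neg,\top\}}$ that your second case aims to construct therefore does not exist.

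The dividing line that actually works is unateness (monotone or antitone in each argument). Suppose some $f\in O$ is not monotone or antitone in its $i$-th argument. Then there are settings $\vec a,\vec b$ of the other arguments under which $f$ becomes $x_i$ and $\neg x_i$ respectively. Plug into the $j$-th other argument whichever of $\top,\bot,y,\neg y$ sends $y=1$ to $a_j$ and $y=0$ to $b_j$. This yields $x_i\leftrightarrow y$ with $x_i$ read once.

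The upper-bound mechanisms you propose also miss the point. Balancing modal-free segments to logarithmic depth does not prevent compounding: per-segment blow-ups multiply across unboundedly many $\Diamond$-layers as soon as the heavy modal leaf of a segment is copied. For instance, nesting $\maj(\Diamond\psi,p,q)$ and expanding $\maj$ as $(x\land y)\lor(y\land z)\lor(x\land z)$ gives size $2^n$, although every segment is a single gate. What matters is which argument gets copied, not depth.

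Every translation the theorem needs follows from one device. For each $f$ and each position $i$, fix a target-base formula for $f$ in which $x_i$ occurs once and every other argument occurs at most $K$ times. Always put the argument of largest tree-size in the read-once position. Then
\[
T(s)\le T(a_1)+K\sum_{j\ge2}T(a_j)+O(1)\quad\text{with } a_1\ge a_j,
\]
which solves to a polynomial.
\begin{itemize}
\item Over $\{\land,\lor,\neg\}$, such templates exist precisely when $f$ is unate, e.g.\ $f=(x_i\land f|_{x_i=1})\lor f|_{x_i=0}$ if $f$ is monotone in $x_i$.
\item Over $\{\land,\neg,\leftrightarrow\}$, they exist for every $f$, via $f=f|_{x_i=0}\oplus\bigl(x_i\land(f|_{x_i=1}\oplus f|_{x_i=0})\bigr)$.
\end{itemize}

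Hence the translation $\ML_\Phi\to\ML_{\{\Diamond,\land,\neg,\top,\leftrightarrow\}}$ that you single out as the crux is easy and holds for every complete base. The representation with each argument occurring at most once that you aim for is both unavailable ($\mathrm{ite}$ and $\maj$ are not read-once over $\{\land,\neg,\leftrightarrow\}$) and unnecessary.

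The real work lies in the criterion and in the exponential lower bound, and your proposal only gestures at the latter. You would need a concrete witness family of polynomial tree-size in the $\leftrightarrow$-language and an actual winning strategy in the formula-size game. Exclusivity of the two bullets then comes from that lower bound, not from the case split.
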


The above theorem pertains to expressively complete \emph{simple} modal fragments, and it implies that, among these, there are ones that are ``maximally succinct''. It is an interesting question whether the same
holds for not-necessarily-simple modal fragments:

\begin{question}
    Is there a finite set of modal formulas
    $\Phi$ such that  for each finite set of modal formulas $\Psi$,
    there is a polynomial-time truth-preserving translation
    from $\ML_\Psi$ to $\ML_\Phi$?
\end{question}

For example, if the basic modal language is extended with a ``contingency'' operator
($\Diamond x \land \Diamond \neg x$) it is known to become exponentially more succinct~\cite{Ditmarsch2014:exponential}. This falls outside of the regime of Theorem~\ref{thm:Berkholz} since $\Diamond x\land\Diamond\neg x$ is not simple. 

\subsection{Complexity of logical reasoning tasks for simple modal fragments}
\label{sec:simple-complexity}

We will now review several results pertaining the algorithmic reasoning tasks. These results pertain only to simple modal fragments, as this is the
only setting in which useful techniques are known for proving such results. Some of the results
pertain to multi-modal logics.
For a normal modal logic $\Lambda$, we denote by $\Lambda_k$ the normal $k$-modal logic containing a copy of all axioms of $\Lambda$ for each of the $k$ modal operators. Thus, for instance, 
$\logic{K}_2$ is the bi-modal logic of all Kripke models, and $\logic{S5}_2$ is the bi-modal logic of Kripke models in which both accessibility relations are equivalence relations.

A general exposition on the complexity of modal logic variants and their fragments can be found in \cite{DBLP:books/daglib/0030205}.

\paragraph{Consistency and Provability}
By the \emph{$\Lambda$-consistency problem} for 
$\ML_{\Phi}$ with respect to a normal modal logic $\Lambda$, we mean the problem of deciding
whether a given $\ML_{\Phi}$-formula $\phi$ is $\Lambda$-consistent. This the modal equivalent of the propositional satisfiability problem and (for modal logics $\Lambda$ that are Kripke-complete) it coincides with the modal satisfiability problem.

\begin{theorem}[\cite{Gen_Sat_Mod}]
\label{thm:gensatmod}
    Let $\ML_\Phi$ be a simple modal fragment containing at least one of $\Diamond,\Box$. 
    \begin{enumerate}
        \item[1.] If $\{x\land\neg y \}\preceq \Phi$ or $\{x\land (y\lor z), \bot,\Diamond,\Box\}\preceq \Phi$,  \logic{K}-consistency   for $\ML_{\Phi}$ is $\class{PSPACE}$-complete.        \item[2.] If $\{\land, \Diamond,\Box\}\preceq \Phi\preceq \{\land, \top,\perp, \Diamond,\Box\}$, \logic{K}-consistency for $\ML_{\Phi}$ is 
         $\class{coNP}$-complete. 
    \item[3.] Otherwise, %
 \logic{K}-consistency for $\ML_{\Phi}$ is solvable in polynomial time, 
    \end{enumerate}
    The same holds for  simple multi-modal fragments and for simple $k$-modal fragments $(k\geq 1)$. 

\end{theorem}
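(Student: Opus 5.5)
The plan is to follow the Post-lattice methodology of the propositional case and treat the three complexity regimes separately. By the robustness results (Theorem~\ref{thm:robust}) and the fact that $\ML_\Phi$ depends only on $\cloneof{\beta(\Phi)}$ together with the modal operators present, it is enough to check finitely many minimal and maximal clones in Post's lattice, combined with each choice of $M\subseteq\{\Diamond,\Box\}$ with $M\neq\emptyset$. For tree-size we may switch freely between bases; for DAG-size we can always expand the defining formulas with only linear blowup, because the modal operators are unary and commute with substitution.

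\textbf{Upper bounds.} Membership in $\class{PSPACE}$ follows from Ladner's tableau algorithm for $\logic{K}$ (and $\logic{K}_k$), applied after expanding the Boolean connectives. For case~2, where $\Phi\preceq\{\land,\top,\bot,\Diamond,\Box\}$, the formulas are monotone and negation-free, with only $\land$ and the constants available. A formula is then inconsistent exactly when a $\bot$ is forced. Every $\Box$-subformula can be satisfied vacuously, but each $\Diamond$ creates a successor that must satisfy all $\Box$-constraints at that level. Inconsistency is therefore witnessed by a single path of $\Diamond$-choices together with the $\Box$-constraints reaching a $\bot$. Guessing such a path certifies inconsistency in $\class{NP}$, which places consistency in $\class{coNP}$.

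For case~3 (all remaining clones) I would use a case distinction.
\begin{itemize}
\item If $\beta(\Phi)\subseteq\CloneR_1$ or the clone is $\CloneM$ without $\bot$, the constant-$\top$ model, made reflexive or with a successor as needed, satisfies everything. The answer is then trivially yes, or at worst only constants need checking.
\item Linear ($\CloneL$) and $\CloneN$ fragments are handled by pushing negations through and evaluating parity: each $\Diamond$/$\Box$ is either forced true or false, or left free, which is polynomial.
\item For $\CloneE$ fragments containing only one of $\Diamond$ or $\Box$, no interaction between $\Diamond$-witnesses and $\Box$-constraints arises, so a greedy check runs in polynomial time.
\item For $\CloneV$/disjunctive fragments it suffices to satisfy one disjunct.
\end{itemize}

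\textbf{Hardness.} Where $x\land\neg y$ is available, the clone contains $\CloneS_1$, and adding the modal operators lets us express $\top$ and $\bot$ relative to the model. We reduce from $\logic{K}$-satisfiability of the full language by adapting Ladner's QBF encoding. Negated variables are simulated with a fresh variable $t$ that is forced true, writing $t\land\neg y$, so $\{\land,\neg\}$ is recovered relative to $t$, and $\Box$ is propagated as $\Box(t\land\cdots)$ to keep $t$ true along the encoded tree. For $\{x\land(y\lor z),\bot,\Diamond,\Box\}$ we use $\bot$ with $\Box$ and $\Diamond$ to encode the branching of the QBF tree positively, adapting the construction of Hemaspaandra (``poor man's logic'') and its monotone variants.

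$\class{coNP}$-hardness in case~2 comes from reducing 3-colourability or the complement of 3SAT, encoding clause choices as $\Diamond$-successors and conflicts via $\Box\bot$, as in the poor man's logic result. The multi-modal statements follow because the lower bounds already hold for one modality and the upper-bound algorithms generalise directly to several relations.

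\textbf{Main obstacle.} I expect the main difficulty to be the $\class{PSPACE}$-hardness for the $\CloneS_1$-type clones, which lack both constants and free negation. The simulation of $\top$ through the auxiliary variable $t$ has to survive every modal depth, and this needs a careful induction. The second-hardest part is exhaustively checking the boundary clones in case~3.
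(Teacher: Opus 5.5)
Your overall architecture matches that of \cite{Gen_Sat_Mod}, which the paper only cites: Ladner's algorithm for the upper bound, path certificates for poor man's logic, Ladner's encoding with a $\top$-simulating atom for $\CloneS_1$, and a Post-lattice case split for the tractable cases. There are, however, genuine gaps.

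\textbf{Case~2 cannot be proved as printed.} It fails for the $\bot$-free clones. If $[\beta(\Phi)]$ is $\CloneE_2=[\land]$ or $\CloneE_1=[\land,\top]$, every $\ML_\Phi$-formula (also multi-modally) is true at a single reflexive world where all atoms are true. So $\logic{K}$-consistency is trivial. This is exactly your own first bullet for case~3, since $\CloneE_1,\CloneE_2\subseteq\CloneR_1$; your case-2 and case-3 claims therefore contradict each other. Your hardness sketch (``conflicts via $\Box\bot$'') silently uses $\bot$, which is not definable there: $\logic{K}$ is of Type~A, so $\mathsf{Clos}^{\logic{K}}_M$ is the identity (Theorem~\ref{thm:simple-to-propositional}). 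In \cite{Gen_Sat_Mod} the $\class{coNP}$-complete case is $[\beta(\Phi)]\in\{\CloneE_0,\CloneE\}$ with both modalities, i.e.\ $\{\land,\bot,\Diamond,\Box\}\preceq\Phi\preceq\{\land,\top,\bot,\Diamond,\Box\}$, and that is what you should prove. For $\CloneE_0$ you must also remove $\top$ from the poor man's logic reduction. A fresh atom does this harmlessly, because the formulas are monotone and all atoms may be taken true everywhere.

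\textbf{Your case-3 split is not exhaustive.} Two families are missing.
\begin{itemize}
\item The self-dual clone $\CloneD=[\maj,\neg]$ belongs to case~3 but fits none of your bullets. It is settled by a single reflexive world: there the modalities vanish, and a self-dual Boolean function is never constantly~$0$.
\item $\CloneM$, $\CloneM_0$, $\CloneS_{11}$ and $\CloneS_{11}^n$ combined with only \emph{one} modality are also in case~3, since case~1 needs both modalities for them. By monotonicity, such a formula is satisfiable iff it holds at a reflexive world with all atoms true (only $\Diamond$), respectively at a dead end with all atoms true (only $\Box$).
\end{itemize}
Your linear bullet is not an algorithm as stated, because modal subformulas are not independent. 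For instance, $\Diamond p\oplus\Diamond(p\oplus q\oplus q)$ is unsatisfiable although its two diamonds are syntactically distinct. The $\CloneL$ case therefore needs a real argument.

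\textbf{Repairs in the $\class{PSPACE}$-hardness part.}
\begin{itemize}
\item With only $\Diamond$ you cannot force $t$ at successors via $\Box(t\land\cdots)$. You must instead relativize, using $\Diamond(t\land\psi)$ and $t\land\neg\psi$, so that the translation is faithful on the submodel of $t$-worlds.
\item Theorem~\ref{thm:robust} does not license switching between arbitrary bases of $\CloneS_1$ or $\CloneS_{11}$ under tree-size. Your encoding nests the simulated connectives through the whole modal depth, so an implementation that duplicates the nested argument blows up exponentially; you have to rule this out.
\end{itemize}
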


A similar classification result is provided in \cite{Gen_Sat_Mod} for
(multi-modal versions of) \logic{KD}. Furthermore, nearly complete classifications are provided for \logic{T}, \logic{K4}, \logic{S4}, and \logic{S5}, leaving open only the cases where $\{\oplus\}\preceq \Phi \preceq\{\oplus,\bot,\Diamond,\Box\}$. In the  
$\logic{S5}_k$ case, unlike in the general case, the complexity of the consistency problem sometimes changes depending on the value of $k$. 
It is also explained in \cite{Gen_Sat_Mod} how such classifications for consistency problems give rise (by dualizing them) to classifications for the analogous \emph{provability} problems (or, equivalently, for Kripke-complete modal logics, \emph{validity} problems).

\paragraph{Reasoning with TBoxes}
Next, we describe some knowledge-representation related 
reasoning tasks studied in~\cite{Gen_Sat_ALC}. These were orginally
defined in terms of description logics, but we will recast them here  equivalently in terms of (multi-)modal logic.
An $\ML_{\Phi}$-\emph{implication} is an implication $\phi\to\psi$ where 
$\phi,\psi$ are $\ML_{\Phi}$-formulas. 
An \emph{$\ML_{\Phi}$-TBox} is a finite set of $\ML_{\Phi}$-implications.

\begin{example}
    An example of a $\ML_{\{\Diamond,\lor\}}$-TBox is the set $\{\Diamond (p_1\lor p_2)\to p_3, ~~ (p_2\lor p_3)\to (p_4\lor \Diamond p_5)\}$. Notice that $\Diamond$ binds stronger than $\to$, yielding $(\Diamond(p_1\lor p_2))\to p_3$ for the first implication.
\end{example}

 A Kripke model $M$ \emph{globally satisfies} (or, simply, \emph{satisfies})
a TBox $T$, written $M\models T$, if for all worlds $w$, 
and for all  $(\phi\to\psi)\in T$, it holds that
$M,w\models\phi\to\psi$.

A number of knowledge-representation related reasoning tasks are studied in~\cite{Gen_Sat_ALC}. We highlight
here only three of these.
By the \emph{TBox-satisfiability problem} for $\ML_{\Phi}$ we mean the following problem:
given an $\ML_{\Phi}$-TBox, does there exist a Kripke model 
$M$ such that  $M\models T$.
By the \emph{TBox-subsumption problem} (also see \cite{DBLP:journals/corr/abs-1205-0722,meier2014alcsubsumption})
for $\ML_{\Phi}$ we mean the following problem:
given an $\ML_{\Phi}$-TBox $T$ and an $\ML_{\Phi}$-implication $\phi\to\psi$, is it the case that, 
for every model $M$, we have that $M\models T$ implies $M\models\phi\to\psi$? 
Finally, the \emph{satisfiability problem under TBoxes} for $\ML_{\Phi}$ is the problem, 
given an $\ML_{\Phi}$-TBox $T$ and a $\ML_{\Phi}$-formula $\phi$, whether there is a model $M\models T$ and a world $w$ such that  $M,w\models\phi$.

These capture fundamental reasoning tasks in knowledge representation. 
The same definitions apply also to multi-modal fragments $\ML^\omega_{\Phi}$. Indeed, this is the more relevant case in knowledge representation, since the description logic $\mathcal{ALC}$ corresponds to the full multi-modal language
$\ML^\omega$
while, for example, the description logic $\mathcal{EL}$ corresponds to the multi-modal fragment
$\ML^\omega_{\{\land,\top,\Diamond\}}$.

For the TBox-satisfiability problem with respect to the multi-modal logic $\logic{K}_\omega$, the classification 
obtained in~\cite{Gen_Sat_ALC} is summarized in Figure~\ref{fig:TSAT}. For
\emph{TBox subsumption} and \emph{satisfiability under TBoxes}, similar clasifications are obtained in~\cite{Gen_Sat_ALC,DBLP:journals/corr/abs-1205-0722,meier2014alcsubsumption}, 
which we omit here.
\footnote{Note that, regardless of the choice of $\Phi$, TBox satisfiability reduces to TBox subsumption (since $T$ is unsatisfiable iff
$T$ entails $p\to q$ for fresh propositional variables $p,q$) and to 
satisfiability under a TBox (since $T$ is satisfiable iff $p$ is satisfiable under $T$, for a fresh propositional variable $p$). Also see \cite[p.~57]{Gen_Sat_ALC} for further problems and interreducibilities among them.}

\begin{figure}
\resizebox{\linewidth}{!}{\begin{tabular}{@{}rcccc@{}}\toprule
& $M=\emptyset$ & $M=\{\Diamond_i\mid i\in\mathbb{N}\}$ & $M=\{\Box_i\mid i\in\mathbb{N}\}$ & $M=\{\Diamond_i,\Box_i\mid i\in\mathbb{N}\}$ \\
\midrule
$\class{EXPTIME}$-complete if \ldots 
& -- 
& $\{\lor,\top,\bot\}\preceq O$ or 
& $\{\land,\top,\bot\}\preceq O$ or & $\{\top,\bot\}\preceq O$ or \\
&   
& $\{\neg\}\preceq O$ 
& $\{\neg\}\preceq O$ 
& $\{\neg\}\preceq O$\\\\
$\class{NP}$-complete if \ldots
& $\{\land,\lor,\top,\bot\}\preceq O$ or
& -- & -- & --\\
& $\{x\oplus y\oplus z\oplus \top\}\preceq O$ \\\\
$\class{P}$-complete if \ldots       
& $O\equiv\{\land, \top,\perp\}$ or 
& $\{\top,\bot\} \preceq O\preceq$ 
& $\{\top,\bot\}\preceq O\preceq$  
& --\\
& $O\equiv\{\lor, \top,\perp\}$ 
& $\{\land,\top,\bot\}$ 
& $\{\lor,\top,\bot\}$ \\\\
$\class{NL}$-complete if \ldots 
& $\{\neg\}\preceq O\preceq\{\neg,\top,\bot\}$ or
& -- & -- & --\\
& $O\equiv \{\top,\bot\}$\\ \\
Trivial \ldots
& otherwise 
& otherwise 
& otherwise 
& otherwise\\\bottomrule
\end{tabular}}
\caption{Complexity of TBox satisfiability with respect to the logic $\logic{K}_\omega$ for simple multi-modal fragments $\ML^\omega_{M\cup O}$, where $M\subseteq\{\Diamond_i,\Box_i\mid i\in\mathbb{N}\}$ and $O$ consists of purely propositional formulas. Also see \cite[Tab.~3]{Gen_Sat_ALC}.}
\label{fig:TSAT}
\end{figure}

\subsection{Teachability and learnability for simple modal fragments}

Recall that, in the setting of (classical) propositional logic, a labeled example was a pair $(V,\lab)$ where $V$ is a truth assignment and $\lab\in\{1,-\}$. In the 
setting of a modal logic $\Lambda$, it is
natural to consider a labeled example to be of the form $(M,w,\lab)$ where $M$ is a finite Kripke model based on a $\Lambda$-frame, $w$ is a world of $M$, and
$\lab\in\{1,0\}$. A modal formula $\phi$ then
\emph{fits} such a labeled example if
$M,w\models\phi$ and $\lab=1$, or if
$M,w\not\models\phi$ and $\lab=0$.
A set $E$ of such labeled examples 
\emph{uniquely characterizes} a formula $\phi$ 
relative to a fragment $\ML_\Phi[\PROP]$ and a logic $\Lambda$, if $\phi$ fits $E$ and every formula 
$\psi\in \ML_\Phi[\PROP]$ that fits $E$ is $\Lambda$-equivalent to $\phi$.
\begin{restatable}{theorem}
{thmModalchardichotomy}\label{thm:Modal_char_dichotomy}
    Fix $\Lambda=\logic{K}$.
    Let $\Phi$ be a simple set of modal formulas containing at least one of $\Diamond,\Box$. Then the following are equivalent:
    \begin{enumerate}
        \item  
        Every $\phi\in \ML_{\Phi} [\PROP]$  (with $\PROP$ a non-empty finite set of propositional variables) is uniquely characterized with respect to $\ML_{\Phi}[\PROP]$ by a finite set of labeled examples.
        \item 
        The concept class $\ML_{\Phi} [\PROP]$ (with $\PROP$ a non-empty  finite set of propositional variables) is effectively exactly learnable with membership queries.
        \item
        $\Phi \preceq \{\land,\lor,\top,\perp,\Diamond\}$ 
        or 
        $\Phi \preceq \{\land,\lor,\top,\perp,\Box\}$  or 
        $\Phi \preceq \{\land,\lor,\Diamond,\Box\}$ 
        or 
        $\Phi\preceq \{\neg,\perp,\Diamond,\Box\}$
        or 
        $\Phi\preceq \{\land, \top,\Diamond, \Box\}$
        or 
        $\Phi\preceq \{\lor, \bot,\Diamond, \Box\}$. %
    \end{enumerate}
\end{restatable}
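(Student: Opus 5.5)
I will prove the cycle (3)$\Rightarrow$(2)$\Rightarrow$(1)$\Rightarrow$(3), reducing everything to Post's lattice. Since $\logic{K}$ is of Type A, Theorem~\ref{thm:simple-to-propositional} says that the expressive power of an $M$-simple fragment is governed by the Boolean clone $\cloneof{\beta(\Phi)}$ and the set $M$. So for each of the six maximal cases it suffices to treat the fragment generated by exactly the listed basis.

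\textbf{(2)$\Rightarrow$(1).} This step is routine. An exact learner that uses finitely many membership queries and outputs a formula equivalent to the target yields a teaching set: take the queried pointed models together with their answers. Any formula in $\ML_\Phi[\PROP]$ fitting these examples would induce the same run of the learner, hence the same output, and so it is $\logic{K}$-equivalent to the target.

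\textbf{(3)$\Rightarrow$(2).} Here I build a learner for each of the six maximal fragments.
\begin{itemize}
\item For the positive fragments $\{\land,\lor,\top,\bot,\Diamond\}$, $\{\land,\lor,\top,\bot,\Box\}$ and $\{\land,\lor,\Diamond,\Box\}$, I use that formulas of bounded modal depth over finite $\PROP$ are finitely many up to equivalence. A learner can first determine the modal depth $d$ of the target by querying, for increasing $d$, canonical tree models of depth $d$. It then identifies the target within the finite set of depth-$\le d$ positive formulas, using finitely many bisimulation-type (simulation) representative models that separate them. Such finite separating sets exist because $\logic{K}$-formulas of depth $d$ over $\PROP$ are determined by finitely many $d$-bisimulation types.
\item For $\{\neg,\bot,\Diamond,\Box\}$, every formula is equivalent to a single modal string applied to a literal or a constant. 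This family is easy to identify by querying path models of each length.
\item For $\{\land,\top,\Diamond,\Box\}$ and, dually, $\{\lor,\bot,\Diamond,\Box\}$, I use a normal form of conjunctions of modal paths ending in variables or $\top$.
\end{itemize}
The depth-finding step is the delicate part. I must ensure that, for each candidate depth, a finite query set certifies that the target has no equivalent of larger depth. I plan to use the fact that, in these positive fragments, a formula of depth $>d$ that is equivalent to one of depth $\le d$ can be detected by unravelled tree models of depth $d{+}1$, all of whose variable valuations are taken into account.

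\textbf{(1)$\Rightarrow$(3).} This is the contrapositive via Post's lattice. I will show that if $\Phi$ lies in none of the six downsets, then $\Phi$ lies above one of finitely many minimal ``bad'' bases. This is possible by Fact~\ref{fact:finite-downward-closure}, taking $M\in\{\{\Diamond\},\{\Box\},\{\Diamond,\Box\}\}$ separately. From inspecting the lattice I expect the bad bases to include:
\begin{itemize}
\item $\{\oplus\}$ and $\{\maj\}$, or generally any non-monotone, non-$\CloneN$ clone;
\item $\{\land,\lor,\Diamond,\Box,\top\}$ or with $\bot$;
\item $\{\land,\bot,\Diamond,\Box\}$, $\{\lor,\top,\Diamond,\Box\}$;
\item $\{\neg,\land\}$-type clones, and $\{\neg,\Diamond\}$ with $\top$ where needed.
\end{itemize}
For each bad base I exhibit a formula $\phi$ and, for every finite example set $E$ fitting $\phi$, a non-equivalent formula $\psi$ also fitting $E$. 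The standard trick is to let $n$ exceed the size or depth of all models in $E$ and take, for example:
\begin{itemize}
\item $\psi=\phi\land\Box^n\bot$ style with constants, or $\phi\lor\Diamond^n\top$;
\item with $\oplus$, a formula $\phi\oplus\Diamond^n x\oplus\Diamond^n x$ variant that flips only on deep models;
\item for $\{\land,\bot,\Diamond,\Box\}$, $\phi\land\Box^n\Diamond\dots$ and so on.
\end{itemize}

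The main obstacle is this last case analysis. One must check, for each minimal bad clone, that the fragment can express a ``depth-$n$ perturbation'' that is invisible on models of depth $<n$ (or of size $<n$, using cycles carefully, since finite models may have cycles) yet is not $\logic{K}$-equivalent to the original. For $\{\land,\lor,\Diamond,\Box\}$ with a constant, I will use $\Diamond^n\top$ or $\Box^n\bot$, which finite models with cycles can falsify or satisfy unexpectedly. To handle this I will choose the perturbation to be $\Box^n(\dots)$ relative to the examples' longest acyclic paths, or use unravelling arguments, and I would try $\phi\land\Box^n p$ versus $\phi$ first.
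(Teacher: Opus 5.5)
Your architecture is the paper's: reduce to the six maximal good bases, get (2)$\Rightarrow$(1) from the queried examples, and prove (1)$\Rightarrow$(3) by contraposition against finitely many minimal bad bases. Your (2)$\Rightarrow$(1) argument is the paper's. Both substantive directions, however, have genuine gaps.

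\textbf{(1)$\Rightarrow$(3).} Your list of bad bases is wrong. $\maj\in[\land,\lor]$ and $\{\neg,\top,\Diamond\}\preceq\{\neg,\bot,\Diamond,\Box\}$ are good cases. The actual minimal non-monotone bad clones are $[\threeXor]=\mathsf{L}_2$, $[\oxor]=\mathsf{S}_{02}$ and $[\aimp]=\mathsf{S}_{12}$, each taken with $\Diamond$ or with $\Box$. All three consist of $1$-reproducing functions, so none contains $\oplus$, and an argument for $\oplus$ does not cover them. For the other two minimal bad bases, $\{\land,\bot,\Diamond,\Box\}$ and $\{\lor,\top,\Diamond,\Box\}$, you give no concrete competitor; the paper cites ten Cate--Koudijs for these.

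More fundamentally, the ``perturbation deeper than the examples'' technique cannot work. An arbitrary finite example set may contain cyclic models, so $\Diamond^n\top$, $\Box^n\bot$ and $\Box^n p$ are not invisible on it for any $n$. For instance, a reflexive point on which $\phi$ fails already rules out $\phi\lor\Diamond^n\top$. Your $\oplus$-variant $\phi\oplus\Diamond^n x\oplus\Diamond^n x$ is simply equivalent to $\phi$.

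The missing idea is to choose the perturbation so that it \emph{agrees} with the examples, rather than lying beyond them. The paper does this via the PC-reduction $p_i\mapsto\Diamond^i p$ (or $\Box^i p$). Any pointed model, cyclic or not, is read as the valuation with $V_{M,w}(p_i)=1$ iff $M,w\models\Diamond^i p$. The propositional lower bound then transfers: in $\PL_{\{\threeXor\}}[\PROP]$ every formula needs at least $|\PROP|-1$ examples, since fewer examples give an underdetermined linear system over $\mathbb{F}_2$, and guard-variable reductions extend this to $\oxor$ and $\aimp$. Hence $p$ has no finite characterization. A direct pigeonhole version also works: given a finite $E$ fitting $p$, some $1\le a<b\le 2^{|E|}+1$ give $\Diamond^a p$ and $\Diamond^b p$ the same truth value on every example in $E$. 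Then $\threeXor(p,\Diamond^a p,\Diamond^b p)$ fits $E$, yet it differs from $p$ at the start of a chain of length $a$ with $p$ true only at its end.

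\textbf{(3)$\Rightarrow$(2).} The depth-finding step is not a detail; it is the whole content of this direction. Membership queries alone never certify that the target is not a deeper, non-equivalent formula agreeing with every query made so far. Producing such a certificate is exactly the existence of finite characterizations.

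Your sketch never uses which constants are available. It would therefore apply verbatim to $\{\land,\lor,\top,\bot,\Diamond,\Box\}$, which has no finite characterizations because it contains $\{\land,\bot,\Diamond,\Box\}$. So the sketch cannot be correct as stated. The normal form you propose for $\{\land,\top,\Diamond,\Box\}$ is also false: $\Diamond(p\land\Box q)$ is not equivalent to any conjunction of modal paths.

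The paper gets these positive results from cited characterization theorems:
\begin{itemize}
\item ten Cate--Koudijs for the three fragments with $\land,\lor$;
\item ten Cate et al.\ for $\{\land,\top,\Diamond,\Box\}$, and dually $\{\lor,\bot,\Diamond,\Box\}$.
\end{itemize}
It proves the $\{\neg,\top,\Diamond,\Box\}$ case directly; your path-model idea there is fine. It then builds a learner by brute force: enumerate hypotheses, compute each one's characterizing set, and test that set with membership queries.
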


The proof is given in Appendix~\ref{AppendixII}.

A natural refinement of the question addressed by Theorem~\ref{thm:Modal_char_dichotomy} is: which simple modal fragments admit \emph{polynomial sized} unique characterizations? Another related (not necessarily equivalent) open question is: which simple modal fragments are  efficiently exactly learnable with membership queries.

\section{Other logics}\label{sec:other_logics}
In this section, we highlight some results concerning expressivity for non-classical and non-Modal Logics. We state results from Dependence Logic \cite{vaananen2007dependence}, Hybrid Logic \cite{ARECES_TENCATE}, Finitely Valued Logics \cite{malinowski}, Intuitionistic Logic \cite{Nick_DeJong} and Linear Temporal Logic \cite{demri2016temporal}. We provide concise introduction is provided in case, however  readers are encouraged read the above cited literature for a thorough understanding.

\subsection{Finitely valued logics}
\label{sec:many-valued}

A $k$-valued logic is a logic whose expressions denote truth values taken from a set $A=\{a_1, \ldots, a_k\}$ of truth values. Thus, every connective in such a logic, as well as every formula in such a logic, can be viewed as denoting a finitary operation over $A$. Therefore, in order to study the relative expressive power of different (syntactic fragments of) $k$-valued logics, one must consider clones over a set
$A=\{a_1, \ldots, a_k\}$. These may also be called \emph{$k$-valued clones}. Note that Boolean clones are a special case of $k$-valued clones, namely where $k=2$.

Unfortunately, for $|A|>2$, the lattice $\mathcal{C}_A$ of all clones over $A$
is substantially more rich than the Post lattice. In particular, it
has cardinality of the continuum \cite{n_many},  and a complete representation similar to the one depicted in Figure~\ref{fig:Post_Lat} is not known (although partial results are known for some restricted sublattices, such as the lattice of clones of self-dual functions over a 3-element set~\cite{Zhuk}). 

On the bright side, the clone membership
problem and the clone containment problem are still decidable for $k$-valued clones.

\begin{fact}
    The following are decidable:
    \begin{enumerate}
    \item given a finite set $O\subseteq \finops{\{a_1,\ldots, a_k\}}$ and $f\in \finops{\{a_1,\ldots, a_k\}}$,  does $f\in \cloneof{O}$?
    \item given finite sets $O,O'\subseteq \finops{\{a_1,\ldots, a_k\}}$, is it the case that $O\preceq O'$?
    \end{enumerate}
\end{fact}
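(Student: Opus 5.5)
Both items reduce to a finite search over finitary operations on the fixed finite set $A=\{a_1,\ldots,a_k\}$. Item (2) follows from item (1): $O\preceq O'$ holds iff every $f\in O$ belongs to $\cloneof{O'}$, which is a finite conjunction of membership tests. So the work is in item (1).

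For item (1), let $f$ be $n$-ary. The plan is to compute the $n$-ary part of the generated clone, $\cloneof{O}^{(n)}=\cloneof{O}\cap A^{A^n}$, explicitly. This set is a subset of the finite set of all $n$-ary operations on $A$, which has $k^{k^n}$ elements. We then check whether $f$ lies in it.

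The key observation is that $\cloneof{O}^{(n)}$ is the smallest set $S$ of $n$-ary operations that
\begin{itemize}
\item contains the projections $\pi^n_1,\ldots,\pi^n_n$, and
\item is closed under $S\ni g_1,\ldots,g_m\mapsto h(g_1,\ldots,g_m)$ for every $m$-ary $h\in O$.
\end{itemize}
The inclusion $\supseteq$ is clear. For $\subseteq$, we show by induction on terms that every $n$-ary term operation over $O$ (a term in variables $x_1,\ldots,x_n$) lies in $S$. We also use the standard fact that the clone generated by $O$ consists exactly of the term operations of $O$ together with the projections. Composition of term operations is again a term operation, and terms in more than $n$ variables that define $n$-ary functions can be taken to use only $x_1,\ldots,x_n$ after identification. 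The identification step is immediate because any $n$-ary member of the clone arises from a term in $x_1,\ldots,x_n$.

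Operations are represented by their finite tables. The algorithm starts from the projections and repeatedly applies each $h\in O$ to all tuples of already-found operations, until no new operation appears. Because the ambient set is finite, this saturation terminates after at most $k^{k^n}$ rounds, and each round is a finite computation.

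The only step needing care is the term-induction argument that the saturation captures every $n$-ary member of $\cloneof{O}$. One must confirm that no intermediate operation of higher arity is ever needed. This holds because the clone's $n$-ary part is closed under composition with the basic operations applied directly to $n$-ary operations; I expect this to be routine. If constants are treated as unary constant operations, as in Remark~\ref{rem:zeroary}, nothing changes.
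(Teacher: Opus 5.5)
Your proposal is correct and is essentially the paper's argument. The paper also decides membership of an $n$-ary $f$ by saturating the projections under the operations of $O$ inside the finite set of operations of arity at most $n$, and it treats the two problems as interreducible, matching your reduction of (2) to finitely many membership tests; your restriction to exactly the $n$-ary part, justified by induction on terms in $x_1,\ldots,x_n$, is a harmless sharpening of the paper's ``arity at most $n$'' formulation.
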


Note that the above two problems are computationally equivalent to each other.
The decidability argument is the same as for tabular modal logics: in order to test whether an $n$-ary operation $f$ belongs to $\cloneof{O}$, it suffices 
to generate all functions of arity at most $n$ belonging to $\cloneof{O}$, which can be done by starting with the projections and closing under the operations in $O$ through a recursive saturation process that terminates when no new functions of arity at most $n$ are generated. 
Indeed, the clone membership function for arbitrary finite domains is \class{EXPTIME}-complete if the input functions are given as truth tables \cite{Bergmann1999}, and 
\class{2EXPTIME}-complete if they are given succinctly as circuits \cite{Vollmer_Complexity}.

Criteria for expressive completeness were extensively studied in~\cite{kuznetsov1956,yablonski1958,maltsev1965,rosenberg1965,yablonski1986}.
In particular, it was Kusnetsov~\cite{kuznetsov1956} who first established a decideable criterion for
the expressive completeness of $k$-valued logics. 

\subsection{Intuitionistic Logic}
Intuitionistic Propositional Logic (\logic{IPC}) is, from a proof theoretic point of views, 
a sublogic of classical propositional logic obtained by dropping the Law of Excluded Middle (i.e., $\vdash p\lor \neg p$) \cite{CZ97}. It captures intuitionistic reasoning, where truth is understood in terms of constructive provability rather than classical bivalence. The algebraic semantics of $\logic{IPC}$ is given by Heyting algebras \cite{CZ97, Esakia19}. Thus, in the same way that we defined modal clones, for a normal modal logic $\Lambda$, as 
subclones of the clone generated by the 
Lindenbaum-Tarski algebra $\mathcal{A}_\Lambda$,  clones for \logic{IPC} can be defined as subclones of the clone generated by the free Heyting algebra (over a countably infinite set of generators). In this way, we again obtain
a correspondence between clones for \logic{IPC} and fragments $\logic{IPC}_{\Phi}$ (for $\Phi$ a set of $\logic{IPC}$-formulas). 

\begin{example}
    $\logic{IPC}_{\{\to\}}$ is the syntactic fragment of $\logic{IPC}$ consisting of formulas built up from propositional variables using only the connective $\to$, which is also known as the \emph{implicative fragment}.
    Similarly, $\logic{IPC}_{\{\to,\land\}}$ is known as the \emph{implicative-conjunctive fragment} and $\logic{IPC}_{\{\to,\land, \bot\}}$ as the \emph{implicative-conjunctive-negation fragment}.
\end{example}

Various results have been obtained for such fragments. For example, the extensions of $\logic{IPC}_{\{\to,\land\}}$ and $\logic{IPC}_{\{\to,\land, \bot\}}$ that have the Craig interpolation property have recently been fully characterized in \cite{Dekker2020Interpolation}. 

All the same questions we considered for modal fragments can now be considered for fragments of \logic{IPC} in the same way.  
Although relatively little is known, one 
important positive result was obtained in~\cite{Ratsa71} namely that
the \emph{expressive completeness} problem 
(i.e., given a finite set of formulas $\Phi$, 
is it the case that \emph{every} formula is provably equivalent in $\logic{IPC}$
to a formula in $\logic{IPC}_{\Phi}$)
is decidable. This was proved by showing that there are, up to equivalence, only finitely many expressively pre-complete fragments of the form $\logic{IPC}_{\Phi}$. It is interesting
to contrast this with Theorem~\ref{thm:GL-precomplete}, according to which there
are infinitely many expressively pre-complete modal fragments with respect to \logic{GL}.

\subsection{(Linear) Temporal Logic}
The formulas of Linear Temporal Logic (LTL) is built using a finite set of propositional variables $AP$, and closed under the propositional operators $\land,\neg$, and temporal modal operators $U,S, F, G$ and $X$. Formally, 
$$\varphi \Coloneqq p\;|\; \varphi \land \psi\;|\;\neg \varphi\;|\; \varphi U\psi \;|\; \varphi S \psi\;|\; X\varphi$$ 
The satisfaction of LTL formulas is through infinite sequences, having entries from $2^{AP}$. The satisfiability relation, $\models$ say, is defined as per convention for the propositional variables and operators (see \cite{Blackburn} Ch 6.3 for details). The modal operators are defined as follows; Suppose $w=w_0,w_1,w_2\dots$ is a (infinite) sequence, and denote $w^i=w_i,w_{i+1},w_{i+2}$, i.e. the suffix of sequence $w$, starting from $w_i$. 
\begin{align*}
    &w\models X\varphi, \text{ if }\; w^1\models \varphi\\
    &w\models \varphi U\psi ,\text{ if }\;\exists i (w^i\models \psi\text{ and for any } 0\leq k<i\; (w^k\models \varphi))\\
    &w\models \varphi S\psi, \text{ if }\; \exists k,v (v^k=w, v\models \varphi \text{ and for any $0\leq i <k$ }(v^i\models \psi))\\
    &w\models F\varphi \text{ if }\exists i (w^i\models \varphi)\\
    &w\models G\varphi \text{ if }\forall i (w^i\models \varphi)
\end{align*}
Temporal operators are in-fact modal operators, since the infinite sequences correspond to a particular class of Kripke frames. Moreover the temporal operator $U$ cannot be defined in basic modal Logic (see \cite{Blackburn}), proving LTL to be a richer modal language. 

The satisfiability problem of this richer modal language is known to be $\class{PSPACE}$-complete. It was additionally proven in \cite{Clarke_Sistla} that the problem is either $\class{NP}$-complete or $\class{PSPACE}$-complete, depending on the temporal operators allowed in the formula. Closely related to the satisfiability problem, is the model-checking problem.  
\begin{theorem}[\cite{Bauland2008:LTL}]
Let $M\subseteq\{F,G,X,U,S\}$ and let $O$ be a set of Boolean functions.
\begin{enumerate}
    \item If $O \preceq \{\lor,\leftrightarrow\}$ or $O \preceq \{\neg,\maj\}$, the satisfiability problem for LTL$_{M,O}$ is trivial.
    \item If $O \preceq \{\land,\lor,\top,\bot\}$ or $O \preceq \{\neg,\top,\bot\}$, the satisfiability problem for LTL$_{M,O}$ is solvable in polynomial time.
    \item If $\{x\land\neg y\}\preceq O$, then the satisfiability problem for LTL$_{M,O}$ is $\class{NP}$-complete if $M\subseteq\{F,G\}$ or $M=\{X\}$ and $\class{PSPACE}$-complete otherwise.
    \item Otherwise, $\{\oplus\}\preceq O\preceq \{\oplus,\bot\}$, and in this case the complexity of the satisfiability problem for $LTL_{M,O}$ is unknown.
    \end{enumerate}
\end{theorem}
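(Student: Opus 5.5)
The plan is to treat the four items separately, reusing the propositional results of Section~\ref{ssec:complexity-log-reasoning}, and to lean on \emph{constant} models as the main semantic tool. A constant model is an infinite sequence $w$ with $w_i=w_0$ for all $i$, extended to the past for $S$. On such a model every temporal operator collapses to a Boolean one:
\[X\varphi\equiv F\varphi\equiv G\varphi\equiv\varphi,\qquad \varphi U\psi\equiv\psi,\qquad \varphi S\psi\equiv\varphi .\]
I still have to check the last of these against the exact semantics of $S$ stated in the paper. Once these collapses are in place, any $\mathrm{LTL}_{M,O}$-formula $\phi$ evaluated on a constant model agrees with a propositional $\PL_O$-formula $\phi^\flat$ obtained by erasing the temporal operators in this way.

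For item~1, note that $\cloneof{\{\lor,\leftrightarrow\}}=\CloneR_1$ and $\cloneof{\{\neg,\maj\}}=\CloneD$. In the $\CloneR_1$ case, $\phi^\flat$ is $1$-reproducing, so the constant all-true model satisfies $\phi$. In the $\CloneD$ case, $\phi^\flat$ is self-dual, so it is true under exactly one of the all-true and all-false assignments, and the corresponding constant model satisfies $\phi$. Either way every formula is satisfiable, which makes the problem trivial.

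For item~2 there are two subcases.
\begin{itemize}
\item For $O\preceq\{\land,\lor,\top,\bot\}$, all temporal operators are monotone in their arguments. Hence $\phi$ is satisfiable iff it holds at the constant all-true model. By the collapse this amounts to evaluating the monotone formula $\phi^\flat$ with all variables true, which takes polynomial time.
\item For $O\preceq\{\neg,\top,\bot\}$, I would give a rewriting argument. Push negations inward using the dualities $\neg X\equiv X\neg$ and $\neg F\equiv G\neg$; since $\land$ and $\lor$ are absent, a binary $U$ or $S$ can only combine two subformulas, each built from a single literal or a constant. I would then show by induction that every subformula is equivalent to one of polynomially many normal forms, namely a temporal modality over a literal or a constant. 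Satisfiability of these normal forms can be decided by a direct case analysis.
\end{itemize}

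For item~3, upper bounds come from Sistla--Clarke: the problem is in $\class{NP}$ for $M\subseteq\{F,G\}$ or $M=\{X\}$, and in $\class{PSPACE}$ in general. For the lower bounds, recall that $\cloneof{\{x\land\neg y\}}=\CloneS_1$ and $\CloneS_1\sqcup\{\top\}=\CloneBF$. The approach is to simulate $\top$ by a fresh variable $t$ and to force $t$ to be true everywhere relevant. With $G$ available this is done by the conjunct $t\land Gt$. With only $X$, one uses $\bigwedge_{k\le d}X^k t$, where $d$ is the nesting depth. With $U$ or $S$, one uses a similar guard, for instance replacing $\top$ inside $\alpha U\beta$ by $t$ and relying on the truth of the formula being determined by a bounded prefix. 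The resulting translated formula is equisatisfiable with the original full-base formula. To keep this translation polynomial under tree-size I would invoke robustness (Theorem~\ref{thm:robust}), which gives a polynomial translation into a base containing $\land$. $\class{NP}$-hardness even for $M=\emptyset$ follows from Theorem~\ref{thm:prop-sat}, and the $\class{PSPACE}$-hardness cases reduce from Sistla--Clarke's hard fragments, namely $\{U\}$, $\{X,F\}$, and so on.

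Item~4 needs no proof: a visual check of Post's lattice shows that any clone $O$ not covered by items~1--3 satisfies $\{\oplus\}\preceq O\preceq\{\oplus,\bot\}$.

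I expect the main obstacle to be the $\top$-simulation in item~3 for operator sets lacking $G$, in particular $M=\{U\}$ or $M=\{S\}$ alone. There, forcing $t$ along all positions that matter is delicate, and I would first try to express $\top$ directly as $t\,U\,t$-style guards before resorting to a depth-bounded argument. The $\{\neg,\top,\bot\}$ normal-form analysis in item~2 is the secondary difficulty.
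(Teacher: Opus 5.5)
Your constant-model arguments for item~1 and for the monotone half of item~2 are fine, as are the upper bounds and the NP-hardness in item~3. On your pending check about $S$: under the survey's literal clause, which quantifies over arbitrary past extensions $v$, the formula $p\,S\,p$ is valid. So $\neg(p\,S\,p)$ would be an unsatisfiable formula over $\{\neg\}\preceq\{\neg,\maj\}$. Both your collapse and item~1 therefore presuppose the standard Since, evaluated over the actual history of the model.

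\textbf{Item 3, PSPACE-hardness.} Simulating $\top$ is the whole difficulty, and you only supply it when $G\in M$. The case $M=\{X\}$ needs no simulation at all, since its NP-hardness is propositional. Every PSPACE case without $G$ stays open.
\begin{itemize}
\item $\{X,F\}$ does not occur in your case split. $F$ has unbounded lookahead, and $Gt\equiv\neg F\neg t$ needs a bare $\neg t$.
\item For $U$, the devices you suggest fail: $t\,U\,t\equiv t$, and $U$-formulas are not determined by a bounded prefix, e.g.\ $\neg(\top\,U\,p)\equiv G\neg p$.
\item More fundamentally, $\CloneS_1\subseteq\CloneR_0$. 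By your own collapse, every formula over a base $O$ with $\cloneof{O}=\CloneS_1$ is false in the all-false constant model. Hence neither $\neg t$ nor any valid guard is expressible, which also blocks the route $Gt\equiv\neg(t\,U\,\neg t)$.
\end{itemize}
What is needed is a relativisation in which every negation and every temporal witness is guarded by $t$, so that the formula only speaks about a $t$-region of the model. You would then have to redo a Sistla--Clarke-style encoding of PSPACE computations so that it survives this relativisation. That is a new hardness proof, not a black-box reduction.

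The $S$-cases behind ``and so on'' also need their own argument. No PSPACE-hard fragment on Sistla--Clarke's list lies inside $\{S\}$ or $\{F,S\}$. Worse, with evaluation at position~$0$, $M=\{S\}$ collapses to propositional logic. You must first fix a semantics under which the claim holds, e.g.\ satisfaction at an arbitrary position.

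Finally, Theorem~\ref{thm:robust} gives tree-size translations \emph{into} $O$ only when $O$ itself is robust (e.g.\ $\land\in O$). It says nothing for an arbitrary generator of $\CloneS_1$, so the size bookkeeping needs a separate argument.

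\textbf{Item 2 ($\CloneN$) and item 4.} Your normal-form claim fails as soon as $U$ or $S$ is present. The arguments of $U$ are arbitrary formulas: e.g.\ $(p\,U\,q)\,U\,r$ has three variables and is not a modality over a literal. Moreover, $\neg$ cannot be pushed through $U$ without a release operator.

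Your own tool repairs this. Follow the path from the root that:
\begin{itemize}
\item at each Boolean connective, enters the single argument it depends on (every function in $\CloneN$ is a projection, a negated projection, or a constant);
\item passes through $X,F,G$;
\item at $U$ or $S$, enters the argument your collapse retains.
\end{itemize}
If this path ends in a variable, a constant model satisfies $\phi$. If it ends in a constant, $\phi$ is equivalent on \emph{all} models to a constant. This is because $\alpha\,U\,\top\equiv\top$ and $\alpha\,U\,\bot\equiv\bot$ (likewise for $S$), and all other operators on the path map constants to constants.

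For item~4, the ``visual check'' actually refutes the statement as written. Take $\CloneL=\cloneof{\{\oplus,\top\}}$:
\begin{itemize}
\item it is not contained in $\CloneR_1$, since it contains $\bot=x\oplus x$;
\item it is not contained in $\CloneD$, $\CloneM$ or $\CloneN$;
\item it does not contain $\CloneS_1$, since $x\land\neg y$ is not linear.
\end{itemize}
Yet $\top\notin\CloneL_0=\cloneof{\{\oplus,\bot\}}$. The residual region is therefore $\{\oplus\}\preceq O\preceq\{\oplus,\top\}$, i.e.\ the two clones $\CloneL_0$ and $\CloneL$, which the cited source leaves open.
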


A classification of the model-checking problem for LTL fragments has also been proposed in \cite{DBLP:journals/tocl/BaulandM0SSV11}. 
Furthermore, the temporal logics CTL and CTL$^*$ have been studied from this perspective~ \cite{DBLP:journals/ijfcs/MeierTVM09, DBLP:journals/ijfcs/MeierTVM15, DBLP:journals/acta/KrebsMM19}. 
Finally, a parameterized study of temporal logic fragments shows a largely negative picture concerning the parameters formula treewidth and pathwidth. Most fragments remain hard for a parameterized class called $\class{W}[1]$, and only fragments involving the $X$-operator are shown to be fixed-parameter tractable.

\subsection{Hybrid Logic}

Hybrid logics \cite{ARECES_TENCATE} take as their starting point the observation that the basic modal language (interpreted over Kripke structures) can  be viewed as a fragment of first-order logic. Specifically, hybrid languages are extensions of the basic modal language with operators that increase the expressive power of the language to bring it closer to the full expressive power of first order logic. 
Two operators that feature prominently in hybrid logics are the \emph{@-operator} (also known as the \emph{satisfaction operator}) and the \emph{$\downarrow$-binder}.
Together they allow us to write formulas  such as $\downarrow\! x.\Diamond\Diamond x$,
which holds at a world $w$ in a Kripke model if and only if $w$ is reachable from itself in two steps along the accessibility relation. 
The satisfiability problem of hybrid logics with the $\downarrow$-binder is known to be undecidable. In \cite{Meier2010:hybrid}, the authors investigate the effect of restricting the allowed Boolean connectives, as well as the allowed modal and hybrid operators, on decidability and  the complexity of the satisfiability problem over arbitrary, transitive, total frames, and frames based on equivalence relations (i.e., corresponding to the modal logics \logic{K}, \logic{K4}, \logic{T} and \logic{S5}). We can think of this as 
a hybrid-logic analogue of classification of simple modal fragments with respect to satisfiability provided by Theorem~\ref{thm:gensatmod}. 
Finally, in a successor paper~\cite{DBLP:conf/aiml/GollerMM0TW12}, the authors show that the satisfiability problem remains non-elementary over linear orders, but its complexity
drops to $\class{PSPACE}$-completeness over $\mathbb N$. 
They categorize the strict fragments arising from different combinations of modal and hybrid operators into NP-complete and tractable (i.e., complete for $\class{NC}^1$
or $\class{LOGSPACE}$).

\subsection{Nonmonotonic logics} 
Classical reasoning is monotonic, meaning that adding further information allows us to deduce the same or even more conclusions than before. 
Common-sense reasoning, however, is often nonmonotonic: adding new information may invalidate previous conclusions. Nonmonotonic logics aim to capture this type of reasoning. 
Examples of nonmonotonic logics include Default Logic \cite{DBLP:journals/ai/Reiter80}, Autoepistemic Logic \cite{DBLP:journals/ai/Moore93}, Circumscription \cite{DBLP:journals/ai/McCarthy80}, and various forms of Abductive Reasoning \cite{DBLP:reference/ml/Kakas10}. 
It is beyond the scope of this article to go into deep detail here, yet, we want to provide pointers to the literature where classifications are given under the scope of Post's lattice in this setting. 
Namely for Default Logic~\cite{DBLP:journals/logcom/BeyersdorffMTV12}, Autoepistemic Logic~\cite{DBLP:journals/tocl/CreignouMVT12}, Circumscription~\cite{DBLP:journals/mst/Thomas12}, Abduction~\cite{DBLP:journals/logcom/CreignouST12}. 
Recently, even a parameterized perspective in the setting of abduction has been analyzed~\cite{DBLP:journals/logcom/MahmoodMS21}.
Furthermore, logic-based argumentation~\cite{DBLP:conf/aaai/BesnardH00}, a formalism used in the modern areas of AI, has been studied from this perspective, recently~\cite{DBLP:journals/tocl/0002M023}.

\subsection{Modal Dependence Logic}
Modal dependence logic was recently introduced in \cite{Vaananen2008} by extending
modal logic by a dependence atom $\mathrm{dep}(\cdot)$. 
Semantic-wise, modal dependence logic follows the approach of classical (FO) dependence logic~\cite{vaananen2007dependence}. 
Here, the core idea is to evaluate formulas not with respect to single worlds in a Kripke model, but rather with respect to sets of worlds, called \emph{teams}. 
From this perspective, the dependence atom $\mathrm{dep}(p_1,\ldots,p_n,q)$ states that the truth value of proposition $q$ is functionally determined by the truth values of propositions $p_1,\ldots,p_n$ in the given team. 

In \cite{Muller13:model}, the authors investigate extended versions of \emph{modal
dependence logic}, parametrized by an arbitrary set of Boolean connectives.  Specifically, they study the computational complexity of the model checking problem, obtaining a complete classification that applies to all fragments of the language allowing for the modalities $\Diamond$ and $\Box$ and the dependence atom and an arbitrary set of Boolean functions as connectives.

\section{Conclusion}
\label{sec:conclusion}

In this paper, we surveyed research that systematically investigates syntactic fragments of propositional logic
and modal logics generated by a given set of operators. Our discussion of the propositional case is included
primarily as background: it provides the conceptual and technical template that motivates, and helps to frame,
the corresponding questions for modal fragments.

On the propositional side, the picture is well established. In particular, Post's lattice provides a crisp
organizing principle: expressive power is governed by the generated Boolean clone, and a wide range of
meta-questions and algorithmic tasks (e.g., satisfiability, tautology, counting, implication/equivalence,
evaluation, minimization) admit clean complexity classifications in terms of the position of the basis in
Post's lattice.

On the modal side, the most general analogous framework, allowing a basis of ``connectives'' defined by arbitrary
modal formulas, is algebraically natural (via Lindenbaum--Tarski algebras and modal clones), but quickly
becomes unruly: many of the attractive properties of Post's lattice fail for lattices of modal clones. In
particular, containment and expressibility problems are typically undecidable outside the locally tabular
setting. \emph{Simple} modal fragments---those obtained by choosing Boolean-function connectives together with a
specified subset of modal operators---form a well-behaved special case. They retain enough structure to
be able to take advantage of the organizing power of Post's lattice, leading to decidability and systematic classification
results for many standard problems, while still being expressive enough to cover a wide range of fragments
studied in modal and description logics, temporal logics, and related systems.

We also added a learning-theoretic dimension, collecting and extending results on
teachability/learnability from examples for propositional fragments and (simple) modal fragments. 

We close by highlighting a few open problems that
we identified in the paper or that naturally suggest themselves based on the story told in the paper.

\begin{enumerate}
    \item \textbf{Modal clones beyond the transitive setting.}
  For transitive modal logics, there is a  good understanding of which logics admit an
  effective analysis of modal clones, and hence of the fragments generated by modal connectives (cf.~Theorem~\ref{thm:expressibility} and Theorem~\ref{thm:expressibility-again}, respectively). In contrast,
  for non-transitive logics the overall landscape is less clear. In particular, the decidability of the containment problem for modal clones with respect to $\mathbf{K}$
  remains open.

    \item \textbf{The ``affine'' mystery and recurring classification gaps.}
Some of the classification results surveyed in Sections~\ref{ssec:complexity-log-reasoning} and \ref{sec:other_logics} are incomplete, in the sense that they lacked small cases where a complete classification had eluded the authors. 
Interestingly, the gaps in question are typically situated at the affine clones, i.e., the cases where $\{\threeXor\}\preceq O \preceq\{\xor,\top,\bot\}$. 
There is a list of publications that have open cases for such clones, e.g., \cite{DBLP:journals/tocl/BaulandM0SSV11,Bauland2008:LTL,DBLP:journals/logcom/CreignouST12,DBLP:journals/argcom/Creignou0TW11,Meier2010:hybrid,DBLP:phd/de/Reith2002,DBLP:journals/mst/Thomas12}. 
As a first step in approaching such affine cases, a recent work~\cite{krebs2025symmetryyieldsnphardnessaffine} has answered one of the questions posed in \cite{Gen_Sat_Mod}. Nevertheless, the mystery surrounding the affine clones still needs to be solved.

    \item \textbf{Beyond simple modal fragments}
    Let us call a modal formula \emph{shallow} if every occurrence of a proposition letter is in the scope of at most one modal operator. 
    Let us call a modal formula \emph{uniform-depth} if every occurrence of a proposition letter is in the scope of an equal number of modal operators. Every simple set consists of shallow uniform-depth formulas, but the converse is not true. In particular, the third and fifth example in Example~\ref{ex:modal-fragments} are shallow and uniform-depth and not simple. Could some of the positive results regarding simple modal fragments be extended to modal fragments defined by sets of
    formulas that are shallow and uniform-depth?
In particular, is the containment problem decidable for shallow, uniform-depth sets?

\item \textbf{Bisimulations for simple modal fragments}
Yde Venema (personal communication) suggested the following question, which to our knowledge remains open.
For several specific modal and description-logic fragments, suitably tailored variants of
bisimulation have been introduced that capture their expressive power; see, e.g., \cite{KurtoninaDeRijke1997SimulatingWithoutNegation, KurtoninaDeRijke1999Expressiveness}.
Is there is a general, Boolean-basis-parameterized notion of
bisimulation (or, more generally, model-comparison game) that characterizes
expressive equivalence for all simple modal fragments?
A plausible route is to combine the modal back-and-forth conditions of bisimulations with algebraic invariants
governing Boolean expressibility (i.e., polymorphisms
\cite{Jeavons1998AlgebraicStructure, JeavonsCohenGyssens1997Closure}). The categorical framework of
game comonads~\cite{AbramskyReggio2024:invitation, abramsky2025existentialpositivegamescomonadic}  may
 provide a suitable scaffolding for developing such a ``basis-aware'' bisimulation game.
\end{enumerate}

\begin{acks}
The authors would like to thank 
I think we should thank Andrei Russu, Alexei Muravitsky, Alex Citkin for helpful conversations
regrading results of Alexander Kuznetsov and Metodie Ra\c{t}\v{a};
Victor Dalmau and Raoul Koudijs for their feedback
on the results pertaining to learnability and teachability (which were obtained as part of Arunavo Ganguly's MSc thesis) and Heribert Volmer for helpful conversations. 
\end{acks}

\bibliographystyle{alpha}
\bibliography{bib}

\appendix
\section{Proof of Theorem~\ref{thm:prop-unique}}\label{AppendixI}

The positive direction of Theorem~\ref{thm:prop-unique} (i.e., the direction from 1 to 2)
will be given by the following
two results:

\begin{theorem}[\cite{Anthony92}]\label{thm:positive_analysis_monotone} For any $\PL_{\{\land, \lor, \top,\perp\}}$-formula $\phi$, the following are equivalent:
\begin{enumerate}
    \item
    $\phi$ is uniquely characterized with respect to $\PL_{\{\land, \lor, \top,\perp\}}$ by $m$ positive and $n$ negative examples.
    \item
    $\phi$ is equivalent to a DNF with $m$ terms, and to a CNF with $n$ clauses.
\end{enumerate}   
\end{theorem}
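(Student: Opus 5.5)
We prove both directions by working with the standard characterization of $\PL_{\{\land,\lor,\top,\perp\}}$: its formulas define exactly the monotone Boolean functions, including the constants. For a monotone function $f$, let $\min(f)$ be its set of minimal true points and $\max(f)$ its set of maximal false points. A DNF with $m$ terms exists iff $|\min(f)|\le m$, since every term of a monotone DNF can be taken to be a prime implicant, and prime implicants correspond to minimal true points. Dually, a CNF with $n$ clauses exists iff $|\max(f)|\le n$. Here a truth assignment is identified with the set of variables it makes true. Since $\phi$ has finitely many variables, and examples may assign arbitrary values to the remaining variables, we will set all variables outside those of $\phi$ to $0$. This guarantees that competitor formulas are also forced correctly on the relevant points.

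For (2)$\Rightarrow$(1), take as positive examples the $m$ points of $\min(\phi)$ and as negative examples the $n$ points of $\max(\phi)$, extended by $0$ elsewhere. Suppose a monotone $\psi$ fits these examples.
\begin{itemize}
\item If $V\models\phi$, then $V$ lies above some minimal true point $u$. Since $\psi(u)=1$, monotonicity gives $\psi(V)=1$.
\item If $V\not\models\phi$, then $V$ lies below some maximal false point $w$. Since $\psi(w)=0$, monotonicity gives $\psi(V)=0$.
\end{itemize}
The key point is that the ordering on all variables keeps this valid when $\psi$ mentions extra variables. A point $V$ that is true (resp.\ false) for $\phi$ dominates (resp.\ is dominated by) the zero-extended minimal (resp.\ maximal) point. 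So $\psi\equiv\phi$. Constants are degenerate cases ($m=0$ or $n=0$) and should be checked separately.

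For (1)$\Rightarrow$(2), suppose $E$ uniquely characterizes $\phi$ with $m$ positive and $n$ negative examples. We show $|\min(\phi)|\le m$; the bound $|\max(\phi)|\le n$ follows dually. Define $\psi$ as the monotone function generated by the positive examples, i.e.\ the DNF $\bigvee_{(V,1)\in E}\bigwedge_{V(x)=1}x$, which has at most $m$ terms.
\begin{itemize}
\item $\psi$ fits all positive examples by construction.
\item $\psi\le\phi$ by monotonicity of $\phi$, since each positive example is a true point of $\phi$.
\item Hence every negative example $V$ has $\phi(V)=0$ and therefore $\psi(V)=0$.
\end{itemize}
So $\psi$ fits $E$, and by unique characterization $\psi\equiv\phi$. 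Thus $\phi$ has a DNF with at most $m$ terms. The analogous construction with the CNF $\bigwedge_{(V,0)\in E}\bigvee_{V(x)=0}x$ gives at most $n$ clauses. Depending on whether the statement is read as "exactly" or "at most" $m$ and $n$, padding with duplicate terms or examples closes the gap.

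The main obstacle is the infinitely many variables, together with the $\top,\perp$ edge cases. In particular, a term for a positive example could a priori involve variables outside $\phi$. Here $\psi\equiv\phi$ forces those variables to be irrelevant, so no further argument is needed. The all-zero example, which corresponds to an empty term, gives $\top$, and this is allowed because $\top$ is in the basis.
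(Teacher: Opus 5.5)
\textbf{Gap in (2)$\Rightarrow$(1).} The negative examples must not be zero-extended. You claim that every false point $V$ of $\phi$ is dominated by a zero-extended maximal false point. This fails as soon as $V$ makes some variable outside $\mathrm{var}(\phi)$ true.

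Concretely, take $\phi=p$. Your examples are $(V_p,1)$ and $(V_\bot,0)$, where $V_\bot$ makes every variable false. The formula $p\lor q$ fits both examples but is not equivalent to $p$. The same failure occurs for $\phi\equiv\bot$: a fresh variable $q$ fits your examples.

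The fix is to extend each maximal false point $w$ by setting \emph{all} variables outside $\mathrm{var}(\phi)$ to $1$; call the result $w^{1}$. Then every $V$ with $V\not\models\phi$ satisfies $V\le w^{1}$ for some such $w$, and monotonicity of $\psi$ gives $\psi(V)=0$. The positive examples stay zero-extended exactly as you have them.

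This is not a cosmetic choice. Suppose negative examples could only make finitely many variables true. Then for any finite set of examples for $p$ there is a variable $r$ that is false in every negative example, and $p\lor r$ fits. So $p$ would have no finite characterization relative to the unrestricted fragment. The statement needs assignments that make infinitely many variables true, which the paper's framework allows (compare $V_\top$ in the proof of Theorem~\ref{thm:positive_analysis_neg}).

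\textbf{Issues in (1)$\Rightarrow$(2).} The idea is sound, but as written the term $\bigwedge_{V(x)=1}x$ and the clause $\bigvee_{V(x)=0}x$ need not be formulas. An example may make infinitely many variables true, as the corrected negative examples do. It may also make infinitely many variables false, as all of your zero-extended examples do, which makes your CNF clauses infinite. Your remark that $\psi\equiv\phi$ renders extra variables irrelevant presupposes that $\psi$ is a finite formula in the first place.

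Intersect with $\mathrm{var}(\phi)$ instead: take $\bigvee_{(V,1)\in E}\bigwedge_{x\in\mathrm{var}(\phi),\,V(x)=1}x$. It still fits the positive examples. It is still below $\phi$, because $\phi$ depends only on $\mathrm{var}(\phi)$, so it is false on the negative examples. The CNF is handled dually.

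Finally, if (2) allows DNFs with negated literals, your step ``DNF with $m$ terms implies $|\min(\phi)|\le m$'' needs one more line. Deleting the negative literals from an implicant of a monotone function yields an implicant, so a monotone DNF with at most $m$ terms exists. In such a DNF, each minimal true point $u$ is covered by a term whose variable set is a true point contained in $u$, hence equal to $u$. So distinct minimal true points require distinct terms.
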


\begin{theorem}
\label{thm:positive_analysis_neg}
    Every $\PL_{\{\neg, \top\}}$ formula $\phi$ has a unique characterization of size 2. 
\end{theorem}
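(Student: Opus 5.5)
Every $\PL_{\{\neg,\top\}}$-formula $\phi$ is built from a single propositional variable $x$ or from the constant $\top$ by applying $\neg$ and $\top$. Since $\top$ is a unary constant operation, any subformula of the form $\top(\psi)$ just evaluates to true, so I would first observe, by induction on $\phi$, that $\phi$ is equivalent to exactly one of four formulas: $\top$, $\bot$ (i.e.\ $\neg\top(\cdot)$), $x$, or $\neg x$, for some variable $x$. Concretely, $\phi$ is equivalent to a constant if some occurrence of $\top$ appears in it, and then the parity of the negations above the outermost $\top$ decides which constant. Otherwise $\phi=\neg^k x$ for the unique variable $x$ occurring in $\phi$, and the parity of $k$ decides between $x$ and $\neg x$. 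So it suffices to give a teaching set of size $2$ for each of these four shapes, with respect to the whole class $\PL_{\{\neg,\top\}}$, whose members are likewise all equivalent to $\top$, $\bot$, $y$ or $\neg y$ for some variable $y$.

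For a literal $x$ (respectively $\neg x$), I would take the two examples $(V_x,1)$ and $(V_\emptyset,0)$ (respectively $(V_x,0)$ and $(V_\emptyset,1)$), where $V_x$ makes only $x$ true and $V_\emptyset$ makes every variable false. I then check that each competing shape fails to fit. The constants $\top$ and $\bot$ give the same value on both assignments, while the examples carry different labels. The literal $y$ with $y\neq x$ is false on both assignments. The literal $\neg y$ with $y\ne x$ is true on both. The literal of the opposite polarity on $x$ gets both labels flipped. Hence the only fitting formulas are those equivalent to $\phi$.

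For a constant $\top$ (respectively $\bot$), I would take $(V_\emptyset,1)$ and $(V_{\mathrm{all}},1)$ (respectively with label $0$), where $V_{\mathrm{all}}$ makes all variables true. Any literal $y$ or $\neg y$ takes different values on these two assignments, so it cannot fit. The opposite constant is ruled out immediately. This gives unique characterizations of size $2$ in all cases.

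The only step needing care is the normal-form claim. In particular, one has to treat $\top$ correctly as a unary operation (per Remark~\ref{rem:zeroary}), so that a formula like $\top(\neg x)$ is recognized as a constant; the rest is a finite case check. I also note that the assignment $V_{\mathrm{all}}$ sets infinitely many variables to true. If only finite-support assignments are allowed, I would instead use the assignment making $x$ true and everything else false, together with $V_\emptyset$, for constants. Then $y=x$ would not be separated, so in that setting I would take the labels from $\phi$ on $V_\emptyset$ and $V_{\{x,z\}}$ for some fixed variable $x$ and a second fixed variable $z$, and argue similarly. I expect this representation detail to be the main (minor) obstacle.
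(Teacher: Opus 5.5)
Your proposal is correct and is essentially the paper's proof, which uses exactly the same pairs: $\{(V_p,1),(V_\perp,0)\}$ and its label-flip for the literals, and $\{(V_\top,\ell),(V_\perp,\ell)\}$ for the constants. The paper leaves implicit the four-shape normal form that you spell out. One caveat on your closing aside: the paper's truth assignments are total, so your $V_{\mathrm{all}}$ is legitimate, whereas your finite-support fallback would actually fail, because for any finitely many finite-support assignments a fresh variable $w$ makes $\neg w$ fit all the positive examples for $\top$ (and $w$ fit all the negative examples for $\perp$).
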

\begin{proof}
    Let $V_p$ be the assignment such that $V_p(x)=1 \iff x=p$. Similarly, let $V_\top$ be the assignment such that for every $x$, $V_\top(x)=1$, and $V_\perp$ be the assignment such that for every $x$, $V_\perp(x)=0$.
    
    It follows $\{(V_p,1), (V_\perp, 0) \}$ uniquely characterizes $p$, whereas $\{(V_p,0), (V_\perp,1)\}$ uniquely characterizes $\neg p$. $\top$ is   uniquely characterized by $\{(V_\top, 1), (V_\perp, 1)\}$, and $\perp$ is  uniquely characterized by $\{(V_\top, 0), (V_\perp, 0)\}$.
\end{proof}

The negative direction will be 
proved by establishing suitable lower bounds.

\begin{theorem}
\label{thm:negative_analysis_xor}
For every finite set $\PROP$ and for each $\PL_{\{\threeXor\}}[\PROP]$-formula $\phi$, every unique characterization of $\phi$ w.r.t. $\PL_{\{\threeXor\}}[\PROP]$ must contain at least $|\PROP|-1$ many examples.  
\end{theorem}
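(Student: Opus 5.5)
The plan is to reduce the question to linear algebra over $\mathbb{F}_2$, using the fact that $\cloneof{\{\threeXor\}}=\CloneL_2$ (Table~\ref{tbl:posts_bases}). Fix $\PROP=\{p_1,\dots,p_n\}$. First I describe the functions defined by $\PL_{\{\threeXor\}}[\PROP]$-formulas. Every $\threeXor$-formula is equivalent to $\bigoplus_{i\in T}p_i$ for some $T\subseteq\{1,\dots,n\}$ with $|T|$ odd. This follows by induction on the formula: an XOR of three terms, each involving an odd number of variables, again involves an odd number of variables once duplicated variables cancel in pairs, since pair-cancellation preserves parity. Conversely, every such $T$ is realized, by nesting $\threeXor$ and using $p$ itself when $|T|=1$.

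Distinct sets $T$ give inequivalent formulas, because distinct linear forms over $\mathbb{F}_2$ differ on some assignment. So equivalence classes of $\PL_{\{\threeXor\}}[\PROP]$-formulas correspond bijectively to vectors $t\in\mathbb{F}_2^n$ with $\sum_i t_i=1$. This set $A$ is an affine subspace of dimension $n-1$.

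Next I translate examples into constraints. Identify an assignment $V$ with its vector $v\in\mathbb{F}_2^n$. The formula indexed by $t$ fits the labeled example $(V,\lab)$ iff $\langle t,v\rangle=\lab$ in $\mathbb{F}_2$, which is a single affine linear equation in $t$. For a set $E$ of $m$ labeled examples, the formulas fitting $E$ correspond to $A\cap H_1\cap\dots\cap H_m$, where each $H_j$ is the solution set of one affine equation.

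Finally I apply a dimension count. The solution set of the combined system (the equation $\sum_i t_i=1$ plus the $m$ example equations) is either empty or an affine subspace of dimension at least $n-1-m$. Suppose $E$ uniquely characterizes $\phi$. Then $\phi$'s vector lies in this set, so it is nonempty, and by the bijection above the set must be a single point. Hence $n-1-m\le 0$, i.e.\ $m\ge n-1=|\PROP|-1$.

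The only step needing care is the normal-form claim, specifically that no constants and no even-size XORs arise, which is what makes the ambient space have dimension exactly $n-1$. I would verify this directly by the induction above rather than merely citing the clone table. The rest is routine rank counting.
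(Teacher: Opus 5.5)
Your proof is correct and is essentially the paper's argument. The paper adds the positive example $(V_\top,1)$ and works with the parity functions of $\PL_{\{\xor\}}[\PROP]$ in all of $\mathbb{F}_2^n$; that extra example imposes exactly your equation $\sum_i t_i = 1$, and the paper then concludes with the same rank count (fewer than $n$ equations in $n$ unknowns have a non-unique solution). Working directly in the affine hyperplane of odd parities, as you do, is slightly cleaner because it skips the intermediate step of upgrading the characterization to $\PL_{\{\xor\}}[\PROP]$.
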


\begin{proof}
    Let $\phi$ be any $\PL_{\threeXor}[\PROP]$-formula, and 
    suppose for contradiction that there exists a uniquely characterizing set
$S$ for $\phi$ relative to $\PL_{\threeXor}[\PROP]$ with $|S|< n-1$ for $n=|\PROP|$. As a first step, we will upgrade $S$ to a set $S'$ that uniquely characterizes $\phi$ relative to the, slightly larger, propositional fragment 
$\PL_{\xor}[\PROP]$.
Let $S' = S \cup \{(v_\top,1)\}$, where $V_\top$ denotes the valuation that assigns $1$ to every
variable. It is easy to see that
every $\PL_{\threeXor}[\PROP]$-formula is true in $V_\top$, and hence $\phi$ fits $S'$.
We claim that $S'$ uniquely characterizes $\phi$ relative to
$\PL_{\xor}[\PROP]$.
Indeed, every $\PL_{\xor}[\PROP]$-formula computes a parity of some subset
$T\subseteq\{1,\dots,n\}$, i.e.\ the function
$x\mapsto \bigoplus_{i\in T} x_i$.  Evaluated at $V_\top$, this equals
$|T|\bmod 2$, so consistency with the positive example $(V_\top,1)$ forces
$|T|$ to be odd (an \emph{odd parity}). Every odd parity function has a 3XOR representation by grouping
its variables into triples and iterating the operation. Thus, within
$\PL_{\xor}[\PROP]$, the example $(V_\top,1)$ restricts attention exactly to
the functions definable in $\PL_{\threeXor}[\PROP]$.
 Since $S$ uniquely characterizes $\phi$
within $\PL_{\threeXor}[\PROP]$, it follows that $S'$ uniquely characterizes $\phi$ within
$\PL_{\xor}[\PROP]$.
Note that 
$|S'| < n$.

Next, we use the fact that 
$\PL_{\xor}[\PROP]$-formulas
have a natural linear-algebra
interpretation. We can 
view a valuation as a vector $x\in\mathbb{F}_2^n$, where
$\mathbb{F}_2$ is the two-element field, via $x_i=v(p_i)$.
By a \emph{parity function} we will mean a function $f_a$, for $a\in\mathbb{F}_2^n$, given by
\[
f_a(x) \;=\; a\cdot x \pmod 2,
\]
where $a\cdot x$ is the dot product over $\mathbb{F}_2$. 
We can now naturally view $S'=\{(V_1,lab_1), \ldots, (V_m,lab_m)\}$ as system of linear equations. 
More precisely, let $X$ be the $m\times n$-matrix where
$X_{i,j}=V_i(p_j)$, and let
$b$ be the length-$n$ vector consisting of the labels $lab_1, \ldots, lab_m$. Then 
the fitting $\PL_{\xor}[\PROP]$-formulas for $S'$ precisely 
correspond to the parity functions $f_a$ satisfying 
\[Xa=b \qquad \text{over $\mathbb{F}_2$}\]

Because $m<n$, the rank of the matrix $X$ is strictly less than $n$. 
It is a well-known fact in linear algebra that if a linear system $Xa=b$ 
is consistent and has rank $< n$, then solutions are not unique. Therefore, 
there must be multiple fitting $\PL_{\xor}[\PROP]$-formulas.
\end{proof}

We need similar lower bounds 
for several other propositional
fragments. To obtain these, 
we will use a suitable notion
of reduction that allows us 
to carry over these lower bounds.

\begin{definition}[PC reduction]
Let $\mathcal{C}_1=(C_1,E_1,\lambda_1)$ and $\mathcal{C}_2=(C_2,E_2,\lambda_2)$ be two concept classes. 
A \emph{reduction} from $\mathcal{C}_1$ to $\mathcal{C}_2$ is a pair of functions $(f,h)$ 
where $f\colon C_1\to C_2$ and $h\colon E_1\to E_2$ such that
\begin{enumerate}
    \item 
    For all $c\in C_1$ and $e\in E_1$, $e\in \lambda_1(c)$ iff 
    $h(e)\in \lambda_2(f(c))$, and
    \item 
    For each $e\in E_2$, either 
    $e$ belongs to $\lambda_2(f(c))$ for all $c\in C_1$, or $e$ belongs to  $\lambda_2(f(c))$ for no $c\in C_1$,
    or else there is an $e'\in E_1$ such that  $\{c\in C_1\mid e\in \lambda_2(f(c))\} = \{c\in C_1\mid e'\in \lambda_1(c))\}$.
\end{enumerate}
    We write $\mathcal{C}_1\leq_{pc} \mathcal{C}_2$ when such a reduction exists.
\end{definition}
Intuitively, the first item tells us that the pair of functions $(f,h)$ 
embed $\mathcal{C}_1$ into $\mathcal{C}_2$, while the second
item tells us that, when it comes to
distinguishing $\mathcal{C}_1$-concepts from each other, 
$\mathcal{C}_2$-examples offer no advantage over $\mathcal{C}_1$-examples.
The following propositions follow immediately
from the definition:

\begin{proposition}
   \label{prop:PC1}
    Let $\mathcal{C}_1=(C_1,E_1,\lambda_1)$ and $\mathcal{C}_2=(C_2,E_2,\lambda_2)$ be two concept classes such that $\mathcal{C}_1\leq \mathcal{C}_2$. If, for some concept $c\in C_1$, every uniquely characterizing set of labeled examples in $\mathcal{C}_1$ has size at least $n$, then the same holds for the concept $f(c)$ in $\mathcal{C}_2$.
\end{proposition}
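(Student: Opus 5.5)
We argue by contraposition: given a set of labeled examples in $\mathcal{C}_2$ that uniquely characterizes $f(c)$, we build a set of labeled examples in $\mathcal{C}_1$ of no greater size that uniquely characterizes $c$. The hypothesis then forces the $\mathcal{C}_2$-set to have size at least $n$.

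Let $E=\{(e_1,l_1),\ldots,(e_k,l_k)\}$ with $e_j\in E_2$ uniquely characterize $f(c)$ within $\mathcal{C}_2$. Convert each labeled example separately using the trichotomy in item~2 of the definition of a reduction.

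\textbf{Case (a):} $e_j$ lies in $\lambda_2(f(c'))$ for every $c'\in C_1$. Then $e_j$ says nothing about concepts in the image of $f$, so we drop it.

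\textbf{Case (b):} $e_j$ lies in $\lambda_2(f(c'))$ for no $c'\in C_1$. Again $e_j$ is uninformative on the image of $f$, and we drop it.

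\textbf{Case (c):} otherwise, item~2 provides $e'_j\in E_1$ with $\{c'\in C_1\mid e_j\in\lambda_2(f(c'))\}=\{c'\in C_1\mid e'_j\in\lambda_1(c')\}$. We replace $(e_j,l_j)$ by $(e'_j,l_j)$.

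Let $E'$ be the resulting set. Clearly $|E'|\le k$.

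Next we check that $E'$ uniquely characterizes $c$ in $\mathcal{C}_1$. The key observation is that, for every $c'\in C_1$,
\[
c' \text{ fits } E' \iff f(c') \text{ fits } E.
\]
For an example of type (c) this holds because the two sets of concepts coincide. For the dropped examples of types (a) and (b), note that $f(c)$ fits $E$. So the label $l_j$ must be $1$ in case (a) and $0$ in case (b), and every $f(c')$ then fits $(e_j,l_j)$ automatically.

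Since $f(c)$ fits $E$, the equivalence gives that $c$ fits $E'$. Now suppose $c'$ fits $E'$. Then $f(c')$ fits $E$, so $f(c')$ is equivalent to $f(c)$, i.e.\ $\lambda_2(f(c'))=\lambda_2(f(c))$. By item~1 of the definition, for every $e\in E_1$,
\[
e\in\lambda_1(c') \iff h(e)\in\lambda_2(f(c')) \iff h(e)\in\lambda_2(f(c)) \iff e\in\lambda_1(c).
\]
Hence $c'$ is equivalent to $c$, and $E'$ is indeed a uniquely characterizing set for $c$.

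To conclude: by hypothesis $|E'|\ge n$, so $k\ge |E'|\ge n$. The only point needing care is the handling of cases (a) and (b), where the label is forced because $f(c)$ fits $E$.
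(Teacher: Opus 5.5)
Your argument is correct and is exactly the routine verification the paper means when it says the proposition ``follows immediately from the definition''. Item~2 pulls each $\mathcal{C}_2$-example back to at most one $\mathcal{C}_1$-example; the constant ones are dropped, since their labels are forced once $f(c)$ fits. Item~1 then turns equivalence of $f(c')$ and $f(c)$ into equivalence of $c'$ and $c$. The paper gives no further detail, so your write-up fills in that check.
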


\begin{proposition}
  \label{prop:PC2}
    $\PL_O[\PROP]\preceq \PL_{O'}[\PROP]$ holds whenever $O\preceq O'$.
\end{proposition}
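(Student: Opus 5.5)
The plan is to exhibit a PC reduction $(f,h)$ from $\PL_O[\PROP]$ to $\PL_{O'}[\PROP]$ directly. Both concept classes are taken over the same example set, namely all truth assignments to $\PROP$. I will let $h$ be the identity on assignments and let $f$ be a truth-preserving translation that stays within the variables of $\PROP$. With this choice both conditions of the definition should follow almost immediately.

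\textbf{Step 1: constructing $f$.} Since $O\preceq O'$, every $n$-ary $g\in O$ lies in $\cloneof{O'}$. Such a $g$ is therefore defined by some $\PL_{O'}$-formula $\theta_g(x_1,\ldots,x_n)$ that uses only the variables $x_1,\ldots,x_n$. This holds because $\cloneof{O'}$ is built from the $n$-ary projections by composition, so an $n$-ary member is obtained by a term in $x_1,\ldots,x_n$ only; this is the content of Fact~\ref{fact:containment}, read at fixed arity. The constants $\top,\bot$ cause no trouble, since they are treated as unary operations (Remark~\ref{rem:zeroary}). I then define $f$ by recursion:
\[
f(p)=p, \qquad f(\oper{g}(\phi_1,\ldots,\phi_n))=\theta_g^{(x_1/f(\phi_1),\ldots,x_n/f(\phi_n))}.
\]
A straightforward induction then shows that $f(\phi)\in\PL_{O'}[\PROP]$ and that $\sem{f(\phi)}=\sem{\phi}$.

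\textbf{Step 2: checking the two conditions.} Condition 1 is just the statement $\sem{f(\phi)}_V=\sem{\phi}_V$ for every assignment $V$. For condition 2, take any example $e\in E_2$, i.e.\ any assignment $V$ to $\PROP$, and choose $e'=V\in E_1$. Because $f$ is truth-preserving, we get
\[
\{\phi\mid V\in\lambda_2(f(\phi))\}=\{\phi\mid V\in\lambda_1(\phi)\},
\]
which is exactly the third alternative of condition 2. So condition 2 holds for every $e\in E_2$.

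The only step needing genuine care is the variable bookkeeping in Step 1. The defining formulas $\theta_g$ must introduce no variables beyond their arguments; otherwise $f(\phi)$ could leave $\PROP$. This is guaranteed because clone elements are interpreted at fixed arity and are generated from projections. Everything else is routine.
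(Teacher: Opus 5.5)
Your proposal is correct and is the argument the paper has in mind when it says this proposition ``follows immediately from the definition.'' You take $h$ to be the identity on valuations of $\PROP$ and $f$ the truth-preserving translation that replaces each $O$-connective by a defining $\PL_{O'}$-formula in its own argument variables; condition~1 is then truth preservation, and condition~2 always holds via its third alternative with $e'=e$, including your correct handling of the variable bookkeeping.
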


We will now put these reductions to use.

\begin{proposition}
\label{prop:PC3}
    The following holds: 
    \begin{enumerate}
        \item $\PL_{\{\threeXor\}}[\PROP]\leq_{pc} \PL_{\{\oxor\}}[\PROP\cup \{w\}].$
        \item $\PL_{\{\threeXor\}}[\PROP]\leq_{pc} \PL_{\{\aimp\}}[\PROP\cup 
        \{w\}]$.
    \end{enumerate}
\end{proposition}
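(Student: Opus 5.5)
The plan is to build both reductions the same way. The fresh variable $w$ serves two purposes. In the examples in the image of $h$, it is frozen to a constant, so that the target connective simulates $\threeXor$ there. On the remaining examples, a top-level guard using $w$ makes every translated concept take a constant value, which settles condition~2 of the PC reduction. Concretely, $h$ extends an assignment $V$ over $\PROP$ by a fixed value of $w$. The map $f$ is $f(\phi)=G(w,t(\phi))$, where $t$ is a compositional translation of $\PL_{\{\threeXor\}}$ into the target fragment and $G$ is the guard.

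For (1), take $h(V)=V\cup\{w\mapsto 0\}$. Since $\oxor(w,y,z)=w\lor(y\xor z)$, this equals $y\xor z$ whenever $w=0$. Define $t(p)=p$ and
\[
t(\threeXor(\alpha,\beta,\gamma))=\oxor\bigl(w,\oxor(w,t(\alpha),t(\beta)),t(\gamma)\bigr),
\]
and let $f(\phi)=\oxor(w,w,t(\phi))$.
\begin{itemize}
\item Under $w=0$: $f(\phi)$ equals $0\xor t(\phi)$, which by induction agrees with $\phi$. This gives condition~1.
\item Under $w=1$: $f(\phi)$ is true for every $\phi$.
\end{itemize}
Condition~2 then follows. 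Any example $e$ over $\PROP\cup\{w\}$ with $w=1$ is positive for all translated concepts. If $w=0$ in $e$, then $e=h(e')$ for $e'=e|_{\PROP}$, and by condition~1 it induces the same set of concepts as $e'$.

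For (2), dualise: take $h(V)=V\cup\{w\mapsto 1\}$. Then $\aimp(w,y,z)=y\to z$ and $\aimp(x,w,z)=x\land z$, so with $w$ playing the role of $\top$ we obtain $\to$ and $\land$, hence $\leftrightarrow$. Since $x\xor y\xor z\leftrightarrow((x\leftrightarrow y)\leftrightarrow z)$, $\threeXor$ has a term over these connectives. Let $t(\threeXor(\alpha,\beta,\gamma))$ be that term with each $\to$ and $\land$ written via $\aimp$ and $w$. Set $f(\phi)=\aimp(w,w,t(\phi))$, which equals $w\land t(\phi)$.
\begin{itemize}
\item Under $w=1$: $f(\phi)$ agrees with $\phi$.
\item Under $w=0$: $f(\phi)$ is false for every $\phi$.
\end{itemize}
Condition~2 follows exactly as in (1).

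The main point to get right is the guard. Without it, examples in which $w$ takes the ``wrong'' value could distinguish translated concepts in ways that no $\PROP$-example can, and condition~2 would fail. I would therefore verify carefully that the guard collapses those examples to a constant label. I would also check that the translations are indeed formulas of the target fragments, which they are by construction.
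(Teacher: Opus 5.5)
Your proof is correct and uses the paper's construction: $h$ fixes the fresh variable $w$ to $0$ (for $\oxor$) or to $1$ (for $\aimp$), and a top-level guard $\oxor(w,w,\cdot)$, resp.\ $\aimp(w,w,\cdot)$, makes every translated concept constant on the remaining examples. The differences are only cosmetic: in (1) you define the translation by structural recursion rather than through the paper's parity normal form, and in (2) you encode $\threeXor$ directly via $\leftrightarrow$ instead of first reducing to $\PL_{\{\land,\to\}}$ and composing reductions.
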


\begin{proof}
(1)
Every $\PL_{\{\oplus\}}$ formula can 
be written as $p_{i_1}\oplus \dots \oplus p_{i_n}$ (where $n$ is an odd number). Let
\[\begin{array}{ll}
f(\varphi) &=
\begin{cases}
 w\lor(w \xor p) &\text{if $\varphi=p$, where $p$ is a propositional variable.}\\
 w \lor (q\oplus \alpha_1(\theta))&\text{if $\varphi$ is equivalent to $\theta \oplus q$,  which contains no repetitions.}
\end{cases}\\
h(V) &=V\cup\{(w,0)\}
\end{array}\]
It is clear that $f(\phi)$ is equivalent to $\phi$ over valuations where $w$ is false, while $f(\phi)$
is always true over valuations where $w$ 
is satisfied. It follows that the two conditions of PC-reductions are satisfied.

(2) Since 
$\{\threeXor\}\preceq \{\land,\to\}$,
it suffices to give a PC-reduction
from $\PL_{\land,\to}[\PROP]$ to 
$\PL_{\aimp}[\PROP\cup\{w\}]$. Let
$f(\varphi) = w\land (w\rightarrow f'(\varphi))$ where
\[f'(\varphi)=
\begin{cases}
 p &\text{if $\varphi=p$, where $p$ is a propositional variable.}\\
 f'(\theta)\land(w\rightarrow  f'(\eta))&\text{if $\varphi=\theta \land \eta$.}\\
 w\land (f'(\theta)\rightarrow f'(\eta)) &\text{if $\varphi=\theta \rightarrow \eta$.}
\end{cases}\]
Furthermore, let 
$h(V)=V\cup \{\{w,1\}\}$. It is clear that $f(\phi)$ is equivalent to $\phi$ over valuations where $w$ is true, while $f(\phi)$ is always false
over valuations where $w$ is false. 
It follows that the two conditions of PC-reductions are satisfied.
\end{proof}

\thmpropunique*

\begin{proof}
The implication from 1 to 2 is given by Theorem~\ref{thm:positive_analysis_monotone} and Theorem~\ref{thm:positive_analysis_modal_neg} together
with Proposition~\ref{prop:PC1} and Proposition~\ref{prop:PC2}.
The implication from 2 to 3 is immediate. 
The implication from 3 to 1 is proved by contraposition. A visual inspection of the Post's Lattice reveals that, when $O\not\preceq \{\land,\lor,\top,\perp\}$ and $O\not\preceq \{\neg, \top\}$, then
either $\{\threeXor\}\preceq O$ or $\{\aimp\}\preceq O$ or $\{\oxor\}\preceq O$.
The first case is handled by 
Theorem~\ref{thm:negative_analysis_xor} while the second and third case follow by Proposition~\ref{prop:PC1} and Proposition~\ref{prop:PC3}.
\end{proof}

\section{Proof of Theorem \ref{thm:Modal_char_dichotomy}}
\label{AppendixII}

The proof of Theorem~\ref{thm:Modal_char_dichotomy}  builds on several prior results.

\begin{theorem}[\cite{tenCate-Koudijs}]
\label{thm:Modal-monotone}
Let $\PROP$ be any non-empty finite set of propositional variables. Then following hold:
\begin{enumerate}
    \item  $\ML_{\{\Diamond,\Box,\land, \lor\}}[\PROP]$,
    $\ML_{\{\Box,\land, \lor,\top,\perp\}}[\PROP]$ and $\ML_{\{\Diamond,\land, \lor,\top,\perp\}}[\PROP]$  admit finite characterizations.   Furthermore, the finite characterizations are effectively computable.
    \item Neither  $\ML_{\{\Diamond,\Box,\land ,\perp\}}[\PROP]$ nor $\ML_{\{\Diamond,\Box, \lor, \top\}}[\PROP]$ admits finite characterization.
 
\end{enumerate}
\end{theorem}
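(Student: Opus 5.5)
The plan treats the two parts separately. Part (1) needs an explicit construction of characterizations. Part (2) needs a pumping-style counterexample that exploits the fact that examples are finite Kripke models.

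For part (1), I would start with $\ML_{\{\Diamond,\land,\lor,\top,\perp\}}[\PROP]$. Every such formula $\phi$ is $\logic{K}$-equivalent to $\bot$ or to a finite disjunction $\bigvee_i \phi_i$ of $\{\Diamond,\land,\top\}$-formulas. Each $\phi_i$ corresponds to a finite tree-shaped pointed model $T_i$, its canonical example. For every pointed model $(M,w)$ we have $M,w\models\phi_i$ iff there is a simulation (a homomorphism) from $T_i$ to $(M,w)$. The \emph{positive examples} are the $T_i$, after first removing disjuncts subsumed by others.

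For the \emph{negative examples} I would invoke finite homomorphism duality for finite sets of trees, in the style of Ne\v{s}et\v{r}il--Tardif. This gives a finite set $D$ of finite pointed models such that $(M,w)\models\phi$ iff $(M,w)$ admits a homomorphism into no member of $D$. The characterization is then $\{(T_i,1)\}\cup\{(d,0)\mid d\in D\}$. To show it characterizes $\phi$, take any fitting $\psi=\bigvee_j\psi_j$ from the fragment.
\begin{itemize}
\item Each canonical model of a $\psi_j$ is an example of $\psi$, so it fails to map into any $d\in D$. Hence it satisfies $\phi$, which gives $\psi\models\phi$.
\item Each $T_i$ satisfies $\psi$, so $\phi_i\models\psi$. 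Hence $\phi\models\psi$.
\end{itemize}
The duality sets are obtained by an explicit frontier construction, so this procedure is effective.

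The $\Box$-fragment is handled by the dual argument. Its positive and negative roles are swapped, via the CNF of $\phi$ and complemented valuations. For $\{\Diamond,\Box,\land,\lor\}$ without constants, I would use the two one-world models in which all variables are true, respectively all false, to pin down the constant-free behaviour. I would then reduce to a combined simulation-duality for the $\Diamond/\Box$ alternation. This mixed case is the step I expect to be hardest. My first attempt would be to bound the modal depth of any fitting formula by the size of $\phi$, using the absence of constants: a fitting formula of larger depth would be separated by a deep chain model. After that, finiteness of formulas up to equivalence at bounded depth over $\PROP$ finishes the argument.

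For part (2), I would give one concrete formula with no finite characterization, using the fact that examples are finite models. For $\ML_{\{\Diamond,\Box,\land,\perp\}}[\PROP]$, take $\phi=p\land\Box\bot$ and the competitors $\psi_n=p\land\Box^{n}\bot$ for $n\geq 2$. Let $E$ be a finite set of examples fitted by $\phi$, and let $N$ exceed the size of every model in $E$. I would check that $\psi_N$ agrees with $\phi$ on every pointed model in $E$.
\begin{itemize}
\item A positive example has no successors at its designated world, so $\psi_N$ holds there as well.
\item At a negative example where $p$ is true, there is a successor. If that successor lies on a cycle, or leads to a cycle, then $\Box^N\bot$ fails there. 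If neither happens, then $\Box^N\bot$ may in fact hold, since the longest path has length less than $N$ but may be at least $1$.
\end{itemize}
The second bullet shows the naive choice can fail, so the candidates must be refined. Instead I would let the competitor differ from $\phi$ only on models whose acyclic depth is at least $N$, for instance $\psi_N=p\land\Box\bot\land\Box^{N}\bot$ after adjusting $\phi$ to something like $p\land\Box\Box\bot$. By pigeonhole, a path of length $N$ in a model of size less than $N$ must revisit a world, so the two formulas have the same truth value on every model in $E$. The case $\{\Diamond,\Box,\lor,\top\}$ follows by duality, swapping $\Diamond$ with $\Box$ and positive examples with negative ones.
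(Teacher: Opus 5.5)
Your treatment of $\ML_{\{\Diamond,\land,\lor,\top,\bot\}}[\PROP]$ via finite (rooted) homomorphism duality is sound, and so is its transfer to the $\Box$-fragment by complementing valuations.

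\textbf{Part (2).} This has a genuine gap: you never exhibit a valid competitor. Your refined witness collapses. For every $N\geq 1$, $p\land\Box\bot\land\Box^N\bot$ is equivalent to $p\land\Box\bot$. That single formula is separated from $p\land\Box\Box\bot$ by a two-world example: $p$ at the root, with one successor that has no successors. The pigeonhole observation is the right tool, but it only helps if the competitor contains a subformula that is satisfiable yet false throughout every model with at most $N$ worlds. Combinations of $p$ and $\Box^k\bot$ cannot do that. The missing idea is to pair a long existential path with a depth bound, $\chi_N=\Diamond^N p\land\Box^{N+1}\bot$. In a model with at most $N$ worlds, a path of length $N$ revisits a world, so a path of length $N+1$ exists and $\chi_N$ fails everywhere. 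Yet $\chi_N$ holds at the start of an $N$-step chain ending in $p$. Now fix a finite $E$ fitted by $\phi=\Box\bot$, and take $N$ at least the size of every model in $E$. Then $\Box\chi_N$ is vacuously true at the positive examples and false at every negative one, but it is not equivalent to $\phi$. For your $p\land\Box\bot$, use $p\land\Box\chi_N$. Dually, $\Diamond\top$ cannot be finitely separated from all $\Diamond(\Box^N p\lor\Diamond^{N+1}\top)$. Note that the duality you invoke must also complement the valuation, not merely swap $\Diamond$ with $\Box$ and the labels.

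\textbf{Constant-free case.} The case $\ML_{\{\Diamond,\Box,\land,\lor\}}[\PROP]$ is not actually argued. The depth of fitting formulas cannot be bounded, since $\phi\land(\phi\lor\Box^{k}p)\equiv\phi$ for all $k$. Chain models also cannot certify ``equivalent to a formula of depth at most $d$''. A finite $E$ contains only chains of bounded length, and at a dead end $\Box$-subformulas become true while $\Diamond$-subformulas become false, so no uniform truncation is captured. The one-world models you mention, if reflexive, are fitted by every constant-free positive formula, so as plain examples they carry no information.

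They should be used as caps instead. Let $\top^{\circ}$ and $\bot^{\circ}$ be reflexive points where all, respectively no, variables hold. Every constant-free positive formula is true at $\top^{\circ}$ and false at $\bot^{\circ}$. Let $d$ be the depth of $\phi$. Unravel a finite pointed model to depth $d$ and give each depth-$d$ node the single successor $\top^{\circ}$ (respectively $\bot^{\circ}$). On the capped model, any constant-free positive $\psi$ evaluates like $\psi^{\top}_{d}$ (respectively $\psi^{\bot}_{d}$). This is the formula obtained by replacing each modal subformula occurrence in the scope of exactly $d$ modalities by $\top$ (respectively $\bot$). By monotonicity, $\psi^{\bot}_{d}\models\psi\models\psi^{\top}_{d}$.

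Up to equivalence, there are only finitely many depth-$\leq d$ formulas over $\PROP$. For each such $\theta\not\equiv\phi$, take a finite model separating $\theta$ from $\phi$, and add both of its capped versions, labelled by $\phi$. Any fitting $\psi$ then satisfies $\psi^{\top}_{d}\equiv\phi\equiv\psi^{\bot}_{d}$, hence $\psi\equiv\phi$, and the construction is effective. This is exactly where the absence of constants is used, consistent with the counterexamples in part (2).
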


\begin{theorem}[\cite{tencate2024:power}]
\label{thm:modal-characterization-two-more-positive}
    Let $\PROP$ be a finite set of propositional variables. Then 
    $\ML_{\{\Diamond,\Box,\land, \top\}}[\PROP]$ and 
    $\ML_{\{\Diamond, \Box,\lor, \bot\}}[\PROP]$ admit finite characterizations. 
    Furthermore, the finite characterizations are effectively computable.%
    \footnote{In fact, the cited paper only states the result for $\ML_{\{\Diamond,\Box,\land, \top\}}[\PROP]$ but the other result follows by a straightforward dualization argument, as described in the proof of \cite[Theorem 3.9]{tenCate-Koudijs}}
    \end{theorem}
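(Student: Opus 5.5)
The plan is a \emph{positive example plus frontier} argument in the style of homomorphism dualities. The second fragment will then follow by dualization, as the footnote indicates. Replacing every valuation $V$ by its complement and flipping all labels turns $\ML_{\{\Diamond,\Box,\land,\top\}}[\PROP]$-characterizations into $\ML_{\{\Box,\Diamond,\lor,\bot\}}[\PROP]$-characterizations, because $\neg\phi$ with every letter negated is again a formula of the dual fragment. So it suffices to treat $\Phi=\{\Diamond,\Box,\land,\top\}$.

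\emph{Step 1 (normal form).} Using $\logic{K}\vdash\Box(\alpha\land\beta)\leftrightarrow\Box\alpha\land\Box\beta$ and $\logic{K}\vdash\Box\top\leftrightarrow\top$, I will rewrite every formula into a normal form
\[\textstyle\bigwedge P\land\bigwedge_{i}\Diamond\psi_i\land\Box\chi,\]
with $P\subseteq\PROP$, the $\psi_i$ and $\chi$ again in normal form, and at most one $\Box$-conjunct. I will also remove redundant $\Diamond$-conjuncts: a conjunct $\Diamond\psi_i$ is redundant when $\psi_j\land\chi\models\psi_i$ for some other $j$.

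\emph{Step 2 (entailment via a simulation-like relation).} I will prove a structural criterion: $\phi\models\phi'$ holds iff
\begin{itemize}
\item $P'\subseteq P$,
\item $\chi\models\chi'$, and
\item every $\Diamond\psi'_k$ of $\phi'$ is matched by some $\Diamond\psi_i$ of $\phi$ with $\psi_i\land\chi\models\psi'_k\land\chi'$.
\end{itemize}
There is one delicate case, in which $\phi$ has no $\Diamond$-conjunct. Then $\Box\chi$ is vacuously satisfiable at a dead end, and $\chi\models\chi'$ must be relaxed to ``$\phi'$ has no $\Diamond$-conjuncts''. This matches the $\Box^n\bot$-type subtleties behind the failure for $\{\land,\bot,\Diamond,\Box\}$ in Theorem~\ref{thm:Modal-monotone}. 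Here $\top$ is available and $\bot$ is not, which is what should rescue finiteness.

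\emph{Step 3 (examples).} The positive example $(M_\phi,w,1)$ is the canonical tree of the normal form:
\begin{itemize}
\item the root is labelled $P$;
\item the root gets one successor for each $\Diamond\psi_i$, carrying the canonical tree of $\psi_i\land\chi$;
\item if there is no $\Diamond$-conjunct, the root gets no successors.
\end{itemize}
This $M_\phi$ is the least model: any $\phi'$ true at its root satisfies $\phi\models\phi'$, by the criterion of Step~2. The negative examples form a \emph{frontier} $\mathcal{F}(\phi)$. This is a finite set of pointed models $N$ with $N\not\models\phi$ such that every $\phi'$ with $\phi\models\phi'$ and $\phi'\not\models\phi$ is true in some $N\in\mathcal{F}(\phi)$. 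I will build $\mathcal{F}(\phi)$ by recursion on the normal form, taking one minimal weakening per position. The candidate weakenings are:
\begin{itemize}
\item drop one letter from $P$;
\item replace one $\Diamond\psi_i$-successor by each element of the frontier of $\psi_i\land\chi$;
\item weaken the $\Box$-part, by replacing every successor tree simultaneously with frontier elements of $\chi$.
\end{itemize}
Since $\PROP$ is finite and $\phi$ has finite depth, this recursion produces finitely many finite models, and it is effective. That gives the effectiveness claim.

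\emph{Main obstacle.} The hard step is correctness of the frontier for weakenings that \emph{increase} modal depth or change the $\Box$-structure. For example, $\Box\Box\cdots$ conjuncts deeper than $\phi$ are true on finite trees of bounded depth and so cannot be refuted at depth. I will first try to show that any strictly weaker $\phi'$ that holds on all frontier models must hold on $M_\phi$'s depth-bounded truncations, and then contradict the criterion of Step~2. If that fails, I will pad frontier models with reflexive $\top$-loops below depth $\mathrm{md}(\phi)$. Such loops satisfy every $\Phi$-formula, because without $\bot$ the fragment is true on a reflexive point with all letters true. This neutralizes deep subformulas, and the check then reduces to the bounded-depth comparison.
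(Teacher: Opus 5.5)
Your dualization step matches the paper's footnote and is fine. The core argument, however, fails at Step~3: the canonical tree $M_\phi$ is \emph{not} a least model in this fragment, because $\Box$-formulas hold vacuously at dead ends and beyond the depth of a finite tree. Already for $\phi=p$, $M_\phi$ is a dead-end $p$-point, where $\Box q$ holds although $p\not\models\Box q$. Hence $p\land\Box q$ fits your positive example, and since it entails $p$, it also fits every negative example. More generally, suppose all positive examples are finite trees of depth $<N$ with $N>\mathrm{md}(\phi)$. Then $\phi\land\Box^N q$ fits all examples but is not equivalent to $\phi$. Strictly stronger formulas can only be excluded by positive examples, so no frontier can repair this. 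The obstacle you flag concerns the frontier, but there depth is harmless: by your own Step~2 criterion, every formula entailed by $\phi$ is equivalent to one of depth at most $\mathrm{md}(\phi)$. The real depth problem is on the positive side. Padding with reflexive $\top$-loops cannot address it, since such loops satisfy every formula of the fragment and so never refute anything.

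The missing ingredient is a finite \emph{set} of positive examples containing cycles that act as refuters. For instance, let $d$ be a dead end and $c$ a reflexive point with an edge to $d$, both with empty valuation. A straightforward induction shows that every formula of $\ML_{\{\Diamond,\Box,\land,\top\}}$ not equivalent to $\top$ fails at $c$ or at $d$. Starting from $E^+_\top=\{c,d\}$, one builds by recursion on the normal form a finite set $E^+_\phi$ of models of $\phi$ that refutes every $\theta$ with $\phi\not\models\theta$:
\begin{itemize}
\item a dead-end root labelled $P$ refutes every $\Diamond\theta$ when $\phi$ has no $\Diamond$-conjunct;
\item an extra successor from $E^+_\chi$, placed next to witnesses for the $\Diamond\psi_i$, refutes $\Box\theta$ whenever $\chi\not\models\theta$;
\item choosing, in all combinations, one witness from each $E^+_{\psi_i\land\chi}$ refutes $\Diamond\theta$ whenever no $\psi_i\land\chi$ entails $\theta$.
\end{itemize}

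The same defect breaks your frontier recursion. For $\phi=\Box r$ it yields no negative example, since there are no successors to replace, so $\top$ fits. For $\phi=\Diamond\Box r\land\Diamond q$, replacing the $\Box r$-successor by any refutation of $\Box r$ leaves the dead-end $q$-successor, which satisfies $\Box r$. So the purported negative example still satisfies $\phi$. Two repairs are needed. The remaining $\Diamond$-witnesses must be drawn from $E^+_{\psi_j\land\chi}$ so as to refute $\psi_i$. And weakening $\Box\chi$ must \emph{add} one successor from the frontier of $\chi$ instead of replacing all successors; otherwise $\Diamond(p\land\Box q)$, which is strictly weaker than $\Diamond p\land\Box\Box q$, is covered by no negative example.

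Finally, your ``delicate case'' in Step~2 is misstated. The condition $\chi\models\chi'$ is still required when $\phi$ has no $\Diamond$-conjunct (e.g.\ $p\land\Box q\not\models\Box r$). The absence of $\Diamond$-conjuncts in $\phi'$ is already covered by your third bullet.
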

    \begin{theorem}\label{thm:positive_analysis_modal_neg} 
        Let $\PROP$ be any non-empty finite set of propositional variables.
        Then $\ML_{\{\Diamond, \Box,\neg, \top\}}[\PROP]$ admits finite characterization. Moreover, the finite characterizations are effectively computable.
    \end{theorem}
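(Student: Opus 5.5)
The plan is to reduce every formula of $\ML_{\{\Diamond,\Box,\neg,\top\}}[\PROP]$ to a normal form, and then to show that a fixed finite family of \emph{path models} separates the target from every competitor. First I will push negations inward using $\neg\Diamond\theta\equiv\Box\neg\theta$, $\neg\Box\theta\equiv\Diamond\neg\theta$ and $\neg\neg\theta\equiv\theta$.

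This shows that every formula is $\logic{K}$-equivalent to one of the form $\sigma\ell$. Here $\sigma$ is a finite word over $\{\Diamond,\Box\}$, and $\ell\in\{p,\neg p,\top,\bot\}$ with $p\in\PROP$ (we have $\bot\equiv\neg\top$). Using $\Box\top\equiv\top$ and $\Diamond\bot\equiv\bot$, I will further normalize words ending in $\top$ or $\bot$, so that a trailing $\top$ is preceded by $\Diamond$ (or $\sigma$ is empty) and a trailing $\bot$ is preceded by $\Box$ (or $\sigma$ is empty). Call the target $\phi=\sigma\ell$ and let $d=|\sigma|$.

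For the examples I take \emph{all} pointed models $(P,w_0)$ where $P$ is a directed path $w_0\to w_1\to\dots\to w_L$ with $L\le d+1$, equipped with an arbitrary valuation of $\PROP$ at each node. Each such model is labeled by the truth value of $\phi$ at $w_0$. There are finitely many of these, at most $\sum_{L\le d+1}2^{|\PROP|(L+1)}$, and they are clearly computable from $\phi$. This gives the effectiveness claim once uniqueness is shown.

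The key observation is how a normal-form formula $\tau\ell'$ evaluates at the root of a path of length $L$. If $|\tau|\le L$, then $\Diamond$ and $\Box$ coincide along the path, so the truth value is simply the value of $\ell'$ at $w_{|\tau|}$. If $|\tau|>L$, the walk reaches the dead end $w_L$ with operators remaining. The value is then determined solely by the $(L+1)$-st letter of $\tau$: $\Diamond$ gives false and $\Box$ gives true, independently of the valuation.

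Now let $\psi=\tau\ell'$ be a competitor fitting all the examples. I will compare $\psi$ and $\phi$ on paths of increasing length $L=0,1,\dots$. The comparison splits into three cases.
\begin{itemize}
\item On paths with $L<\min(|\tau|,d)$, both values are valuation-independent. They must agree, which forces the $(L+1)$-st letters of $\tau$ and $\sigma$ to coincide.
\item If $|\tau|\ne d$, take the one of length $\min(|\tau|,d)$; call it the shorter one. If it ends in a literal $p$ or $\neg p$, then on the path of length equal to its prefix length its value depends on the valuation at the last node, while the other formula's value is constant. Varying that valuation then yields a mismatch. If instead it ends in $\top$ or $\bot$, I use the path of length one more, namely $d+1$ when $d$ is the shorter length, to expose the difference. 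This is the reason paths of length $d+1$ are included.
\item Once $|\tau|=d$ and $\tau=\sigma$, the path of length $d$ with all valuations forces $\ell'\equiv\ell$.
\end{itemize}
Hence $\psi$ and $\phi$ have the same normal form, and so they are $\logic{K}$-equivalent.

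The main obstacle is the bookkeeping in the second case when one or both formulas end in $\top$ or $\bot$. There a normal form like $\Diamond\top$ at depth $d$ must be distinguished from deeper words such as $\Diamond\Diamond\top$ or $\Diamond\Box\ldots$. I will handle this by the normalization above: a trailing $\top$ is always guarded by $\Diamond$, and a trailing $\bot$ by $\Box$. Consequently $\sigma\Diamond\top$ is true exactly on paths of length $\ge|\sigma|+1$, whereas any longer word must eventually hit a dead end with a determined letter. I will then check the handful of cases by hand on paths of lengths $d$ and $d+1$, which suffices to establish that distinct normal forms are $\logic{K}$-inequivalent and distinguished by these examples.
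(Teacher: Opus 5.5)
\textbf{There is a genuine gap: bounded-length acyclic paths cannot characterize targets ending in $\top$ or $\bot$.}

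Your normal form and your evaluation lemma for paths are correct. For targets $\sigma\ell$ with $\ell\in\{p,\neg p\}$ your argument goes through: the path of length $d$ under all valuations, together with the shorter paths, pins down $\sigma\ell$. For targets ending in $\top$ or $\bot$, however, the example set itself is too weak; the problem is not bookkeeping.

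Take $\phi=\top$. Then $d=0$, and your examples are the one-point and two-point paths under all valuations, all labeled positive. The normal-form competitor $\psi=\Box\Box\bot$ is true at the root of both, because the walk hits the dead end while a $\Box$ is still pending. Yet $\psi$ is false at the root of $w_0\to w_1\to w_2$. More generally, $\sigma\Box\Box\bot$ fits all your examples for $\sigma\top$, and dually $\sigma\Diamond\Diamond\top$ fits all your examples for $\sigma\bot$.

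This is exactly the case your last paragraph passes over. A longer word that reaches a dead end on a $\Box$ evaluates to true, just like $\sigma\top$. Longer paths do not help either. If every example is a finite acyclic model of depth at most $h$, then $\Box^{h+1}\bot$ holds vacuously at every world of every example. Hence no finite set of such examples uniquely characterizes $\top$.

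\textbf{The missing ingredient is examples with cycles.} A cycle lets a single finite model witness unboundedly many depths at once. The paper's proof does this with the reflexive point and with the chain $0\to\cdots\to n$ whose last node is reflexive and has an extra dead-end successor $n+1$.

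Concretely, add to your paths the model $w_0\to\cdots\to w_d$ in which $w_d$ is reflexive and has an additional dead-end successor $e$, taken under all valuations. This repairs your argument.
\begin{itemize}
\item The prefix $\sigma$ is traversed deterministically, so $\phi=\sigma\top$ is true at $w_0$.
\item Your paths of length $d$ and $d+1$ force any longer competitor to have the form $\sigma\Box\Box\rho\,\ell'$.
\item At $w_d$ this is falsified by staying at $w_d$ through the initial block of $\Box$'s and stepping to $e$ just before the first $\Diamond$.
\item If no $\Diamond$ follows, then $\ell'$ is $\bot$ or a literal (normalization excludes $\Box\top$), and some valuation makes it false at $w_d$.
\end{itemize}
The case $\sigma\bot$ is dual.
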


\begin{proof}
    Every formula in this fragment is of one of the following four forms, where $n\geq 0$ and each $\circ_i\in\{\Diamond,\Box\}$:
\begin{enumerate}
    \item $\circ_1 \ldots \circ_n p$ 
    \item $\circ_1 \ldots \circ_n \neg p$
    \item $\circ_1 \ldots \circ_n \top$ where either $n=0$ or $\circ_n=\Diamond$
    \item $\circ_1 \ldots \circ_n \bot$ where either $n=0$ or $\circ_n=\Box$
\end{enumerate}
    we will refer to $n$ as the \emph{depth} of the formula. 

    Formulas of type 1 with depth $n$ are distinguished by the fact that 
    they fit the positive example $(M_1,0)$ and the negative example $(M_2,0)$ where 
    $M_1$ is the (non-transitive) chain $\fbox{$0\to 1\to\ldots \to n^{~p}$}$ and $M_2$ is the (non-transitive) chain $\fbox{$0\to 1\to\ldots \to n$}$. Thus, such a formula $\phi$ can be uniquely characterized by taking these two labeled example plus finitely many additional examples  distinguish $\phi$ from other formulas of type 1 for the same value of $n$. 
    Formulas of type 2 admit an analogous analysis where the labels of the two examples are flipped.
    Formulas of type 3 with depth $n$ are distinguished by the fact that they fit the positive examples
    $(M_1,0)$, $(M_2,0)$, $(M_3,0)$ and $(M_4,0)$ where
    $M_1=\fbox{$0\rotatebox{60}{$\circlearrowleft$}$}$, 
    $M_2=\fbox{$0\rotatebox{60}{$\circlearrowleft$}^{\PROP}$}$, 
    $M_3=\fbox{$0\to 1\to \cdots n$}$ and $M_4=\fbox{$0\to 1\to \cdots n \rotatebox{60}{$\circlearrowleft$}\to n+1$}$
    Thus, again, such a formula $\phi$ can be uniquely characterized by taking these two labeled example plus finitely many additional examples  distinguish $\phi$ from other formulas of type 3 for the same value of $n$. Note that the positive example $(M_1,0)$ rules out all formulas of type 1 and that the positive example rules out all formulas of type 2 and 4, and 
    the positive examples $(M_3,0)$ and $(M_4,0)$ together rule out all 
    formulas of type 3 with depth strictly greater than $n$ where $\circ_n=\Diamond$ or $\circ_n=\Box$, respectively.
    Formulas of type 4 again admit an analogous analysis, using the same four examples as negative examples instead of as positive examples.
\end{proof}

The remaining ingredients of the proof of Theorem~\ref{thm:Modal_char_dichotomy} are several reductions and, as usual, a careful analysis of the Post lattice. Specifically,
we will again make use of PC-reductions introduced in Appendix \ref{AppendixI}.

\begin{lemma}
\label{lem:pc-reduction-modal}
    Let $O$ be any set of Boolean functions and $\PROP$ any finite set of propositional variables. The following hold:
    \begin{enumerate}
        \item  
        $\PL_O[\PROP]\leq_{pc} \ML_{O\cup\{\Diamond\}}[\{p\}]$
        \item
        $\PL_O[\PROP]\leq_{pc} \ML_{O\cup\{\Box\}}[\{p\}]$
    \end{enumerate}
    \end{lemma}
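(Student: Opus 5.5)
We need a PC-reduction from propositional formulas over $\PROP=\{p_1,\dots,p_n\}$ to modal formulas over a single variable $p$. The plan is to encode each propositional variable $p_i$ as a distinct modal formula in one variable, built using only $\Diamond$ (respectively only $\Box$) and $p$. Define $f$ by the substitution $p_i\mapsto \Diamond^{i}p$ for item 1 (and $p_i\mapsto\Box^{i}p$ for item 2), leaving the Boolean connectives from $O$ untouched; thus $f(\phi)=\phi^{(p_1/\Diamond p,\dots,p_n/\Diamond^n p)}\in\ML_{O\cup\{\Diamond\}}[\{p\}]$. For a valuation $V$ of $\PROP$, let $h(V)$ be the pointed model consisting of a non-transitive chain $0\to 1\to\cdots\to n$, with $p$ true at world $i\ge 1$ iff $V(p_i)=1$, evaluated at world $0$. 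Since world $i$ is the unique world reachable in exactly $i$ steps from $0$, we have $h(V),0\models\Diamond^i p$ iff $V(p_i)=1$, and likewise for $\Box^i p$ (as $i\le n$, the required successors exist). A straightforward induction on $\phi$ then gives condition~1: $V\models\phi$ iff $h(V),0\models f(\phi)$.

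The real work is condition~2. Take an arbitrary finite pointed Kripke model $(M,w)$ (on a $\logic{K}$-frame, i.e.\ any frame). Truth of $f(c)$ at $w$ depends only on the truth values of the $n$ formulas $\Diamond^i p$ at $w$, because $f(c)$ is a Boolean combination (via $O$-connectives) of these formulas with no other modal material. Define the valuation $V'$ by $V'(p_i)=1$ iff $M,w\models\Diamond^i p$. By the same induction, $M,w\models f(c)$ iff $V'\models c$, for every $c\in\PL_O[\PROP]$. Hence $\{c\mid (M,w)\in\lambda_2(f(c))\}=\{c\mid V'\in\lambda_1(c)\}$, which is exactly the third disjunct of condition~2 with $e'=V'$. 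So every modal example carries no more information about concepts in the image than some propositional example does. The $\Box$ case is identical with $\Box^i p$.

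The main obstacle I anticipate is making sure the encoded formulas are genuinely independent, i.e.\ that every $V$ is realized by some model. If nothing else, this is what guarantees that $h$ is well defined as above. The chain of length $n$ realizes all $2^n$ combinations, including the $\Box$ case where one must avoid vacuous truth (hence the chain must reach depth $n$ on every path). A minor technical point is that the concept classes here are defined by formulas rather than functions, and that $O$ may contain constants. Both are harmless, since substitution commutes with all $O$-connectives, including nullary-as-unary constants. If the paper's notion of example requires a specific point convention, as in the $(M,0)$ notation of Theorem~\ref{thm:positive_analysis_modal_neg}, we simply take world $0$ as the distinguished point.
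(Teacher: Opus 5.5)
Your proposal is correct and follows the paper's proof essentially verbatim. Both substitute $\Diamond^i p$ (resp.\ $\Box^i p$) for the $i$-th variable, map a valuation $V$ to the pointed non-transitive chain whose $i$-th world satisfies $p$ exactly when $V(p_i)=1$, and discharge condition~2 with the valuation $V_{M,w}(p_i)=1$ iff $M,w\models\Diamond^i p$. The only difference is cosmetic: the paper indexes from $p_0\mapsto p$ on the chain $0\to\cdots\to n-1$, so that for every $\PROP$ the atomic formula $p$ itself is the image of $p_0$, which the subsequent proposition uses; with your shift, $\Diamond p$ (resp.\ $\Box p$) plays that role instead.
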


We remind the reader that when we write $\ML_{O\cup\{\Diamond\}}$ or
$\ML_{O\cup\{\Box\}}$, it is understood that the Boolean functions
in $O$ are replaced by  arbitrary
propositional formulas expressesing them).

\begin{proof}
    1. Let $\PROP=\{p_0\ldots, p_{n-1}\}$. The PC-reduction is given by $(f,g)$, where
    \[\begin{array}{ll}
    f(\phi) &= \phi[p_0\mapsto p, p_1\mapsto \Diamond p, p_2\mapsto\Diamond\Diamond p, \ldots, p_{n-1}\mapsto\Diamond^{n-1} p] \\
    g(V)    &= (\mathbb{L}_n(V),0)
    \end{array}\]
    where, for any propositional valuation $V$,
    $\mathbb{L}_n(V)$ denotes the (non-transitive) chain 
    $$
    \fbox{$0\to 1\to\ldots\to n-1$}\, ,
    $$ 
    where world $i$ satisfies $p$ if and only if $V(p_i)=1$.
    Clearly, $g(V)\models f(\phi)$ iff $V\models\phi$. Furthermore,
    for each pointed Kripke model $(M,w)$, let $V_{M,w}$ be the $\PROP$-valuation given by 
    $V_{M,w}(p_i)=1$ iff 
    $M,w\models\Diamond^i p$. Then
    it is clear that $M,w\models f(\phi)$ iff $V_{M,w}\models\phi$.
    Therefore, $(f,g)$ is a valid PC-reduction.

    2. The proof is analogous, using $\Box$ instead of $\Diamond$.
\end{proof}

In combination with the results 
about $\PL_{\threeXor}$, $\PL_{\oxor}$ and $\PL_{\aimp}$ 
from Appendix~\ref{AppendixI},
this yields:

\begin{proposition}
$\ML_{\threeXor,\Diamond}[\{p\}]$, 
$\ML_{\threeXor,\Box}[\{p\}]$, 
$\ML_{\oxor,\Diamond}[\{p\}]$, 
$\ML_{\oxor,\Box}[\{p\}]$, 
$\ML_{\aimp,\Diamond}[\{p\}]$, and $\ML_{\aimp,\Box}[\{p\}]$ do not admit finite characterizations. In fact, in each of these fragments, the atomic formula $p$ cannot be 
uniquely characterized by a finite set of examples.
\end{proposition}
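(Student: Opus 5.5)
The statement follows by combining Lemma~\ref{lem:pc-reduction-modal} with the propositional lower bounds of Appendix~\ref{AppendixI}, transported along PC-reductions via Proposition~\ref{prop:PC1}. I treat $\threeXor$ first, then $\oxor$ and $\aimp$, with $\Diamond$ and $\Box$ handled uniformly.

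\emph{The case of $\threeXor$.} Fix $n\ge 2$ and $\PROP_n=\{p_0,\ldots,p_{n-1}\}$. By Theorem~\ref{thm:negative_analysis_xor}, every unique characterization of $p_0$ with respect to $\PL_{\{\threeXor\}}[\PROP_n]$ has at least $n-1$ examples. Lemma~\ref{lem:pc-reduction-modal} gives a PC-reduction $(f,g)$ into $\ML_{\{\threeXor,\Diamond\}}[\{p\}]$ (resp.\ $\Box$). The reduction substitutes $p_i\mapsto\Diamond^i p$, so $f(p_0)=p$. Proposition~\ref{prop:PC1} then says every unique characterization of $p$ with respect to $f(\PL_{\{\threeXor\}}[\PROP_n])$ has at least $n-1$ examples.

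\emph{Passing to the whole fragment.} The image $f(\PL_{\{\threeXor\}}[\PROP_n])$ is a subclass of $\ML_{\{\threeXor,\Diamond\}}[\{p\}]$. Any finite set $E$ of labeled examples that uniquely characterizes $p$ in the larger fragment also does so in every subclass containing $p$, since fewer competing formulas remain. Hence $|E|\ge n-1$ for every $n$, which is impossible for a finite $E$. So $p$ has no finite unique characterization.

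\emph{The cases of $\oxor$ and $\aimp$.} Proposition~\ref{prop:PC3} gives $\PL_{\{\threeXor\}}[\PROP_n]\le_{pc}\PL_{\{\oxor\}}[\PROP_n\cup\{w\}]$, and the same into $\PL_{\{\aimp\}}$. Composing with Lemma~\ref{lem:pc-reduction-modal} yields reductions into the modal fragments over $\{p\}$. I will check that PC-reductions compose: condition 1 is immediate, and condition 2 follows by chasing an example in $E_3$ back to $E_2$ and then to $E_1$, using that the trivial cases are preserved.

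\emph{Main obstacle.} The composite maps $p_0$ not to $p$ but to some formula $\psi_n$, for instance a modal translation of $w\lor(w\oplus p_0)$ or $w\land(w\to p_0)$. Proposition~\ref{prop:PC1} therefore gives lower bounds for $\psi_n$, which depends on $n$, rather than for the fixed formula $p$.

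To repair this, I will choose the variable ordering in Lemma~\ref{lem:pc-reduction-modal} so that the target variable sits at index $0$. That makes $\psi_n$ independent of $n$: for example $w\mapsto p$ and $p_0\mapsto\Diamond p$, giving $p\lor(p\oplus\Diamond p)$. This yields the first claim, that these fragments do not admit finite characterizations.

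For the sharper claim that $p$ itself has no finite characterization, I will argue directly. For $\aimp$, I embed $\PL_{\{\threeXor\}}$-style parity constraints into formulas $p\land(\ldots)$ that agree with $p$ on any fixed finite model of bounded depth, as in the proof of Proposition~\ref{prop:PC3}. I will verify this directly rather than rely solely on the generic transfer.
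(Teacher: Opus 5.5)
Your argument for the first claim works for all six fragments. For $\threeXor$, the reduction of Lemma~\ref{lem:pc-reduction-modal} sends $p_0$ to $p$. PC-reductions do compose, and ordering the variables so that $w\mapsto p$ makes the composite image of $p_0$ the fixed formula $p\lor(p\oplus\Diamond p)$, resp.\ $p\land(p\to\Diamond p)$, which therefore has no finite characterization. The gap is the sharper claim for $\oxor$ and $\aimp$. It is part of the statement, and you only gesture at it. As you observed, the composite reduction from $\PL_{\{\threeXor\}}$ never produces a formula equivalent to $p$, so the parity lower bound says nothing about $p$ here. Your substitute sketch would also fail as described. Labeled examples are arbitrary finite Kripke models, which may contain cycles; the paper's own examples use reflexive points. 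So there is no bounded depth beyond which a distinguishing constraint can be hidden: at a reflexive $p$-point every $\Diamond^k p$ and $\Box^k p$ is true.

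What is missing is a lower bound for the atom itself. The paper gets the sharper claim by asserting that $p_0$ already needs $\Omega(|\PROP|)$ examples in $\PL_{\{\oxor\}}[\PROP]$ and $\PL_{\{\aimp\}}[\PROP]$. It then transfers this exactly as in your $\threeXor$ case, since the reduction of Lemma~\ref{lem:pc-reduction-modal} maps $p_0$ to $p$ for every $\PROP$. You are right that Proposition~\ref{prop:PC3} alone only yields such a bound for $w\lor p_0$ and $w\land p_0$, but the atom bound is easy. $\cloneof{\{\aimp\}}=\CloneS_{12}$ contains $x_0\land g$ for every $1$-reproducing $g$. So if $E$ fits $p_0$ and has fewer than $2^{n-1}-1$ examples, some $a\neq(1,\ldots,1)$ with $a_0=1$ does not occur in $E$. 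Then $x_0\land g_a$, where $g_a$ is $0$ exactly at $a$, fits $E$ but is not equivalent to $x_0$. Dually, $\cloneof{\{\oxor\}}=\CloneS_{02}$ contains $x_0\lor g$ for every $0$-reproducing $g$.

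Alternatively, you can repair your direct route by replacing depth with periodicity. In a finite model the truth sets of $\Diamond^k p$ (or $\Box^k p$) are eventually periodic in $k$. So for a finite $E$ there are $i,P\geq 1$ with $\Diamond^i p\leftrightarrow\Diamond^{i+P}p$ true at every example point. Then $\aimp(p,\Diamond^i p,\Diamond^{i+P}p)$ and $\oxor(p,\Diamond^i p,\Diamond^{i+P}p)$ fit every example that $p$ fits. So does $\threeXor(p,\Diamond^i p,\Diamond^{i+P}p)$, which would handle all six fragments uniformly. None of these formulas is $\logic{K}$-equivalent to $p$, as a suitable finite chain shows.
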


\begin{proof}
The results in Appendix~\ref{AppendixI}
show that, in each of the  propositional fragments
$\PL_{\threeXor}[\PROP]$, 
$\PL_{\oxor}[\PROP]$, 
$\PL_{\aimp}[\PROP]$, the atomic formula
$p_0$ already requires $\Omega(|\PROP|)$ many labeled examples. It follows by Proposition~\ref{prop:PC1} and by the proof of Lemma~\ref{lem:pc-reduction-modal} (since the reduction in question always maps $p_0$ to $p$ regardless of the choice of $\PROP$)
that the modal formula
$p$ cannot be uniquely characterized by a finite set of examples in each of the six stated modal fragments.
\end{proof}

We are now ready to commence with the proof of Theorem~\ref{thm:Modal_char_dichotomy}.

\thmModalchardichotomy*

\begin{proof}[Proof] 
It is a well known fact that a concept class is effectively (but not necessarily efficiently) exactly learnable with membership queries if and only if the concept class admits finite characterizations, and there is an effective algorithm that computes, for each concept, a uniquely characterizing finite set of labeled examples. 
In one direction, an  effective exact learning algorithm with membership queries can be used to construct a finite characterization for any concept, namely consisting of the examples appearing in membership queries posed by the algorithm before identifying the concept in question. 
To be precise, this also assumes that we can effectively test whether a given example is positive or negative for a given concept, an assumption that is indeed satisfied in our case. 
Conversely, every effective algorithm for constructing finite unique characterizations yields a brute-force exact learning algorithm: we simply enumerate all hypotheses, and 
check them one by one by testing the examples in their uniquely characterizations using membership queries.
This establishes the direction from 2 to 1. In combination with Theorem~\ref{thm:Modal-monotone} and~Theorem~\ref{thm:positive_analysis_modal_neg} it also shows that 3 implies 2. 

It only remains to show that 1 implies 3. This implication will be proved by contraposition.
Careful inspection of the Post lattice shows that,
for every simple set of modal formulas $\Phi$,
 one of the following two cases holds:
\begin{itemize}
    \item 
    $\Phi\preceq \{\Diamond, \land, \lor, \top,\perp\}$ or $\Phi \preceq \{\Box, \land, \lor, \top,\perp\}$ or $\Phi\preceq \{\Diamond,\Box, \land, \lor \}$ or $\Phi \preceq \{\Diamond, \Box, \neg, \top\}$ 
    or 
    $\Phi \preceq \{\Diamond, \Box, \land, \top\}$
    or
    $\Phi \preceq \{\Diamond, \Box, \lor, \perp\}$.
    \item 
    $\{\Diamond, \aimp\}\preceq \Phi$ or $\{\Diamond, \oxor\}\preceq \Phi$ or $\{\Diamond, \threeXor\}\preceq \Phi$ or  $\{\Box, \oxor\}\preceq \Phi$ or $\{\Box, \aimp\}\preceq \Phi$ or $\{\Box, \threeXor\}\preceq \Phi$ or $\{\Diamond, \Box,\land, \perp \}\preceq \Phi$ or $\{\Diamond, \Box,\lor, \top\}\preceq \Phi$.    
\end{itemize}
Since the first case doesn't apply, we must be in the second case. 
We had already observed that 
$\ML_{\threeXor,\Diamond}[\PROP]$, 
$\ML_{\threeXor,\Box}[\PROP]$, 
$\ML_{\oxor,\Diamond}[\PROP]$, 
$\ML_{\oxor,\Box}[\PROP]$, 
$\ML_{\aimp,\Diamond}[\PROP]$, and $\ML_{\aimp,\Box}[\PROP]$ do not admit finite characterizations.
It is clear that $\ML_\Phi[\PROP]\preceq \ML_{\Phi'}[\PROP]$ holds whenever $\Phi\preceq \Phi'$, so this takes care of the first six possibilities.
Similarly, if $\{\Diamond, \Box,\land,\perp \}\preceq\Psi$ or $\{\Diamond, \Box, \lor, \top\}\preceq\Psi$,
then it follows from Theorem \ref{thm:Modal-monotone}  that $\ML_\Psi[\PROP]$ does not admit finite characterization. 
\end{proof}

\end{document}